\title{Self-replication and Computational Universality}
\author{
Jordan Cotler\thanks{Email: \texttt{jcotler@fas.harvard.edu}.} \\
Harvard University
\and
Cl\'{e}ment Hongler\thanks{Email: \texttt{clement.hongler@epfl.ch}.} \\
EPFL
\and
Barbora Hudcov\'{a}\thanks{Email: \texttt{barbora.hudcova@epfl.ch}.} \\
EPFL
}
\begin{document}
\pagestyle{empty}
{
  \renewcommand{\thispagestyle}[1]{}
  \maketitle

\begin{abstract}
Self-replication is central to all life, and yet how it dynamically emerges in physical, non-equilibrium systems remains poorly understood.  Von Neumann's pioneering work in the 1940s and subsequent developments suggest a natural hypothesis: that any physical system capable of Turing-universal computation can support self-replicating objects. In this work, we challenge this hypothesis by clarifying what computational universality means for physical systems and constructing a cellular automaton that is Turing-universal but cannot sustain non-trivial self-replication. By analogy with biology, such dynamics manifest transcription and translation but cannot instantiate replication.  More broadly, our work emphasizes that the computational complexity of translating between physical dynamics and symbolic computation is inseparable from any claim of universality (exemplified by our analysis of Rule 110) and builds mathematical foundations for identifying self-replicating behavior. Our approach enables the formulation of necessary dynamical and computational conditions for a physical system to constitute a living organism.
\end{abstract}

}

\clearpage
\pagestyle{plain}
\pagenumbering{arabic}

\tableofcontents

\section{Introduction}
\label{sec:intro}

The emergence of self-replicating structures in Earth's oceans and their evolution into the diversity of life we witness today remains one of science's greatest mysteries. This gap in understanding prevents us from recreating the process through simulation or experiment, raising fundamental philosophical and mathematical questions about life's nature. In the 1940s, John von Neumann \cite{neumann_vnsr} illuminated self-replication's key principles using an abstract framework known as cellular automata (CAs)~\cite{kari2022cellular}.  As illustrated in Fig.~\ref{fig:main1}(a), physical or chemical systems can be abstracted as CAs that, under the right conditions, dynamically produce self-replicating organisms. Our ultimate goal is to identify which dynamics and configurations permit the emergence of self-replication.  Von Neumann's pioneering work provides a foundation: he designed a CA with a non-trivial self-replicator leveraging the dual use of an organism's description -- once to be interpreted for construction and once to be transcribed to the new offspring. Remarkably, von Neumann's model distinguished between replication and transcription before DNA's structure was discovered \cite{brenner2012life}.

Von Neumann's work highlights a key distinction between genuine self-replication and superficial copying. Consider red dye diffusing in water: though color spreads throughout, we would not call dye a `self-replicating organism.' To exclude such examples, von Neumann designed a universal self-replicator: a finite CA pattern that performs arbitrary computations on its input tape and constructs arbitrary patterns from tape descriptions. When reading its own description, the machine builds an exact copy of itself. Thus, the underlying CA rules must be `Turing-complete', capable of arbitrary computation. Crucially, self-replication emerges from the machine's computational universality.

There have been many simplifications of von Neumann's construction \cite{langton1984self, byl1989self, sayama1999toward, sipper1998fifty}, including most famously Conway's Game of Life \cite{gardner1970games} which features deceptively simple rules generating viscerally lifelike structures \cite{johnston2022conway}. Game of Life was believed to contain self-replicators before it was formally proven, due to its Turing completeness \cite{berlekamp2004winning, turing_universality_of_gol}. Wolfram \cite{wolfram1984universality} later emphasized that Turing-complete CAs exhibit distinctive qualitative features, prompting his conjecture that Rule 110 -- a one-dimensional automaton with an extremely compact description -- was Turing-universal based on these characteristics, subsequently proven through elaborate analysis of the rule's dynamics \cite{cook_og, p_completeness_of_rule110, cook2009concrete}.

However, a fundamental ambiguity persists in these core notions \cite{universalities_in_cas, mirage_of_universality_in_cas}. Turing constructed a universal Turing machine without formally specifying the notion of Turing completeness \cite{turing1936computable}. Similarly, von Neumann constructed a non-trivial self-replicator without formally characterizing non-trivial self-replication. A natural hypothesis is that Turing universality implies self-replication \cite{langton1984self} -- after all, such systems support quines, programs outputting their own description \cite{jones1997computability, hofstadter1999godel}. However, as we show, nuances about how physical systems perform computation can matter profoundly.

We formalize two kinds of Turing universality for CAs: local universality -- an automaton's ability to simulate any Turing machine (implicitly assumed by von Neumann and Wolfram), and global universality~\cite{simple_universal_ca} -- the ability to simulate any same-dimension CA. Local universality focuses on computational power within the CA \cite{lindgren1990universal}, while global universality concerns emulating the entirety of other physical systems \cite{universalities_in_cas}.

We examine the relationships between local and global simulation and establish their precise connections with self-replication. Though most natural examples like Game of Life exhibit both local and global universality \cite{the_game_of_life_universality_revisited}, the distinction between these two forms of universality becomes fundamental when it comes to self-replication. Our analysis addresses several key questions: whether local and global universality are equivalent (they are not, global is stronger), whether local universality implies self-replication (it does not), whether global universality implies self-replication (it does), whether Rule 110 supports self-replication (unknown), and whether dimensionality affects these relationships, summarized in Fig.~\ref{fig:main1}(b).

\begin{figure}[t!]
    \centering
    \includegraphics[width=0.95\linewidth]{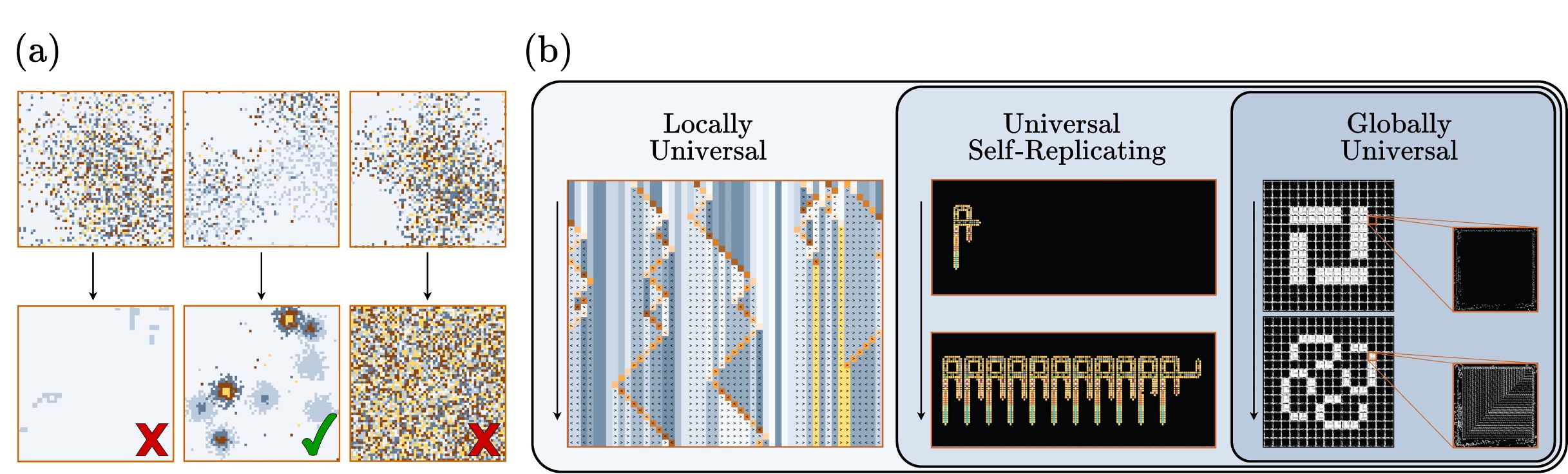}
    \vspace{.2cm}
    \caption{(a) Physical or chemical systems, abstracted as cellular automata (CAs), may in appropriate circumstances evolve from random initial conditions into self-replicating organisms. We develop a theory constraining which dynamics and configurations can give rise to self-replicating organisms. (b) We rigorously define what it means for a CA to be Turing-universal and establish its relationship to self-replication. We identify and study three classes of CAs with increasing capabilities: (i) locally universal CAs can host a universal Turing machine within their dynamics, (ii) universal self-replicating CAs support self-replicating structures that also perform Turing-universal computation, and (iii) globally universal CAs can simulate any other CA of the same dimension via local, translation-invariant encodings. We prove these form a strict hierarchy: Locally Universal $\supset$ Universal Self-Replicating $\supset$ Globally Universal. Consequently, some CAs (such as the one shown in the `locally universal' block) are Turing-universal and yet provably cannot support self-replicating organisms, analogous to biological systems capable of transcription but not replication.}
    \label{fig:main1}
\end{figure}

Our work establishes foundations for a mathematical theory of self-replication and its connections to computability and complexity theory. By disentangling self-replication's interface with computational universality, we identify mechanisms essential for replication vis-à-vis transcription. Our definitions and results enable the specification of precise, necessary conditions for physical systems to constitute self-replicating organisms.

\section{Results}
\label{sec:results}

Section~\ref{sec:univ} defines local and global universality, demonstrating the latter is strictly stronger via reversible CAs \cite{kari2005reversible}. We verify that our definitions align with previous CA constructions, including Rule 110's elaborate Turing completeness proof. Section~\ref{sec:necess} presents necessary conditions for non-trivial self-replication and modifies von Neumann's construction to demonstrate 1D universal self-replicators. Section~\ref{sec:selfrep} constructs a 1D locally universal CA that cannot sustain non-trivial self-replication, as calibrated by our necessary conditions.

\subsection{Computational universality in physical dynamics}
\label{sec:univ}

We use CAs to model spatially local, translation-invariant physical dynamics. A CA consists of a regular grid where each cell can be in one of finitely many states, evolving synchronously according to local rules that depend only on each cell's current state and its neighbors. Rule 110 \cite{wolfram1984universality} is a famous 1D example with black and white cells where each cell's next state depends on its current state and nearest neighbors, as shown in Fig.~\ref{fig:main2}(a).

We examine the relationship between dynamics and computation: in what sense does physical dynamics perform computation? This question, originating with Turing \cite{turing1936computable}, has a rich history \cite{piccinini2015physical}. The Church-Turing Thesis stipulates that a universal Turing machine can simulate any physical system's dynamics, including another Turing machine, but is vague about what constitutes simulation. When you run Python code, symbolic manipulation is encoded into electron configurations in semiconductors, evolving according to physical laws, and then decoded back to symbolic output. This illustrates the necessity of `encoding' and `decoding' operations mediating between symbolic computation and physical dynamics (Fig.~\ref{fig:main2}(b)). Physical dynamics instantiate specific symbolic computation if an encoder-decoder pair exists relating them, subject to sensible constraints. Crucially, the encoder and decoder must be computationally simpler than the computation attributed to the physical system. Otherwise, we risk computational misattribution where interesting computation is performed by the encoding/decoding apparatus rather than the physical dynamics.

\begin{figure}
    \centering
    \includegraphics[width=0.95\linewidth]{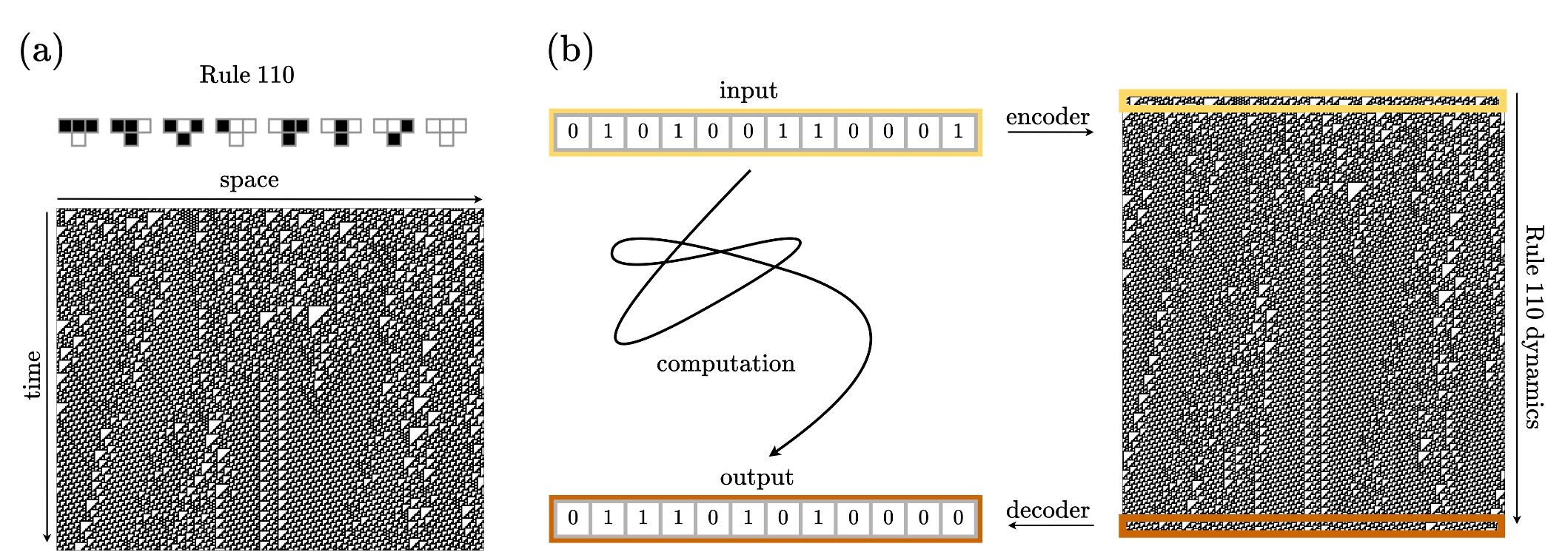}
    \vspace{0.2cm}
    \caption{(a) Space-time diagram of the elementary CA, Rule 110. Each cell assumes a binary state (black or white) and evolves deterministically based on its current state and those of its two nearest neighbors. Time progresses downward from a random initial configuration (top row), revealing the emergence of complex spatiotemporal structures. (b) Schematic of a computational process, as implemented by a Turing machine, where an input bit string undergoes computation to produce an output bit string. To simulate this computation using Rule 110, three requirements must be satisfied: (i) the input bit string must be encoded into the CA's initial configuration, (ii) the CA's evolution must faithfully track the Turing machine's computational steps, and (iii) a decoding scheme must extract the final output from the CA dynamics such that it matches the Turing machine's output.}
    \label{fig:main2}
\end{figure}

We define two forms of CA Turing-universality, starting with:
\begin{definition}[Local universality of a CA, informal]
A CA is \text{\rm locally universal} if it can simulate a universal Turing machine in a bounded region whose size scales with the machine's tape usage. Specifically, there exists an encoder-decoder pair that maps between universal Turing machine dynamics and CA dynamics, where both the encoder and decoder have time complexity bounded by a function of the size of the non-blank portion of the Turing machine tape.
\end{definition}
\noindent The key idea is that a physical system is Turing-universal if it can host a localized universal computer that expands as memory requirements grow (Fig.~\ref{fig:main3}(a)). Additional technical details on time slowdowns, symmetry transformations, and background configurations appear in the Appendix.

\begin{definition}[Global universality of a CA, informal]
A CA is \text{\rm globally universal} if it can simulate any other CA of the same or lower dimension. The encoder and decoder must be block encodings and thus have bounded time complexity for encoding/decoding between local regions.
\end{definition}
\noindent This captures a fundamentally different notion: a globally universal physical system can simulate any other physical system while preserving spatial locality, i.e.~local interactions in the simulated system correspond to local interactions in the simulating system (see Fig.~\ref{fig:main3}(b)). This is stringent because CAs are infinite in extent; we are characterizing how one infinite system can simulate the entirety of another. Global universality is a form of `intrinsic universality' in the CA literature, formulated in several variants~\cite{simple_universal_ca, inducing_order_on_ca_by_grouping}. By contrast, despite local universality's foundational importance to von Neumann's and Wolfram's work, the field has lacked a rigorous definition. Our framework fills this gap.

\begin{figure}
    \centering
    \includegraphics[width=0.95\linewidth]{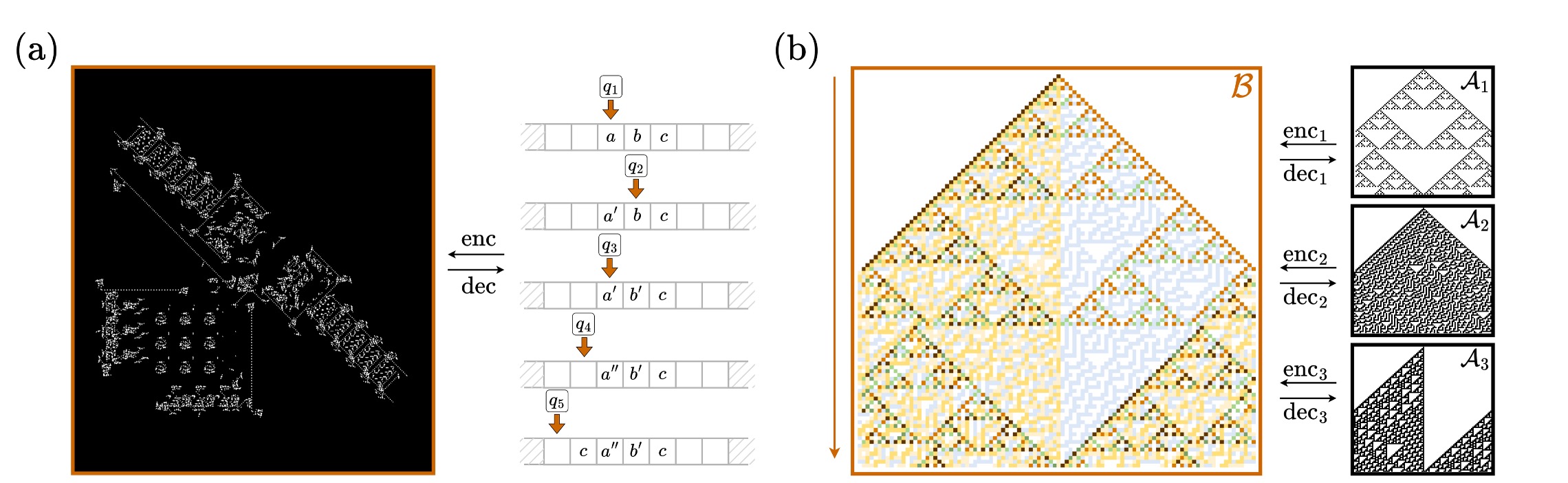}
    \vspace{0.2cm}
    \caption{(a) Local universality in a 2D CA. A spatially finite configuration (shown at a fixed time) can perform Turing-universal computation: any Turing machine can be encoded into such a configuration, with the CA's evolution tracking the machine's computation and its states recoverable through decoding. The CA is termed \textit{locally universal} because Turing machines with finite input can be simulated within finite (though potentially growing) spatial regions, rather than requiring infinite configurations of the CA at the outset.  (b) Global universality in a 1D CA. The CA $\mathcal{B}$ (shown evolving top to bottom) can simulate infinite configurations of any other 1D CA through local, translation-invariant encoding and decoding maps.  By contrast, local universality only requires a CA to be able to simulate a Turing machine within a finite region. Here we exemplify how $\mathcal{B}$ globally simulates dynamics in three different CAs $\mathcal{A}_1, \mathcal{A}_2, \mathcal{A}_3$ simultaneously.}
    \label{fig:main3}
\end{figure}

What, then, is the relationship between local and global universality? Let $\textsf{LocallyUniversal}$ denote the set of locally universal CAs and $\textsf{GloballyUniversal}$ denote the set of globally universal CAs. We have:
\begin{theorem}[Computational universality hierarchy for CAs] $\textnormal{\textsf{GloballyUniversal}} \subsetneq \textnormal{\textsf{LocallyUniversal}}$.
\end{theorem}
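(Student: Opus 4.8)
The claim has two parts: every globally universal CA is locally universal, and the inclusion is strict. The first part should be the routine direction. The plan is to take a globally universal CA $\mathcal{B}$ and a CA $\mathcal{U}$ that is known to be locally universal (for instance a standard CA hosting a universal Turing machine, or Rule 110, whose local universality the excerpt has already calibrated against our definition). Since $\mathcal{B}$ simulates any same-dimension CA via block encoding/decoding, it in particular simulates $\mathcal{U}$; I would then compose the block encoder/decoder of the global simulation with the Turing-machine encoder/decoder witnessing $\mathcal{U}$'s local universality. The key point to check is that block encodings have time complexity bounded for local regions, so composing them with the tape-size-bounded encoder/decoder of $\mathcal{U}$ still yields maps whose complexity is bounded by a function of the non-blank tape length; locality of the block encoding ensures that a finite (growing) region in $\mathcal{U}$ maps to a finite (growing) region in $\mathcal{B}$, preserving the "localized computer that expands with memory" picture. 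I would also verify the background-configuration and time-slowdown bookkeeping from the Appendix composes cleanly — these are the technical details but not conceptually hard.

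The strict-separation part is where the real content lies, and the excerpt has already signposted the tool: reversible CAs. The plan is to exhibit a CA that is locally universal but not globally universal. For the non-globally-universal side, I would invoke a structural obstruction that globally universal CAs must satisfy but some locally universal CA fails. The natural candidate is reversibility/injectivity of the global map: a globally universal CA must be able to simulate every same-dimension CA, including irreversible ones, so — via a standard argument about the Garden-of-Eden / surjectivity or about information loss — a reversible CA cannot be globally universal (a reversible CA's block-encoded simulation of an irreversible CA would have to reproduce irreversible dynamics inside an injective map on the relevant invariant subset, which fails once one tracks how many $\mathcal{B}$-configurations must decode to the same simulated configuration before and after a step). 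So it suffices to produce a CA that is simultaneously reversible and locally universal. Such objects are known to exist — reversible Turing machines are universal, and there are standard embeddings of reversible Turing machines (or reversible logic, e.g. billiard-ball / Margolus-style partitioned rules) into reversible CAs — and I would cite the reversible-CA universality literature referenced in the excerpt and check that the embedding meets our Definition of local universality (the encoder/decoder there are simple block/tape maps with the required complexity bound).

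The main obstacle, I expect, is making the "reversible $\Rightarrow$ not globally universal" step fully rigorous under *our* specific block-encoding definition of global simulation, rather than under one of the several variants in the literature. The subtlety is that global simulation allows a block encoding with a nontrivial blow-up factor and a possibly nontrivial "background" or sublattice, and allows decoding to ignore junk; one must rule out that this freedom lets a reversible $\mathcal{B}$ fake irreversibility. I would handle this by a counting/entropy argument localized to a large finite window: if $\mathcal{B}$ is reversible, the number of distinct $\mathcal{B}$-configurations on a window of size $n$ compatible with a fixed decoded $\mathcal{A}$-configuration is invariant under the dynamics up to boundary effects, whereas for an irreversible $\mathcal{A}$ (say one that collapses two states to one everywhere) this multiplicity must strictly change — a contradiction once $n$ is large compared to the block size and neighborhood radius. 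Pinning down the boundary-effect estimate and confirming it is compatible with the exact quantifiers in our Appendix definition is the step that needs care; everything else is assembly.
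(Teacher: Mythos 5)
Your overall strategy matches the paper's: the inclusion is proved by composing the block encoder/decoder of a global simulation with the encoder/decoder witnessing local universality of some simulated CA (the paper packages this as a composition lemma and checks the background and complexity bookkeeping exactly as you describe), and strictness is obtained from a reversible CA that is locally universal (Toffoli in dimension $\geq 2$, Morita in dimension $1$) but cannot be globally universal. The one place where you diverge is the argument that reversibility obstructs global universality, and your version is the harder road. You propose a window-counting/entropy argument: track how many $\mathcal{B}$-configurations on a window of size $n$ decode to a fixed $\mathcal{A}$-configuration before and after one simulated step, and derive a contradiction from the multiplicity collapse forced by an irreversible $\mathcal{A}$. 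This can be made rigorous (it is essentially a Moore--Myhill balance argument), but as you anticipate, the boundary-effect estimates and the freedom in the decoder's partial domain make it delicate. The paper sidesteps all of this by restricting attention to spatially periodic configurations: it takes $\mathcal{A}$ to be the constant map onto $0^{\mathbb{Z}^d}$, notes that the encoder sends $V$-periodic configurations to $W$-periodic ones, that the set of $W$-periodic configurations is \emph{finite} and invariant under $\sigma_v \circ G^\tau$, and that an injective self-map of a finite set is a bijection, so some iterate returns $\mathcal{E}(c)$ to itself; decoding then forces $c = 0^{\mathbb{Z}^d}$, a contradiction for any nonzero periodic $c$. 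This finite-orbit argument requires no counting in windows and no boundary control, so if you pursue your route you should expect substantially more work than the paper needed; switching to the periodic-configuration trick would close the gap you yourself flagged.
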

\noindent This reveals a hierarchy of computational power. Every globally universal CA must be locally universal: if a CA can simulate any CA, it can simulate one implementing a Turing machine locally. The proof verifies that global simulation of a locally universal CA yields a valid local encoder-decoder pair for the simulating CA (see Appendix). The containment is strict: some CAs are locally universal but not globally universal. Consider reversible CAs, which run equally well forwards or backwards. Some can locally implement reversible universal Turing machines but cannot globally simulate irreversible CAs \cite{hertling1998embedding}, proving $\textsf{GloballyUniversal} \subsetneq \textsf{LocallyUniversal}$.

The literature contains numerous CAs exhibiting either local or global universality \cite{simple_universal_ca, morita1995, intrinsic_universality_of_reversible_1D_ca, turing_universality_of_gol, cook_og}. Much effort has focused on constructing CAs with simple rules that locally implement universal Turing machines \cite{cook_og}. However, these constructions often leave encoder-decoder pairs implicit and neglect their computational complexity.  This oversight can be critical: without proper constraints, complex encoders and decoders can spuriously perform the computation we want to attribute to the CA itself \cite{the_game_of_life_universality_revisited, computational_dynamical_systems}. The elementary Rule 110 provides a compelling case study. This 1D CA with nearest-neighbor interactions simulates a universal Turing machine through a remarkable construction, yet the encoder-decoder computational complexity was never analyzed. Building on previous work~\cite{cook_og, p_completeness_of_rule110, cook2009concrete}, we establish:
\begin{theorem}[Local universality of Rule 110]
Rule 110 is locally universal with respect to an encoder-decoder pair with polynomial time complexity and linear space complexity. Rule 110 incurs a polynomial time slowdown relative to the universal Turing machine it simulates.
\end{theorem}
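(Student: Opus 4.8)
The strategy is to make explicit, and quantitatively audit, the classical chain of simulations
\[
\text{Rule }110 \ \succeq\ \text{cyclic tag systems} \ \succeq\ \text{Turing machines},
\]
where $\succeq$ means ``simulates,'' tracking the time and space cost of every translation rather than leaving the encoder and decoder implicit. Fix once and for all an efficient universal Turing machine $U$. I must exhibit an encoder $E$ sending a configuration of $U$ to a configuration of Rule 110, a decoder $D$ going back, and show that (i) running Rule 110 for a polynomial number of steps on $E(c)$ yields $E(c')$, where $c'$ is the configuration of $U$ a fixed number of steps later; (ii) $E$ and $D$ are computable in polynomial time and linear space in the size of their inputs; and (iii) $D$ composed with the Rule 110 dynamics composed with $E$ agrees with the dynamics of $U$ up to a polynomial slowdown. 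By the definition of local universality, this suffices.

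\textbf{The encoder.} $E$ is a composition $E_2\circ E_1$. The map $E_1$ sends a configuration of $U$ to a configuration of a fixed cyclic tag system $C_U$ whose list of appendant words depends only on $U$; here I use the polynomial-overhead simulation of Turing machines by cyclic tag systems of \cite{p_completeness_of_rule110}, rather than the classical tag-system route, since the latter incurs an exponential blowup of the dataword. The quantitative facts I extract are that $E_1$ is computable in near-linear time and linear space, and that throughout the entire computation the cyclic-tag dataword stays polynomially bounded in the space used by $U$. The map $E_2$ sends a cyclic-tag configuration to a Rule 110 configuration via Cook's glider construction \cite{cook_og, cook2009concrete}: each dataword symbol is replaced by a fixed finite packet of gliders, this finite string is placed on the periodic ``ether'' background, and to its left one appends the fixed spatially periodic ``table'' of gliders that repeatedly emits the appendant words of $C_U$ (a semi-infinite periodic background configuration, which the local-universality framework permits). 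Since $E_2$ is a block substitution followed by a fixed periodic tail, it is linear-time and linear-space computable from a finite description, so $E$ is polynomial-time and linear-space overall.

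\textbf{The decoder and correctness.} The decoder $D$ reverses this: given a Rule 110 configuration guaranteed to be in the above normal form, $D$ scans it, recognizes the glider packets, reads off the cyclic-tag dataword, and inverts $E_1$ to recover the configuration of $U$ (halting of $U$ corresponds to a recognizable cyclic-tag state, which $D$ reports); both stages are linear-time and linear-space. Correctness and the slowdown bound come from Cook's analysis of the dynamics: on a configuration in normal form, Rule 110 faithfully carries out one production step of $C_U$, and the number of Rule 110 steps this takes is proportional to the spatial extent of the current data region, hence polynomial in the space used by $U$; the configuration returns to normal form afterward, so one iterates. Composing the per-step Rule 110 cost with the polynomial step overhead of $C_U$ over $U$ from \cite{p_completeness_of_rule110} yields a polynomial total slowdown, while the spatial extent of the Rule 110 configuration stays linear in the space used by $U$.

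\textbf{Main obstacle.} No new gadget is needed; the difficulty is the quantitative bookkeeping across borrowed constructions. The crux is twofold. First, one must confirm that the cyclic-tag simulation of $U$ genuinely keeps the dataword polynomially bounded and runs with polynomial step overhead, so that no exponential factor hides in the encoding --- this is exactly where the efficient simulation of \cite{p_completeness_of_rule110} must replace the exponentially wasteful Cook--Minsky tag-system simulation implicit in earlier treatments. Second, one must extract from Cook's glider analysis a genuinely explicit, efficiently computable and decodable normal form for the Rule 110 configuration that is provably preserved by the dynamics, with the Rule 110 steps-per-cyclic-tag-step count under linear (in spatial size) control. One must also check that the semi-infinite periodic table, and any spatial reflection or fixed time-dilation used in Cook's construction, fit the technical allowances --- background configurations, symmetry transformations, bounded time slowdown --- of the formal definition of local universality in the Appendix. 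Assembling these scattered facts into a single clean encoder--decoder pair with proven polynomial-time, linear-space complexity is the substance of the argument.
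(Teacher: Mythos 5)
Your plan follows essentially the same route as the paper: the chain universal Turing machine $\to$ cyclic tag system $\to$ Rule 110 using the Neary--Woods polynomial-overhead simulation in place of Cook's exponential tag-system route, Cook's glider encoding with periodic backgrounds, and explicit compositional bookkeeping of encoder/decoder time and space. The details you defer to (the ossifier background on the opposite side of the tape data, the admissibility conditions on the cyclic tag system ensuring ossifiers never collide with static data, and the shift vectors and delay function needed to keep the decoder's readout window centered at the origin) are exactly the ones the paper's appendix fills in, so your identification of the remaining obstacles is accurate.
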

\noindent Our analysis traces a chain of simulations: universal Turing machine $\leftrightarrow$ clockwise Turing machine $\leftrightarrow$ cyclic tag system $\leftrightarrow$ Rule 110 (Fig.~\ref{fig:main4}(a)). By analyzing the computational complexity at each step and exploiting transformation compositionality, we bound the overall encoding/decoding complexity. This rigorous analysis necessitates making all encodings and decodings fully explicit (Fig.~\ref{fig:main4}(b)), providing a complete characterization of Rule 110's computational universality.

\begin{figure}
    \centering
    \includegraphics[width=0.95\linewidth]{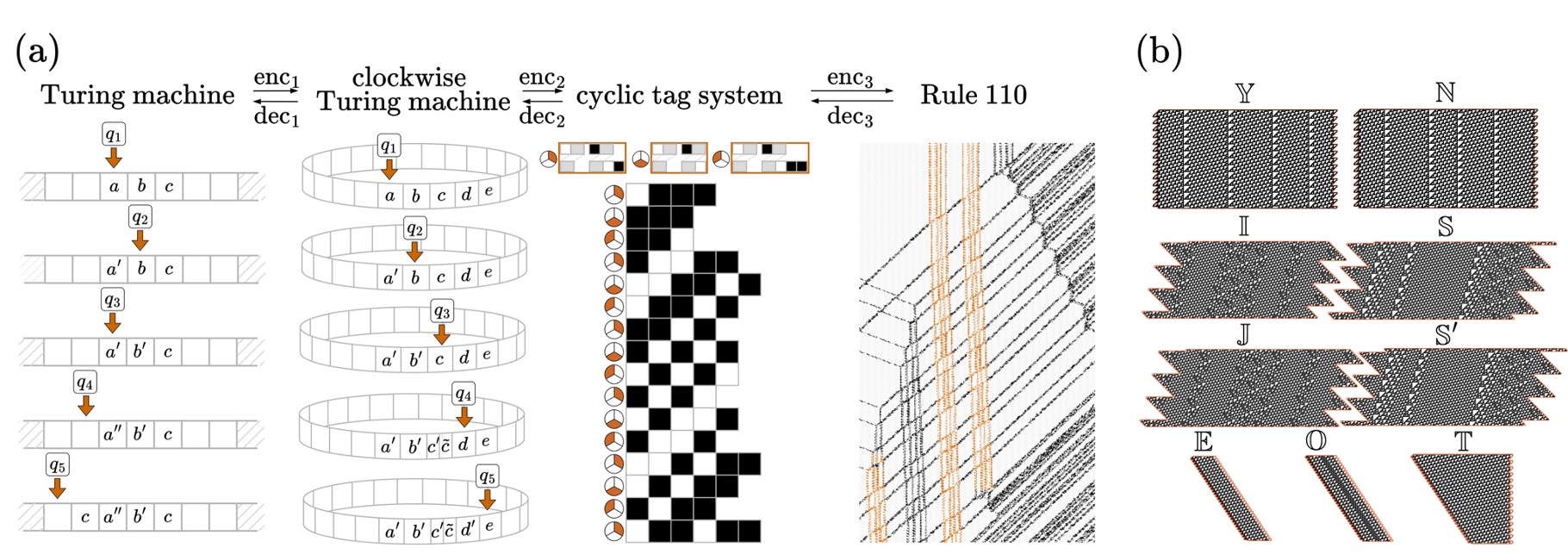}
    \vspace{0.2cm}
    \caption{(a) Rule 110 can simulate any Turing machine via a sequence of intermediate encodings and decodings.  In particular, Rule 110 (where certain ``gliders'' are colored orange for emphasis) simulates a cyclic tag system which itself simulates a clockwise Turing machine, which itself simulates an ordinary Turing machine.  We carefully keep track of the computational complexities of the encodings and decodings as well as their compositionality properties, and establish that the composed encoding and decoding between Rule 110 and a Turing machine requires only polynomial time complexity and linear space complexity. (b) The encoding from a Turing machine to Rule 110 and the subsequent decoding are very intricate. Shown here are encoding components of Turing machine states mapping to modular regions of the Rule 110 CA that fit together like puzzle pieces.
}
    \label{fig:main4}
\end{figure}

\subsection{Necessary conditions for self-replication}
\label{sec:necess}

How does self-replication interface with computational universality? What logical organization is necessary or sufficient for a finite CA pattern to self-replicate? Von Neumann addressed sufficiency by constructing a universal self-replicator in a 29-state, 2D CA. His work and its refinements \cite{neumann_vnsr, thatcher_construction, codd1968cellular} demonstrated self-replication in CAs.  Here, we focus on necessary conditions for CAs to support self-replicating configurations. A natural question from von Neumann's construction is whether two dimensions are necessary for universal self-replication. We establish:
\begin{theorem}
\label{thm:1dvN}
Universal self-replicators exist in one dimension.
\end{theorem}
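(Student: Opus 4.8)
The plan is to adapt von Neumann's universal-constructor architecture to a single spatial dimension, trading the two-dimensional workspace through which his construction arm sweeps for a signal-relay discipline along the line. First I would design a finite-state 1D CA whose ``organism'' is a contiguous block consisting of a \emph{computer} segment (a simulation of a universal Turing machine, whose realizability in 1D we may take from the local-universality results above), a \emph{constructor} segment (a programmable head that interprets strings encoding finite 1D patterns), and a \emph{genome} segment (such a string). Replication then proceeds in phases: (i) a read-marker walks along the genome, and each symbol it reads emits a signal that travels to the free end of the organism, where the constructor frontier appends the next cell(s) of a fresh body $B'$ into previously empty space; (ii) once $B'$ is complete, the genome string is copied symbol-by-symbol and appended after $B'$ to form the new genome $D'$; (iii) an activation signal is dispatched to the new organism $B' D'$, which is structurally identical to the parent and is therefore itself a universal self-replicator. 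Following von Neumann, I would let the genome $D$ describe only the body $B$ and be copied verbatim rather than interpreted, which avoids the infinite regress of a description that must also describe itself.

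For the \emph{universality} clause, the constructor's instruction set must be rich enough to deposit an arbitrary state at the advancing frontier, so that the organism builds whatever finite 1D pattern its genome specifies --- in particular its own --- and the embedded computer guarantees it can also carry out arbitrary computation on auxiliary input. For \emph{non-triviality}, I would verify that the construction meets the necessary conditions established in Section~\ref{sec:necess}: the pattern contains an embedded universal computer, and the offspring arises through genuine transcription-and-translation of a description, so it cannot be dismissed as diffusion or periodic tiling.

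The hard part will be the one-dimensional logistics. Unlike in 2D, the construction head cannot roam a free region, so every interaction between the genome and the frontier must be carried by signals shuttling back and forth along the organism without colliding, and the growing offspring must never need to overlap the parent. I would handle this with a back-and-forth shuttle --- a marker advancing one genome cell per round trip, aided by a bounded counter region that remembers positions --- which is correct but introduces a polynomial (roughly quadratic) slowdown; since the theorem asserts only existence, this overhead is harmless. A secondary nuisance is sequencing the three phases and the parent-offspring handoff so that the parent does not re-trigger construction and the two organisms' signals stay confined to their own blocks, which a small set of ``phase-token'' states passed along the boundary cells can enforce. Finally I would assemble the transition rule --- most cleanly as a composition of the read-shuttle, the constructor frontier, the tape-copier, and the embedded universal machine --- and prove by induction on the phases that feeding the organism its own description yields exactly two copies of it.
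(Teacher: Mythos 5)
Your architecture is sound and faithful to von Neumann's logic (separate computer, constructor, and passively-copied description; signal relay to a frontier that deposits states into empty space to one side), but it takes a genuinely different route from the paper. The paper does not build a 1D machine from scratch: it modifies Thatcher's concrete realization of von Neumann's 29-state 2D self-replicator so that the constructing arm is first extended rightward past the tape and then retracted downward, producing the offspring with a purely \emph{horizontal} offset; it then checks that the entire replication process stays inside a strip of bounded height (the paper bounds this by 2000 cells), and finally compresses that strip into a single dimension by taking each vertical column as one state of a 1D CA with $29^{2000}$ states and radius 1. This reduction buys rigor cheaply: all of the delicate sequencing, signal routing, and constructor logic is inherited from an already-verified 2D construction, and the only new obligations are the arm-extension phase and the height bound. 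Your direct approach buys a conceptually cleaner, genuinely one-dimensional organism (no astronomically large product state space), but it shifts the entire burden onto exactly the points you flag as ``the hard part'': the non-colliding shuttle protocol, the parent--offspring handoff, and the phase sequencing all have to be specified and verified at the level of an explicit transition function, none of which is done in your sketch. As written, your proposal is a credible plan rather than a proof; to close it you would either need to carry out that transition-table construction in detail, or notice (as the paper does) that you can outsource it to the 2D literature by confining a known construction to a bounded strip and projecting down a dimension.
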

\noindent We modify von Neumann's 2D design so that its `constructing arm' builds replicas with a horizontal rather than diagonal offset, allowing operation within a bounded-height strip. Encoding this bounded-height strip into a 1D CA with an expanded state space (more than 29 states) yields a 1D CA implementing universal self-replication.

Von Neumann's Turing-universal constructor ensures self-replicating subsystems perform dynamics beyond mere copying.  However, von Neumann did not articulate necessary conditions for CAs to support universal self-replicators; he provided an example but did not define the general set of CAs \textsf{UniversalSelfReplicating}. We define a necessary condition for universal self-replication from CA dynamics, phrased informally below.
\begin{definition}[Local self-replicating condition, informal]
\label{def:locallyperiodicself1}
A CA satisfies the \text{\rm local self-replicating} \text{\rm condition} if it contains a bounded pattern that evolves for at least time $T \geq 2$ and proliferates arbitrarily many copies, where each copy exhibits the same bounded dynamical patterns.
\end{definition}
\noindent Requiring $T \geq 2$ ensures dynamic rather than static patterns. This captures non-trivial self-replication as unbounded proliferation of dynamical patterns.

The local self-replicating condition represents a minimal requirement for non-trivial self-replication. Therefore:
\begin{conditions}[Necessary conditions for universal self-replicators, informal] \label{def:necessaryCA1}
Any CA in \textnormal{\textsf{UniversalSelfReplicating}} is locally universal and satisfies the local self-replicating condition.
\end{conditions}
\noindent These requirements reflect that universal self-replication must be non-trivial (containing dynamical components) and computationally universal. While additional criteria are needed to fully characterize \textsf{UniversalSelfReplicating}, these conditions suffice for our analysis. Von Neumann's universal constructor exemplifies both conditions.

There exist simpler examples of self-replicating CA subsystems that are not themselves Turing-universal, and yet still exhibit non-trivial dynamics. Simpler examples like Langton's loops \cite{langton1984self} through the Turing-universal Perrier-Sipper-Zahnd loop \cite{perrier1996toward} all satisfy our local self-replicating condition, suggesting a hierarchy ordered by dynamical complexity. Such observations motivate the development of more stringent conditions that capture increasingly sophisticated dynamics beyond mere periodicity, potentially enabling a systematic classification scheme for self-replicating systems based on their computational and dynamical properties.

\subsection{Self-replication versus computational universality}
\label{sec:selfrep}

Does local universality, namely the ability to build a local Turing-universal computer within a CA, guarantee non-trivial self-replicating configurations? Surprisingly, we prove that the answer is no. We construct a locally universal CA that cannot support self-replication, demonstrating a separation between computational power and reproductive capability.  Our CA achieves universality through isolated computational elements whose restricted communication precludes self-replication.

We first construct a 1D, Turing-universal `talking heads' CA, with mobile Turing machine heads that communicate through proximity or shared tape. We then modify this to a `non-talking heads' CA, enforcing that proximal heads halt immediately. We further prevent tape-mediated communication by appending directional markers to the state space: when a head moves rightward from a cell it marks that cell with a $\rightarrow$, and when moving leftward it marks with a $\leftarrow$. These markers establish exclusive territories since any head encountering a cell marked by another head halts immediately. This blocking of information transfer prevents self-replication:
\begin{theorem}[Non-talking heads CA does not have self-replication]\label{thm:nontalking}
A non-talking heads CA does not satisfy the local self-replicating condition and therefore cannot belong to $\textnormal{\textsf{UniversalSelfReplicating}}$.
\end{theorem}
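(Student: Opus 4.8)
The plan is to derive a contradiction from the assumption that a non-talking heads CA satisfies the local self-replicating condition. Suppose for contradiction that there is a bounded pattern $P$ supported on some interval $I$ of length $\ell$ that evolves for at least $T \geq 2$ steps and proliferates arbitrarily many copies, each exhibiting the same bounded dynamical pattern. The key structural fact to exploit is that in the non-talking heads CA, the directional markers $\rightarrow$ and $\leftarrow$ partition space into exclusive territories: once a head has visited and marked a set of cells, no other head can ever enter those cells without halting immediately. First I would make precise the notion of the ``territory'' of a head: the set of cells it has marked up to a given time, which is a contiguous interval that only grows, and whose growth rate is at most one cell per time step in each direction. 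A head confined to its territory can only do bounded work before either halting (by hitting its own territory boundary configuration, a foreign marker, or a halting state) or escaping into fresh unmarked tape — but fresh tape adjacent to the original pattern is finite in supply on each side before another head's territory is encountered.

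Second, I would argue that the total number of heads present in any evolution starting from the finite pattern $P$ is bounded by a constant depending only on $P$ (in fact by the number of heads initially present), since the non-talking rule forbids any mechanism for creating new heads: proximal heads halt, and a head meeting foreign territory halts, so there is no ``construction'' event that spawns an additional active head. This is the crux: self-replication in the sense of Definition~\ref{def:locallyperiodicself1} requires arbitrarily many copies of a dynamical pattern, and each copy, being non-trivial ($T \geq 2$), must contain at least one cell that changes state, which in this CA can only happen at or adjacent to a head. With only boundedly many heads, only boundedly many disjoint regions of space–time can be dynamically active, so at most a constant number of disjoint copies of any non-static pattern can appear. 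Hence arbitrarily many copies is impossible, contradicting the assumption.

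The third step is to handle the loophole that a single head could sweep across space and leave behind many copies of a pattern in its wake — a ``printing'' behavior rather than simultaneous coexistence. Here I would invoke the directional-marker mechanism directly: as the head moves it marks every cell it leaves, so its trail is monotone and it can never revisit a cell it has marked (revisiting would mean encountering its own $\rightarrow$ or $\leftarrow$ marker, which by the rule halts it — or if the rule permits a head to re-enter its own territory, then one shows the head's behavior there is eventually periodic with bounded period and the pattern it prints is a fixed periodic background, which is ``trivial'' and excluded, or does not consist of genuinely distinct dynamical copies). I would also need to confirm that ``the same bounded dynamical pattern'' in the definition forces the copies to be genuine repetitions of a $T$-step nontrivial evolution, not artifacts of the static marker background. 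Then a single head moving monotonically in one direction produces, in its wake, a fixed frozen (static) configuration of markers and tape symbols, which exhibits no dynamics after the head has passed — so it cannot be the required nontrivial ($T\geq 2$) pattern.

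I expect the main obstacle to be the case analysis of a head interacting with its own territory and with the boundary between territories — precisely pinning down that no sequence of legal transitions can manufacture a fresh active head, and that ``copies'' in the wake of a moving head are static rather than dynamical. This requires carefully enumerating the transition rule's behavior on marked cells and at head–marker collisions, and matching it against the formal definition of the local self-replicating condition (in particular, what it means for a copy to ``exhibit the same bounded dynamical pattern''). Once the invariant ``number of active heads never increases'' is established and ``a passed-over region is dynamically inert'' is verified, the counting argument closes: boundedly many heads can support only boundedly many simultaneously or sequentially active nontrivial regions, which contradicts unbounded proliferation. I would organize the formal proof around these two invariants, relegating the transition-rule case checks to a lemma.
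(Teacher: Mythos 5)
Your territory/monotone-marking argument is essentially the paper's Trapping Lemma (Lemma~\ref{lemma:trapping}): each head is confined strictly between the initial positions of its neighboring heads, so no head can invade another's region and no new active heads can be manufactured. Your handling of the ``printing'' loophole is also sound, since the definition requires the $T \geq 2$ patterns $\mathcal{P}_0,\dots,\mathcal{P}_{T-1}$ to be pairwise distinct, so a frozen wake cannot constitute a copy. So the local mechanics are right.

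The genuine gap is in your second step. You bound the number of heads ``by the number of heads initially present,'' which implicitly assumes the initial configuration is a finite pattern on an inert background. But Definition~\ref{def:selfreplocal1} only requires $c$ to agree with an \emph{admissible background} outside a finite region, and admissible backgrounds are (piecewise) spatially periodic configurations that may contain a head in every period --- hence infinitely many heads. In that case ``boundedly many heads $\Rightarrow$ boundedly many active regions'' collapses, and this is exactly where the bulk of the paper's proof lives: one uses the trapping lemma to decompose each half-infinite background region into non-interacting finite blocks each carrying its own trapped heads, shows that the central perturbation can influence at most the first block on each side (so its forward influence is bounded for all time, not merely linearly growing), and then disposes of pattern copies arising in the untouched background via Lemma~\ref{lemma:nobackgroundPs}: such copies occur infinitely often at each large time, which violates the requirement that $F^{n_i}(c)$ contain \emph{exactly} $k_i < \infty$ copies. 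Your proposal never engages with the ``exactly $k_i$'' clause or with backgrounds carrying infinitely many heads, so as written the counting argument does not close. (A smaller issue: the ``otherwise $\mapsto \textsf{halt}$'' clause of the rule can place a halt-state head into a previously headless cell, so ``the number of heads never increases'' is literally false; you would need to count only non-halted heads, which your deferred case-check should catch.)
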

\noindent This shows $\textsf{UniversalSelfReplicating}$ and $\textsf{LocallyUniversal}$ to be distinct, as the non-talking heads CA belongs to the latter but not the former. More broadly:
\begin{theorem}[Self-replication lies strictly between global and local universality]\label{thm:selfhierarchy}
The following strict inclusions hold: $\textnormal{\textsf{GloballyUniversal}} \subsetneq \textnormal{\textsf{UniversalSelfReplicating}} \subsetneq \textnormal{\textsf{LocallyUniversal}}$.
\end{theorem}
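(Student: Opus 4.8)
The statement bundles together two strict inclusions, and the plan is to prove $\textsf{UniversalSelfReplicating} \subsetneq \textsf{LocallyUniversal}$ and $\textsf{GloballyUniversal} \subsetneq \textsf{UniversalSelfReplicating}$ separately, in each case reusing results already established. For the first, the containment $\textsf{UniversalSelfReplicating} \subseteq \textsf{LocallyUniversal}$ is exactly the necessary conditions of~\ref{def:necessaryCA1}, which assert that every CA in $\textsf{UniversalSelfReplicating}$ is locally universal. Strictness is witnessed by the non-talking heads CA: by construction it hosts a localized universal Turing machine with a cheap encoder/decoder, hence lies in $\textsf{LocallyUniversal}$, while Theorem~\ref{thm:nontalking} shows it violates the local self-replicating condition of~\ref{def:locallyperiodicself1} and is therefore excluded from $\textsf{UniversalSelfReplicating}$.

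For the containment $\textsf{GloballyUniversal} \subseteq \textsf{UniversalSelfReplicating}$, let $\mathcal{B}$ be globally universal and let $\mathcal{V}$ be a universal self-replicator CA, which exists in every dimension $d \geq 1$: von Neumann's $29$-state automaton for $d=2$ and Theorem~\ref{thm:1dvN} for $d=1$. By global universality, $\mathcal{B}$ simulates $\mathcal{V}$ via a block encoder $\phi$ and block decoder $\psi$ of bounded per-block complexity. I would then transport von Neumann's universal self-replicating configuration $c$ along $\phi$: since $\phi$ is local and translation-invariant it maps a bounded pattern to a bounded pattern, and it maps the arbitrarily many spatially separated copies produced under $\mathcal{V}$ to arbitrarily many spatially separated copies under $\mathcal{B}$, each running the $\phi$-image of the same bounded dynamical pattern, so $\mathcal{B}$ satisfies the local self-replicating condition. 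The Turing-universal computation performed by $c$ likewise reads through to $\mathcal{B}$ by composing $\mathcal{V}$'s internal encoder/decoder with $\psi$ and $\phi$, whose bounded complexities compose within the budget defining local universality (this also follows directly from the computational universality hierarchy theorem, $\textsf{GloballyUniversal} \subseteq \textsf{LocallyUniversal}$). Hence $\phi(c)$ is a universal self-replicator in $\mathcal{B}$ and $\mathcal{B} \in \textsf{UniversalSelfReplicating}$.

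Strictness of $\textsf{GloballyUniversal} \subsetneq \textsf{UniversalSelfReplicating}$ is the substantive point, and here I would exhibit a \emph{reversible} CA that is a universal self-replicator. Reversible CAs can locally implement reversible universal Turing machines, and one can arrange a reversible automaton carrying a configuration that both proliferates dynamical copies and drives such a universal computation --- putting it in $\textsf{UniversalSelfReplicating}$ --- either by adapting a reversible self-reproducing construction from the reversible-CA literature or by running a reversible universal constructor/reader in tandem. But a reversible CA cannot globally simulate an irreversible CA~\cite{hertling1998embedding} --- the same obstruction used in the computational universality hierarchy theorem --- so it does not lie in $\textsf{GloballyUniversal}$. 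Chaining the four inclusions gives $\textsf{GloballyUniversal} \subsetneq \textsf{UniversalSelfReplicating} \subsetneq \textsf{LocallyUniversal}$.

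The main obstacle is this last strictness: one must genuinely produce (or carefully import) a reversible cellular automaton meeting the \emph{full} definition of $\textsf{UniversalSelfReplicating}$, not merely the necessary conditions, which means verifying that its self-replicating pattern really does satisfy the local self-replicating condition \emph{and} that the same pattern simultaneously supports Turing-universal computation through a sufficiently simple encoder/decoder; keeping both properties inside a reversible rule is delicate. A secondary technical point is making the block-transport argument watertight --- checking that $\phi$ neither merges nor destroys the separated copies, that ``same bounded dynamical pattern'' is preserved by a block map, and that the bounded-complexity encoders/decoders along $\mathcal{V} \to \mathcal{B}$ really compose within the complexity class fixed by the definition of local universality.
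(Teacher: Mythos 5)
Your decomposition matches the paper's: the containment $\textsf{UniversalSelfReplicating} \subseteq \textsf{LocallyUniversal}$ from the necessary conditions, strictness via the non-talking heads CA and Theorem~\ref{thm:nontalking}, and the containment $\textsf{GloballyUniversal} \subseteq \textsf{UniversalSelfReplicating}$ by transporting a universal self-replicator (Theorem~\ref{thm:1dvN} guarantees one exists in every dimension via lifting) through the block encoder of a global simulation. That transport argument is exactly the paper's justification for the second containment, so up to this point you are on the paper's track.

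The genuine gap is in the strictness of $\textsf{GloballyUniversal} \subsetneq \textsf{UniversalSelfReplicating}$, which you correctly flag as the substantive point but then leave open: you say one must ``genuinely produce (or carefully import)'' a reversible universal self-replicator, and offer only vague routes (adapting a reversible self-reproducing loop, or running a reversible constructor in tandem). Building such an object from scratch inside a reversible rule is exactly the delicate task you worry about, and you do not carry it out. The paper sidesteps it entirely with a move you already have in hand: Toffoli's theorem states that every $d$-dimensional CA can be \emph{globally simulated} by a $(d+1)$-dimensional reversible CA. Apply this to the (irreversible) universal self-replicator in dimension $d$ --- note the simulating reversible CA need not be globally \emph{universal}, it only needs to globally simulate this one specific CA --- and then invoke precisely your own block-transport argument: a CA that globally simulates a universal self-replicator inherits universal self-replication. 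This yields a reversible CA in $\textsf{UniversalSelfReplicating}$ (in dimension $d+1 \geq 2$) with no new reversible construction required; combined with the Hertling-style obstruction that reversible CAs cannot globally simulate irreversible ones, strictness follows. So the missing idea is not a new reversible self-replicator but the observation that the transport lemma you proved for the containment also closes the strictness, once composed with Toffoli's dimension-lifting result.
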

\noindent This hierarchy (Fig.~\ref{fig:main1}(b)) reveals fundamental relationships between computational power and self-replication. For the proof, $\textnormal{\textsf{GloballyUniversal}} \subsetneq \textnormal{\textsf{UniversalSelfReplicating}}$ holds since globally universal CAs can simulate any CA preserving spatial locality. The inclusion is strict because reversible CAs support universal self-replication yet cannot globally simulate irreversible dynamics (see Appendix). The inclusion $\textnormal{\textsf{UniversalSelfReplicating}} \subsetneq \textnormal{\textsf{LocallyUniversal}}$ follows from Condition~\ref{def:necessaryCA1}, with strict inclusion from Theorem~\ref{thm:nontalking}.

Our results clarify that local computational universality is necessary but insufficient for non-trivial self-replication. The non-talking heads construction abstracts biological concepts of transcription and replication, showing systems can possess transcriptional capabilities without achieving true replication. As alluded to earlier, our result may seem counterintuitive given the existence of quines, which are programs in any Turing-complete language outputting their own source code \cite{hofstadter1999godel, sipser1996introduction}. However, a quine produces a description of the copying mechanism without replicating the mechanism itself. In our framework, this is the difference between a head copying tape regions versus creating new heads or recruiting existing heads. Quines represent information copying; genuine self-replication requires reproducing the physical copier itself, explaining why computational universality cannot guarantee self-replication.

\section{Discussion}
\label{sec:outlook}

Our work clarifies the relationship between computational universality and self-replication in physical systems. A key contribution is our rigorous treatment of encoder-decoder complexity in defining local universality, generalizing the work of~\cite{computational_dynamical_systems}. This is exemplified by establishing polynomial-time encoding/decoding bounds for Rule 110. The complexity constraint prevents computational misattribution, ensuring computation is genuinely performed by CA dynamics rather than sophisticated encoding/decoding schemes. We demonstrated that local Turing-universal computation does not necessarily enable self-replication, a counterintuitive finding since universal computation might seem to enable all reproductive capabilities. Our non-talking heads CA abstracts the biological principle that systems capable of transcription may lack coordinated functionality for replication. At a technical level, identifying `head' and `tape' structures shows that inter-head communication is crucial for replication.

Many challenges remain. Determining whether specific CAs support universal self-replication is open, and Rule 110 presents a compelling test case:
\begin{conjecture}
Rule 110 is in $\textnormal{\textsf{UniversalSelfReplicating}}$.
\end{conjecture}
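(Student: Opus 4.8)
The plan is to exhibit an explicit universal self-replicator living inside Rule 110, thereby verifying both membership conditions for $\textsf{UniversalSelfReplicating}$. Local universality of Rule 110 is already established (with polynomial-time, linear-space encoder--decoder), so by Conditions~\ref{def:necessaryCA1} the remaining task is to produce a bounded Rule 110 pattern that both carries out Turing-universal computation and satisfies the local self-replicating condition of Definition~\ref{def:locallyperiodicself1}: it evolves for time $T\geq 2$ and proliferates arbitrarily many copies, each running the same bounded dynamical pattern. The natural template is von Neumann's architecture, which we already adapted to one dimension in Theorem~\ref{thm:1dvN}: a constructor that reads a tape description, builds the described structure, and when fed its own description builds a copy of itself together with the description.

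First I would work at the level of the cyclic tag system (CTS) that Rule 110 simulates via the chain universal TM $\leftrightarrow$ clockwise TM $\leftrightarrow$ CTS $\leftrightarrow$ Rule 110 used in the local-universality analysis. Because CTSs are Turing-complete, the ``read the description and run the computation'' part of a von Neumann constructor is routine to encode as a CTS program; the nontrivial content is the \emph{build} primitive. Concretely, I would design a CTS whose queue dynamics, when its program word encodes (machine $+$ its own description), ``unzips'' a second working copy of the encoding and deposits it in a region that was previously inert, so that iterating the cycle yields arbitrarily many mutually non-interfering copies. This is the same fold/unfold idea by which Theorem~\ref{thm:1dvN} lays von Neumann's constructor into a bounded-height strip and then into a single 1D state space.

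The crux, and the reason the statement is only a conjecture, is translating this symbolic construction in the CTS back through the simulation chain into genuine spatial construction of gliders in Rule 110's ether: a newly ``built'' copy must be an actual Rule 110 configuration emitted into otherwise-ether background, not merely a symbol block on a tape. Unlike Game of Life, where arbitrary still lifes and glider guns can be synthesized by colliding gliders, Rule 110's glider collision table is comparatively impoverished, and it is not obvious that an arbitrary finite glider pattern --- in particular a full copy of the encoded machine --- can be assembled \emph{de novo} from a bounded seed plus a stream of construction signals. I see two routes: (i) catalog Rule 110's glider reactions and prove a ``construction completeness'' lemma stating that every finite configuration reachable within the CTS encoding can be realized by a bounded-complexity sequence of collisions; or (ii) avoid direct synthesis altogether by keeping the offspring's description adjacent to the parent throughout, so that the replica is produced by the CTS's native queue-copying dynamics rather than by building gliders from scratch --- turning the ``exclusive territories'' spacing idea into a constructive scheduling argument that keeps the copies separated.

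Finally, whichever route is taken, one must check the bookkeeping required by Conditions~\ref{def:necessaryCA1}: the replicator's encoder and decoder (composed through the full chain) must still be polynomial-time and not smuggle in the computation, the proliferation must be genuinely unbounded with a fixed bounded dynamical pattern per copy, and the offspring must not collide with the parent or with each other as the population grows. I expect essentially all of the difficulty to reside in the second step --- converting symbolic ``append'' operations in the CTS into faithful spatial construction of glider assemblies in Rule 110 --- with the computational and scheduling bookkeeping being laborious but, given the tools already developed for the local-universality theorem, ultimately routine.
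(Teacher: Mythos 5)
There is a genuine gap here, but it is one you partly acknowledge: the statement you were asked to prove is an open conjecture in the paper, and the paper itself offers no proof --- only a suggested route in the open-problems discussion (either establish global universality of Rule 110, which is unknown and would likely require a very different encoding than Cook's, or adapt a modest generalization of the existing local-universality construction to build a 1D universal self-replicator in the spirit of Theorem~\ref{thm:1dvN}). Your proposal is a research plan in essentially that same spirit, and it correctly locates the crux --- converting symbolic ``append''/copy operations at the cyclic-tag level into genuine spatial construction of a second, independently functioning glider assembly in Rule 110 --- but neither of your two routes is carried out, so nothing is proved. Note also a small logical slip at the outset: Conditions~\ref{def:necessaryCA1} are \emph{necessary} conditions, so verifying them does not by itself certify membership in $\textnormal{\textsf{UniversalSelfReplicating}}$; what would certify it is exhibiting an actual universal self-replicator inside Rule 110, which is the part left undone.

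On the substance of route (ii), be aware that it is less routine than it may appear. In Cook's encoding, a ``machine'' is not a free-standing finite pattern in ether: it is a finite block of static tape data flanked by two \emph{global} periodic resources --- an infinite ossifier stream on the left and an infinite appendant stream on the right --- and the admissibility constraints (spacing constants $\nu$, $\nu'$, the requirement that a nonempty appendant be consumed each cycle) are tuned to a single simulated tape. ``Depositing a second working copy in a previously inert region'' therefore requires either giving the offspring its own ossifier/appendant streams (which the admissible-background formalism of Definition~\ref{def:admissible_background} only supports via finitely many polyhedral regions fixed at encoding time, not regions created dynamically by the parent) or multiplexing two independent computations through one shared stream, which breaks the timing analysis behind the delay function and Lemma~\ref{lemma:centering_readout_window}. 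Route (i), a ``construction completeness'' catalog of glider collisions, is exactly the kind of result nobody has; its absence is why the paper states this as a conjecture rather than a theorem. So your proposal is a reasonable map of the territory, consistent with the authors' own speculation, but it does not close the gap.
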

\noindent Beyond specific examples, sharpening our definition of self-replicating systems is essential, as our local self-replicating condition is only a minimal requirement. Developing increasingly stringent conditions would clarify which dynamics enable sophisticated reproduction. These may depend on CA dimensionality, though our 1D universal self-replicator demonstrates that dimension does not limit self-replication capacity. Further incorporating environmental noise is essential for understanding robustness of self-replication under realistic conditions.

Our work establishes a theoretical foundation for understanding self-replication beyond isolated examples, presenting the first general results on the relationship between computational universality and self-replication. These advances represent initial steps toward a comprehensive theory that will require integrating insights from non-equilibrium physics, chemistry, and computer science.  More broadly, our work lays the ground for formal explorations of abstract mechanisms that lead to the emergence of life.

\subsection*{Acknowledgments}

JC would like to thank Bert Chan, Semon Rezchikov, and Jensen Suther for valuable discussions.  JC is supported by a fellowship from the Alfred P.~Sloan Foundation. BH and CH would like to thank Vassilis Papadopoulos, João Penedones, Jakub Krásenský, and Jiří Tůma. The $\mathrm{UniversalSelfReplicating}$ and $\mathrm{GloballyUniversal}$ images in Fig.~\ref{fig:main1}(b) and the Game of Life Turing machine construction in Fig.~\ref{fig:main3}(a) were generated using an open-source application Golly \cite{golly}.

\appendix

\section{Related work}\label{appsec_related_work}

\subsection{Self-replicators in cellular automata}
The foundational theory of self-replication in CAs was developed by John von Neumann, who designed a universal self-replicator in a 2D automaton. His construction, published posthumously by Burks \cite{neumann_vnsr} and later completed in detail by Thatcher \cite{thatcher_construction}, is estimated to require between 50,000 and 200,000 cells. Subsequent work, termed the `second generation of self-replicating automata research', aimed at reducing the complexity of von Neumann’s model while preserving its computational universality. Codd reduced the number of CA states from 29 to 8 \cite{codd1968cellular}, while Arbib simplified aspects of the construction logic at the expense of introducing a more elaborate transition function \cite{arbib1966simple}. 

The third generation began with Langton, who argued that Turing universality is a sufficient condition for non-trivial replication, but not a necessary one. He constructed 2D loops \cite{langton1984self} that are not computationally universal but implement self-replication via a mechanism analogous to von Neumann's construction, through the transcription of the replicator's description. As such, Langton's loops are deemed non-trivial. Unlike von Neumann’s description of a large scale self-replicator, Langton’s loops are small (86 cells within an 8-state CA), which makes them the first explicit implementation of non-trivial self-replication. Langton's work sparked extensive subsequent research on compact replicating structures.

One branch of this research pursued further simplification: Byl \cite{byl1989self} designed a 12-cell loop in a CA with 6 states, Reggia et al. \cite{reggia1993simple} produced a 5-cell replicator in a 6-state CA, Ibáñez et al. \cite{ibanez1995self} developed loops where, interestingly, replication is based on self-inspection rather than having the replicator's description explicitly available as was the case in previous constructions. Morita and Imai \cite{morita1996self} designed self-replicating loops in a reversible cellular automaton. 

A second branch focused on enriching the computational power of Langton’s loops: Tempesti \cite{tempesti1995new} and Perrier et al.~\cite{perrier1996toward} demonstrated variants capable of executing attached programs, while Chou and Reggia \cite{chou1998problem} showed that populations of self-replicating loops could be adapted to (inefficiently) solve hard problems such as NP-complete satisfiability (SAT); in their design each replicant receives a partial solution to the problem that is further modified during replication and passed on to the next generation. In \cite{morita1996self}, Morita and Imai construct a 2D reversible cellular automaton capable of self-replication.

For more details, nice overviews include \cite{sipper1998fifty}, \cite{reggia1998self} or \cite{sayama2024self}.

\subsubsection{Self-replicators in other systems}

\paragraph{Computer programs}
Studies on quines—programs that output their own code—have been performed in \cite{bratley1972self, burger1980self}. In \cite{ray1991tierra}, Ray constructed a self-replicating program using the assembly language of a virtual machine system called Tierra. He initialized a colony of programs containing this self-replicator and allowed them to evolve through interactions while competing for CPU time and memory. The programs, subject to mutations, diversified in intriguing ways, exhibiting behaviors such as symbiosis and parasitism.

In contrast to the manually designed self-replicators described above, some works explored the fascinating question of how self-replication can naturally emerge from system's dynamics, relating to the research on the origin of life. Koza \cite{koza1994spontaneous} studied populations of Lisp programs and noticed that self-replicating programs can be found via random search, and further evolved using genetic programming. In \cite{arcas2024computationallifewellformedselfreplicating}, the authors considered a population of short programs written in an esoteric programming language that evolve via interacting and modifying one another. Interestingly, the authors observed an emergence of self-replicating programs. 

In the program-based systems above, the \emph{constructor} is largely externalized to the substrate's operational semantics; that is, to the ambient dynamical laws of the interpreter or virtual machine (memory allocation, instruction decoding, scheduling, divide or clone primitives). Consequently, these replicators chiefly realize transcription (copying a description), while construction or translation is effected by the host system. Said differently, the type of self-replication considered in the program-based systems entails copying the description of an organism and not the copying of the means to produce the organism, since the production mechanics is fully abstracted into the ``laws of physics'' of the dynamics.  This stands in contrast to von Neumann– or Langton–style cellular automata, where the local transition rule is comparatively simple and domain-general, and the constructor is encoded within the spatial organization of the replicator itself.

\paragraph{Physical systems} Interesting examples of self-replicators in physical substrates include Penrose's chains of mechanical wooden blocks \cite{penrose1958mechanics, penrose1957self} and Jacobson's circulating trains on model railway tracks \cite{jacobson1958models}; for detailed reviews see \cite{freitas, moses2020robotic}. 

\subsubsection{Criteria for self-replication}

The first attempt at giving a formal criterion for self-replication was presented in the seminal work by Moore \cite{moore1962machine}. 

\paragraph{Moore's criterion} Let $\A = (S^{\Z^d}, F)$ be a cellular automaton with local rule $f$ and a quiescent state (a state $q$ satisfying $f(q, q, q, \ldots, q) = q$). Consider a configuration $c$ which contains a pattern $\mP$ in a finite region surrounded by the quiescent state. Then $c$ is self-replicating if for every $k \in \N$ there exists a time-step $t$ such that $F^t(c)$ contains the pattern $\mP$ in at least $k$ disjoint regions.\\

One can interpret the pattern $\mP$ as an organism surrounded by empty space given by the quiescent state. As time progresses, the organism has to appear in a growing number of disjoint regions of the CA configuration. The definition is very natural; however, it does not capture the notion of self-replication exactly. On the one hand, it may feel too constraining: one may observe systems where the copies of $\mP$ do not appear in the exact same form, but may be perturbed, for instance, by simple geometrical transformations. Moreover, assuming the background has to consist purely of the quiescent state may be too strict; in a similar way it proves to be too constricting when considering local universality, as exemplified by the proof of Rule 110's Turing completeness. Indeed, there may be systems that only allow for self-replication with richer CA backgrounds. On the other hand, since Moore's criterion does not assume anything about the pattern $\mP$, there are trivial cases that satisfy the criterion: such as a single black cell against a white background indefinitely growing in all directions.

Subsequently, more elaborate criteria have been developed, building upon Moore's initial variant. Smith \cite{smith1968simple, smith1991simple} requires the pattern $\mP$ to be capable of universal computation (without specifying such a notion exactly). Lohn and Reggia \cite{lohn2002automatic} developed a variant of Moore's criterion which excludes some trivial cases by requiring the pattern $\mP$ to change its shape as opposed to just being static; moreover, they also allow the pattern's copies to be rotated. In \cite{nehaniv1998self}, Nehaniv and Dautenhahn extend Moore's criterion to allow for modified offsprings. In \cite{adams2003universal} Adams and Lipson discuss a graded measure of a system's capability of self-replication rather than proposing a strict binary criterion, though their application to cellular automata is illustrated only on a finite cyclic grid.

\subsubsection{Self-replicators in one-dimensional cellular automata} In \cite{smith1968simple, smith1991simple} Smith proves the existence of what he calls a 1D universal self-replicator.  This result, compared to our work in Appendix \ref{appsec_1d_sr}, differs in a fundamental way: in the system he describes, the copying mechanism of CA structures is hardwired into the logic of the CA rule. This is in stark contrast with ``genuine'' self-replicators where the copying mechanism is a much higher-level process emerging from the expressive laws of the CA rules.

\subsection{Local universality}
The study of cellular automata as computational systems originates with von Neumann’s pioneering work on self-replication. Indeed, demonstrating a CA's Turing completeness remains the most rigorous way to establish a CA’s non-triviality. In order to prove a CA's local universality, one needs to show the CA can simulate another computationally universal system, most typically a universal Turing machine. There is a rich line of work constructing locally universal CAs with various properties, which we briefly review below.

\paragraph{Compact computation} 
Ever since the work of Shannon \cite{shannon1956universal}, there has been an ongoing quest to construct small computationally universal systems. The classical line of work considers Turing machines and aims to jointly minimize their state space and alphabet size \cite{minsky1960, minsky1966size, watanabe19615, watanabe19724, robinson1991minsky}. Parallel to this line of work is a series of constructions of locally universal cellular automata, progressively more compact with respect to their dimensionality, neighborhood, and number of states. These two lines of work influenced each other; in fact, some of the small universal Turing machines were used to construct compact locally universal cellular automata. In his seminal work, Banks \cite{banks1970universality} used an approach combining ideas of local and global universality: he constructed a variety of CAs that can simulate von Neumann's universal self-replicator, and are thus also computationally universal. His argument relied on implementing basic logic gates in 2D CAs. Banks was also careful enough to distinguish two cases: a 3-state, 5-neighbor CA which used finite patterns against a quiescent background, and a 2-state, 5-neighbor CA which requires an ``infinite'' initial configuration. A locally universal 4-state, 5-neighbor, 2D CA was constructed in \cite{noural2006universal}. Further simplifications in one and two dimensions were presented in \cite{smith1971simple, lindgren1990universal}. 
The Game of Life introduced by Conway is certainly among the most famous cellular automata and the first rule to be proven locally universal by analysis of a given rule \cite{berlekamp2004winning, turing_universality_of_gol} as opposed to previous examples, which were carefully designed to satisfy universality. This line of work culminated in the proof of local universality of Rule 110: a 2-state 3-neighbor 1D CA. The elaborate proof required a careful analysis of the CA's gliders and their interactions \cite{cook_og, cook2009concrete}. 

An important question pertains to the space efficiency of the simulation; i.e.~if system $\B$ (such as a CA) simulates a system $\A$ (such as a Turing machine), how much more time and space does $\B$ need to simulate $t$ steps of $\A$'s computation? Some early constructions of small universal Turing machines relied on a 2-tag system (a computational system defined in \ref{def:tag_system}) simulating any Turing machine: a simulation requiring an exponential blow up in space and time. As a result, it was believed that for small universal systems, such a redundancy might be necessary. In \cite{neary2006small}, Neary and Woods showed this simulation can be realized with just a polynomial time blow-up. The authors also showed that Rule 110 can simulate a universal Turing machine in polynomial time \cite{p_completeness_of_rule110}, a result surprising to Cook \cite{cook2009concrete}.

\paragraph{Reversible computation}
Reversibility and physics-inspired constructions also played a central role. In \cite{toffoli1977computation} Toffoli showed that any $d$-dimensional CA can be simulated by a $d+1$-reversible CA, thus proving local universality for automata of dimensions 2 and higher. 
Subsequently, Morita and Harao \cite{morita1989computation} showed that a 1D reversible cellular automaton can simulate any reversible Turing machine, thus completing the picture for dimension one. In \cite{dubacq1995simulate}, Dubacq later showed this can be done in real time. In \cite{morita1995}, Morita shows that any 1D CA with finite configurations (finite patterns surrounded by a quiescent state) can be simulated by a 1D reversible CA.

\paragraph{Computation and dynamics}

A closely related line of work examines the relation between computation and differentiable dynamical systems. Given a differentiable dynamical system $f : M \to M$ on a compact manifold $M$, we can ask what kind of computation this dynamics can instantiate, and whether it is Turing-universal. Various works~\cite{moore1990unpredictability, moore1991generalized, branicky1995universal, moore1998finite, moore1998dynamical, koiran1999closed, asarin2001perturbed, gracca2005robust, gracca2008computability, rosen2011anticipatory, rosen1991life, tao2017universality, cardona2021constructing, cardona2022turing, cardona2023computability, cardona2024hydrodynamic, computational_dynamical_systems} have studied this question in various frameworks. We most closely follow and adapt the ``computational dynamical systems'' approach of~\cite{computational_dynamical_systems}.

The work~\cite{computational_dynamical_systems} emphasizes that to associate the dynamics of $f : M \to M$ with a Turing machine $\mathcal{T}$ with dynamics $T : C \to C$, we require encoders $\mathcal{E} : C \to M$ and decoders $\mathcal{D} : M \to C$ such that
\begin{align}
\mathcal{D} \circ f^n \circ \mathcal{E} = T^n \label{eq:encoding}
\end{align}
for all $n \geq 0$ (this generalizes to settings where there is a computational slowdown of $f$ relative to $T$), where the encoder and decoder have lower computational complexity than $T$. While this is formalized in~\cite{computational_dynamical_systems}, we provide some intuition here. Suppose $T$ is Turing-universal and we have found $\mathcal{E}, \mathcal{D}$ satisfying~\eqref{eq:encoding}. We might be tempted to conclude that $f$ is Turing-universal as well. However, if $\mathcal{E}, \mathcal{D}$ are themselves Turing-universal, they may `do the work' of the Turing-universal computation in the relation~\eqref{eq:encoding} even if the dynamics of $f$ is extremely simple, thus obscuring whether $f$ exhibits computationally interesting dynamics. The basic principle, stated heuristically, is that $\mathcal{E}$ and $\mathcal{D}$ must be less computationally complex than the computation we ascribe to $f$ (here, the computation performed by $T$).

In the context of differentiable dynamical systems on a compact manifold $M$, constraining the complexity of the encoder and decoder amounts to constraining the complexity of regions in $M$ that encode Turing machine configurations. Regions of higher complexity are those carved out by a larger number of semi-algebraic inequalities in a controlled manner. This constraint effectively precludes the encoder and decoder from hiding computation in the extremely fine-grained, short-distance structure of the regions in $M$ coding for Turing machine configurations.

In defining `local universality' for CAs, we develop and adapt the work of~\cite{computational_dynamical_systems}. Whereas in the differentiable dynamical systems setting on a compact manifold $M$ the concern is that we might encode Turing machine configurations in arbitrarily complex, small distance-scale structures of subsets of $M$, in the CA setting the concern is that we might encode Turing machine configurations in arbitrarily complex, large distance-scale structures of a CA configuration. In the differentiable dynamical systems setting, we grapple with arbitrarily fine structure in the continuum at finite volume, whereas in the CA setting we grapple with structure at infinite volume (where no structure exists below the distance scale of a single cell). Accordingly, our definitions in the CA setting constrain the possible complexity of infinite volume encodings of Turing machine configurations.

\subsection{Global universality}
A more recent line of work focuses on studying the relationship between CAs, as opposed to relating a CA's computational dynamics to a Turing machine. Indeed, in the case of local simulation, one has to grapple with the profound differences between a CA and a Turing machine: CAs are a massively parallel model of computation, operating on an infinite grid, with no predefined halting state; this is in stark contrast with the serial design of Turing machines. Informally, we say that a CA $\B$ globally simulates $\A$ if $\B$ can effectively reproduce any space-time diagram of $\A$. There are multiple ways of making precise the notion of effective reproduction of space-time diagrams which led to various notions of global simulation studied in the past. For a technical overview of the different definitions, see Section \ref{appD:related_word}. Here, we briefly summarize the main results.

The first line of work focuses on positive results: typically constructing examples of globally universal CAs with various properties. Often, the notion of global (also often called intrinsic) simulation is assumed implicitly from the construction. The seminal work of Albert and Culik \cite{simple_universal_ca} shows a construction of a 1D globally universal CA. This ignited the quest for the most compact, globally universal CAs \cite{imai2000computation, ollinger2002quest}. The already mentioned work of Toffoli \cite{toffoli1977computation} considers global simulation to show that any $d$-dimensional CA can be simulated by a $d+1$-reversible CA. This work is complemented by Hertling \cite{hertling1998embedding} who precisely formalizes a modern notion of global simulation to show that any $d$-dimensional reversible CA cannot simulate $d$-dimensional irreversible CAs, thus, can never be globally universal. On the other hand, Durand-Lose shows the existence of reversible automata that can simulate any other reversible automata in the same dimension \cite{intrinsic_universality_of_reversible_1D_ca}. Durand and Róka \cite{the_game_of_life_universality_revisited} revisit the notion of computational universality of Game of Life and show it is globally universal. 

A complementary line of research focuses on the global simulation notion itself, and studying its properties, leading to a series of interesting results. Mazoyer and Rapaport \cite{inducing_order_on_ca_by_grouping} show the existence of infinitely many CA classes that are incomparable with respect to the relation of global simulation, and further show that no CA can be globally universal in real-time. Their results were further generalized in \cite{bulking1, bulking2, hudcova2024simulation}. In \cite{intrinsic_universality_problem_of_1d_cas}, Ollinger shows that it is in general undecidable whether a CA is globally universal. For more details on these results, see Section \ref{appD:related_word}.

\section{Preliminaries}
\label{appsec_preliminaries}

\subsection{Cellular automata}
Letting $d \in \N$, we call $\Z^d$ the \emph{$d$-dimensional grid} and we call its elements the cells. For a $v \in \Z^d$ where $v = (v_1, \ldots, v_d)$, we denote
$$\lVert v \rVert = \max \{|v_1|, \ldots, |v_d|\}.$$ 
For a finite set $S$ we define an \emph{$S$-configuration} (sometimes just configuration) of the grid to be a mapping $c: \Z^d \rightarrow S$. We denote the space of all $S$-configurations by $S^{\Z^d}$, and will often write $c \in S^{\Z^d}$. Sometimes, for $v \in \Z^d$ we write $c_v$ instead of $c(v)$. 

A \emph{$d$-dimensional neighborhood} is a tuple $N = (n_1, n_2, \ldots, n_k)$ where each $n_i \in \Z^d$. This yields \emph{a relative neighborhood of each cell} $v \in \Z^d$ to be $(v + n_1, \ldots, v + n_k)$. Now we can proceed with the definition of a CA.

\begin{definition}[Cellular automaton]
A $d$-dimensional cellular automaton $\A$ is given by a neighborhood $N = (n_1, n_2, \ldots, n_k)$, a finite set of states $S$ and a local update rule $f: S^k \rightarrow S$. This yields the CA's global update rule $F: S^{\Z^d} \rightarrow S^{\Z^d}$ defined for each configuration $c \in S^{\Z^d}$ as:
$$F(c)(v) = f(c(v+n_1), \ldots, c(v + n_k)) \quad \text{for each cell } v \in \Z^d. $$
The global rule updates the state of a cell $v$ by looking at the states of cells in its relative neighborhood, and using the local rule to determine the update. This is done synchronously for all the cells in parallel. We often write $\A = (S^{\Z^d}, F)$.
\end{definition}

We note that a CA $\A = (S^{\Z^d}, F)$ can be defined by multiple neighborhoods and local rules. For instance, if $\A$ is given by neighborhood $N$ and local rule $f$, we can increase the neighborhood size by adding extra cells, and adjusting the local rule so that its updates do not depend on the newly added cells. Using this approach, we can extend any neighborhood to contain exactly all cells $v \in \Z^d$ such that $\lVert v \rVert \leq r$ for some fixed $r \in \N$. If a CA's neighborhood is of such a form, we say the CA has \emph{radius $r$}. In certain scenarios, this allows us to assume, without loss of generality, that each CA has radius $r$ for some $r \in \N$.

Given a CA $\A = (S^{\Z^d}, F)$ and in \emph{initial configuration} $c \in S^{\Z^d}$ we can iterate the CA on $c$ to obtain a \emph{trajectory}
$c \mapsto F(c) \mapsto F^2(c) \mapsto F^3(c) \mapsto \ldots$ A nice introduction to cellular automata is for instance \cite{kari2022cellular}.

\paragraph{Visualizing CA dynamics}
For practical purposes, when visualizing the CA trajectories, we consider a \emph{finite cyclic grid} of the form $\Z_{m_1} \times \Z_{m_2} \times \cdots \times \Z_{m_d}$ for some $m = (m_1, \ldots, m_d) \in \Z^d$ and simply compute all the cell indices modulo $m$. Thus, we essentially obtain the dynamics on a discrete $d$-dimensional torus. 

For a 1D CA operating on a cyclic grid of size $n$ with global rule $F$ we define the \emph{space-time diagram of the CA with initial configuration $u \in S^n$ and $t$ time-steps} to be a matrix whose rows are exactly $u, F(u), F^2(u), \ldots, F^t(u)$. An example of a particular space-time diagram is shown in Fig.~\ref{fig:rule90}.

\begin{figure}[htbp!]
    \centering
    \includegraphics[width=0.9\linewidth]{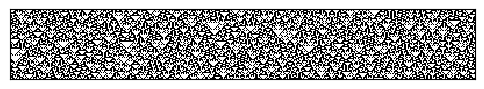}
    \caption{Space time diagram of a 1D CA with states 0 (white) and 1 (black), radius 1, and local rule $f(x, y, z) = x + z \bmod 2$ operating on a cyclic configuration of size 250. Time is progressing downwards.}
    \label{fig:rule90}
\end{figure}

\paragraph{Periodic configurations and shifts} Let $v \in \Z^d$. We define the shift operator $\sigma_v: S^{\Z^d} \rightarrow S^{\Z^d}$ by $\left(\sigma_v(c)\right)_w = c_{v+w}$ for each $w \in \Z^d$. Let $n \in \N$. We say that a configuration $c \in S^{\Z^d}$ is periodic if there exist $d$ linearly independent vectors $v_1, \ldots, v_d$ such that $\sigma_{v_i}(c)=c$ for all $i \in \{1, \ldots, d\}$. It is clear that every CA preserves the periodicity of a configuration.

\paragraph{Topological dynamics}
We can endow the space $S^{\Z^d}$ with a metric as follows. We define the distance between two configurations $c_1, c_2 \in S^{\Z^d}$ as:
\begin{equation*}
    d(c_1, c_2) =
    \begin{cases*}
      0 & if $c_1 = c_2$ \\
      2^{-\min \{ \lVert v \rVert \mid c_1(v) \neq c_2(v)\} }      & if $c_1 \neq c_2$
    \end{cases*}\,.
  \end{equation*}
Thus, two configurations are close if they agree on a large area around the origin $0$. One can easily verify that this is indeed a metric that induces a topology on $S^{\Z^d}$, making $S^{\Z^d}$ a compact space with a countable base. This allows us to talk, for instance, about the continuity of maps $F: S^{\Z^d} \rightarrow S^{\Z^d}$. We note that different choices of vector norms $\lVert \,\cdot\, \rVert$ can induce the same topology on $S^{\Z^d}$.  For now, we highlight a famous result characterizing cellular automata in terms of their dynamical properties. It is quite easy to see that the global map of every CA is a continuous mapping of the topological space, and that it commutes with every shift operator. The following theorem shows also the converse, using the compactness of the topological space.

\begin{theorem}[Curtis–Hedlund–Lyndon theorem]\label{thm:og_curtis_hedlund_lyndon}
Let $S$ be a finite set and $d \in \N$. A function $F: S^{\Z^d} \rightarrow S^{\Z^d}$ is a global rule of a CA if and only if it is continuous and commutes with every shift operator.
\end{theorem}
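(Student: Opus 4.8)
The plan is to prove both implications of the Curtis–Hedlund–Lyndon theorem, with the forward direction (a CA global rule is continuous and shift-commuting) being a routine unwinding of definitions and the converse carrying the real content.

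\textbf{Easy direction.} Suppose $F$ arises from a neighborhood $N=(n_1,\dots,n_k)$ and local rule $f$, and set $r=\max_i\lVert n_i\rVert$. Shift-commutation is immediate from the definition: for all $v,w\in\Z^d$ and $c\in S^{\Z^d}$,
\[
F(\sigma_v(c))(w)=f\big(\sigma_v(c)(w+n_1),\dots\big)=f\big(c(v+w+n_1),\dots\big)=F(c)(v+w)=\sigma_v(F(c))(w).
\]
For continuity, note that $F(c)(w)$ depends only on the restriction of $c$ to $w+N$; hence if $c_1,c_2$ agree on the ball $\{v:\lVert v\rVert\le R\}$, then $F(c_1)$ and $F(c_2)$ agree on the ball of radius $R-r$, so $d(F(c_1),F(c_2))\le 2^{-(R-r)}$ whenever $d(c_1,c_2)\le 2^{-R}$. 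This yields (uniform) continuity directly.

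\textbf{Converse.} Assume $F$ is continuous and commutes with all shifts. Since $S$ is finite with the discrete topology, the map $c\mapsto F(c)(0)$ is a continuous function $S^{\Z^d}\to S$, hence locally constant: for each $c$ there is $r_c\in\N$ with $F(c')(0)=F(c)(0)$ whenever $c'$ agrees with $c$ on $\{v:\lVert v\rVert\le r_c\}$. The corresponding cylinder sets cover $S^{\Z^d}$, which is compact (as noted in the excerpt), so finitely many suffice; let $r$ be the maximum of the associated radii. Then $F(c)(0)$ depends only on the restriction $c|_{B_r}$, where $B_r=\{v:\lVert v\rVert\le r\}$ is finite, so we may define $f:S^{B_r}\to S$ by $f(p)=F(c)(0)$ for any $c$ extending $p$ — well-defined by the above. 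Using shift-commutation to transport this to an arbitrary cell $v$,
\[
F(c)(v)=\sigma_v(F(c))(0)=F(\sigma_v(c))(0)=f\big((\sigma_v(c))|_{B_r}\big)=f\big((c(v+w))_{w\in B_r}\big),
\]
so $F$ is the global rule of the CA with neighborhood $N=(w)_{w\in B_r}$ and local rule $f$.

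\textbf{Main obstacle.} The only genuinely non-trivial ingredient is upgrading pointwise local constancy of $c\mapsto F(c)(0)$ to a \emph{uniform} radius $r$ valid for every configuration; this is exactly where compactness of $S^{\Z^d}$ is indispensable (and the statement fails for infinite $S$). One should phrase the finite-subcover step carefully so that it applies uniformly for all $d\in\N$ and is compatible with the max-norm convention used in the metric, but beyond that no new ideas are required.
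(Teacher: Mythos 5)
Your proof is correct. The paper states this classical theorem without giving a proof (it defers to the literature, e.g.\ K\r{u}rka's book), but its surrounding remarks indicate exactly the argument you give: the forward direction is a routine unwinding of the definitions, and the converse hinges on compactness of $S^{\Z^d}$ to upgrade pointwise local constancy of $c\mapsto F(c)(0)$ to a uniform radius, after which shift-commutation transports the local rule to every cell. This is the standard proof and all steps check out.
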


For more details on topological dynamics and CAs, see for instance \cite{kurka2003topological}.

\paragraph{Subshifts of finite type} Let $F \subseteq \Z^d$ be a finite set. By an \emph{$F$-pattern} over $S$ we understand a mapping $u: F \rightarrow S$. Each finite set of patterns $L$ defines a \emph{subshift of finite type (SFT)} by
$$S_L = \{c \in S^{\Z^d} \mid \text{no element of } L \text{ appears in c} \}.$$
It is easy to see that each SFT is a topologically closed, shift-invariant set.

\paragraph{Elementary cellular automata}
1D cellular automata with states $\2=\{0, 1 \}$ and a local rule with radius 1 are called elementary cellular automata (ECAs). Each local rule $f: \2^3 \rightarrow \2$ is uniquely described by its \textit{Wolfram number} given by:
$$2^0 f(0, 0, 0) + 2^1 f(0, 0, 1)  + \cdots + 2^7 f(1, 1, 1).$$
We will refer to each ECA by the corresponding Wolfram number of its local rule. The class of ECAs is frequently used for studying different CA properties due to its relatively small size; there are only 256 of them. Even such a small class contains CAs with complex behavior, e.g.~some of them have been shown to be Turing-complete in \cite{cook_og}; as we will discuss in much more detail later.

\begin{example}
    Fig.~\ref{fig:rule90} shows the space-time dynamics of Rule 90. 
\end{example}

\section{Global simulation}

\subsection{Defining global simulation}
Intuitively, a CA $\B$ globally simulates $\A$ if we can efficiently recover any space-time diagram of $\A$ from a suitable space-time diagram of $\B$. The correspondence between space-time diagrams of $\A$ and $\B$ are (analogously to local simulation) realized by an encoder and decoder, which will be some suitable ``computationally reasonable'' mappings. The configurations of the two systems are \emph{infinite} sequences of states, which as we will see is different from the local simulation case. These considerations will force us to only consider a fairly restricted class of encoders and decoders. The technical difficulty comes from specifying how exactly can the space-time diagrams of a CA be transformed. We say that a CA is globally universal if it can globally simulate any CA in the same dimension. Before we proceed with the formal definition, we discuss a few examples of global simulation.

\subsubsection{Examples}
\begin{example}[Reflections of Space-time Diagrams]\label{ex:reflections}
Let us consider two elementary CAs, Rule 110 and Rule 124. By observing their diagrams in Fig.~\ref{fig:spacetime_reflection} we can notice that their local update rules are identical upon exchanging the role of left and right neighbor. This induces a relationship between the two CAs' space-time diagrams. Let $\Theta: \{0,1 \}^\Z \rightarrow \{0,1 \}^\Z$ be the reflection of configurations over the origin. If we compare the space-time diagram of Rule 110 iterated on $c \in \{0,1 \}^\Z$ and the space-time diagram of Rule 124 iterated from $\Theta(c)$, it is apparent that one is just a reflection of the other, as shown in Fig.~\ref{fig:spacetime_reflection}. Since reflection is a computationally simple transformation, it would be natural to say that Rule 110 and Rule 124 globally simulate one another.

\begin{figure}[!htb]
    \centering
    \begin{minipage}{.45\textwidth}
        \centering
        Rule 110\\
        \includegraphics[width=0.8\linewidth]{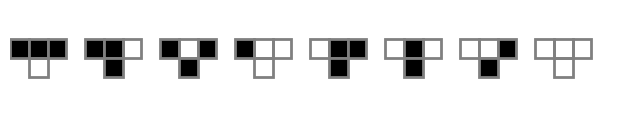}
        \includegraphics[width=0.9\linewidth]{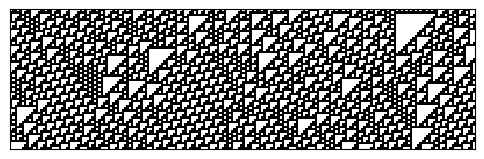}
    \end{minipage}%
    \begin{minipage}{0.45\textwidth}
        \centering
        Rule 124\\
        \includegraphics[width=0.8\linewidth]{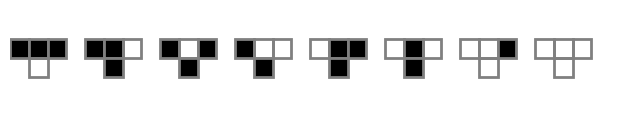}
        \includegraphics[width=0.9\linewidth]{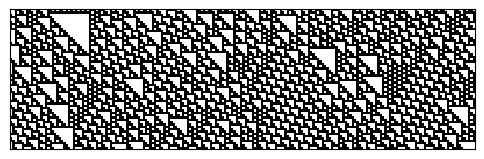}
    \end{minipage}
\caption{Rule tables and space-time diagrams of two elementary CAs.}\label{fig:spacetime_reflection}
\end{figure}
\end{example}

In \cite{simple_universal_ca}, the authors show what is probably the first construction of a globally universal CA. Specifically, they construct a 1D CA that can globally simulate any other 1D CA, however, without explicitly specifying what ``globally simulate'' means. Their construction follows a series of reductions, one of which is showcased in the next example.

A 1D CA is \emph{one-way} if its local function depends only on the central cell and its right neighbor. A CA is totalistic if its state space can
be embedded in $\N$ in such a way that its local function depends only on the sum of the states in its input.

\begin{example}[Shifts and Time Delays]\label{ex:global_sim_shifts_delays} For each 1D cellular automaton $\A$ with radius $r=1$ there exists a one-way CA $\B$ that can simulate it \cite{simple_universal_ca}.

\begin{proof}
    Let $\A = (S^\Z, F)$ be a CA with radius 1. We construct a one-way CA $\B = (T^\Z, G)$ as follows: $\B$ will simulate one iteration of $\A$ in two steps, first to aggregate information, second to process it. Suppose that $\A$ is given by the local rule $f: S^3 \rightarrow S$. We define the states of $\B$ as $S \cup S \times S$ and we define its local rule $g$ as follows: 
    \begin{alignat*}{2}
        g(s_1, s_2) &= s_1s_2 &&\text{ if } s_1, s_2 \in S\\
        & = f(a, b, c) &&\text{ if } s_1 = ab, \, s_2 = bc\\
        &\text{and is otherwise defined arbitrarily.}
    \end{alignat*}
The simulation is illustrated in Fig.~\ref{fig:one_way_simulation}.

\begin{figure}[htpb!]
    \centering
    \includegraphics[width=0.7\linewidth]{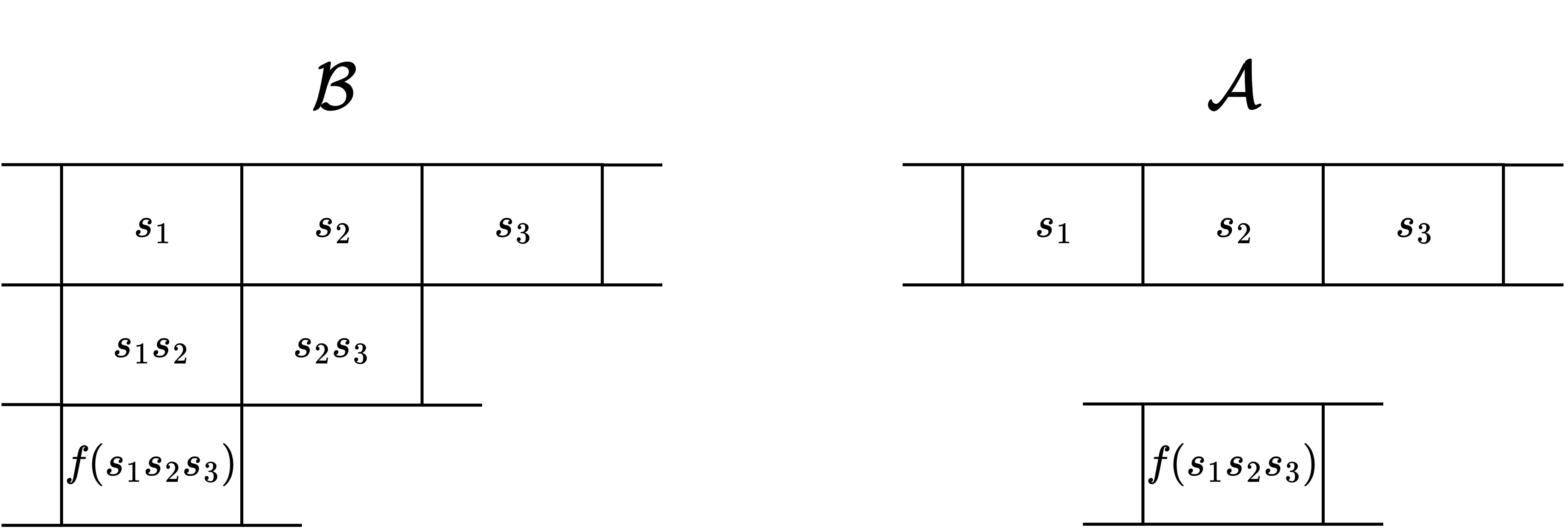}
    \caption{One-way CA simulating a CA with radius $r=1$.}
    \label{fig:one_way_simulation}
\end{figure}

Notice that in general, the dynamics of $\B$ can be richer, but if the initial configuration $c$ of $\B$ contains only states from $S$; i.e.~$c \in S^\Z \subseteq (S \cup S \times S)^\Z$ then the dynamics of $\B$ mimics the dynamics of $\A$. Specifically, in order to transform any space-time diagram of $\B$ starting from some $c \in S^\Z$ we have to skip every second time-step since $\B$ takes longer to aggregate information due to its smaller neighborhood. Lastly, in order for the exact cell positions to match, we have to shift the $k$-th row that we did not skip by $k$ cells to the right. We can summarize the relationship of the two CAs by relating their global rules:
$$\sigma_{-1} \circ G^2 (c) = F(c) \quad \text{for all } c \in S^\Z.$$
\end{proof}
\end{example}

In the previous two examples, whenever $\B = (T^{\Z^d}, G)$ globally simulated $\A = (S^{\Z^d}, F)$ we had that $T \supseteq S$. Clearly, if it was a necessary condition that $\B$ has to have a larger state space than $\A$ in order to simulate it, there could not exist any globally universal CA. Thus, given a CA $\B = (T^{\Z^d}, G)$ we have to be able to interpret its state space as being arbitrarily rich. This can be readily done by yet another geometrical transformation: packing. Informally, packing partitions the grid $\Z^d$ into larger blocks of cells; for instance for the 1D grid, the blocks could be 2 consecutive cells. This way, we can interpret every CA $\B = (T^\Z, G)$ as operating on a larger state space; for the case of the two consecutive cells, this would be $\B = (T^\Z, G) \rightsquigarrow \B^{\langle 2 \rangle} = \left((T^2)^\Z, G^{\langle 2 \rangle}\right)$. Before we define packing formally, we give one last example where this is utilized.

\begin{figure}[h!]
    \centering
    \begin{minipage}{.45\textwidth}
        \centering
        Rule 76\\
        \includegraphics[width=0.8\linewidth]{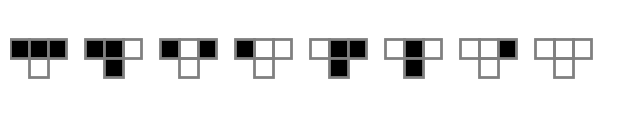}
        \includegraphics[width=0.9\linewidth]{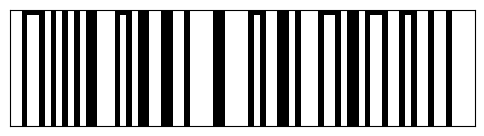}
    \end{minipage}%
    \begin{minipage}{0.45\textwidth}
        \centering
        Rule 4\\
        \includegraphics[width=0.8\linewidth]{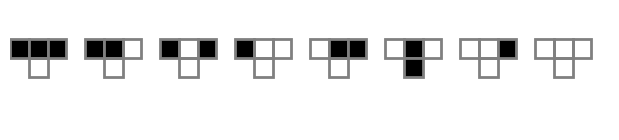}
        \includegraphics[width=0.9\linewidth]{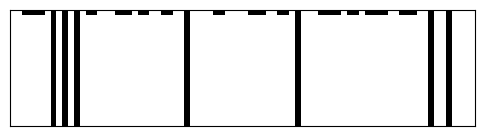}
    \end{minipage}
\caption{Rule tables and space-time diagrams of elementary CAs 76 and 4.}\label{fig:spacetime_packing}
\end{figure}

\begin{example}[Packing Operator]\label{ex:global_sim_packing}
    Let us consider the elementary CAs Rule 4 and Rule 76, with global functions $F$ and $G$ respectively (see Fig.~\ref{fig:spacetime_packing}). 
    
The dynamics of both CAs is relatively simple, since they are both idempotent; $F^2 = F$ and $G^2 = G$. Whereas Rule 4 only preserves blocks of the form 010, Rule 76 acts as identity for all neighborhoods except for 111. We will discuss that Rule 76 can simulate Rule 4.
We define two local mappings:\\
\begin{minipage}{.45\textwidth}
        \centering
       \begin{align*}
           e: \2 &\rightarrow \2^2\\
           0 &\mapsto 00\\
           1 &\mapsto 11.
       \end{align*}
    \end{minipage}%
    \begin{minipage}{0.45\textwidth}
        \centering
       \begin{align*}
           d: \2^2 &\rightarrow \2\\
           00, 01, 10 &\mapsto 0\\
           11 &\mapsto 1.
       \end{align*}
    \end{minipage}\\
We define the encoding $\E: \2^\Z \rightarrow \2^\Z$ that maps each configuration, cell by cell, using the local map $e$ and concatenating the cell tuples. Similarly, we define $\D: \2^\Z \rightarrow \2^\Z$ that first ``partitions'' each configuration into blocks of two consecutive cells, and uses the local map $d$ to decode them. It is a simple exercise to show that for each initial configuration $c \in \2^\Z$ and for any number of iterations of $n$, it holds that $\D \circ G^{2n} \circ \E(c) = F^n(c)$; the simulation is illustrated in Fig.~\ref{fig:76_sim_4}.

 \begin{figure}[!htb]
    \centering
    \includegraphics[width=0.6\linewidth]{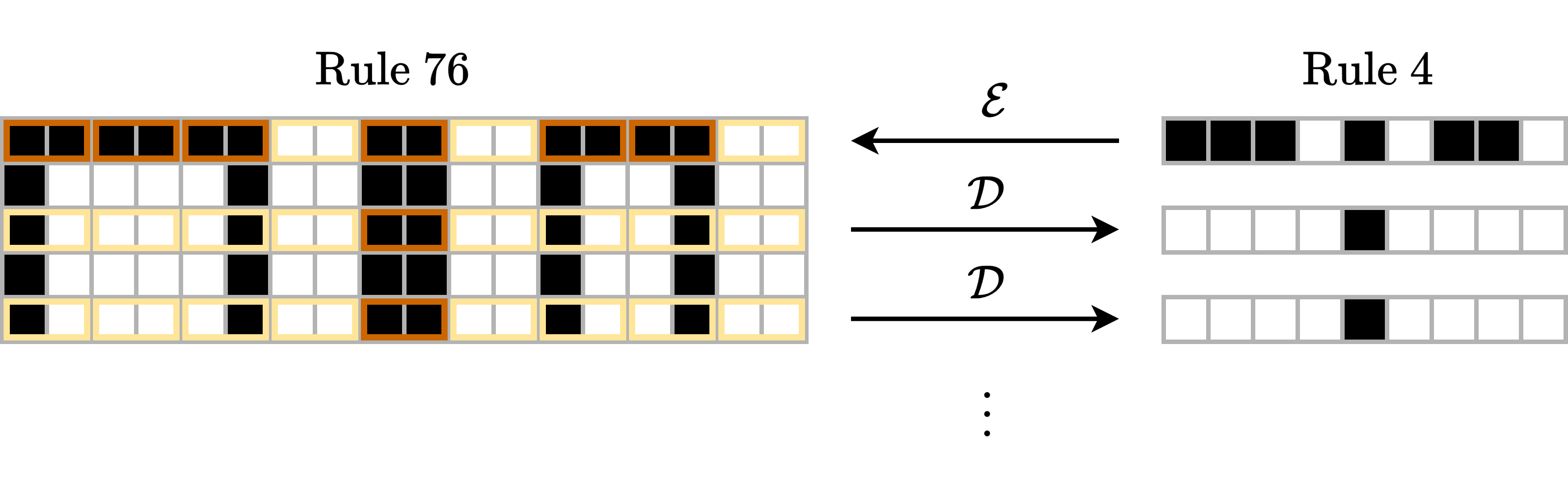}
\caption{Rule 76 globally simulating Rule 4.}\label{fig:76_sim_4}
\end{figure}

\end{example}

The three examples above showcase three types of geometrical transformations that were used to modify space-time diagrams of $\B$ to ``uncover'' the space-time diagrams of $\A$, in order to show that $\B$ globally simulates $\A$. The transformations were:
\begin{itemize}
    \item time-delays (skipping time-steps in diagrams of $\B$)
    \item shifts (shifting spatial coordinates in diagrams of $\B$)
    \item packings (aggregating together cells into larger ``supercells'')
\end{itemize}
The three transformations are not arbitrary: in \cite{bulking1} the authors show that any geometrical transformation that maps CA space-time diagrams into CA space-time diagrams while preserving time causality is a composition of these three operators.

Let us finish this section with the formal definition of global simulation.
\begin{definition}[Global CA simulation]\label{def:global_simulation}
    Let $\A = (S^{\Z^d}, F)$ and $\B = (T^{\Z^d}, G)$ be $d$-dimensional cellular automata. We say that $\B$ globally simulates $\A$ if there exists a vector $v \in \Z^d$, a time-delay $\tau \in \N$, a mapping $\E: S^{\Z^d} \rightarrow T^{\Z^d}$ and a partial mapping $\D: T^{\Z^d} \rightharpoonup S^{\Z^d}$ satisfying:
    \begin{equation}\label{eq:global_simulation}
        \D \circ \sigma_v \circ G^\tau \circ \D^{-1} (c) = F(c) \quad \text{ for all } c \in S^{\Z^d},
    \end{equation}
    where the mapping $\D$ satisfies the following three conditions:
    \begin{enumerate}
        \item The domain of $\D$ is a subshift of finite type.
        \item $\D$ is a continuous mapping.
        \item There exists a full-rank matrix $M \in \Z^{d \times d}$ such that for all $w \in \Z^d$: $\sigma_w \circ \D = \D \circ \sigma_{M \cdot w}$.
    \end{enumerate}
    Moreover, the mapping $\E$ satisfies the following three conditions:
    \begin{enumerate}
        \item $\D \circ \E = \text{\rm id}_{S^{\Z^d}}$.
        \item $\E$ is a continuous mapping.
        \item There exist two full-rank matrices $M_1, M_2 \in \Z^{d \times d}$ such that for all $w \in \Z^d$: $ \sigma_{M_2 \cdot w} \circ \E = \E \circ \sigma_{M_1 \cdot w}$
    \end{enumerate}
In such a case, we write $\A \preceq \B$. We say that $\B$ is \emph{globally universal} if it can globally simulate all $d$-dimensional cellular automata.
\end{definition}

\begin{remark}[Lower-dimensional simulation via lifting]\label{rem:lower_dim_lifting}
When we say that a $d$-dimensional CA $\mathcal{B}$ can globally simulate \emph{any} CA of the same or lower dimension, we handle the lower-dimensional case $d' < d$ through a standard lifting procedure: We embed $\mathbb{Z}^{d'}$ into $\mathbb{Z}^{d}$ via the first $d'$ coordinates, extend any $d'$-dimensional configuration to be constant along the remaining $d-d'$ coordinates, and lift the local rule to ignore these extra coordinates. This transforms any $d'$-dimensional CA $\mathcal{A}$ into a $d$-dimensional CA $\mathcal{A}'$. Thus ``$\mathcal{B}$ globally simulates $\mathcal{A}$'' means that $\mathcal{B}$ globally simulates $\mathcal{A}'$ in the sense of Definition~\ref{def:global_simulation}.
\end{remark}

We will call $\D$ the decoder and $\E$ the encoder. In the next section, we formally introduce the packing operator and show that both the encoder and decoder are ``computationally reasonable'' mappings since they essentially act as cellular automata on ``packed'' configuration spaces. This importantly ensures that one iteration of the encoder or decoder cannot perform an arbitrarily sophisticated computation. Even though the encoder does not play a direct role in condition \eqref{eq:global_simulation}, its existence guarantees that for each initial configuration $c$ of $\A$, we can efficiently recover a corresponding $c' \in \D^{-1}(c)$ of $\B$. Before we proceed with the next section, we discuss two important properties of global simulation: the relationship in \eqref{eq:global_simulation} extends to any number of iterations of the CAs, and the relation $\A \preceq \B$ between CAs of the same dimension is a preorder. We discuss this in the following lemmas.

\begin{lemma}\label{lemma:global_sim_iterations}
    Let $\A = (S^{\Z^d}, F)$ and $\B = (T^{\Z^d}, G)$ be $d$-dimensional cellular automata such that $\B$ globally simulates $\A$ via a shift vector $v \in \Z^d$, a time-delay $\tau \in \N$, an encoder $\E: S^{\Z^d} \rightarrow T^{\Z^d}$, and a decoder $\D: T^{\Z^d} \rightharpoonup S^{\Z^d}$. Then for any $n \in \N$ it holds that
    $$\D \circ (\sigma_v \circ G^\tau)^n \circ \D^{-1} (c) = F^n(c) \quad \text{ for all } c \in S^{\Z^d}.$$
\end{lemma}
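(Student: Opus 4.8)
The plan is to prove the identity by induction on $n$, after recasting Eq.~\eqref{eq:global_simulation} as a semiconjugacy relation on the domain of the decoder. First I would fix notation: write $\Phi := \sigma_v \circ G^\tau : T^{\Z^d} \to T^{\Z^d}$, and read $\D^{-1}(c)$ as the full preimage $\{c' \in T^{\Z^d} : \D(c') = c\}$, which is nonempty since it contains $\E(c)$. The content of the hypothesis, unpacked, is then two statements: (i) $\Phi$ maps $\mathrm{dom}(\D)$ into $\mathrm{dom}(\D)$ — this is forced because the left-hand side of Eq.~\eqref{eq:global_simulation} presupposes that $\D$ is defined on $\Phi(\D^{-1}(c))$ — and (ii) $\D(\Phi(c')) = F(\D(c'))$ for every $c' \in \mathrm{dom}(\D)$. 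Equivalently, $\D \circ \Phi = F \circ \D$ as partial maps with common domain $\mathrm{dom}(\D)$; i.e.\ $\D$ is a semiconjugacy from $(\mathrm{dom}(\D), \Phi)$ to $(S^{\Z^d}, F)$.

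The induction itself is then routine. For the base case, $n = 1$ is literally Eq.~\eqref{eq:global_simulation} (and if one's convention includes $0 \in \N$, the case $n=0$ is the trivial $\D \circ \D^{-1}(c) = \{c\}$). For the step, I would carry the inductive hypothesis in the uniform form ``$\D \circ \Phi^n \circ \D^{-1}(\tilde c) = F^n(\tilde c)$ for all $\tilde c \in S^{\Z^d}$''. Fixing $c$ and an arbitrary $c' \in \D^{-1}(c)$, fact (i) gives $\Phi(c') \in \mathrm{dom}(\D)$ and fact (ii) gives $\D(\Phi(c')) = F(c)$, so $\Phi(c')$ is a preimage of $F(c)$; applying the inductive hypothesis at $\tilde c = F(c)$ to this particular preimage yields $\D(\Phi^n(\Phi(c'))) = F^n(F(c)) = F^{n+1}(c)$, i.e.\ $\D(\Phi^{n+1}(c')) = F^{n+1}(c)$. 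Since $c'$ ranged over all of $\D^{-1}(c)$, this is exactly $\D \circ \Phi^{n+1} \circ \D^{-1}(c) = F^{n+1}(c)$, and substituting $\Phi^{n+1} = (\sigma_v \circ G^\tau)^{n+1}$ gives the claim.

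The only point that needs care — and the reason I would quantify the inductive hypothesis over all preimages rather than just over the section $\E$ — is the bookkeeping for the partiality of $\D$: one must check that a $\Phi$-trajectory starting anywhere in $\D^{-1}(c)$ stays inside $\mathrm{dom}(\D)$ (guaranteed by fact (i)), and that ``for all $c'$ in the preimage'' survives each step. The continuity and shift-compatibility conditions on $\D$ and $\E$ in Definition~\ref{def:global_simulation} play no role in this algebraic identity; they are what make $\D^{-1}(c)$ ``computationally reasonable'', but the iteration argument is purely formal. A slicker packaging, which I might use instead, is simply: $\D : (\mathrm{dom}(\D),\Phi) \to (S^{\Z^d},F)$ is a semiconjugacy, hence $\D \circ \Phi^n = F^n \circ \D$ on $\mathrm{dom}(\D)$ for all $n$ by the usual one-line induction, and the lemma follows by precomposing with the inclusion $\D^{-1}(c) \hookrightarrow \mathrm{dom}(\D)$.
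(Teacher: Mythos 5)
Your proposal is correct and follows essentially the same route as the paper: both arguments rest on the two observations that $\sigma_v\circ G^\tau$ maps $\mathrm{dom}(\D)$ into itself and that the base case forces $\D(\Phi(c'))=F(\D(c'))$ pointwise, and then conclude by induction that the set $\D\circ\Phi^n\circ\D^{-1}(c)$ is a non-empty subset of the singleton $\{F^n(c)\}$. Your semiconjugacy packaging is a slightly cleaner presentation of the paper's set-inclusion chain, but the mathematical content is identical.
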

\begin{proof}
     First, we notice that \eqref{eq:global_simulation} implies $\sigma_v \circ G^{\tau} (\dom \D) \subseteq \dom \D$. Thus, for any $c \in S^{\Z^d}$ and for any $n \in \N$, $\D \circ (\sigma_v \circ G^\tau)^n \circ \D^{-1} (c)$ is non-empty. Let $c \in S^{\Z^d}$ and let $c' \in \dom(\D)$. Since $ \{c' \} \subseteq \D^{-1} \circ \D (\{c'\}) $, we have that:
    \begin{align*}
        \D \circ (\sigma_v \circ G^\tau)^2 \circ \D^{-1} (c) &\subseteq \D \circ (\sigma_v \circ G^\tau) \circ \D^{-1} \circ \D \circ (\sigma_v \circ G^\tau) \circ \D^{-1} (c) \\
        &\subseteq \D \circ (\sigma_v \circ G^\tau) \circ \D^{-1} (F(c))\\
        &\subseteq \{F^2(c)\}\,.
    \end{align*}
    Since the right-hand side is a singleton, and the left-hand side is non-empty, we obtain an equality, and thus we get that
    \begin{align*}
         \D \circ (\sigma_v \circ G^\tau)^2 \circ \D^{-1} (c) = F^2(c)\,.
    \end{align*}
    One can easily finish the proof by induction on $n$.
\end{proof}

As a special case, Lemma \ref{lemma:global_sim_iterations} implies that
$$\D \circ (\sigma_v \circ G^\tau)^n \circ \E (c) = F^n(c) \quad \text{ for all } c \in S^{\Z^d}.$$
This is a weaker relationship between $F$ and $G$ that is represented in Fig.~\ref{fig:weak_global_sim} and highlights that we can compute an arbitrary iteration $F^n(c)$ using only the encoder $\E$, decoder $\D$, and iterating $G$.

\begin{figure}[htbp!]
    \centering
    \includegraphics[width=0.75\linewidth]{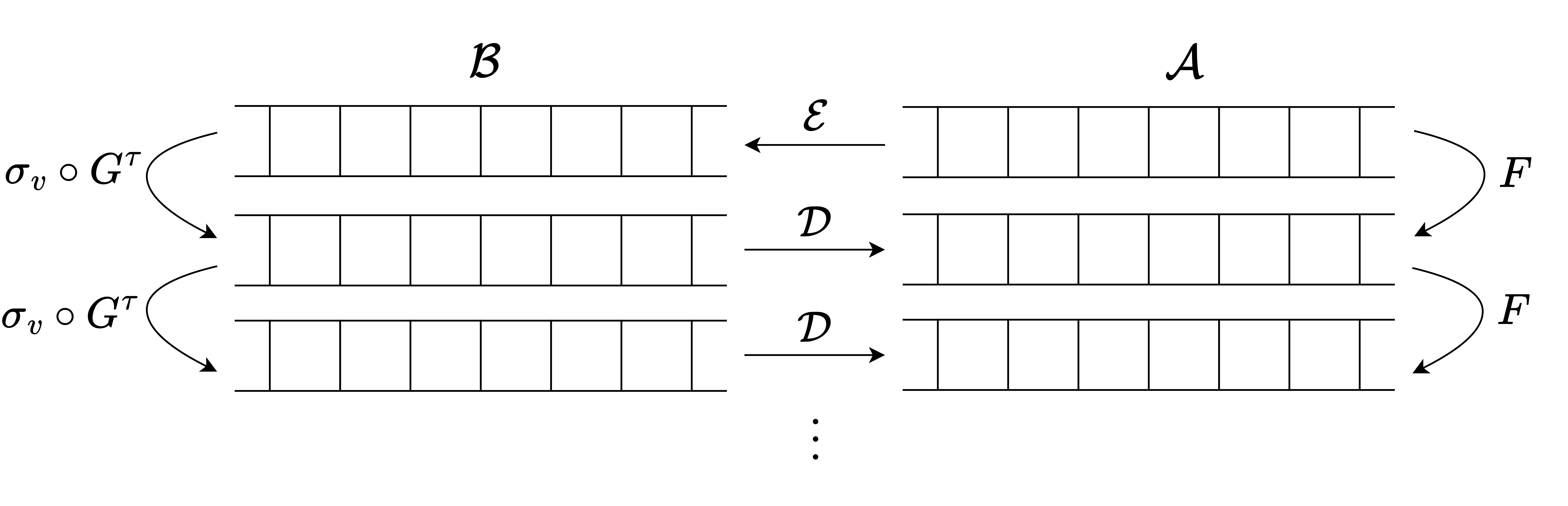}
    \caption{This diagram illustrates the relationship between $\A$ and $\B$, which holds when $\B$ globally simulates $\A$.}
    \label{fig:weak_global_sim}
\end{figure}

\begin{lemma}
    The global simulation relation $\preceq$ is a preorder.
\end{lemma}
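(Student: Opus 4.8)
Since a preorder is a reflexive and transitive relation, the plan splits into two parts. Reflexivity is essentially free: given $\A = (S^{\Z^d}, F)$, take the shift vector $v = 0$, time-delay $\tau = 1$, and $\E = \D = \mathrm{id}_{S^{\Z^d}}$, with all the full-rank matrices in the structural conditions equal to the identity. Then \eqref{eq:global_simulation} reduces to $F = \mathrm{id}\circ \sigma_0 \circ F \circ \mathrm{id}$, the domain $S^{\Z^d}$ is the SFT $S_L$ with $L = \emptyset$, and $\D\circ\E = \mathrm{id}$ trivially, so $\A \preceq \A$. The substance is transitivity.

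Suppose $\A = (S^{\Z^d},F) \preceq \B = (T^{\Z^d},G)$ via $(v_1,\tau_1,\E_1,\D_1)$, where $\D_1$ satisfies $\sigma_w \circ \D_1 = \D_1 \circ \sigma_{N_1 w}$ for a full-rank integer matrix $N_1$, and $\B \preceq \mathcal{C} = (U^{\Z^d}, H)$ via $(v_2,\tau_2,\E_2,\D_2)$, where $\sigma_w \circ \D_2 = \D_2 \circ \sigma_{N_2 w}$. I would define the composite data $\D := \D_1 \circ \D_2$, $\E := \E_2 \circ \E_1$, $\tau := \tau_1\tau_2$, and $v := N_2 v_1 + \tau_1 v_2$, and check it witnesses $\A \preceq \mathcal{C}$. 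For the main relation, start from $F = \D_1 \circ \sigma_{v_1} \circ G^{\tau_1}\circ \D_1^{-1}$. Applying Lemma~\ref{lemma:global_sim_iterations} to $\B \preceq \mathcal{C}$ with $n = \tau_1$, and using that $\sigma_{v_2}$ commutes with the CA rule $H$ (Curtis--Hedlund--Lyndon) so that $(\sigma_{v_2}\circ H^{\tau_2})^{\tau_1} = \sigma_{\tau_1 v_2}\circ H^{\tau_1\tau_2}$, gives $G^{\tau_1} = \D_2 \circ \sigma_{\tau_1 v_2} \circ H^{\tau_1 \tau_2} \circ \D_2^{-1}$. Substituting this into the expression for $F$, then moving $\sigma_{v_1}$ past $\D_2$ via $\sigma_{v_1}\circ \D_2 = \D_2 \circ \sigma_{N_2 v_1}$ and merging the two adjacent shifts, one obtains $\D \circ \sigma_{v} \circ H^{\tau} \circ \D^{-1}(c) = F(c)$ for all $c$; the care needed to handle the partial maps $\D_i^{-1}$ is identical to that in the proof of Lemma~\ref{lemma:global_sim_iterations}.

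It remains to verify the structural conditions on $\D$ and $\E$. Continuity of both is clear from composition. The identity $\D \circ \E = \D_1 \circ (\D_2 \circ \E_2) \circ \E_1 = \D_1 \circ \E_1 = \mathrm{id}_{S^{\Z^d}}$ holds since $\D_2 \circ \E_2 = \mathrm{id}_{T^{\Z^d}}$ and $\D_1 \circ \E_1 = \mathrm{id}_{S^{\Z^d}}$. The shift condition for $\D$ composes directly: $\sigma_w \circ \D = \D_1 \circ \sigma_{N_1 w} \circ \D_2 = \D \circ \sigma_{N_2 N_1 w}$, with $N_2 N_1$ full-rank. For $\E$, whose condition involves a pair of matrices, one combines the pairs from $\E_1$ and $\E_2$ by a routine commensurability argument (using that $M\Z^d \supseteq \det(M)\Z^d$ for any full-rank integer $M$, since $\det(M)M^{-1}$ has integer entries): restricting to input shifts along a common finite-index sublattice makes $\E_1$'s output shifts compatible with the sublattice along which $\E_2$'s intertwining is valid, producing full-rank integer matrices $M_1, M_2$ with $\sigma_{M_2 w}\circ \E = \E \circ \sigma_{M_1 w}$.

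The one point that genuinely needs an argument — and the main obstacle — is that $\dom \D = \dom \D_2 \cap \D_2^{-1}(\dom \D_1)$ is a subshift of finite type. Since an intersection of two SFTs is an SFT, everything hinges on showing that the $\D_2$-preimage of the SFT $\dom \D_1$ is an SFT. This rests on the locality of decoders: by compactness of $U^{\Z^d}$ and continuity of $\D_2$, the value $\D_2(x)(w)$ depends only on the restriction of $x$ to a fixed bounded window around $N_2 w$, so each forbidden pattern of $\dom \D_1$ pulls back to a finite list of forbidden windows in $x$ (a blocking argument along the finite-index sublattice $N_2\Z^d$ turns these into a genuine finite set of forbidden patterns). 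This locality is precisely the characterization — established in the next section — that encoders and decoders act as cellular automata on packed configuration spaces; granting it, the claim reduces to the routine fact that the composition of two block cellular automata, after passing to a common packing, is again a block cellular automaton with SFT domain, and transitivity of $\preceq$ follows.
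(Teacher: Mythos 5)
Your proof is correct and follows essentially the same route as the paper's: identity data for reflexivity, and for transitivity the same composite witnesses ($\D_1\circ\D_2$, $\E_2\circ\E_1$, delay $\tau_1\tau_2$, shift $N_2v_1+\tau_1v_2$), the same use of Lemma~\ref{lemma:global_sim_iterations} plus shift-commutation, the same determinant-based commensurability trick for the encoder's intertwining matrices, and the same reduction of the SFT-domain question to the fact that decoders are local (cellular transformata on packed spaces), so preimages of SFTs are SFTs. If anything, you spell out the locality/blocking argument for $\D_2^{-1}(\dom\D_1)$ slightly more explicitly than the paper, which defers it to the following section.
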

\begin{proof}
    Let $\A = (S^{\Z^d}, F)$, $\B = (T^{\Z^d}, G)$, and $\C = (U^{\Z^d}, H)$ be $d$-dimensional cellular automata.

    \textit{Reflexivity}: It is quite clear that $\A \preceq \A$ since we can simply take the shift vector to be $v = 0$, the time-delay to be $\tau = 1$ and the encoder and decoder to be the identity mappings.

    \textit{Transitivity}: Suppose that $\A \preceq \B$ via $v \in \Z^d$, $\tau \in \N$, $\E: S^{\Z^d} \rightarrow T^{\Z^d}$ with full rank matrices $M_1, M_2$ and $\D: T^{\Z^d} \rightharpoonup S^{\Z^d}$ with a full-rank matrix $M$. And suppose that $\B \preceq \C$ via $v' \in \Z^d$, $\tau' \in \N$, $\E': T^{\Z^d} \rightarrow U^{\Z^d}$ with full rank matrices $M'_1, M'_2$ and $\D': U^{\Z^d} \rightharpoonup T^{\Z^d}$ with a full-rank matrix $M'$. Thus, we have that:
    \begin{align*}
        \D \circ \sigma_v \circ G^\tau \circ \D^{-1} &= F\\
        \D' \circ \sigma_{v'} \circ {H^\tau}' \circ \D'^{-1} &= G\,.
    \end{align*}

    Our goal is to show that $\A \preceq \C$. From Lemma \ref{lemma:global_sim_iterations}, we get that:
    \begin{align*}
        \D' \circ (\sigma_{v'} \circ H^{\tau'})^\tau \circ {\D'}^{-1} = G^\tau.
    \end{align*}
    And since $H$ commutes with any shift operator, this reduces to 
    \begin{align*}
        \D' \circ \sigma_{v'\tau} \circ H^{\tau \tau'} \circ \D'^{-1} = G^\tau.
    \end{align*}
    Using property 3 of a decoder $\D'$ and applying $\sigma_v$ to both sides from the left, we find
    \begin{align*}
        \D' \circ \sigma_{M'v+v'\tau} \circ H^{\tau \tau'} \circ \D'^{-1} = \sigma_v \circ G^\tau
    \end{align*}
    which finally yields
    \begin{align*}
       \D \circ \D' \circ \sigma_{M'v+v'\tau} \circ H^{\tau \tau'} \circ \D'^{-1} \circ \D^{-1} &= \D \circ \sigma_v \circ G^\tau \circ \D^{-1} = F.
    \end{align*}
    Thus, for the decoder $\tilde{\D} = \D \circ \D'$, shift vector $\tilde{v} = M'v+v'\tau$, and time delay $\tilde{\tau} = \tau \tau'$ it holds that $\tilde{\D} \circ \sigma_{\tilde{v}} \circ G^{\tilde{\tau}} \circ \tilde{\D}^{-1} = F.$ Naturally, we define the corresponding encoder as $\tilde{\E} = \E' \circ \E$. Lastly, we have to check that $\tilde{\E}$ and $\tilde{\D}$ satisfy all conditions of Definition \ref{def:global_simulation}. It is clear that both $\tilde{\E}$ and $\tilde{D}$ are continuous and that $\tilde{\D} \circ \tilde{\E} = \id_{S^{\Z^d}}$. Next, we show that $\tilde{E}$  commutes with certain shifts determined by a pair of full rank matrices. This would clearly be true if $M_2\cdot w \in \rng M_1'$ for all $w \in \Z^d$. This does not hold in general, but one can notice that  $\det(M'_1) M_2\cdot w \in \rng M_1'$ for all $w \in \Z^d$. Let $A$ be such that $M'_1 A = \det(M'_1) M_2$. Then, for any $w \in \Z^d$:
    \begin{align*}
        \tilde{\E} \circ \sigma_{\det(M'_1)M_1w} &= \E' \circ \E  \circ \sigma_{\det(M'_1)M_1w} = \E' \circ \sigma_{\det(M'_1)M_2w} \circ \E\\
        &= \E' \circ \sigma_{M'_1 Aw} \circ \E = \sigma_{M'_2Aw} \circ \E' \circ \E = \sigma_{M'_2Aw} \circ \tilde{\E}.
    \end{align*}
    Therefore, $\tilde{\mathcal{E}}$ satisfies condition 3 of an encoder for matrices $\tilde{M}_1 = \det(M'_1)M_1$ and $\tilde{M}_2 = M'_2A$. Verifying condition 3 for the decoder is quite straightforward. Lastly, we have to check that $\dom(\tilde{\D})$ is a subshift of finite type:
    $$\dom(\D \circ \D') = \{x \in \dom(\D') \mid \D'(x) \in \dom(\D) \} = \dom(\D') \cap \D'^{-1}(\dom(\D)).$$
    It is easy to check that the intersection of two SFTs is again a SFT. Furthermore, from the next section, it will be apparent that since $\D'$ is continuous and commutes with certain shifts, it can be viewed as a ``cellular transformaton'', and thus $\D'^{-1}$ maps SFTs to SFTs. 
\end{proof}

\subsubsection{Packing Operator}
Packings are a family of geometrical transformations of CA configurations, introduced in full generality in \cite{bulking1}, that form a key ingredient of any notion of CA global simulation.  Intuitively, given a pattern $\mathcal{P} \subset \Z^d$ of size $k$ which induces a tiling of $\Z^d$, the packing operator packs together the cells in each tile into a larger ``supercell''; transforming configurations from $S^{\Z^d}$ to $(S^k)^{\Z^d}$.

Formally, we call any finite subset $\mathcal{P} \subset \Z^d$ a \emph{pattern}. Given a vector $v \in \Z^d$, the translation of $\mathcal{P}$ by $v$ is $\mathcal{P}+v = \{p+v \mid p \in \mathcal{P} \}$. Given a sequence $V = (v_1, \ldots, v_d)$ of $d$ linearly independent vectors from $\Z^d$, we write $M_V = (v_1 | v_2| \cdots | v_d) \in \Z^{d \times d}$ to denote the matrix whose columns are given by $V$. We call $(\mathcal{P}, V)$ \emph{a tiling} of $\Z^d$ if the set $\{ \mathcal{P} + M_V \cdot z \mid z \in \Z^d \}$ is a partition of $\Z^d$. We can notice that $V$ induces an equivalence relation on $\Z^d$ defined by $z \equiv_V z'$ if and only if $z - z' \in M_v \cdot x$ for some $x \in \Z^d$. Clearly, $\equiv_V$ has only finitely many equivalence classes, their number equal to $\det(M_v)$. We can observe that $(\mathcal{P}, V)$ is a tiling of $\Z^d$ if and only if $\mathcal{P}$ contains exactly one element from each equivalence class induced by $V$. An example of a tiling of $\Z^2$ is illustrated in Fig.~\ref{fig:tiling}.

\begin{figure}[htpb!]
    \centering
    \includegraphics[width=\linewidth]{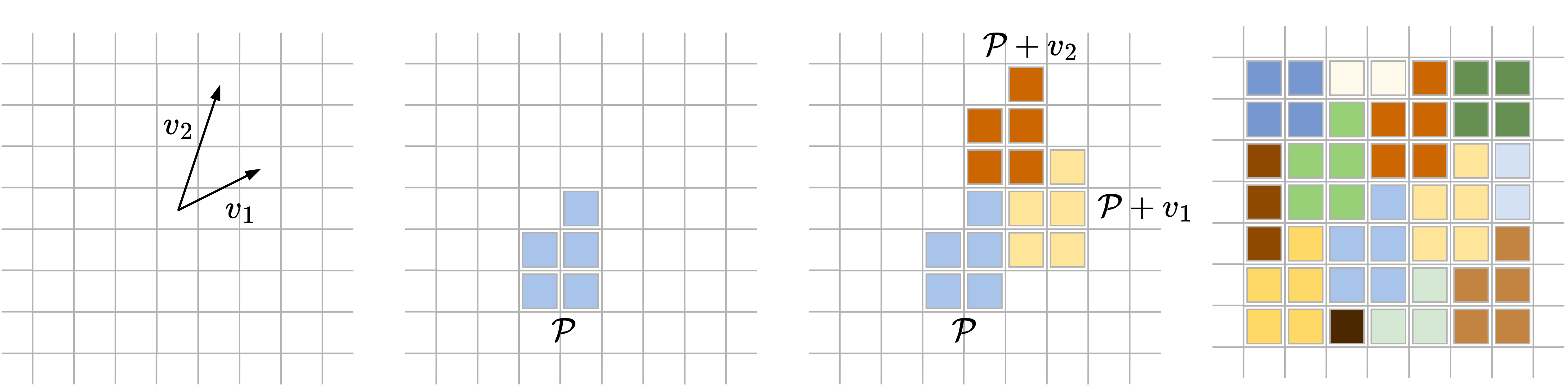}
    \caption{(left) Vectors $v_1 = (2, 1)$, $v_2 = (1, 3)$ spanning a sublattice in $\Z^2$. (middle-left) The pattern $P = \{(0,0), (1, 0), (0,1), (1, 1), (1, 2) \}$. (middle-right) Translation of $\mathcal{P}$ by $v_1$ and by $v_2$. (right) Tiling of $\Z^2$ induced by $(\mathcal{P}, (v_1, v_2)).$}
    \label{fig:tiling}
\end{figure}

Let $(\mathcal{P}, V)$ be a tiling of $\Z^d$ with $|\mathcal{P}| = m$, and let us fix the order of the elements of $\mathcal{P}$; $\mathcal{P} = (p_1, \ldots, p_m)$. The tiling $(\mathcal{P}, V)$ induces a \emph{packing operator} $o^{\langle \mathcal{P}, V \rangle}: S^{\Z^d} \rightarrow (S^m)^{\Z^d}$ for any finite set $S$, defined as follows:
 
$$o^{\langle \mathcal{P}, V \rangle}(c)(z) = \big(c(p_1+M_v\cdot z), c(p_2+M_v\cdot z), \ldots, c(p_m+M_v\cdot z)\big) \in S^m, \text{for } c \in S^{\Z^d} \text{ and } z \in \Z^d.$$
\noindent The packing operator is illustrated in Fig.~\ref{fig:packing_operator}.
 
\begin{figure}[htpb!]
    \centering
    \includegraphics[width=.8\linewidth]{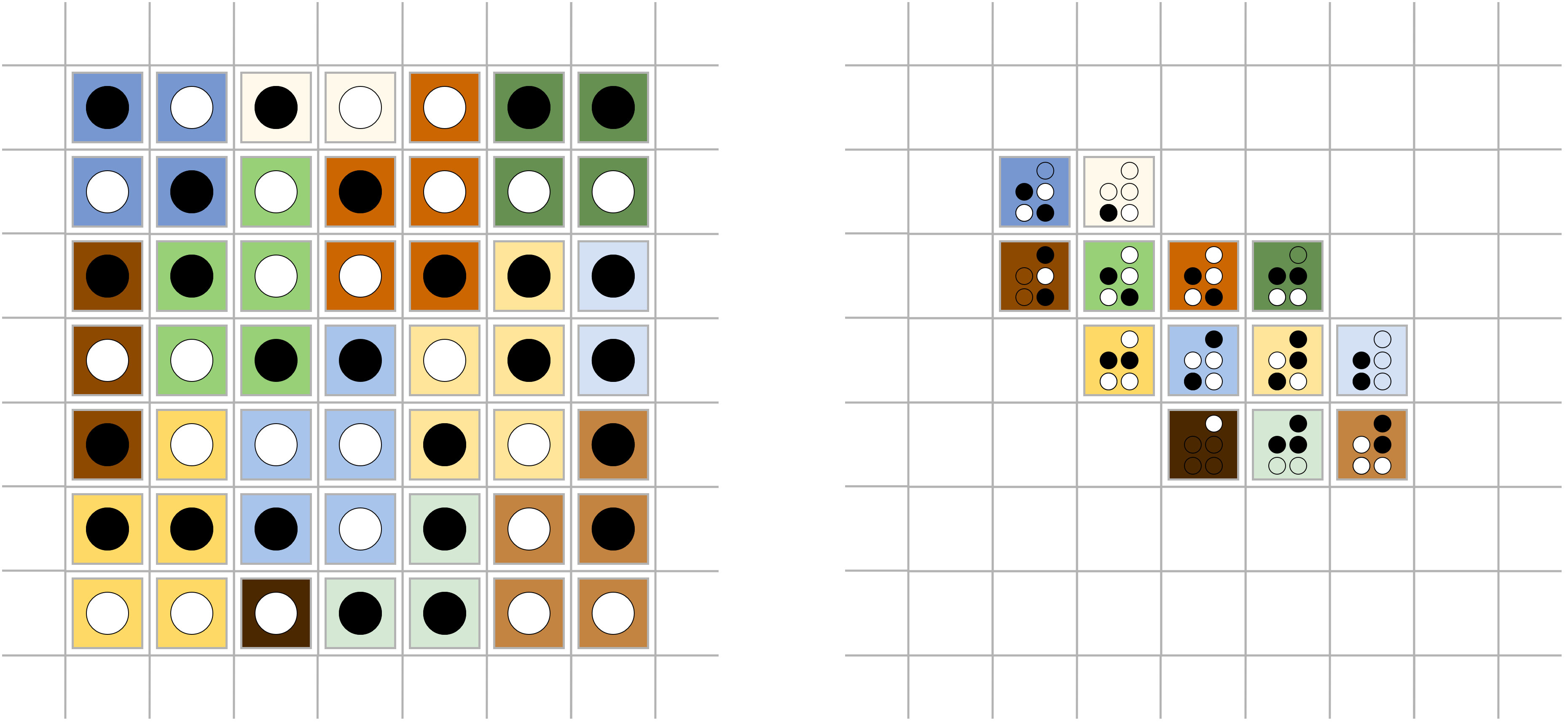}
    \caption{The tiling $(\mathcal{P}, V)$ from Fig.~\ref{fig:tiling} induces a packing operator. This figure illustrates the case of $o^{\langle \mathcal{P}, V \rangle}: \2^{\Z^2} \rightarrow (\2^5)^{\Z^2}$. (Left) Depiction of a configuration $c \in \2^{\Z^2}$. White and black circles illustrate the states 0 and 1; we keep the background coloring to show the space tiling. (Right) We display the outcome $o^{\langle \mathcal{P}, V \rangle}(c)$. The states are aggregated by groups of 5.}
    \label{fig:packing_operator}
\end{figure}

Given a cellular automaton $\B = (T^{\Z^d}, G)$ and a packing $(\mathcal{P}, V)$ of $\Z^d$ with $|\mathcal{P}|=m$, we can transform $\B$ into $\B^{\langle \mathcal{P}, V \rangle} = \left((T^m)^{\Z^d}, G^{\langle \mathcal{P}, V \rangle} \right)$ where $ G^{\langle \mathcal{P}, V \rangle} =o^{\langle \mathcal{P}, V \rangle} \circ  G \circ \left(o^{\langle \mathcal{P}, V \rangle}\right)^{-1}.$ This provides a natural way to view a given CA as operating on a larger state space. Below, we discuss that a decoder and encoder witnessing a global simulation $\A \preceq \B$ are essentially CAs operating on a packed grid space. Since their input and output state spaces differ, we generalize the notion of CAs below.

\begin{definition}[Cellular transformaton]
    Let $S$ and $T$ be finite sets and $d \in \N$. We say that $F: S^{\Z^{d}} \rightarrow T^{\Z^{d}}$ is a \emph{cellular transformaton} if there exists a neighborhood $(n_1, n_2, \ldots, n_k) \subseteq \left(\Z^{d}\right)^k$ and a local rule $f: S^k \rightarrow T$ such that for each $v \in \Z^{d}$ and for each $c \in S^{\Z^{d}}$:
    $$F(c)_v = f\left(c_{v+n_1}, \ldots, c_{v+n_k}\right).$$
\end{definition}

Cellular transformata have been studied in the context of symbolic dynamics under the name of \emph{sliding block codes} \cite{goncavales_sliding_block_codes, lind2021introduction}. Following the exact same proof strategy as in the Curtis-Hedlund-Lyndon Theorem \ref{thm:og_curtis_hedlund_lyndon}, we obtain an analogous result for cellular transformata:
\begin{lemma}[Curtis–Hedlund–Lyndon theorem for cellular transformata]\label{lemma:curtis_hedlund_lyndon_for_ct}
    Let $S, T$ be finite sets and $d \in \N$. Let $F: S^{\Z^d} \rightarrow T^{\Z^d}$. Then the following conditions are equivalent:
\begin{enumerate}[label=(\roman*)]
    \item $F$ is continuous and commutes with all shift operators.
    \item $F$ is a cellular transformaton.
\end{enumerate}
\end{lemma}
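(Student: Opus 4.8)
The plan is to run the classical Curtis–Hedlund–Lyndon argument essentially verbatim, taking care only to keep the input alphabet $S$ and the output alphabet $T$ separate throughout and to confirm that no step of the usual proof used $S = T$. The direction (ii) $\Rightarrow$ (i) is routine and I would dispatch it first: if $F$ is a cellular transformaton with neighborhood $N = (n_1,\ldots,n_k)$ and local rule $f\colon S^k \to T$, then $F(c)_v = f(c_{v+n_1},\ldots,c_{v+n_k})$ depends on only finitely many coordinates of $c$, which gives (uniform) continuity in the metric $d$ on $S^{\Z^d}$; and commutation with every shift $\sigma_w$ is a one-line check, since both $\sigma_w(F(c))$ and $F(\sigma_w(c))$ take the value $f(c_{v+w+n_1},\ldots,c_{v+w+n_k})$ at cell $v$.

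For the substantive direction (i) $\Rightarrow$ (ii), I would assume $F$ continuous and shift-commuting and first pass to the ``local rule at the origin.'' Compose $F$ with the projection $\pi_0\colon T^{\Z^d}\to T$ onto the origin cell, which is continuous because $T$ carries the discrete topology; so $g := \pi_0\circ F\colon S^{\Z^d}\to T$ is continuous. Since $S^{\Z^d}$ is compact and $T$ is finite (hence discrete), $g$ is locally constant, so the preimages $g^{-1}(t)$, $t\in T$, form a finite open cover of $S^{\Z^d}$; extracting a Lebesgue number for this cover in the metric $d$ yields an $r\in\N$ such that $g(c)$ depends only on the restriction of $c$ to the ball $B_r = \{v : \lVert v \rVert \le r\}$. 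Enumerating $B_r = (n_1,\ldots,n_k)$, this dependence defines a well-formed local rule $f\colon S^k \to T$ by $f(c_{n_1},\ldots,c_{n_k}) := F(c)_0$. Finally, shift-commutation promotes the rule from the origin to every cell:
\[
F(c)_v = \big(\sigma_v(F(c))\big)_0 = \big(F(\sigma_v c)\big)_0 = f\big(c_{v+n_1},\ldots,c_{v+n_k}\big),
\]
so $F$ is a cellular transformaton, and the neighborhood can be taken to be the radius-$r$ ball, consistent with the radius normalization used elsewhere in the paper.

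I do not anticipate a genuine obstacle: the proof is structurally identical to the $S=T$ case, and the only place one might worry about the alphabet mismatch — the claim that a continuous map into the alphabet is uniformly locally constant — relies only on the codomain alphabet $T$ being discrete and the domain $S^{\Z^d}$ being compact, both of which still hold. The main thing to be careful about is notational hygiene, in particular ensuring the constructed $f$ has signature $S^k \to T$ rather than $T^k \to T$, and that the well-definedness of $f$ is exactly the uniform-locality statement just established.
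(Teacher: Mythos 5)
Your proof is correct and follows exactly the route the paper intends: the paper gives no separate argument for this lemma, stating only that one should ``follow the exact same proof strategy'' as the classical Curtis--Hedlund--Lyndon theorem, which is precisely what you do (compactness of $S^{\Z^d}$ plus discreteness of $T$ to get a uniformly local rule at the origin, then shift-commutation to propagate it). Your added care about the signature $f\colon S^k\to T$ is the only point where the two-alphabet setting differs, and you handle it correctly.
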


 Now we can proceed with the characterization of the encoders and decoders, using a simple generalization of the Curtis-Hedlund-Lyndon Theorem \ref{thm:og_curtis_hedlund_lyndon}.

\begin{theorem}[Packing version of the Curtis–Hedlund–Lyndon theorem for cellular transformata] \label{thm:packed_curtis_hedlund_lyndon} Let $S, T$ be finite sets and $V = (v_1, \ldots, v_d)$, $W = (w_1, \ldots, w_d)$ two sequences of $d$ linearly independent vectors in $\Z^d$; we denote their corresponding matrices by $M_V$ and $M_W$. Let $(\mathcal{P}, V)$ and $(\mathcal{Q}, W)$ be two packings of $\Z^d$. Then, the following two conditions are equivalent:
\begin{enumerate}[label=(\roman*)]
    \item $F$ is continuous and for each $v \in \Z^d$ satisfies $F \circ \sigma_{M_V \cdot v} = \sigma_{M_W \cdot v} \circ F$.
    \item $G = o^{\langle \mathcal{Q}, W \rangle} \circ F \circ \left(o^{\langle \mathcal{P}, V \rangle}\right)^{-1}$ is a cellular transformaton.
\end{enumerate}
\end{theorem}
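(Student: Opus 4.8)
The plan is to deduce the statement from the Curtis–Hedlund–Lyndon theorem for cellular transformata (Lemma~\ref{lemma:curtis_hedlund_lyndon_for_ct}) by recognizing the two packing operators as homeomorphisms that conjugate the ``skewed'' shift actions $v \mapsto \sigma_{M_V \cdot v}$ and $v \mapsto \sigma_{M_W \cdot v}$ into the ordinary cell shift $v \mapsto \sigma_v$ on the respective packed grids. Once that conjugation is available, the equivalence is just a bookkeeping argument.

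First I would record two structural facts about the packing operator $o^{\langle \mathcal{P}, V \rangle}\colon S^{\Z^d} \to (S^m)^{\Z^d}$ attached to a tiling $(\mathcal{P}, V)$ with $|\mathcal{P}| = m$. \emph{(a) It is a homeomorphism.} Since $\{\mathcal{P} + M_V \cdot z : z \in \Z^d\}$ partitions $\Z^d$, every cell $x \in \Z^d$ has a unique representation $x = p_i + M_V \cdot z$, so reading off coordinate $i$ of the supercell at $z$ recovers $c(x)$; this gives a two-sided inverse, hence bijectivity. Continuity of $o^{\langle \mathcal{P}, V \rangle}$ is immediate because the supercell at $z$ depends only on the finitely many cells $p_1 + M_V \cdot z, \ldots, p_m + M_V \cdot z$. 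As $S^{\Z^d}$ is compact (noted in the Topological dynamics paragraph) and $(S^m)^{\Z^d}$ is Hausdorff, a continuous bijection is automatically a homeomorphism, so $(o^{\langle \mathcal{P}, V \rangle})^{-1}$ is continuous as well. \emph{(b) It intertwines shifts.} Substituting the definition of $\sigma_{M_V \cdot v}$ into $o^{\langle \mathcal{P}, V \rangle}(\sigma_{M_V \cdot v}(c))(z)$ and using $M_V \cdot v + M_V \cdot z = M_V \cdot (v+z)$ yields $o^{\langle \mathcal{P}, V \rangle} \circ \sigma_{M_V \cdot v} = \sigma_v \circ o^{\langle \mathcal{P}, V \rangle}$ for all $v \in \Z^d$; equivalently $(o^{\langle \mathcal{P}, V \rangle})^{-1} \circ \sigma_v = \sigma_{M_V \cdot v} \circ (o^{\langle \mathcal{P}, V \rangle})^{-1}$. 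The identical facts hold for $(\mathcal{Q}, W)$ with $o^{\langle \mathcal{Q}, W \rangle}$ and $M_W$.

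With (a) and (b) in hand the equivalence is a short chain of substitutions. For (i) $\Rightarrow$ (ii): $G = o^{\langle \mathcal{Q}, W \rangle} \circ F \circ (o^{\langle \mathcal{P}, V \rangle})^{-1}$ is a composition of continuous maps, hence continuous, and for every $v \in \Z^d$
\begin{align*}
G \circ \sigma_v
&= o^{\langle \mathcal{Q}, W \rangle} \circ F \circ \bigl(o^{\langle \mathcal{P}, V \rangle}\bigr)^{-1} \circ \sigma_v \\
&= o^{\langle \mathcal{Q}, W \rangle} \circ F \circ \sigma_{M_V \cdot v} \circ \bigl(o^{\langle \mathcal{P}, V \rangle}\bigr)^{-1} \\
&= o^{\langle \mathcal{Q}, W \rangle} \circ \sigma_{M_W \cdot v} \circ F \circ \bigl(o^{\langle \mathcal{P}, V \rangle}\bigr)^{-1} \\
&= \sigma_v \circ o^{\langle \mathcal{Q}, W \rangle} \circ F \circ \bigl(o^{\langle \mathcal{P}, V \rangle}\bigr)^{-1} = \sigma_v \circ G,
\end{align*}
using (b) inverted, then hypothesis (i), then (b) for $\mathcal{Q}$; so $G$ is continuous and commutes with all shifts, hence is a cellular transformaton by Lemma~\ref{lemma:curtis_hedlund_lyndon_for_ct}. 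For (ii) $\Rightarrow$ (i): write $F = (o^{\langle \mathcal{Q}, W \rangle})^{-1} \circ G \circ o^{\langle \mathcal{P}, V \rangle}$; by Lemma~\ref{lemma:curtis_hedlund_lyndon_for_ct} the cellular transformaton $G$ is continuous and shift-commuting, so $F$ is continuous, and the mirror-image substitution chain --- push $\sigma_{M_V \cdot v}$ through $o^{\langle \mathcal{P}, V \rangle}$ to get $\sigma_v$, through the shift-commuting $G$, then through $(o^{\langle \mathcal{Q}, W \rangle})^{-1}$ to get $\sigma_{M_W \cdot v}$ --- gives $F \circ \sigma_{M_V \cdot v} = \sigma_{M_W \cdot v} \circ F$.

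I expect the only genuine content to be item (a): verifying that $o^{\langle \mathcal{P}, V \rangle}$ is a bijection with continuous inverse. Bijectivity is a direct unwinding of the tiling partition property, and once that is secured the continuity of the inverse comes for free from compactness of $S^{\Z^d}$; everything else is formal manipulation of the intertwining identities. No step looks hard --- the value of the theorem is that it packages exactly this conjugation cleanly, so that the subsequent claim (encoders and decoders witnessing a global simulation are cellular transformata on packed grids) can invoke it as a black box.
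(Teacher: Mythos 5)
Your proposal is correct and follows essentially the same route as the paper: both proofs establish that the packing operators are (continuous, shift-intertwining) homeomorphisms conjugating $\sigma_{M_V\cdot v}$ and $\sigma_{M_W\cdot v}$ to the ordinary shift $\sigma_v$ on the packed spaces, and then reduce both directions to Lemma~\ref{lemma:curtis_hedlund_lyndon_for_ct}. The only difference is that you spell out the bijectivity and continuity-of-inverse of the packing operator explicitly, which the paper's sketch leaves implicit.
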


\begin{proof} Here we provide a sketch of the proof. $(i) \implies (ii):$ It suffices to verify that $G$ is continuous and commutes with all shifts on the ``packed space''. The continuity is clear since all packing operators are continuous. Let $v \in \Z^d$. Then:
    \begin{align*}
        G \circ \sigma_v &= o^{\langle \mathcal{Q}, W \rangle} \circ F \circ \left(o^{\langle \mathcal{P}, V \rangle}\right)^{-1} \circ \sigma_v\\
        & =  o^{\langle \mathcal{Q}, W \rangle} \circ F \circ \sigma_{M_V \cdot v} \circ \left(o^{\langle \mathcal{P}, V \rangle}\right)^{-1}\\
        & =  o^{\langle \mathcal{Q}, W \rangle} \circ \sigma_{M_W \cdot v}  \circ F \circ  \left(o^{\langle \mathcal{P}, V \rangle}\right)^{-1}\\
        & = \sigma_v  o^{\langle \mathcal{Q}, W \rangle} \circ F \circ \left(o^{\langle \mathcal{P}, V \rangle}\right)^{-1} = \sigma_v \circ G.
    \end{align*}
    Thus, using Lemma \ref{lemma:curtis_hedlund_lyndon_for_ct}, $G: (S^{|\mathcal{P}|})^{\Z^d} \rightarrow (T^{|\mathcal{Q}|})^{\Z^d}$ is a cellular transformaton.\\
    \noindent $(ii) \implies (i):$ Again, we use Lemma \ref{lemma:curtis_hedlund_lyndon_for_ct} to see that $G$ commutes with all shifts and is continuous. Utilizing similar arguments as above, we get $(i)$.
\end{proof}

Hence, we can interpret each decoder witnessing a CA global simulation $\A \preceq \B$ as an operator that packs the configurations of $\B$ and transforms them into configurations of $\A$. Similarly, $\E$ packs configurations of $\A$, transforms them into packed configurations of $\B$, and unpacks those. We illustrate this by the following example, again coming from the work of Culik and Yu \cite{simple_universal_ca}. We say a CA is totalistic if its local rule is invariant to any permutation of its inputs.

\begin{example}[Totalistic simulation \cite{simple_universal_ca}]
    For each 1D cellular automaton $\A = (S^\Z, F)$ with radius $r=1$ there exists a totalistic CA $\B$ also with radius $1$ that globally simulates it.

\begin{proof}
    We simply encode the relative positions of cells into the states of $\B$ by the encoder. Suppose $\A$ has states $\{s_1, s_2, \ldots, s_n\}$ that we identify with numbers ${1, \ldots, n}$ and let $B = n+1$. The encoding is based on using the numbers 10, 100, and 1000 in $B$-ary representation, as seen in Fig.~\ref{fig:totalistic_simulation}.

 \begin{figure}[htpb!]
    \centering
    \includegraphics[width=0.85\linewidth]{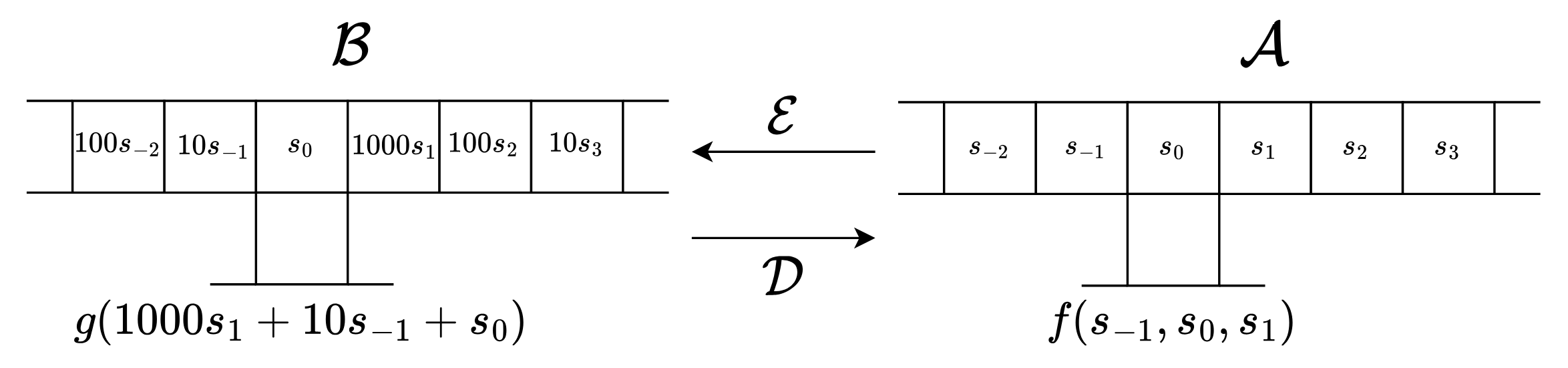}
    \caption{Encoded configuration of a totalistic CA simulating a CA with radius $r=1$.}
    \label{fig:totalistic_simulation}
\end{figure}
    It is clear that the sum of any three consecutive cells will contain exactly one 0 that determines the position of the three states in the neighborhood. This definition shows that the encoder requires information about the positions of the cells. The decoder simply ``forgets'' this additional information and sends each 4-tuple of cells $s_0,1000s_1,100s_2,10s_3 \mapsto s_0s_1s_2s_3$. 
    
    One can easily see that the encoder packs the 1D grid into blocks of four consecutive cells, and transforms them via a cellular transformaton (with radius one), to unpack them again. Formally, $\E$ satisfies for any $v \in \Z$ that $\E \circ \sigma_{4v} = \sigma_{4v} \circ \E$. The domain of decoder $\D$ contains exactly all valid encodings of configurations, i.e.~$\dom(\D) = \E(S^\Z)$. $\D$ does not need to pack the configuration space of $\B$ and can be simply implemented as a cellular transformaton with radius $1$ which forgets the positional encoding. 
    
    Since for each $c \in S^\Z$, $\D^{-1}(c) = \{\E(c)\}$, it is easy to see that $\D \circ G \circ \D^{-1} = F$.
\end{proof}
\end{example}

It is now an easy exercise to return to the examples we gave at the beginning, and to show they ``comply'' with Definition \ref{def:global_simulation} of global simulation.

\subsection{Related work}\label{appD:related_word}
Different variants of global simulation and their properties have been studied in the past; typically under the name of \emph{intrinsic simulation}. There are at least three distinct lines of work, represented by~\cite{inducing_order_on_ca_by_grouping}, \cite{bulking1}, and \cite{hudcova2024simulation} which study notions that are similar in spirit, yet differ in the details. As there is no comparison of these works in the literature, we fill this gap by providing a detailed overview, summarized in Table \ref{tab:global_sim_related_works}.

For this whole section, we consider $\B = (T^{\Z^d}, G)$ and $\A = (S^{\Z^d}, F)$ to be cellular automata in dimension $d \in \N$. All of the variants of CA simulation can be formulated in the same general framework:

$\B$ simulates $\A$ if there exists a certain transformation $\tilde{G}$ of $G$, a transformation $\tilde{F}$ of $F$ and a decoder $\D$ some computationally feasible partial mapping translating between configurations of $\B$ and $\A$ such that
$$\D \circ \tilde{G} \circ \D^{-1}(c) = \tilde{F}(c) \quad \quad \text{for all } c \in S^{\Z^d}.$$

\paragraph{Global simulation}
For global simulation, we don't allow for transformations of $F$, thus $\tilde{F} = F$. For $G$, we have $\tilde{G} =  o^{\langle \mathcal{P}, V \rangle} \circ \sigma_v \circ G^\tau \circ \left(o^{\langle \mathcal{P}, V \rangle} \right)^{-1}$ for some shift vector $v \in \Z^d$, some time-delay $\tau \in \N$, and some tiling $(\mathcal{P}, V)$ of $\Z^d$.

The decoder is a cellular transformaton $\D: (T^m)^{\Z^d} \rightharpoonup S^{\Z^d}$ whose domain is a subshift of finite type and such that we can efficiently find preimages of $\D$ via an injective cellular transformaton $\E$.

\paragraph{Grouping}
In their seminal work \cite{inducing_order_on_ca_by_grouping}, Mazoyer and Rapaport study the properties of what they call a grouping operation on 1D CAs. In our notation, $\tilde{G} =  o^{\langle \mathcal{P}, V \rangle} \circ G^m \circ \left(o^{\langle \mathcal{P}, V \rangle} \right)^{-1}$
for a simple tiling $(\mathcal{P}, V)$ where $\mathcal{P}$ is a rectangle of size $m$ and $V = (m)$. Similarly, they allow to transform $F$ into $\tilde{F} = o^{\langle \mathcal{Q}, W \rangle} \circ F^n \circ \left(o^{\langle \mathcal{Q}, W \rangle} \right)^{-1}$ again with a rectangular tiling $(\mathcal{Q}, W)$ of size $n$. Notice that here, the shift vector is $0$ and the time-delay is exactly the size of the tiling. This enforces the CAs to process information ``in real time''.

The decoder $\D$ is an injective cellular transformaton $\D: (T^m)^\Z \rightharpoonup (S^n)^\Z$ with radius 0 and a domain $(T')^\Z$ where $T' \subseteq T^m$. The authors obtain deep results about their notion of CA simulation in dimension 1 which include the following: 
 
\begin{itemize}
    \item Assuming the real-time simulation, i.e.~time-delay of $\B$ is equal to the tiling's size, the CA simulation admits no universal CA. 
    \item There exists a countable number of incomparable CA classes with respect to the relation induced by the simulation \cite{additive_cas_over_Zp_bottom}.
    \item The general problem of whether $\B$ simulates $\A$ is undecidable.
\end{itemize}

\paragraph{Bulking}
In the subsequent series of very relevant works $\cite{bulking1}$ and \cite{bulking2} by Delorme, Mazoyer, Ollinger, and Theyssier, the authors consider a generalization of the grouping operation which they call bulking. For $\B$ and $\A$ in arbitrary dimension $d \in \N$ they consider $\tilde{G} =  o^{\langle \mathcal{P}, V \rangle} \circ \sigma_v \circ G^\tau \circ \left(o^{\langle \mathcal{P}, V \rangle} \right)^{-1}$ where $v \in \Z^d$, $\tau \in \N$ and $(\mathcal{P}, V)$ is a rectangular tiling of $\Z^d$; i.e.~$\mathcal{P}$ is a rectangle, and V consists of canonical vectors, multiplied by scalars and possibly permuted. Similarly, they consider $\tilde{F} =  o^{\langle \mathcal{Q}, W \rangle} \circ \sigma_w \circ F^{\tau'} \circ \left(o^{\langle \mathcal{Q}, W \rangle} \right)^{-1}$ where $(\mathcal{Q}, W)$ is again a rectangular packing of $\Z^d$. Lastly, in \cite{bulking2}, the most general decoder the authors consider is a cellular transformaton $\D: {T^{|\mathcal{P}|}}^{\Z^d} \rightharpoonup {S^{|\mathcal{Q}|}}^{\Z^d}$ with radius 0 whose domain is a again a set $(T')^{\Z^d}$ where $T' \subseteq T^{|\mathcal{P}|}$. Among other results, in \cite{bulking1}, the authors importantly show that any geometrical transformation that maps CA space-time diagrams into CA space-time diagrams while preserving time causality is a composition of a (general) packing, time-delay, and a shift. For more details, see also the thesis of Ollinger \cite{ollinger2002automates} or Theyssier \cite{theyssier2005automates}.

\paragraph{Algebraic simulation}
In \cite{hudcova2024simulation}, Hudcová and Krásenský formulate the notion of CA simulation in algebraic language which helps them use some deeper results from universal algebra to show that certain CA classes are limited in terms of what they can simulate. Concretely, they focus on 1D CAs and consider $\tilde{F} = F$ and $\tilde{G} = \tilde{G}_1 \times \tilde{G}_2 \times \cdots \tilde{G}_k$ where $\times$ denotes the standard CA product, and each $\tilde{G}_i = o^{\langle \mathcal{P}_i, V_i \rangle} \circ G^{m_i} \circ \left(o^{\langle \mathcal{P}_i, V_i \rangle} \right)^{-1}$ where $(\mathcal{P}_i, V_i)$ is a rectangular packing of size $m_i$. The authors consider the decoder to be a cellular transformaton with radius 0 defined on a subshift of finite type. 

The introduction of products has an algebraic motivation since for every CA $\B$ the set $\{\A \mid \B \text{ can simulate } \A \}$ has a nice structure: it forms a pseudovariety. The authors show that (almost) any affine CA (local rule is an affine mapping of vector spaces) or Abelian CA (local rule is a homomorphism of an Abelian group) can only simulate CAs with the same property, thus, in particular they cannot be globally universal. A widely studied affine CA is for instance the elementary CA Rule 90. More details can be found in Hudcová's thesis \cite{hudcova2024thesis}.

We present an overview of the variety of related notions in Table \ref{tab:global_sim_related_works}.

\begin{table*}[htbp!]
   \small
   \centering
   \renewcommand{\arraystretch}{1.3} 
   \begin{tabular}{>{\centering\arraybackslash}p{2.2cm} || >{\centering\arraybackslash}p{1.2cm} | >{\centering\arraybackslash}p{4.5cm} | >{\centering\arraybackslash}p{4.5cm} | >{\centering\arraybackslash}p{2.5cm}} 
    \toprule
   \toprule
    \multicolumn{5}{c}{\large\textbf{Global Simulation Framework}} \\
    \multicolumn{5}{c}{\normalsize $\mathcal{B} = (T^{\mathbb{Z}^d}, G)$ simulates $\mathcal{A} = (S^{\mathbb{Z}^d}, F)$ if $\mathcal{D} \circ \tilde{G} \circ \mathcal{D}^{-1} = \tilde{F}$ for some decoder $\mathcal{D}$}\\
    \midrule
    \midrule
\textbf{Sim. Type} & \textbf{Dim.} & $\mathbf{\tilde{G}}$ & $\mathbf{\tilde{F}}$ & \textbf{decoder} $\bm{\D}$ \\  
\hline
\hline
grouping \cite{inducing_order_on_ca_by_grouping}
& $d = 1$ & $o^{\langle \mathcal{P}, V \rangle} \circ G^{|\mathcal{P}|} \circ \left(o^{\langle \mathcal{P}, V \rangle} \right)^{-1}$, $(\mathcal{P}, V)$ a rectangular tiling & $o^{\langle \mathcal{Q}, W \rangle} \circ F^{|\mathcal{Q}|} \circ \left(o^{\langle \mathcal{Q}, W \rangle} \right)^{-1}$, $(\mathcal{Q}, W)$ a rectangular tiling & injective CT* with radius 0 \\
\hline 
bulking \cite{bulking1, bulking2}
& $d \in \N$ & $o^{\langle \mathcal{P}, V \rangle} \circ \sigma_v \circ G^\tau \circ \left(o^{\langle \mathcal{P}, V \rangle} \right)^{-1}$, $(\mathcal{P}, V)$ a rectangular tiling & $o^{\langle \mathcal{Q}, W \rangle} \circ \sigma_w \circ F^{\tau'} \circ \left(o^{\langle \mathcal{Q}, W \rangle} \right)^{-1}$, $(\mathcal{Q}, W)$ a rectangular tiling &  \multirow{2}{*}{\shortstack[l]{CT* \\ with radius 0}}\\ 
\hline 
algebraic \cite{hudcova2024simulation}
& $d=1$ & $\tilde{G}_1 \times \tilde{G}_2 \times \cdots \times \tilde{G}_k,$ each $\tilde{G}_i$ as in grouping & $F$ & \multirow{2}{*}{\shortstack[l]{CT*\\ with radius 0}}\\
\hline
global & $d \in \N$ & $o^{\langle \mathcal{P}, V \rangle} \circ \sigma_v \circ G^\tau \circ \left(o^{\langle \mathcal{P}, V \rangle} \right)^{-1}$, $(\mathcal{P}, V)$ a general tiling & $F$ &  \multirow{2}{*}{\shortstack[l]{CT*\\ with radius r}}\\
\hline
\end{tabular}
   \caption{\centering Global Simulation: Overview of Related Works.\\ *Here, CT stands for a cellular transformaton with a local rule being a partial mapping.} 
   \label{tab:global_sim_related_works}
\end{table*}

\section{Local simulation}
\label{appsec_local_simulation}

Local simulation refers to the ability of a cellular automaton to mimic the computation of a given Turing machine. Locality refers to the fact that the Turing machine's tape is, at any given moment, just a finite set of symbols, which is contained in a bounded region of the CA's configuration via a local encoding mapping. The notion of a CA simulating a Turing machine is simple to introduce conceptually, yet in order to obtain a concrete formal definition, a myriad of technical details need to be specified. Precisely specifying such details appears to be more challenging than what could initially be expected; as Ollinger writes in his review Universalities in Cellular Automata \cite{universalities_in_cas}: ``... (this) leads to the problem of heterogeneous simulation: classical models of computation have inputs, step function, halting condition, and output. Cellular automata have no halting condition and no output. As pointed out by Durand and Róka \cite{the_game_of_life_universality_revisited}, this leads to very tricky encoding problems: their own attempt at a Turing universality based on this criterion as encoding flaw permitting us counterintuitively to consider very simple cellular automata as universal. Turing universality of dynamical systems in general and cellular automata in particular has been further discussed by Delvenne et al. \cite{decidability_and_universality_in_symbolic_dynamical_systems} and Sutner \cite{universality_and_cellular_automata}. None of the proposed definitions is completely convincing so far, so it has been chosen on purpose not to provide the reader with yet another weak formal definition.''

It is surprising that despite the rich line of works studying Turing universality within CAs, no formal definition succeeded in grasping this notion in a satisfying manner. The goal of this section is to propose a formal definition of CA Turing completeness that does not lead to pathological examples of trivial yet Turing-complete automata, and at the same time is general enough to agree with all the constructions of Turing-complete cellular automata from the literature. Before delving into the details, we first provide the conceptual overview of local simulation.

\subsection{Defining local simulation}

\paragraph{Local simulation: 1st level of resolution}
Very roughly, we say that a CA $\A$ simulates a Turing machine $\T$ if, for every initial tape configuration $c$ of $\T$ there exists a corresponding initial configuration $c'$ of $\A$ such that the space-time diagram of $\A$ contains the snapshots of all computation steps of $\T$ ($\T$ need not be a Turing machine that always halts). Further, we require that the function mapping $c \mapsto c'$ is a computationally reasonable mapping. Similarly, we require that for every time-step $t \in \N$ we can recover $t$ steps of time evolution of $c$, denoted by $T^t(c)$, from the space-time diagram of $\A$ in a computationally feasible way.

\begin{definition}[Turing machine]
\label{def:Turing0}
    A Turing machine $\T$ is given by a triple $(Q, \Sigma, \delta)$ where:
    \begin{enumerate}
        \item $Q$ is a finite set of states, with a designated $q_0 \in Q$ the \emph{initial state} and $q_{\mathrm{halt}} \in Q$ the \emph{halting state}.
        \item $\Sigma$ is a finite set representing the \emph{tape alphabet}, with a designated blank symbol $\sqcup$
        \item $\delta: Q \times \Sigma \rightarrow Q \times \Sigma \times \{L, S, R\}$ is the \emph{transition function}.
    \end{enumerate}
We define the \emph{configuration space} of $\T$ to be the set $C_\T = \Sigma^* \times Q \times \Sigma^*$. A configuration $c \in C_\T$, $c = s_1\cdots s_{m-1} q s_m\cdots s_{n}$, defines the Turing machine's tape content $s_1 \cdots s_n \in \Sigma^n$ (with all the other symbols on the bi-infinite tape being blank), the current state to be $q \in Q$ and the Turing machine head currently reading $s_m$. Thus, we can interpret $\T$ as a function $T: C_\T \rightarrow C_\T$ as follows. Let $\delta(q, s_m) = (q', s'_m, a)$ for $a \in \{\text{\rm L}, \text{\rm S}, \text{\rm R}\}$, then

\begin{equation}
    T(c) =
    \begin{cases*}
      s_1\cdots s_{m-2} q' s_{m-1} s'_m \cdots s_{n} & if $a = \text{\rm L}$ \\
      s_1\cdots s_{m-1} q' s'_m \cdots s_{n} & if $a = \text{\rm S}$ \\
    s_1\cdots s_{m-1} s'_m q' s_{m+1} \cdots s_{n} & if $a = \text{\rm R}$. 
    \end{cases*}
  \end{equation}
\end{definition}
We will identify each Turing machine $\T$ with the dynamical system $T: C_\T \rightarrow C_\T$. 

\paragraph{Local simulation: 2nd level of resolution}
Let $\A = (S^{\Z^d}, F)$ be a $d$-dimensional cellular automaton and $\T = (C_\T, T)$ a Turing machine. We say that $\A$ simulates $\T$ if there exists:
\begin{enumerate}
    \item an \emph{encoder} $\E: C_\T \rightarrow S^{\Z^d}$
    \item a \emph{decoder} $\D: S^{\Z^d} \rightarrow C_\T$
    \item a delay function $\tau: S^{\Z^d} \rightarrow C_\T$
\end{enumerate}
Such that for all $c \in C_\T$ and for all time-steps $t \in \N$ it holds that:
\begin{equation}\label{eq:local_sim_2nd_lev}
    \D \circ (F^\tau)^t \circ \E(c) = T^t(c)
\end{equation}
where $F^\tau: S^{\Z^d} \rightarrow S^{\Z^d}$ is defined as $F^{\tau}(c) = F^{\tau(c)} (c)$ for any $c \in S^{\Z^d}$. The relationship between $\A$ and $\T$ captured in equation \eqref{eq:local_sim_2nd_lev} is illustrated in Fig.~\ref{fig:local_simulaiton_scheme}.

\begin{figure}[htbp!]
    \centering
    \includegraphics[width=0.8\linewidth]{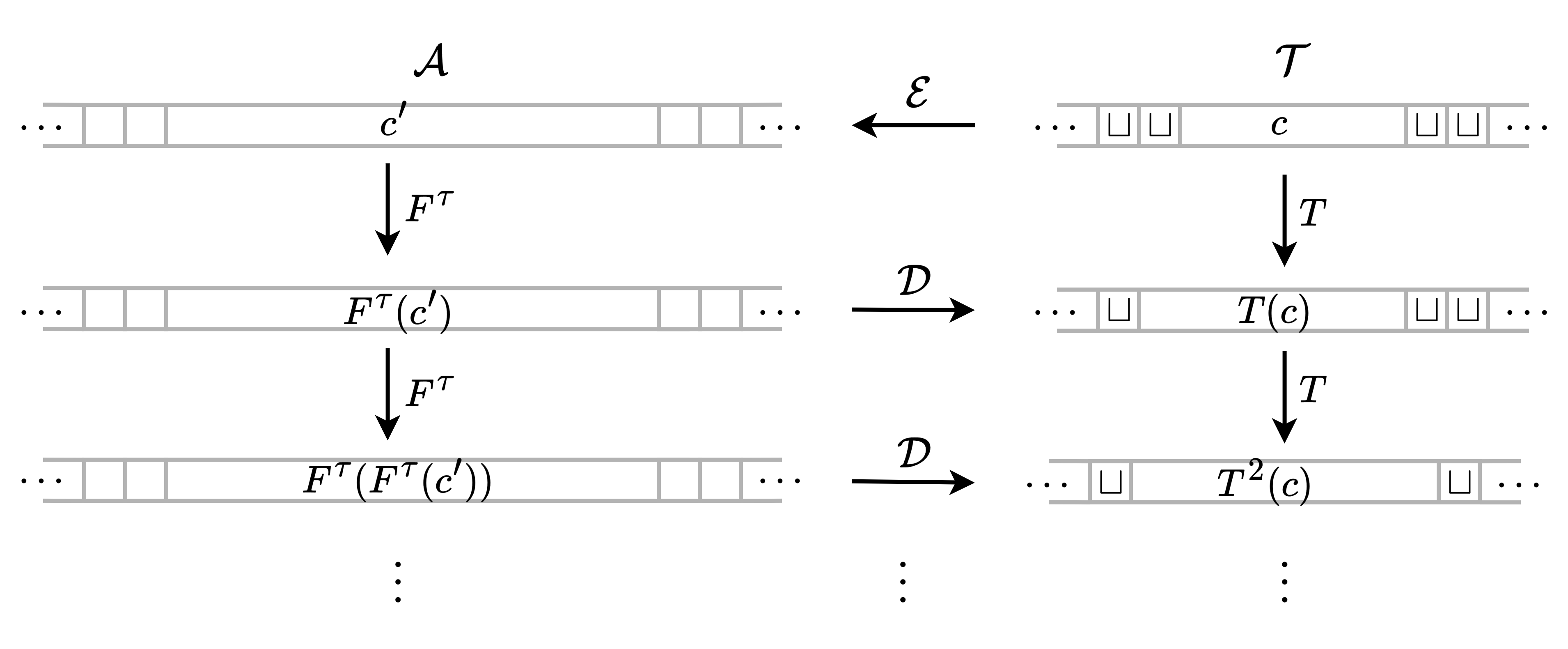}
    \caption{A general scheme illustrating the concept of local simulation: cellular automaton $\A$ simulates the Turing machine $\T$ via an encoder $\E$, decoder $\D$, and a delay function $\tau$.}
    \label{fig:local_simulaiton_scheme}
\end{figure}

Crucially, the mappings $\E, \D$, and $\tau$ have to be ``computationally reasonable''. Intuitively, we want to prevent cases where $\E$, $\D$, or $\tau$ could perform an arbitrarily sophisticated computation in a single iteration. However, specifying what ``computationally reasonable'' means exactly is where the technical challenge lies. First, the mappings relate finite configurations of $\T$ with infinite configurations of $\A$. To deal with this, it is natural to require the following. Let $c \in C_\T$ with $|c| = n$ and let $\D(c') = c$.
\begin{enumerate}
    \item $\E$ maps $c$ to a finite region of size proportional to $n$, centred around 0. The configuration outside this region is initialized with some ``admissible background'' (a natural choice is, for instance, the quiescent state if the CA has one).
    \item When decoding $c'$, $\D$ has access only to a finite region centred around 0 whose size is proportional to $n$. 
    \item Similarly to $\D$, when $\tau$ computes the time-delay for $c'$, it only has access to a finite region centred around 0 whose size is proportional to $n$.
\end{enumerate}
Thus, we have reduced the mappings $\E, \D$ and $\tau$ to ones between finite configurations. As such, we can require them to be realized by computable functions of bounded complexity (concretely, we will bound their time complexity by some computable function $\beta(n)$).

Before we proceed with the formal definition, we discuss two examples of what can go wrong, if we impose no complexity restrictions on the mappings $\D, \E$, and $\tau$.

\begin{example}[Unrestricted Decoder] 
We adapt an example from \cite{computational_dynamical_systems} to show that when the decoder's computational complexity is not restricted, even a trivial CA can simulate a universal Turing machine.

Let $\T_{\mathrm{univ}} = (C_{\T_{\mathrm{univ}}}, T_{\mathrm{univ}})$ be a universal Turing machine given by $(Q, \Sigma, \delta)$. We define a trivial 1D CA $\A$ with states $S = Q \cup \Sigma \cup \{0, 1\}$ and radius 1. The local rule $f: S^3 \rightarrow S$ acts as the identity of the central cell except for the following three cases:
$$f(0q_0\sqcup) = 1, \quad f(1q_0\sqcup) = 1, \quad  f(q_0\sqcup\sqcup) = q_0.$$
Let $c \in C_{\T_{\mathrm{univ}}}$ with length $n$, we define $\restr{\E(c)}{[0, n+2]} = c \, 0 \,q_0$ and to be $\sqcup$ everywhere else. We define $\tau$ to be a constant 1, and we define the decoder 
$$\D(\cdots\sqcup\sqcup c \, 0 \underbrace{11\cdots 1}_{k} q_0 \sqcup\sqcup \cdots) \coloneqq T^k_\mathrm{univ}(c).$$ 

Now it is easy to see that if we iterate $\A$ on $\E(c)$, it keeps all the information about $c$ and $q_0$ intact and simply tracks the number of iterations by a growing sequence of 1s. Therefore, we see that $\A$ simulates $\T_{\mathrm{univ}}$ as it satisfies equation \eqref{eq:local_sim_2nd_lev}.  
\end{example}

\begin{example}[Unrestricted Encoder]
We adapt an example from \cite{the_game_of_life_universality_revisited} to show that if we impose no complexity restrictions on the encoder and delay function, even a trivial CA can simulate a universal Turing machine.

Let $\T_{\mathrm{univ}} = (C_{\T_{\mathrm{univ}}}, T_{\mathrm{univ}})$ be a universal Turing machine given by $(Q, \Sigma, \delta)$. We define a trivial 1D CA $\A$ with states $S = Q \cup \Sigma$ and radius 1, which simply shifts every configuration by one cell to the left. Let $c \in C_{\T_{\mathrm{univ}}}$, we define the encoder to already ``precompute'' all iterations of $\T_{\mathrm{univ}}$ on $c$:
$$\E(c) \coloneqq \cdots \sqcup \sqcup . c \sqcup T_{\mathrm{univ}}(c) \sqcup T^2_{\mathrm{univ}}(c) \sqcup \cdots$$
where $c$ starts at position 0. Whenever we have a CA configuration of the form $\cdots \sqcup \sqcup . u_1 \sqcup u_2 \sqcup \cdots$ for $u_1, u_2, \ldots \in S^*$ we define $\tau(\cdots \sqcup \sqcup . u_1 \sqcup u_2 \sqcup \cdots) = |u_1|+1$. Lastly, we define the decoder $\D(\cdots \sqcup \sqcup . u_1 \sqcup u_2 \sqcup \cdots) = u_1.$ Again, it is easy to check that $\A$ simulates $\T_{\mathrm{univ}}$ via $\E, \D$, and $\tau$ while all the ``computation'' is hidden in the encoder.
\end{example}

As we will see when discussing the Turing completeness of Rule 110, in order for it to comply with our definition of local universality, we have to further generalize the form of the CA's background (quiescent state is too restrictive) and adjust equation \eqref{eq:local_sim_2nd_lev} by appropriate shifts, since the information in the CA's space-time diagram can, in general, be shifted around by the dynamics.   

\paragraph{Local simulation: formal definition}

\begin{definition}[Admissible background] \label{def:admissible_background}
    Let $\A = (S^{\Z^d}, F)$ be a CA. We say that $B \in S^{\Z^d}$ is a \emph{basic admissible background} if there exists a packing $(\mathcal{P}, V)$ of $\Z^d$ with $|\mathcal{P}| = m$ and a cellular transformaton $G: \{0\}^{\Z^d} \rightarrow (S^m)^{\Z^d}$ such that $B = (o^{\langle \mathcal{P}, V\rangle} )^{-1} \circ G(0^{\Z^d})$. We say that $B \in S^{\Z^d}$ is an \emph{admissible background} if there exists a finite partition of $\Z^d$ by convex polyhedral regions $R_1, \ldots, R_k$ delimited by linear inequalities with rational coefficients and a sequence of basic backgrounds $B_1, \ldots, B_k$ such that for each $i \in \{1, \ldots, k\}$ it holds that $\restr{B}{R_i} = \restr{B_i}{R_i}$.
\end{definition}
An example of an admissible background in dimension 2 is illustrated in Fig.~\ref{fig:admissible_background}.

\begin{figure}[htbp!]
    \centering
    \includegraphics[width=0.9\linewidth]{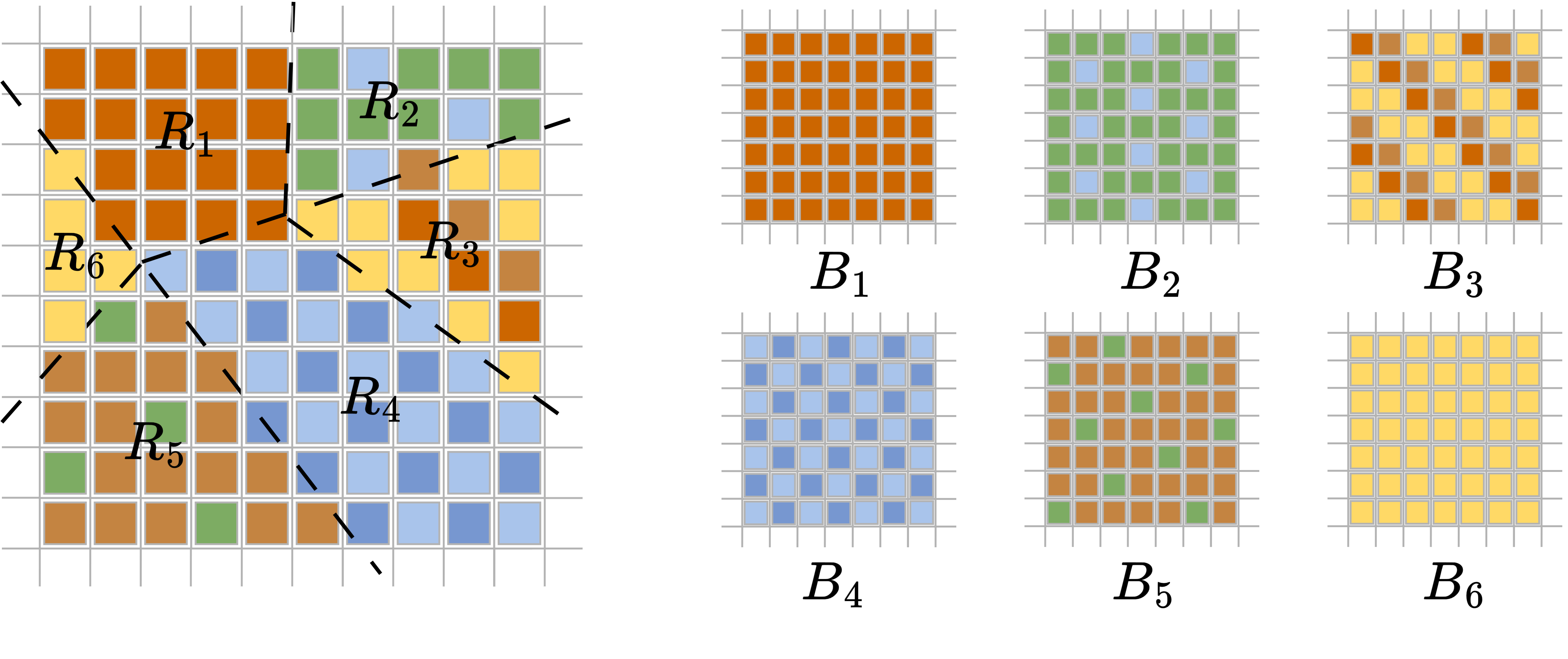}
    \caption{Example of an admissible background in $\Z^2$ delimited by six convex regions (left) and corresponding basic backgrounds (right).}
    \label{fig:admissible_background}
\end{figure}

\begin{definition}[Local simulation]\label{def:local_simulation}
    Let $\A = (S^{\Z^d}, F)$ be a CA and let $\T: C_\T \rightarrow C_\T$ be a Turing machine. We say that $\A$ simulates $\T$ if there exists a finite collection of admissible backgrounds $\{B_1, \ldots, B_k \}$, a mapping $\E: C_\T \rightarrow S^{\Z^d}$, a partial mapping $\D: S^{\Z^d} \rightharpoonup C_\T$, a partial mapping $\tau: S^{\Z^d} \rightarrow \N$, and two vectors $v, w \in \Z^d$ such that:
    \begin{equation}\label{eq:local_simulation}
        \D \circ \big(\sigma_w \circ (\sigma_v \circ F)^\tau\big)^n \circ \E = T^n \quad \quad \quad \text{for every } n \in \N.
    \end{equation}
    The mappings $\E$ (encoder), $\D$ (decoder), and $\tau$ (delay function) must satisfy the following:
    \begin{itemize}
        \item The encoder $\E: C_\T \rightarrow S^{\Z^d}$ is implemented by a Turing machine with $k+2$ tapes. The first tape contains the input configuration $c \in C_\T$. The second tape is empty, this is where the output will be written. The tapes with indices $3, \ldots, k+2$ each contain the descriptions of admissible backgrounds $\{B_1, \ldots, B_k \}$. The Turing machine has one head that can move across the tapes and halts for every input configuration. Upon halting, the output tape will contain an $R$-pattern over $S$ where $R = [-k_c, k_c]^d$ for some $k_c \in \N$ and further, the output tape contains the description of an index $i \in \{1, \ldots, k\}$ of an admissible background. The output yields a configuration $\E(c) \in  S^{\Z^d}$ defined by $\restr{\E(c)}{R} = c'$ and $\restr{\E(c)}{\Z^d \setminus R} = B_i$. 
        \item The decoder $\D: S^{\Z^d} \rightharpoonup C_\T$ is implemented by a Turing machine with two tapes -- the input and output tape. The input tape contains the configuration $c' \in  S^{\Z^d}$ with the head positioned at $0 \in \Z^d$. Upon halting, $\D$ outputs on the second tape a configuration $c \in C_\T$. If $c'$ is not in the domain of $\D$, then the Turing machine need not halt.
        \item The delay function $\tau$ has the same domain as $\D$. It is also implemented by a Turing machine with two tapes -- the input and output tape. Upon halting, the output tape contains the binary representation of the integer $\tau(c') \in \N$.
    \end{itemize}
    Moreover, we require that there exists a computable function $\beta: \N \rightarrow \N$ such that $\E(c)$ runs in time $O\big(\beta(|c|)\big)$, and $\D(c')$ and $\tau(c')$ run in time $O\big(\beta(|\D(c)|)\big)$. In such a case, we say that the simulation runs in time $O(\beta(n))$.
\end{definition}

We notice that due to the time-complexity bound on the encoder, for the central pattern encoding $c$, it holds that $k_c \leq \beta(n)$. Similarly, the time-complexity bound on the decoder implies that $\D$ only depends on a finite region of the input $c'$ whose size is also bounded by $\beta(n)$. 

\begin{definition}[Local universality]
    We say that a CA is \emph{locally universal} if it can locally simulate a universal Turing machine.
\end{definition}

\begin{remark}
    We remark that there might appear future constructions of Turing-complete CAs that do not agree with Definition \ref{def:local_simulation} exactly. For instance, the notion of an admissible background could be relaxed to allow for more general geometrical partitions and patterns. Alternatively, one may require a CA $\A$ to only simulate every $k$-th step of a Turing machine rather than every step exactly. However, we argue that Definition \ref{def:local_simulation} can readily be generalized to account for such examples without having to undergo a fundamental change. 
\end{remark}

One important generalization we will use later is that of \emph{fractional shifts}: for a 1D system and $p, q \in \N$, let $\sigma_{\frac{p}{q}}$ denotes a mapping that every $q$ steps shifts a given configuration by $p$ cells to the left. This notion can be generalized to higher dimensions in a straightforward way.

Verifying that Definition \ref{def:local_simulation} is compatible with every single construction of a Turing-complete CA from the literature is clearly outside of the scope of this paper. In what follows, we focus on one of the most elaborate works: the proof of the Turing completeness of Rule 110 \cite{cook_og} -- a CA whose local rule is so compact that in order to show its computational universality, the capacities of the corresponding encoder and decoder have been stretched much further compared to other constructions. Showing directly that Rule 110 can simulate a universal Turing machine would be quite cumbersome and instead, the proof in \cite{cook_og} follows a series of reductions, arriving at a much simpler, yet universal model of computation called a cyclic tag system. Before we proceed with a detailed discussion of Rule 110's Turing completeness, we introduce a series of computational systems and discuss their universality.

\subsection{Computational systems and their simulation}
We briefly introduce all ``intermediate'' computational systems used in the proof of Rule 110's universality, and we define the notion of one computational system simulating another.

\begin{definition}[Clockwise Turing machine]
A clockwise Turing machine is like a standard single-tape Turing machine except for the following details: 
\begin{enumerate}[label=(\roman*)]
    \item the tape is finite and circular;
    \item the head moves only clockwise on the tape;
    \item before moving to the next cell, the head can either write a single symbol into the current cell, or split the current cell in two and write two new symbols (thus prolonging the tape).
\end{enumerate}
Analogously to regular Turing machines, for each clockwise Turing machine $\T$ we define its configuration space $C_\T$ and associate $\T$ with a dynamical system $(C_\T, T)$. 
\end{definition}

\begin{definition}[Tag system]\label{def:tag_system}
    A tag system is defined by a finite set of symbols $\Sigma$ and a transition function $f: \Sigma \rightarrow \Sigma^*$. Given a tape configuration $s_1 \cdots s_n$ of symbols from $\Sigma$, the tag system updates it as follows:
\begin{enumerate}[label=(\roman*)]
    \item the symbols $s_1, s_2$ are deleted from the start of the tape;
    \item the word $f(s_1)$ is appended at the end of the tape.
\end{enumerate}
The configuration space of a tag system $\T$ with alphabet $\Sigma$ is defined as $C_\T = \Sigma^*$ and the local transition function $f$ induces the map $T: \Sigma^* \rightarrow \Sigma^*$. Again, we can associate with $\T$ the dynamical system $(C_\T, T)$.
\end{definition}

\begin{definition}[Cyclic tag system]
    A cyclic tag system operates on binary configurations (with symbols $\text{\rm Y}$ and $\text{\rm N}$) and is defined by a list of appendants $(\alpha_0, \alpha_1, \ldots, \alpha_{z-1})$ with each $\alpha_i \in \{\text{\rm Y},\text{\rm N} \}^*$. Given a tape configuration $s_1, \ldots, s_n$ of symbols from $\{\text{\rm Y},\text{\rm N} \}$, the cyclic tag system at time $t$ updates it as follows:
\begin{enumerate}[label=(\roman*)]
    \item $s_1$ is removed from the start of the tape
    \item if $s_1 = \text{\rm Y}$ then the word $\alpha_{t \bmod z}$ is appended at the end of the tape, otherwise, nothing gets appended.
\end{enumerate}
After the update, time increases to $t+1$. Initially, the system starts at time $t = 0$.
The configuration space of a cyclic tag system $\T$ with appendant list $(\alpha_0, \alpha_1, \ldots, \alpha_{z-1})$ is defined as $C_\T = \{\text{\rm Y}, \text{\rm N}\}^* \times \{0\} \times \{\text{\rm Y}, \text{\rm N}\}^*$ where a configuration $c = c_10c_2$ represents the tape content by $c_1$ and current appendant by $c_2$. The update rule induces the map $T: C_\T \rightarrow C_\T$ which simply updates the tape and the new appendant. Again, we can associate with $\T$ the dynamical system $(C_\T, T)$.
\end{definition}

Analogously to local simulation, below we define what it means for a computational system to simulate another.

\begin{definition}\label{def:tm_simulation}
    Let $\T_1 = (C_{\T_1}, T_1)$ and $\T_2 = (C_{\T_2}, T_2)$ be computational systems (each can be a Turing machine, a clockwise Turing machine, a tag system or a cyclic tag system). We say that $\T_2$ simulates $\T_1$ if there exists a mapping $\E: C_{\T_1} \rightarrow C_{\T_2}$, a partial mapping $\D: C_{\T_2} \rightharpoonup C_{\T_1}$, and a partial mapping $\tau: C_{\T_2} \rightharpoonup \N$ such that:
    \begin{equation}\label{eq:tm_simulation}
        \D \circ  (T_2^\tau)^n \circ \E = T_1^n \quad \quad \quad \text{for every } n \in \N.
    \end{equation}
    The mappings $\E$ (encoder), $\D$ (decoder), and $\tau$ (delay function) must satisfy the following:
    \begin{itemize}
        \item The encoder $\E: C_\T \rightarrow S^{\Z^d}$ is implemented by a Turing machine with two tapes. The first tape contains the input configuration $c \in C_{\T_1}$. The second tape is initially empty, and upon halting, it contains $\E(c) \in C_{\T_2}$.
        \item The decoder $\D: C_{\T_2} \rightharpoonup C_{\T_1}$ is implemented by a Turing machine with two tapes -- the input and output tape. The input tape contains the configuration $c' \in  S^{\Z^d}$, if $c'$ is in the domain of $\D$, the Turing machine halts with the initially empty output tape containing the configuration $\D(c') \in C_{\T_1}$. If $c'$ is not in the domain of $\D$, then the Turing machine need not halt.
        \item The delay function $\tau$ has the same domain as $\D$. It is also implemented by a Turing machine with two tapes -- the input and output tape. Upon halting, the output tape contains the binary representation of the integer $\tau(c') \in \N$.
    \end{itemize}
    Moreover, we require that there exists a computable function $\beta: \N \rightarrow \N$ such that $\E(c)$ runs in time $O\big(\beta(|c|)\big)$, and $\D(c')$ and $\tau(c')$ run in time $O\big(\beta(|\D(c')|)\big)$. In such a case, we say that the simulation runs in time $O(\beta(n))$.
\end{definition}

\begin{remark}
    Notice that in Definition \ref{def:local_simulation} of local simulation we have not used any particular property of a Turing machine other than the fact that it induces a dynamical system $(C_\T, T)$ where every configuration $c \in C_\T$ is a finite sequence of symbols from a finite alphabet. Thus, the definition of local simulation can be readily generalized to cover the cases of a CA locally simulating any computational system defined above.
\end{remark}

Now, we are ready to prove two straightforward, yet very useful results regarding the composition of simulating systems.
\begin{lemma}\label{lemma:composing_tm_sim}
    Let $\T_1 = (C_{\T_1}, T_1)$, $\T_2 = (C_{\T_2}, T_2)$, and $\T_3 = (C_{\T_3}, T_3)$ be computational systems (each of them can be a Turing machine, a clockwise Turing machine, a tag system or a cyclic tag system). If $\T_3$ simulates $\T_2$ in time $O(\beta_2(n))$ and $\T_2$ simulates $\T_1$ in time $O(\beta_1(n))$, then also $\T_3$ simulates $\T_1$ in time $O\left(\max\{\beta_1(n),\beta_2(n)\}\right)$.
\end{lemma}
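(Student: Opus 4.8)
The plan is to build the composed simulation out of the two given ones in the obvious way on encoders and decoders, with a two-level nesting for the delay function. Write $\E_1,\D_1,\tau_1$ for the data witnessing that $\T_2$ simulates $\T_1$, and $\E_2,\D_2,\tau_2$ for the data witnessing that $\T_3$ simulates $\T_2$. I would take $\E = \E_2\circ\E_1 : C_{\T_1}\to C_{\T_3}$ and $\D = \D_1\circ\D_2 : C_{\T_3}\rightharpoonup C_{\T_1}$, with domain the set of $d\in\dom\D_2$ such that $\D_2(d)\in\dom\D_1$ (further restricted, if needed, to the configurations on which the delay below is well-defined). The delay function is the one mildly non-obvious ingredient: since a single macro-step at the $\T_2$-level, namely one application of $T_2^{\tau_1}$, consists of $\tau_1(c')$ raw $T_2$-steps from $c'$, and since $m$ raw $T_2$-steps starting from an $\E_2$-image correspond exactly to the $\T_3$-iteration $(T_3^{\tau_2})^m$, I would set
\[
  \tau(d) \;=\; \sum_{i=0}^{\tau_1(\D_2(d))-1}\tau_2\!\left((T_3^{\tau_2})^i(d)\right),
\]
which is precisely the number of raw $T_3$-steps for which $T_3^{\tau(d)}(d) = (T_3^{\tau_2})^{\tau_1(\D_2(d))}(d)$. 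Each of $\E,\D,\tau$ is implemented by a Turing machine in the obvious way ($\E$: run $\E_1$, then $\E_2$; $\D$: run $\D_2$, then $\D_1$; $\tau$: run $\D_2$ and $\tau_1$, then iterate $\tau_2$ and simulate $T_3$ the prescribed number of times), and the domains of $\D$ and $\tau$ agree by construction.

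For correctness I would prove $\D\circ(T_3^\tau)^n\circ\E = T_1^n$ by induction on $n$, carrying the strengthened hypothesis that, for a fixed $c\in C_{\T_1}$ and with $d_n := (T_3^\tau)^n(\E(c))$ and $c_n' := (T_2^{\tau_1})^n(\E_1(c))$, one has $\D_2(d_n) = c_n'$ and moreover $d_n = (T_3^{\tau_2})^{m_n}(\E_2(\E_1(c)))$ with $m_n = \sum_{k<n}\tau_1(c_k')$. The base case $n=0$ follows from $\D_2\circ\E_2 = \id$ and $\D_1\circ\E_1 = \id$. For the step, the definition of $\tau$ together with $\D_2(d_n) = c_n'$ gives $d_{n+1} = (T_3^{\tau_2})^{\tau_1(c_n')}(d_n) = (T_3^{\tau_2})^{m_n+\tau_1(c_n')}(\E_2(\E_1(c)))$, so $m_{n+1} = m_n + \tau_1(c_n')$; applying $\D_2$ and the identity $\D_2\circ(T_3^{\tau_2})^m\circ\E_2 = T_2^m$ yields $\D_2(d_{n+1}) = T_2^{m_{n+1}}(\E_1(c)) = (T_2^{\tau_1})^{n+1}(\E_1(c)) = c_{n+1}'$, where the middle equality holds because $m_{n+1}$ is exactly the total number of raw $T_2$-steps accumulated in $n+1$ applications of $T_2^{\tau_1}$ to $\E_1(c)$. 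Then $\D(d_n) = \D_1(\D_2(d_n)) = \D_1(c_n') = T_1^n(c)$ by the identity witnessing that $\T_2$ simulates $\T_1$. The same induction, together with the fact that the relevant iterates of $\E_2$- and $\E_1$-images stay in the domains of $\D_2,\tau_2$ and $\D_1,\tau_1$ (otherwise the given simulation identities would not be everywhere defined), shows every orbit point $d_n$ lies in the chosen common domain of $\D$ and $\tau$.

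It remains to bound the time complexity, and the running times compose: $\E$ runs in time $O(\beta_1(|c|)) + O(\beta_2(|\E_1(c)|))$ with $|\E_1(c)| = O(\beta_1(|c|))$; the decoder runs $\D_2$ and then $\D_1$, the latter in time $O(\beta_1(|\D(c')|))$; and $\tau$ likewise reduces to the component machines. Collecting these contributions and the sizes of the intermediate configurations into a single computable bound gives the stated $O(\max\{\beta_1(n),\beta_2(n)\})$. I expect this last step to be the only genuinely delicate part: one must bound $|\E_1(c)|$, $|\D_2(c')|$, and the number of iterations hidden inside $\tau$ in terms of the appropriate input and output sizes, and verify that the composed decoder and delay times can still be expressed in terms of the final $\T_1$-configuration size, as the definition of simulation requires. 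The construction itself and the correctness identity are routine once the nested delay function above is written down, which is the one conceptual point.
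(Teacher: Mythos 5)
Your proposal is correct and follows essentially the same route as the paper: compose the encoders, compose the decoders in reverse order, and nest the delay functions, with the complexity bound following from compositionality. Your explicit summation formula for the composed delay is in fact a more careful rendering of what the paper writes only schematically (as $\tau_2 \circ T_1^{\tau_1}$) and labels ``straightforward to verify,'' so your write-up supplies detail the paper omits rather than diverging from it.
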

\begin{proof}
    Let $\E_1, \D_1, \tau_1$ witness that $\T_2$ simulates $\T_1$ and let $\E_2, \D_2, \tau_2$ witness that $\T_3$ simulates $\T_2$. It is straightforward to verify that $\T_3$ simulates $\T_1$ with the encoder $\tilde{\E} =\E_2 \circ \E_1$, the decoder $\tilde{\D} =\D_1 \circ \D_2$ and with the delay function $\tilde{\tau} =\tau_2 \circ T_1^{\tau_1}$.
\end{proof}

\begin{lemma}\label{lemma:composing_local_sim}
    Let $\T_1 = (C_{\T_1}, T_1)$, and $\T_2 = (C_{\T_2}, T_2)$ be computational systems (each of them can be a Turing machine, a clockwise Turing machine, a tag system or a cyclic tag system) and let $\A = (S^{\Z^d}, F)$ be a cellular automaton. If $\T_2$ simulates $\T_1$ in time $O(\beta_2(n))$ and $\A$ simulates $\T_2$ in time $O(\beta_1(n))$, then also $\A$ simulates $\T_1$ in time $O\left(\max\{\beta_1(n),\beta_2(n)\}\right)$.
\end{lemma}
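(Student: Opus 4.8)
The plan is to follow the proof of Lemma~\ref{lemma:composing_tm_sim} almost verbatim, the only genuinely new feature being that the outer system is a cellular automaton, so its ``effective step'' carries the shift vectors and admissible backgrounds of Definition~\ref{def:local_simulation}. Let $\E_1,\D_1,\tau_1$ witness that $\T_2$ simulates $\T_1$ (in time $O(\beta_2(n))$), and let $\E_2,\D_2,\tau_2$, the vectors $v,w\in\Z^d$, and the admissible backgrounds $\{B_1,\dots,B_k\}$ witness that $\A$ simulates $\T_2$ (in time $O(\beta_1(n))$). Write $G:=\sigma_w\circ(\sigma_v\circ F)^{\tau_2}$ for $\A$'s effective $\T_2$-step, so that by Definition~\ref{def:local_simulation} one has $\D_2\circ G^{N}\circ\E_2=T_2^{N}$ for every $N\in\N$. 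Since $\sigma_v$ commutes with $F$, a short computation gives $G^m(e)=\sigma_{mw}\circ(\sigma_v\circ F)^{\Sigma(e)}(e)$, where $\Sigma(e):=\sum_{i=0}^{m-1}\tau_2\big(G^i(e)\big)$ is the number of micro-steps $\sigma_v\circ F$ actually performed; this identity is what lets a block of $\T_2$-effective-steps be repackaged as a single effective step.

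I would then define the composed data exactly as in Lemma~\ref{lemma:composing_tm_sim}: encoder $\tilde\E:=\E_2\circ\E_1$, realized by chaining the two-tape Turing machine for $\E_1$ with the $(k+2)$-tape machine for $\E_2$ and inheriting the backgrounds $\{B_1,\dots,B_k\}$ on the auxiliary tapes; decoder $\tilde\D:=\D_1\circ\D_2$ with domain $\D_2^{-1}(\dom\D_1)$; and, as effective step, the map that advances $\A$'s $\T_2$-simulation by exactly $\tau_1(\D_2(e))$ of its own effective steps, i.e.\ $e\mapsto G^{\tau_1(\D_2(e))}(e)$. By the displayed identity this equals $\sigma_{\tau_1(\D_2(e))\,w}\circ(\sigma_v\circ F)^{\tilde\tau(e)}(e)$ with delay function $\tilde\tau(e):=\Sigma(e)$ (taking $m=\tau_1(\D_2(e))$); it is a computable partial map since $\D_2,\tau_1,\tau_2$ are computable and $G$ acts computably on the finitely-described configurations at hand.

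The correctness check is the same telescoping argument as for Lemma~\ref{lemma:composing_tm_sim}. Put $d_0:=\E_1(c)$ and $d_{i+1}:=(T_2^{\tau_1})(d_i)$, so $d_n=(T_2^{\tau_1})^n(\E_1(c))$ and hence $\D_1(d_n)=T_1^n(c)$ by the hypothesis on $\T_2,\T_1$. On the $\A$ side put $e_0:=\E_2(d_0)=\tilde\E(c)$ and $e_{i+1}:=G^{\tau_1(\D_2(e_i))}(e_i)$. One shows by induction that $e_i=G^{N_i}(e_0)$ with $N_i:=\sum_{j<i}\tau_1(d_j)$ and $\D_2(e_i)=d_i$: the relation $\D_2\circ G^{N_i}\circ\E_2=T_2^{N_i}$ gives $\D_2(e_i)=T_2^{N_i}(d_0)=d_i$ (the exponents telescope because $T_2^a\circ T_2^b=T_2^{a+b}$), and then $\D_2(e_{i+1})=\D_2\big(G^{N_i+\tau_1(d_i)}(e_0)\big)=T_2^{N_i+\tau_1(d_i)}(d_0)=d_{i+1}$. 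Therefore $\tilde\D(e_n)=\D_1(\D_2(e_n))=\D_1(d_n)=T_1^n(c)$, which is exactly~\eqref{eq:local_simulation}; the time-complexity bookkeeping for $\tilde\E,\tilde\D,\tilde\tau$ is identical to that in Lemma~\ref{lemma:composing_tm_sim}.

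The step I expect to require the most care is reconciling this construction with the precise form of Definition~\ref{def:local_simulation}: the composed effective step has inner micro-shift $v$ but outer shift $\tau_1(\D_2(e))\,w$, which depends on the configuration, whereas the definition fixes the outer vector $w$ once and for all. This is harmless because $\tau_1(\D_2(e))$ is computable with no complexity blow-up: one can either invoke the mildly more general notion of local simulation flagged in the remark following Definition~\ref{def:local_simulation} (an efficiently computable, configuration-dependent outer shift, in the spirit of the fractional shifts $\sigma_{p/q}$), or simply fold $\sigma_{\tau_1(\D_2(e))w}$ into $\tilde\D$, which reads a region around the origin of computable size and can therefore recompute the accumulated offset $\big(\sum_{i<n}\tau_1(d_i)\big)w$ from $\D_2$ and re-center before applying $\D_1\circ\D_2$. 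Either way the verification above goes through unchanged; and when $w=0$ — which can be arranged for the chain of simulations used in the Rule~110 analysis — no such adjustment is needed.
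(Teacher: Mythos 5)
Your proposal is correct and follows essentially the same route as the paper's proof: compose the encoders as $\E_2\circ\E_1$, the decoders as $\D_1\circ\D_2$, and accumulate the delays, with the verification being the telescoping induction you spell out. The one place you go beyond the paper is in noticing that the outer shift $\sigma_w$ gets applied a configuration-dependent number of times $\tau_1(\D_2(e))$ per composed step, which does not literally fit the fixed-$w$ format of Definition~\ref{def:local_simulation}; the paper silently asserts ``with shift vectors $v,w$'' and your fix (folding the accumulated offset into the decoder, or noting $w=0$ suffices for the Rule~110 chain) is a legitimate and welcome repair of that gap.
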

\begin{proof}
    Let $\E_1, \D_1, \tau_1$ witness that $\T_2$ simulates $\T_1$ and let $\E_2, \D_2, \tau_2$ and $v, w \in \Z^d$ witness that $\A$ simulates $\T_2$. It is straightforward to verify that $\A$ simulates $\T_1$ with the encoder $\tilde{\E} =\E_2 \circ \E_1$, the decoder $\tilde{\D} =\D_1 \circ \D_2$, with the delay function $\tilde{\tau} =\tau_2 \circ T_1^{\tau_1}$, and with shift vectors $v, w$.
\end{proof}

\subsection{Local universality of Rule 110}
 We demonstrate that the definition of local CA simulation agrees with previous constructions on one of the most elaborate examples: the Turing completeness of Rule 110. This requires going over the technical details in Cook's proof: whereas he specified the encoder in \cite{cook2009concrete}, we will fill in the details about the delay function, decoder, and shifts. 

 Cook's proof strategy of Rule 110's Turing completeness entails two important results:
\begin{enumerate}
    \item There exists a universal cyclic tag system (i.e.~a cyclic tag system that can simulate a universal Turing machine).
    \item Rule 110 can simulate a large class of cyclic tag systems that contains a universal one.
\end{enumerate}
In order to show that Cook's proof complies with our definition of local universality, we have to show that the cyclic tag system can simulate a universal Turing machine according to Definition \ref{def:tm_simulation} and that Rule 110 can simulate the universal tag system according to Definition \ref{def:local_simulation}. Then, using Lemma \ref{lemma:composing_local_sim} we obtain that Rule 110 can locally simulate a universal Turing machine.

\subsubsection{Universal Cyclic Tag System}
In this section, we briefly discuss the first result. Subsequently, our main focus is the second result, which requires a detailed understanding of a variety of gliders and their collisions within Rule 110's dynamics.

Let $\A \mapsto \B$ denote that system $\B$ can simulate system $\A$. In \cite{cook_og}, Cook shows the following series of reductions:
$$\text{Turing machine with alphabet } \{0,1\} \mapsto \text{tag system} \mapsto \text{cyclic tag system}\,.$$
This, however, leads to an exponential slowdown -- a problem later solved by Neary and Woods in \cite{p_completeness_of_rule110} who showed that a cyclic tag system can actually simulate a Turing machine in polynomial time (a result noted as surprising by Cook \cite{cook2009concrete}). The authors prove the following series of reductions:
$$\text{binary Turing machine} \mapsto \text{binary clockwise Turing machine} \mapsto \text{cyclic tag system}\,.$$
Here, a binary Turing machine is a Turing machine with only two tape symbols. Since a binary Turing machine can simulate an arbitrary Turing machine with at most a constant factor increase in time, it follows that there exists a universal binary Turing machine. We briefly summarize the results of Neary and Woods in the following two lemmas.

\begin{lemma}[Universal clockwise Turing machine \cite{p_completeness_of_rule110}]
    For each Turing machine $\T$ given by $(Q, \Sigma, \delta)$ that runs in time $t(n)$ there exists a clockwise Turing machine $\T'$ that simulates it and runs in time $O(t^2(n))$.
\end{lemma}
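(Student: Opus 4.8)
The plan is to make the circular tape of $\T'$ store, in left-to-right circular order, the currently relevant finite portion of $\T$'s bi-infinite tape, together with a seam symbol $\#$ that marks the two artificial ends of this portion; the cell under $\T$'s head is recorded by ``dotting'' its symbol, so $\T'$ works over the alphabet $\Sigma \cup \dot{\Sigma} \cup \{\#\}$, and the simulated state of $\T$ is carried in $\T'$'s finite control along with a bounded amount of bookkeeping. The encoder $\E$ writes a configuration of $\T$ in this form, the decoder $\D$ deletes $\#$ and the dot and reads the state out of $\T'$'s control, and both are clearly linear-time computable; the delay function $\tau$ returns a constant multiple of the current circular-tape length, which is again linear-time computable. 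This sets up Definition~\ref{def:tm_simulation}; the content of the lemma is the step-by-step simulation and its cost.

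To simulate one step of $\T$, the head of $\T'$ starts at $\#$ and moves clockwise until it reaches the dotted cell $\dot{a}$. Reading $a$ together with the stored state $q$, it looks up $\delta(q,a) = (q', b, D)$. If $D \in \{S, R\}$ this is easy: $\T'$ overwrites the current cell with $b$ (undotted) and, when $D = R$, moves one cell clockwise and dots that symbol --- ``clockwise'' being precisely ``right'' --- then runs clockwise back to $\#$, which signals that the step is finished, updating the stored state to $q'$. The delicate case is $D = L$: the new head cell is the one \emph{counterclockwise}-adjacent to $\dot{a}$, which the one-directional head has already passed. This is handled with a constant number of further loops of the circle, during which auxiliary markers are laid down and then shifted forward one cell at a time, so that after the cleanup the cell counterclockwise of the old head position is the unique marked cell and can be re-dotted on a final pass, with only a bounded amount of information (the relevant neighbouring symbols and which stage of the left-move protocol we are in) carried in the control. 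When $\T$'s head would step off the currently represented portion --- which, in $\T'$, occurs adjacent to the seam $\#$ --- $\T'$ uses the cell-splitting operation to insert a fresh blank $\dot{\sqcup}$ at that end; thus the circular tape grows by at most one cell per simulated step.

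For the running time, after $i$ simulated steps the circular tape has length at most $n + O(i)$, and each simulated step costs a constant number of full traversals, i.e.\ $O(n + i)$ steps of $\T'$. Hence simulating $t(n)$ steps of $\T$ costs $\sum_{i=1}^{t(n)} O(n+i) = O\big(t(n)(n + t(n))\big) = O(t^2(n))$, using that we may assume $t(n) \ge n$; together with $\E$, $\D$, $\tau$ this gives the claimed simulation with quadratic slowdown. The main obstacle is precisely the faithful, marker-clean simulation of leftward moves by a head that can never move counterclockwise: the ``turn-around'' must be realized as a full loop of the circle, and one has to propagate just the right bounded information so that the head can relocate and re-mark the cell counterclockwise of the old head without backtracking and without leaving stray symbols behind, all while correctly interleaving the splitting/seam logic that maintains the represented tape. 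Everything else --- the $\{S,R\}$ cases, the encoder/decoder/delay maps, and the summation --- is routine.
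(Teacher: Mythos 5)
Your architecture (circular tape $=$ active portion of the bi-infinite tape, seam symbol, dotted head cell, state in the finite control, splitting at the seam to extend the tape, and the summation $\sum_{i\le t(n)} O(n+i) = O(t^2(n))$) matches the intended construction, and the $\{S,R\}$ cases and the encoder/decoder/delay bookkeeping are indeed routine. But the step you yourself flag as ``the main obstacle'' --- the left move --- is not actually solved, and the mechanism you sketch cannot work as described. A clockwise-only head can only ever write into a cell information that depends on cells it has already visited, i.e.\ on that cell and its counterclockwise side; it can therefore mark ``the cell whose \emph{predecessor} is dotted'' in one pass, but it can never locally mark ``the cell whose \emph{successor} is dotted,'' because by the time it learns that the successor is the dot it has already left the target cell and cannot return. ``Shifting a marker forward one cell at a time'' does not escape this: reaching the counterclockwise neighbour by clockwise marker shifts requires $L-1$ shifts on a tape of length $L$, which is either one shift per loop (costing $\Theta(L^2)$ per simulated step and ruining the $O(t^2)$ bound) or requires counting to $L-1$ in the finite control, which is impossible. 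So no constant number of loops of the kind you describe relocates the dot counterclockwise.

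The paper (following Neary and Woods) resolves this by inverting the roles of head and tape: to simulate a left move, $\T'$ leaves a placeholder at the current head position and then makes \emph{one} full loop during which it shifts every tape symbol one cell clockwise, carrying the single displaced symbol in its finite control as it goes. After this loop the content that used to sit one cell counterclockwise of the head now sits under the placeholder, so relative to the tape contents the head has moved left, at a cost of $O(L)$ per left move --- exactly what your summation needs. (A second cosmetic difference: in the paper's version the head of $\T'$ physically sits at the simulated head position, so right moves onto non-blank cells cost $O(1)$ rather than a full traversal back to the seam; this does not affect the $O(t^2(n))$ bound.) Your proof is repaired by replacing your marker-relocation protocol with this content-shifting loop; as written, the left-move case is a genuine gap rather than a deferral of routine detail.
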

\begin{proof}
We give a proof sketch: Let $\T$ be a Turing machine with arbitrary alphabet $\Sigma$ and $m$ states. Then, the clockwise Turing machine $\T'$ operates as follows:
\begin{enumerate}
    \item Right moves of $\T$, when visiting a non-blank symbol, correspond to clockwise moves of $\T'$.
    \item When $\T$ moves right and visits a blank symbol, $\C_\T$ splits its current cell into two, writing a special placeholder symbol $r$ into the second cell, and proceeds by cycling around the whole tape to end up in $r$.
    \item Left moves of $\T$, when reading a non-blank symbol, correspond to $\T'$ leaving a placeholder symbol $l$ in its current cell and cycling around its whole tape while shifting each symbol one cell clockwise.
    \item When $\T$ moves left and visits a blank symbol, $\T'$ splits its current cell into two, writing a special symbol $l'$ into the first cell, and proceeds by cycling around the whole tape until it reaches the symbol $l'$.
\end{enumerate}
In this case, both the encoder $\E$ and decoder $\D$ are simply the identity mappings. Together with the delay function $\tau$, they run in time $O(n)$.
\end{proof}

\begin{lemma}[Universal cyclic tag system \cite{p_completeness_of_rule110}]\label{lemma:universal_cyclic_tag}
    For each binary clockwise Turing machine $\T$ given by $(Q, \{a, b\}, \delta)$ that runs in time $t(n)$ there exists a cyclic tag system that simulates it and runs in time $O(|Q|t^2(n)\log t(n))$.
\end{lemma}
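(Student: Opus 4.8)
The plan is to follow the construction of Neary and Woods~\cite{p_completeness_of_rule110}, recast so that it meets the requirements of Definition~\ref{def:tm_simulation}. The guiding observation is that a cyclic tag system is structurally a circular queue: each step removes one symbol from the front of its tape and conditionally appends a fixed word at the back, so the tape is processed cyclically -- which is precisely the way a clockwise Turing machine's head sweeps around its circular tape. I would first set up the encoding. Since $\T = (Q,\{a,b\},\delta)$ runs in time $t(n)$ and a clockwise step either rewrites the current cell or splits it in two, the circular tape grows by at most one cell per step, so after $i$ steps it has length $\ell_i \le n + i = O(t(n))$. Linearising a configuration by cutting the tape just before the head gives a word $s_1\cdots s_\ell$ over $\{a,b\}$ with the head at $s_1$ in state $q$; I would encode each cell as a group of $|Q|$ blocks over $\{\mathrm{Y},\mathrm{N}\}$, of which $|Q|-1$ are ``dummies'' and one is ``live'', the live block placed at the position within the group indexed by the current state and carrying a single $\mathrm{Y}$ recording the cell's symbol (together with whatever $O(\log\ell)$-bit bookkeeping, e.g.\ a position counter for recognising a completed revolution, the construction requires). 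The encoder $\E$ is then a near-linear-time pass over the input configuration, well within any admissible bound $\beta$.

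Next I would build the appendant list. Its length is $z = |Q|\cdot b$ with $b$ the uniform block length, viewed as $|Q|$ consecutive ``regions'' of length $b$, one per state. Because the phase advances deterministically by one per step, while the system consumes a cell group (processing the dummy blocks first) its phase sweeps through the regions and, by the placement rule, reaches region $q$ exactly when it reaches the live block; the appendants in region $q$ are built from $\delta(q,\cdot)$ so that the $\mathrm{Y}$ in the live block fires the word $\E(\text{new content of that cell})$ -- one new cell group, or two in the splitting case -- with the new live block placed at the position of the new state $q'$. One sweep over a cell group therefore removes that group, appends the re-encoded group(s) at the back, and leaves the phase at a region boundary, so iterating reproduces the trajectory of $\T$ step for step. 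Writing $T_2$ for the cyclic tag map, I would take $\D$ to strip blocks back to clockwise-machine symbols and read the state off the phase (so $\dom\D$ is the set of well-formed encodings, a linear-time-decidable condition on the reading window) and $\tau(c')$ to be the number of cyclic tag steps consumed in simulating one $\T$-step from $\D(c')$, namely the total length of the front cell group, which $\tau$ obtains by scanning it. Then $\D\circ(T_2^\tau)^n\circ\E = T^n$ for all $n\in\N$, which is exactly~\eqref{eq:tm_simulation}; combined with the preceding lemma, Lemma~\ref{lemma:composing_tm_sim}, and Lemma~\ref{lemma:composing_local_sim} this is what ultimately chains up to Rule 110.

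For the running time, simulating the $i$-th step of $\T$ requires one sweep of the cyclic tag system over its current tape, which encodes $\ell_i = O(t(n))$ cells as groups of $|Q|$ blocks of length $O(\log\ell_i)$, costing $O(|Q|\,\ell_i\log\ell_i)$ cyclic tag steps; summing over the $t(n)$ steps gives $\sum_{i\le t(n)} O(|Q|\,\ell_i\log\ell_i) = O\!\big(|Q|\,t(n)^2\log t(n)\big)$, so the simulation runs in time $\beta(n) = O(|Q|\,t(n)^2\log t(n))$ as required. I expect the main obstacle to be the appendant-list design in the middle step: since a cyclic tag system has no memory beyond its deterministically cycling phase, the clockwise machine's state must be threaded through that phase, which forces a delicate alignment between the block length, the appendant-list length, and the -- now varying, because of cell splitting -- length of the encoded tape, and one must check that the construction reproduces $T^n$ exactly rather than only up to an ambient shift or time rescaling. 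For these intricate details I would defer to Neary and Woods~\cite{p_completeness_of_rule110} (and to Cook~\cite{cook_og, cook2009concrete} for the original variant routed through $2$-tag systems, which is why we take the Neary--Woods route: the $2$-tag detour costs an exponential slowdown).
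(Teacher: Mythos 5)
Your high-level framing (queue structure of a cyclic tag system, encode cells as fixed-length blocks, thread the clockwise machine's state through the phase, defer the fine detail to Neary--Woods) is in the right spirit, but the mechanism you actually sketch is not the Neary--Woods mechanism, and as described it does not work. The core difficulty of the construction is that a cyclic tag system processes \emph{every} symbol that reaches the front of its queue in the same phase-driven way, so something must prevent the encodings of all the cells \emph{not} under the head from firing transitions when their turn comes. Your scheme places the ``live'' block of every cell group at the position indexed by the current state at encoding time; but that information is frozen into a group when it is appended, while the machine's state changes before the head reaches that cell, so a later group would fire $\delta(q_{\mathrm{old}},\cdot)$ rather than the correct transition --- and since consuming one full group returns the phase to the same region, nothing stops it from firing at all. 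The actual construction resolves exactly this with a multi-stage \emph{separation/marking} process: using the counter symbols $\mu^{s'}$ (with $s'=2^{\lceil\log_2 s\rceil}$), the system performs $O(\log n)$ full passes over the tape that ``cross out'' every encoded object except the single read symbol, so that only that symbol updates itself; the state is then injected into the phase via the length $z+i+20$ of a special object $\langle 3,q_i\rangle$, and a final pass unmarks everything. This marking process is the heart of the proof and is entirely absent from your sketch.

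This omission also explains why your complexity accounting is internally inconsistent. Your construction asserts that one step of $\T$ corresponds to consuming one cell group (which would cost only $O(|Q|\log\ell_i)$ tag steps and give a total of $O(|Q|\,t(n)\log t(n))$, strictly better than the claimed bound), yet your analysis charges a full sweep over the tape, $O(|Q|\,\ell_i\log\ell_i)$, per step. The stated bound $O(|Q|\,t^2(n)\log t(n))$ arises because each step of $\T$ genuinely requires $O(\log n)$ \emph{complete passes} over a tape of encoded length $O(|Q|\,n)$ --- i.e.\ the $\log$ factor counts marking rounds, not per-cell bookkeeping bits --- summed over $t(n)$ steps with $n=O(t(n))$. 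To repair the proof you should replace the live/dummy-block scheme with (a sketch of) the staged separation argument, or at minimum state that the $O(\log n)$ factor is the number of tape-length passes needed to isolate the read symbol before the state-dependent appendant can fire.
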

\begin{proof}
This is an elaborate proof and we give a rough sketch; for details, see \cite{p_completeness_of_rule110}. Let us fix a binary clockwise Turing machine $\T$ with tape symbols $\{a,b \}$ and $m$ states. Let $z = 30m+61$. We will construct a cyclic tag system $\T'$ with appendants $(\alpha_0, \alpha_1, \ldots, \alpha_{2z-1})$; each $\alpha_i \in \{0,1 \}^*$. Given an initial tape configuration of $\T'$ consisting of the current state $q_i$, read symbol $\sigma_1$ and tape contents $\sigma_1 \cdots \sigma_s \in \{a, b \}^s$ we encode this as a configuration of $\B$ as follows:
$$\underline{\alpha_0}, \alpha_1, \ldots, \alpha_{2z-1} \quad \quad \langle 1, q_i \rangle \langle \sigma_1\rangle \cdots \langle \sigma_s\rangle \mu^{s'}. $$
The left part shows the appendant list with the current index underlined. The right part shows the cyclic tag system's configuration with $\mu = 10^{z-1}$, $s' = 2^{\lceil \log_2(s) \rceil}$ and $\langle 1, q_i \rangle, \langle \sigma_1\rangle, \ldots, \langle \sigma_s\rangle \in \{0,1 \}^*$.

The goal is to ``separate'' the encoded read tape symbol $ \langle \sigma_1\rangle$ from the rest of the symbols such that $ \langle \sigma_1\rangle$ is the only symbol that ``knows it has to update itself into a new tape symbol and new state'' and it also ``knows about the current state $q_i$''. 
This separation will be achieved by two stages of the simulation process and will make use of the $\mu$ symbols as counters. After the separation is finished, the system will arrive into the third stage:
$$\alpha_0, \ldots, \underline{\alpha_{z+10}}, \ldots, \alpha_{2z-1} \quad \quad \langle 3, q_i \rangle \langle \sigma_1\rangle \langle \ssig_2\rangle \cdots \langle \ssig_s\rangle \smu^{s'} $$
Here, the marked symbols are all binary sequences of length $2z$ and represent tape data and counter symbols that ``know they should not update themselves in any way''. The only uncrossed symbol is $\sigma_1$ which will make the update. But how does $\langle \sigma_1 \rangle$ know that $\T'$ is currently in state $q_i$? Crucially, this is ensured by the encoding $\langle 3, q_i \rangle$: its length is $z + i + 20$ so after $z + i + 20$ steps, the cyclic tag system ends up in state
$$\alpha_0, \ldots, \underline{\alpha_{30i+30}}, \ldots, \alpha_{2z-1} \quad \quad \langle \sigma_1\rangle \langle \ssig_2\rangle \cdots \langle \ssig_s\rangle \smu^{s'} a$$
where $a \in \{0,1 \}^*$ marks the new appendant (its precise form is not important at this level of detail). The main idea is that the update rule of the cyclic tag system now leverages two pieces of information: the current read symbol is encoded in $\langle \sigma_1\rangle$ and the current state of the simulated Turing machine is encoded in the index of the appendant: $30i+30$. Suppose for simplicity that the Turing machine, as it moves clockwise, writes only one symbol $\sigma_{s+1} \in \{a , b \}$ and changes to state $q_k$. Then, the cyclic tag system mimics this transition by reading all $2z$ symbols of $\langle \sigma_1\rangle$ and adding the appendant $\langle \sigma_{s+1} \rangle \langle 1, q_k \rangle$. To finish, stage 3 cycles through the remaining symbols while ``unmarking them''. Lastly, the length of appendant $a$ ensures that after $a$ was read, the index of the tag system returns to 0. Thus, at the end of stage 3, the new configuration of $\B$ will be:

$$\underline{\alpha_0}, \alpha_1, \ldots, \alpha_{2z-1} \quad \quad \langle 1, q_i \rangle \langle \sigma_2\rangle \cdots \langle \sigma_s\rangle \mu^{s'} \langle \sigma_{s+1} \rangle\,. $$
Apparently, the next read symbol is $\sigma_2$. The only difference with the initial encoding is that now, the encoded symbol $\langle \sigma_{s+1} \rangle $ is placed after the counter symbols $\mu$. It is apparent from the details of the simulation that the relative position of encoded tape symbols and counter symbols does not matter, as long as the order of the encoded tape symbols is correct. This sums up the main idea of the construction. If there are currently $n$ symbols on the Turing machine's tape, simulating one stage of the cyclic tag will take $O(n)$ time, and the stages are executed $O(\log n)$ times. Thus, to simulate one step of the clockwise Turing machine, the cyclic tag needs $O(n \log n)$ steps.

We briefly discuss the time and space complexities of the encoder, decoder and delay function. Each encoded object $\langle * \rangle$ has length at most $3z$ where $z$ only depends on the number of the Turing machine's states. Thus, the encoder and decoder time-complexities are in $O(n)$ (with respect to the current length $n$ of the simulated Turing machine's tape), only needing constant space. The time-delay function simply needs to iterate the cyclic tag system until the end of stage 3, outputting the number of time-steps needed to reach a decodable cyclic tag configuration, thus running in time $O(n\log n)$ (with respect to the current length $n$ of the simulated Turing machine's tape) using at most $O(n)$ space.
\end{proof}

\begin{remark}
From the detailed proof in the paper of Neary and Woods \cite{p_completeness_of_rule110}, it is clear that the cyclic tag system constructed as above satisfies a property that will later become important: when simulating the Turing machine, after every full cycle through the appendant list, at least one non-empty appendant gets appended.
\end{remark}

\subsubsection{Rule 110 Simulating Cyclic Tag Systems}
In what follows, we discuss Cook's second, much more elaborate result: Rule 110 can simulate a large class of cyclic tag systems that contains a universal one and moreover, Rule 110 only needs a polynomial time delay to do so. Our aim is to show that this result complies with Definition \ref{def:local_simulation} of local simulation.

\begin{definition}[Admissible tag systems]
    We say that a cyclic tag $\T$ is \emph{admissible} if the first appendant is nonempty and every nonempty appendant's length is a multiple of 6. We say that a configuration $c \in C_\T$ is \emph{admissible}, if while iterating $\T$ on $c$, we have a guarantee that each pass through the whole appendant list, at least one non-empty appendant gets appended.
\end{definition}

From Cook's construction, it follows that Rule 110 can simulate only admissible cyclic tag systems from admissible initial configurations. Crucially, the results of Cook, and Neary and Woods, give the following lemma.

\begin{lemma}\label{lemma:universal_admissible_cyclic_tag}
    There exists a universal cyclic tag system $\U$ that can simulate a universal Turing machine $\T_{\mathrm{univ}}$ with a polynomial time delay that satisfies the following: Let $\E$ be the encoder witnessing the simulation. Then, for every configuration $c \in C_{\T_{\mathrm{univ}}}$, $\E(c)$ and $\U$ are admissible.
\end{lemma}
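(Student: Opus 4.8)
The plan is to assemble the statement from ingredients already established and then perform two cosmetic adjustments so that the resulting system fits the \emph{admissible} format. First I would build the chain of simulations: since a binary Turing machine simulates an arbitrary Turing machine with only constant-factor overhead, there is a universal binary Turing machine $\T_{\mathrm{univ}}$; by the universal clockwise Turing machine lemma it is simulated by a clockwise Turing machine with quadratic slowdown; and by Lemma~\ref{lemma:universal_cyclic_tag} that clockwise machine is simulated by a cyclic tag system $\U_0$ in time $O(|Q|\,t^2(n)\log t(n))$. Composing these three simulations via Lemma~\ref{lemma:composing_tm_sim} gives a cyclic tag system $\U_0$ that simulates $\T_{\mathrm{univ}}$ with a polynomial-time delay, via an encoder $\E_0$, decoder $\D_0$, and delay function $\tau_0$. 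By the Remark following Lemma~\ref{lemma:universal_cyclic_tag}, when $\U_0$ is iterated on $\E_0(c)$ for any $c \in C_{\T_{\mathrm{univ}}}$, every full pass through the appendant list appends at least one nonempty appendant; that is, $\E_0(c)$ is an admissible configuration of $\U_0$.

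It remains to massage $\U_0$ so that (i) every nonempty appendant has length divisible by $6$ and (ii) the first appendant is nonempty, without destroying the two admissibility properties. For (i) I would apply a \emph{$6$-stretch}: define $\U_1$ whose appendant list has $6z$ entries, with index $6i$ carrying $\mathrm{st}(\alpha_i)$ — the word obtained from $\alpha_i$ by replacing each symbol with six consecutive copies of itself — and indices $6i+1,\dots,6i+5$ carrying the empty word; the associated encoder replaces each tape symbol of $\E_0(c)$ by six copies. A direct induction shows $\U_1$ simulates $\U_0$ with constant delay $6$: the key invariant is that the tape is always a concatenation of $6$-blocks and the current appendant is always one of the $\mathrm{st}(\alpha_i)$, so that six steps of $\U_1$ reproduce one step of $\U_0$ (removing one $6$-block and appending $\mathrm{st}(\alpha_i)$ exactly when $\U_0$ appends $\alpha_i$). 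Every nonempty appendant of $\U_1$ now has length $6|\alpha_i|$, and a full pass of $\U_1$ ($6z$ steps) corresponds to a full pass of $\U_0$, so the stretched image of $\E_0(c)$ remains admissible. For (ii), if the first appendant of $\U_1$ is already nonempty we are done; otherwise I would cyclically rotate the appendant list so that index $0$ carries a nonempty appendant and prepend a matching block of $\text{\rm N}$'s to the encoded configurations, chosen so the initial offset re-aligns the appendant phase — alternatively, one inspects the Neary–Woods construction to see that $\alpha_0$ is nonempty from the outset. Setting $\U$ to be the resulting system and composing the simulations once more via Lemma~\ref{lemma:composing_tm_sim} yields that $\U$ simulates $\T_{\mathrm{univ}}$ with a polynomial-time delay; by construction $\U$ is admissible, and the witnessing encoder $\E$ (the stretched, possibly $\text{\rm N}$-prefixed image of $\E_0$) produces an admissible configuration $\E(c)$ for every $c$.

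The main obstacle is the bookkeeping in the $6$-stretch: one must verify carefully that block alignment is an invariant preserved for all time — both on the tape and in the choice of current appendant — so that $\U_1$ genuinely simulates $\U_0$ in the precise sense of Definition~\ref{def:tm_simulation}, with the decoder's domain being exactly the block-aligned configurations and the delay function identically $6$. The secondary subtlety is confirming that configuration-admissibility survives the rotation in (ii): the prepended $\text{\rm N}$'s create a bounded transient during which no appendant is appended, and one must either argue that such a bounded transient is harmless for Cook's subsequent Rule 110 simulation or, more cleanly, avoid the rotation entirely by reading off $\alpha_0 \neq \epsilon$ from the construction of the universal cyclic tag system. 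Everything else — the polynomial bounds on the time, space, and delay — follows by tracking the $O(\cdot)$ estimates through Lemma~\ref{lemma:composing_tm_sim}, the clockwise-machine lemma, and Lemma~\ref{lemma:universal_cyclic_tag}, noting that the $6$-stretch and the length-$O(z)$ prefix contribute only a constant and a linear overhead respectively.
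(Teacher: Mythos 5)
Your proposal is correct and follows essentially the same route as the paper: compose the Neary--Woods reductions (binary TM $\to$ clockwise TM $\to$ cyclic tag system) to get a polynomial-delay universal cyclic tag system whose encoded configurations are admissible by the remark after Lemma~\ref{lemma:universal_cyclic_tag}, then apply a length-6 padding (five inserted empty appendants per original appendant, with tape symbols and appendant symbols expanded into 6-blocks) and check that this padding preserves both the simulation and admissibility. Your 6-stretch ($\mathrm{Y}\mapsto\mathrm{YYYYYY}$) is a cosmetic variant of Cook's transformation used in the paper ($\mathrm{Y}\mapsto\mathrm{YNNNNN}$), and your explicit handling of the ``first appendant nonempty'' clause is a minor point the paper leaves implicit.
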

\begin{proof}
    In the introduction of \cite{cook2009concrete}, Cook describes a simple way of converting an arbitrary tag system into one that can simulate it and whose appendant lengths are multiples of 6: we expand each appendant by adding the symbol N five times after every symbol in the appendant, and we expand the tape similarly. Lastly, we expand the list of appendants by adding the empty appendant five times after every original appendant. We illustrate it on an example. A tag system with initial tape containing $\text{Y}$ and appendants $(\text{Y}, \text{N})$ can be transformed into a tape $\text{YNNNNN}$ with appendants $(\text{YNNNNN}, \emptyset, \emptyset, \emptyset, \emptyset, \emptyset, \text{NNNNNN}, \emptyset, \emptyset, \emptyset, \emptyset, \emptyset)$; it is easy to check that the computational dynamics of the two systems will be equivalent. Moreover, it is easy to see that if the configuration $c$ was admissible for the original cyclic tag, then the expanded configuration $c'$ is admissible for the expanded tag.

    Let $\U$ be the universal tag system constructed by Neary and Woods which simulates a universal Turing machine via an encoder $\E$. From the details in the work of Neary and Woods (Lemma \ref{lemma:universal_cyclic_tag}), it is easy to see that $\E(c)$ is admissible for $\U$. However, it is not the case that every appendant's length is a multiple of 6. It suffices to use the above transformation, to obtain a universal cyclic tag $\U'$ which is admissible, and for which every encoded tape of the universal Turing machine it simulates is also admissible. Moreover, $\U'$ simulates the universal Turing machine in a polynomial time, with encoder, decoder and delay function complexities being polynomial in time and linear in space.
\end{proof}
 
The general idea of Cook's construction is illustrated in Fig.~\ref{fig:rule110_sim_scheme}. In the diagram, each row represents a configuration of Rule 110 with time progressing downwards; and each object (tape symbols, ossifiers, appendants) is represented by a group of gliders in Rule 110.

\begin{figure}[htbp!]
    \centering
    \includegraphics[width=0.8\linewidth]{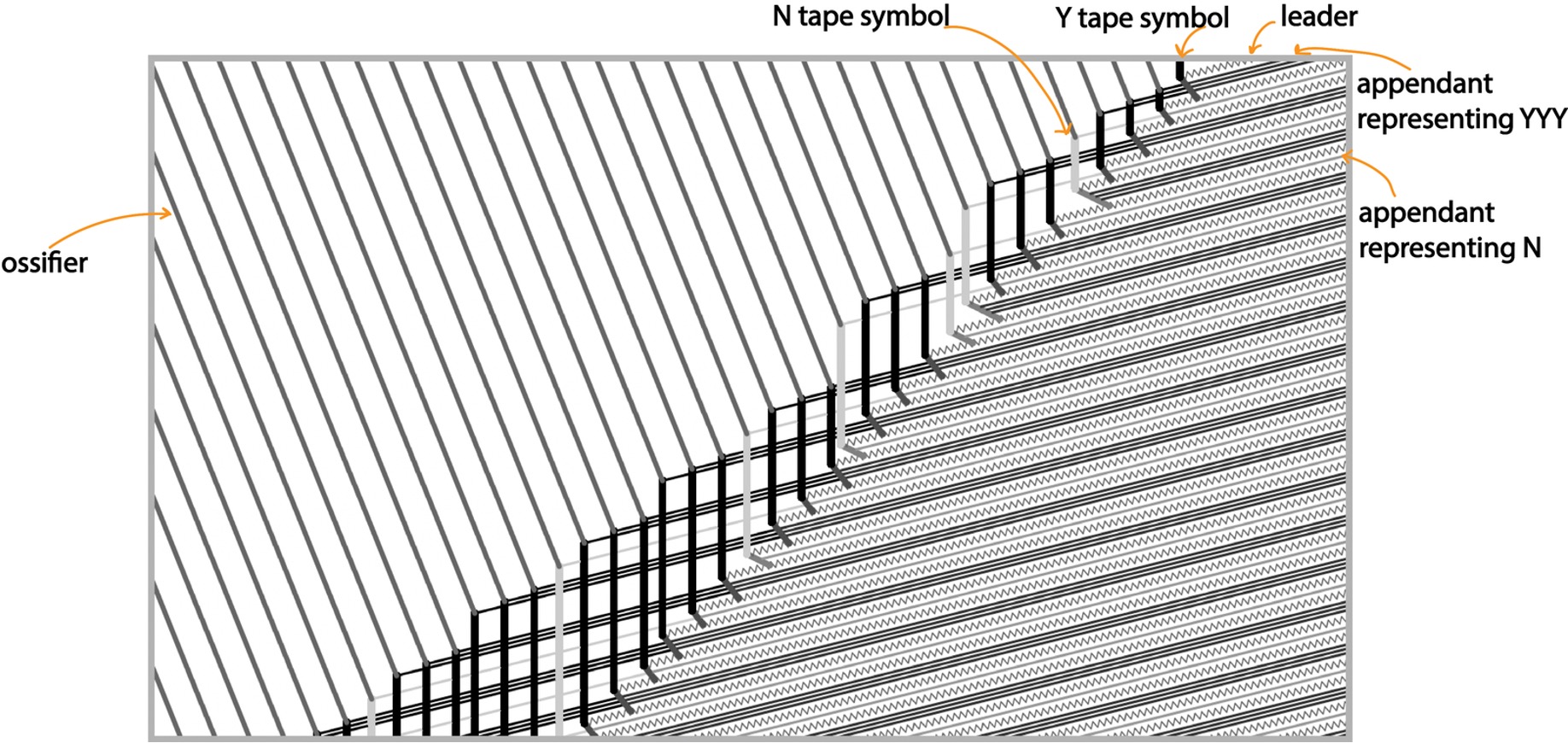}
    \caption{General Scheme of Rule 110 simulating a cyclic tag with appendant list \{YYY, N\} and with initial tape symbol Y. (Such a cyclic tag is non-admissible and this scheme is only illustrative). Figure is adapted from \cite{cook_og}.}
    \label{fig:rule110_sim_scheme}
\end{figure}

\begin{itemize}
    \item The tape data is encoded in a finite region, and the symbols Y and N represented by static gliders. We note the tape is represented in a ``reversed order'', with the first symbol on the cyclic tag's tape being the rightmost one.
    \item To the right of the tape data, there is an infinitely repeating periodic sequence which represents the ever repeating cyclic tag's appendant list. Each appendant is represented by a block of gliders moving to the left, each block starts with a ``leader'' glider, and is followed by gliders representing the actual data.
    \item Once a leader hits a tape symbol, the symbol is erased by the collision. If it was an N, then the leader gets modified and wipes out all the following appendant data, upon being annihilated by the next leader. If it was a Y, then the leader transforms the gliders into ``moving data'', which pass through the whole tape region, leaving it intact.
    \item To the left of the tape data, there is an infinitely repeating periodic sequence of ``ossifiers''. An ossifier is a glider moving towards the right. Its purpose is to collide with moving data gliders: this collision annihilates both the ossifier and moving data glider, and it creates a corresponding static data symbol. 
\end{itemize}

We note that if the ossifier collides with a tape symbol before hitting a moving data glider, both get annihilated and the construction will not work. Thus, we need a guarantee that if the ossifiers are spaced far enough apart, they will not collide with the tape data. This is exactly where the second condition of a cyclic tag's admissibility comes in.

The gliders representing the different objects of the construction are represented in Figures \ref{fig:enc_central_tape}, \ref{fig:enc_left_tape}, and \ref{fig:enc_right_tape}. One can see that the gliders have various temporal periods. Representing them as ``puzzle pieces'' ensures that the gliders are connected exactly at such a phase which makes them compatible. This elegant way of representing the building blocks of the encoder was introduced in Cook's paper \cite{cook2009concrete}.

\subsubsection{Encoder}
We first describe the construction of the encoder, closely following Cook's approach \cite{cook2009concrete} with slight modifications for soundness and more clarity. The encoded tape consists of three parts: the central part representing the tape symbols, the left part containing a periodic sequence of ossifiers, and the right part containing the periodically repeated representation of the cyclic tag system.

\begin{figure*}[thpb!]
\centering
     \begin{subfigure}[t]{0.22\textwidth}
        $$\mathbb{Y}$$
        \centering
        \includegraphics[height=.7in]{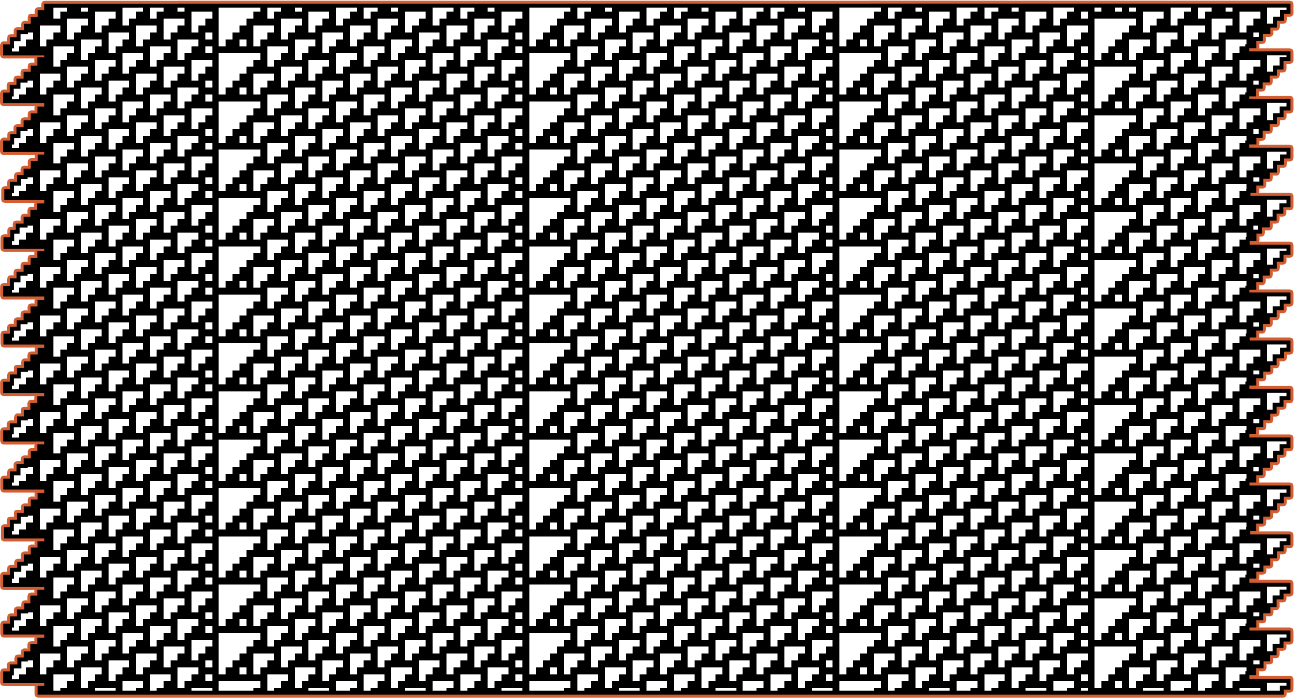}
        \caption{Y tape symbol}
        \label{fig:y}
    \end{subfigure}
    \begin{subfigure}[t]{0.3\textwidth}
        $$\mathbb{N}$$
        \centering
        \includegraphics[height=.7in]{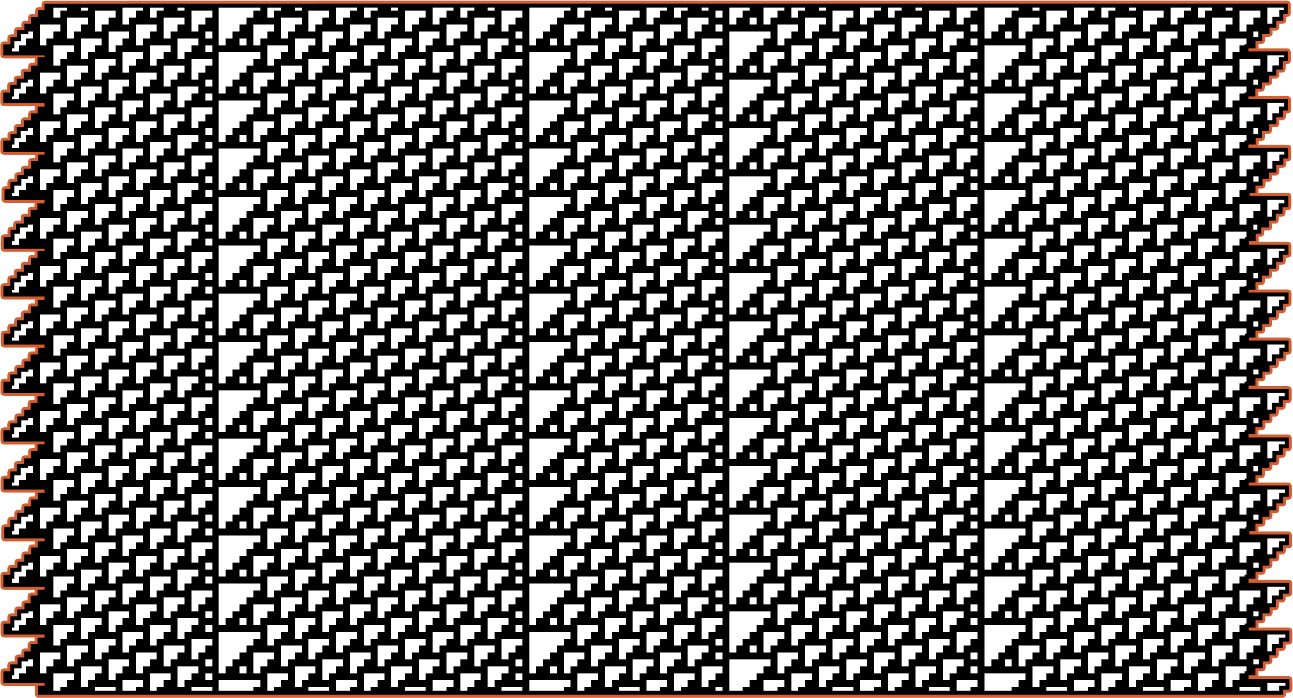}
        \caption{N tape symbol}
        \label{fig:n}
    \end{subfigure}
    \begin{subfigure}[t]{0.33\textwidth}
        $$\mathbb{L}$$
        \centering
        \includegraphics[height=.7in]{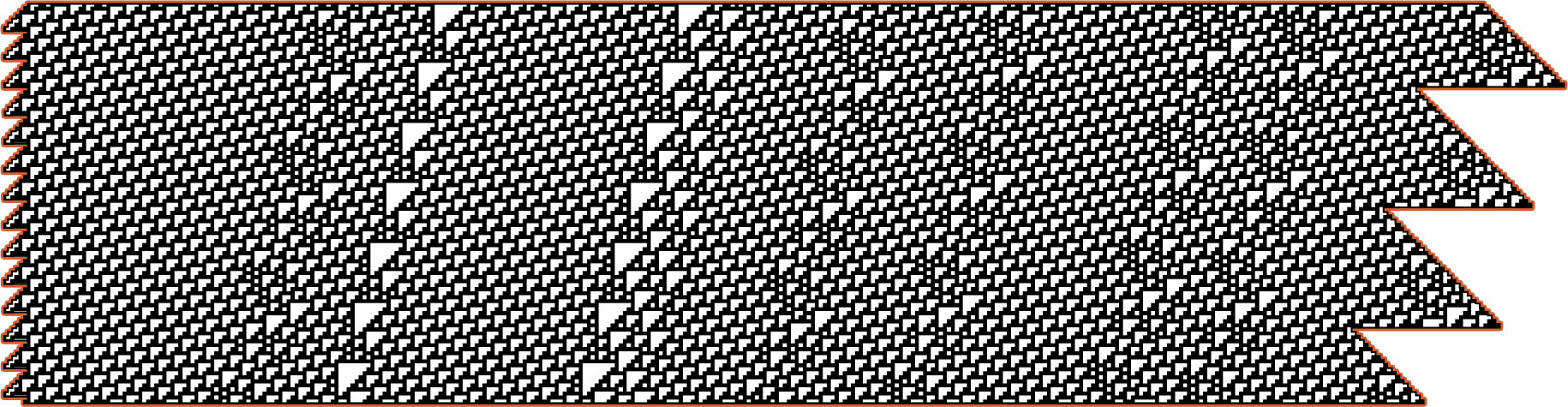}
        \caption{prepared leader}
        \label{fig:leader}
    \end{subfigure}%
    \caption{Central Part of the Tape}
    \label{fig:enc_central_tape}
\end{figure*}

\begin{example}[Central Part of the Tape]
    Encoder transforms the tape symbols $YNN$ into a finite configuration concatenating the gliders $\mathbb{N} \mathbb{N} \mathbb{Y} \mathbb{L}$ (the order of symbols in the encoded configuration is reversed). The rightmost glider is a leader which will be followed by an infinitely repeated pattern encoding the cyclic tag system's appendant list. On the left, the configuration will be prolonged by a repeated pattern of ossifiers.
\end{example}

\begin{figure*}[thpb!]
\centering
     \begin{subfigure}[t]{0.22\textwidth}
        $$\mathbb{E}$$
        \centering
        \includegraphics[height=.7in]{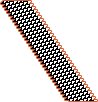}
        \caption{ether}
        \label{fig:ether}
    \end{subfigure}
    \begin{subfigure}[t]{0.3\textwidth}
        $$\mathbb{O}$$
        \centering
        \includegraphics[height=.7in]{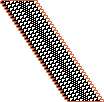}
        \caption{ossifier}
        \label{fig:oss}
    \end{subfigure}
    \begin{subfigure}[t]{0.33\textwidth}
        $$\mathbb{T}$$
        \centering
        \includegraphics[height=.7in]{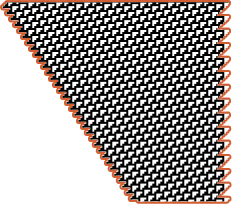}
        \caption{ether to tape symbols transition}
        \label{fig:ether_to_y}
    \end{subfigure}%
    \caption{Left Part of the Tape}
    \label{fig:enc_left_tape}
\end{figure*}

\paragraph{Left part of the tape}
The ether $\mathbb{E}$ forms a background pattern through which all gliders move. The sequence $\mathcal{L} = [\mathbb{E}]^\nu \mathbb{O} [\mathbb{E}]^{13} \mathbb{O} [\mathbb{E}]^{11} \mathbb{O}  [\mathbb{E}]^{12} \mathbb{O}$ encodes four ossifiers; upon colliding with four moving data gliders, they will produce a static tape symbol. The constant $\nu$ ensures that the groups of ossifiers are spaced far enough from each other to never collide with tape data by mistake. Explicitly, Cook \cite{cook2009concrete} computes $\nu$ as follows:
\begin{align*}
v = \quad \,\, &76 \cdot (\text{the total number of } Y\text{s in all appendants}) \\
+ \,\, &80 \cdot (\text{the total number of } N\text{s in all appendants}) \\
+ \,\, &60 \cdot (\text{the number of nonempty appendants}) \\
+ \,\, &43 \cdot (\text{the number of empty appendants})
\end{align*}

The transition pattern $\mathbb{T}$ simply connects the static tape data to the ossifiers. Lastly, we have to make sure the first group of ossifiers is spaced far enough not to hit the tape data by mistake; this is accounted for by the constant $\nu'$ (a conservative estimate is to make it larger than $\nu$ + 14 $\cdot$ (the total number of tape symbols)).

\begin{example}[Left Part of the Tape]
The left part of the tape, together with the tape symbols $YNN$ is encoded as $\cdots \mathcal{L} \mathcal{L} [\mathbb{E}]^{\nu'}\mathbb{T}\mathbb{N} \mathbb{N} \mathbb{Y} \mathbb{L}$ where $\mathcal{L} = [\mathbb{E}]^\nu \mathbb{O} [\mathbb{E}]^{13} \mathbb{O} [\mathbb{E}]^{11} \mathbb{O}  [\mathbb{E}]^{12} \mathbb{O}$. We can assume that the last symbol of $\mathbb{L}$ is exactly at the position 0.
\end{example}

\begin{figure*}[thpb!]
\centering
\begin{subfigure}[t]{0.4\textwidth}
        $$\mathbb{R}$$
        \centering
        \includegraphics[height=.65in]{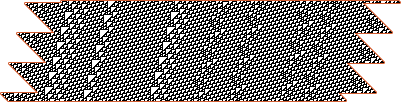}
        \caption{raw leader}
        \label{fig:rl}
    \end{subfigure}
    ~
    \begin{subfigure}[t]{0.28\textwidth}
        $$\mathbb{I}$$
        \centering
        \includegraphics[height=.65in]{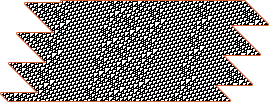}
        \caption{standard component I}
        \label{fig:scI}
    \end{subfigure}
    \begin{subfigure}[t]{0.28\textwidth}
    $$\mathbb{J}$$
        \centering
        \includegraphics[height=.65in]{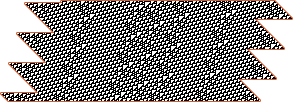}
        \caption{standard component J}
        \label{fig:scJ}
    \end{subfigure}\\
     ~
        \begin{subfigure}[t]{0.3\textwidth}
        $$\mathbb{P}$$
        \centering
        \includegraphics[height=.65in]{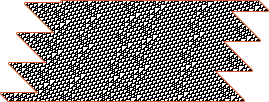}
        \caption{principal component}
        \label{fig:pc}
    \end{subfigure}
    \begin{subfigure}[t]{0.3\textwidth}
        $$\mathbb{S}$$
        \centering
        \includegraphics[height=.65in]{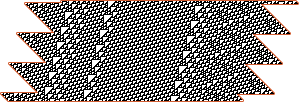}
        \caption{short leader (primary)}
        \label{fig:sl1}
    \end{subfigure}%
    ~ 
    \begin{subfigure}[t]{0.33\textwidth}
        $$\mathbb{S'}$$
        \centering
        \includegraphics[height=.65in]{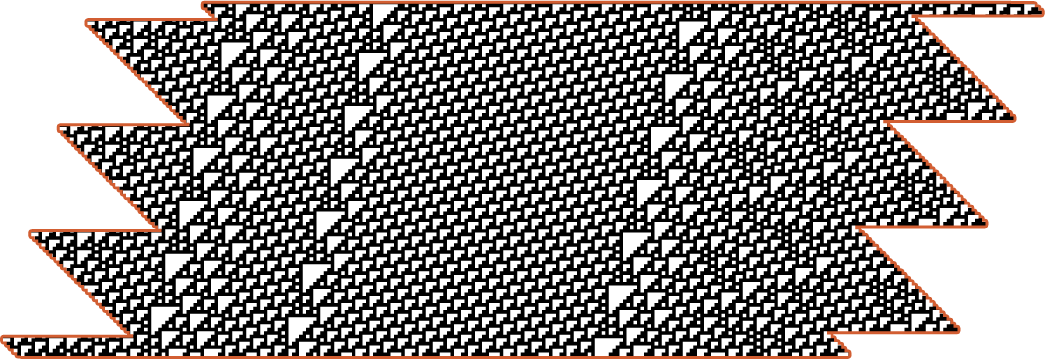}
        \caption{short leader (secondary)}
        \label{fig:sl2}
    \end{subfigure}
    \caption{Right Part of the Tape}
    \label{fig:enc_right_tape}
\end{figure*}

\paragraph{Right part of the tape}
The gliders in Fig.~\ref{fig:enc_right_tape} are used to encode the cyclic tag system's appendant list using the following rules. Each $Y$ symbol is represented by the gliders $\mathbb{I} \mathbb{I}$, each $N$ by $\mathbb{I}\mathbb{J}$. In the first symbol of each appendant, the first standard component $\mathbb{I}$ is replaced by primary component $\mathbb{P}$. There is a raw leader (either long or short) between every two appendants. Let $a_i, a_j$ be two consecutive appendants; we use raw (long) leader $\mathbb{R}$ if $a_j \neq \emptyset$, we use $\mathbb{S}$ if $a_i \neq \emptyset$ and $a_j = \emptyset$ and we use $\mathbb{S'}$ if $a_i = \emptyset$ and $a_j = \emptyset$ \footnote{We highlight a difference with Cook's construction: he only introduces the short leader $\mathbb{S}$. However, from our experiments, it appears this does not suffice since placing two components $\mathbb{S}\mathbb{S}$ next to each other representing two empty appendants does not give the correct dynamics of the simulation.}.

\begin{example}[Right Part of the Tape]
    A cyclic tag system with appendant list given by $\{\text{\rm YNN}, \text{\rm NYYN}, \emptyset, \emptyset, \emptyset\}$ is encoded as: $\mathcal{R} = \underbrace{\mathbb{P}\mathbb{I}\mathbb{I}\mathbb{J}\mathbb{I}\mathbb{J}}_{\text{\rm YNN}}\mathbb{R}\underbrace{\mathbb{P}\mathbb{J}\mathbb{I}\mathbb{I}\mathbb{I}\mathbb{I}\mathbb{I}\mathbb{J}}_{\text{\rm NYYN}}\mathbb{S}\mathbb{S'}\mathbb{S'}\mathbb{R}$.
\end{example}

\begin{lemma}
    There exists an encoder $\E$ satisfying the constraints of local simulation definition, which transforms the tape of an admissible cyclic tag into an initial configuration of Rule 110 that agrees with Cook's construction. Moreover, the encoder's complexity is linear in time and constant in space.
\end{lemma}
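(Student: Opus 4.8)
The plan is to observe that the encoder has essentially already been built in the three preceding examples (left, central, and right parts of the tape) and that what remains is to package it into the precise shape demanded by Definition~\ref{def:local_simulation} --- a finite $R$-pattern over $\2$ surrounded by one admissible background --- and to verify the complexity claims. Throughout, glider words such as $\mathbb{Y}$, $\mathbb{O}$, $\mathcal{L}$, $\mathcal{R}$ are to be read as their flattened Rule~110 cell blocks, glued via Cook's puzzle pieces \cite{cook2009concrete} at the unique relative phases at which the pieces fit.

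First I would fix the admissible background. For a fixed admissible cyclic tag system $\T$ with appendant list $(\alpha_0,\ldots,\alpha_{z-1})$ and Cook's constant $\nu$ (which depends only on the appendant lengths), define $B=B_\T\in\2^{\Z}$ to be spatially periodic with period the cell block of $\mathcal{L}=[\mathbb{E}]^{\nu}\mathbb{O}[\mathbb{E}]^{13}\mathbb{O}[\mathbb{E}]^{11}\mathbb{O}[\mathbb{E}]^{12}\mathbb{O}$ on the left half-line and spatially periodic with period the cell block of $\mathcal{R}$ (one full encoded pass through the appendant list) on the right half-line, with phases chosen so every puzzle piece abuts. I would then note that any spatially periodic configuration of period $p$ equals $\big(o^{\langle\mathcal{P},V\rangle}\big)^{-1}\circ G(0^{\Z})$ with $\mathcal{P}=\{0,\ldots,p-1\}$, $V=(p)$, and $G$ the constant cellular transformaton onto the period block, hence is a basic admissible background, and that $\{(-\infty,0),[0,\infty)\}$ is a partition of $\Z$ into convex polyhedral regions; so $B$ is admissible per Definition~\ref{def:admissible_background}. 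Crucially $B$ depends only on $\T$, not on the tape, so the required collection of backgrounds is the singleton $\{B\}$ (i.e.\ $k=1$).

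Next I would specify the encoding machine: $\E$ is the $3$-tape Turing machine whose third tape carries the fixed description of $B$, and on input $c=c_1\,0\,c_2\in C_\T$ with $c_1=s_1\cdots s_n$ it writes left to right onto the output tape: $\nu'$ copies of the block $\mathbb{E}$ (with $\nu'=\nu+14n$, Cook's conservative bound), then $\mathbb{T}$, then the reversed block sequence $\langle s_n\rangle\cdots\langle s_1\rangle$ (with $\mathbb{Y}$ for $Y$ and $\mathbb{N}$ for $N$), then $\mathbb{L}$ with its last cell at position $0$, and finally the background index $1$. The reversal I would implement by first running the two-way read head on tape~1 to the right end (detected at the first blank) and then scanning right-to-left; the $\nu'$ ether cells by one further left-to-right pass over tape~1 emitting $14$ copies of $\mathbb{E}$ per tape symbol followed by the constant tail $[\mathbb{E}]^{\nu}$; and, when $c_2$ records a mid-cycle state, the phase at which the periodic right half-line of $B$ starts would be advanced by the corresponding bounded offset. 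Since each of the $O(1)$ passes over tape~1 and the writing of the $O(n)$-cell output costs $O(n)$ steps while only constant-size counters are kept on a work tape, $\E$ runs in linear time and constant work space, with $k_c=O(n)$.

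Finally I would check that $\E(c)$ literally is the Rule~110 configuration of \cite{cook_og,cook2009concrete}: the concatenated puzzle pieces form a well-defined element of $\2^{\Z}$, the two periodic half-lines of $B$ meet the finite central pattern compatibly by the phase choices above, and the admissibility hypotheses on $\T$ and $c$ (every nonempty appendant has length a multiple of $6$, the first appendant is nonempty, $\nu,\nu'$ are Cook's bounds, and each full pass appends a nonempty appendant) are exactly what keep the simulation's invariants --- in particular that ossifiers collide only with moving-data gliders and never prematurely with tape symbols. The hard part will be this gluing: the puzzle-piece calculus is designed so that adjacent blocks agree only at one relative phase, so one must verify that all three interfaces ($\mathcal{L}$ to $[\mathbb{E}]^{\nu'}\mathbb{T}$, $\mathbb{L}$ to the start of $\mathcal{R}$, and each period-to-period seam within the two half-lines) are laid down at matching phases, so that the resulting bi-infinite configuration is a genuine Rule~110 configuration whose orbit carries out the cyclic-tag computation. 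A secondary subtlety, handled above, is realizing the $\Theta(n)$ ether padding within constant \emph{work} space by counting in unary against the input tape instead of with a binary counter.
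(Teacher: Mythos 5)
Your proposal is correct and takes essentially the same route as the paper's proof: encode the tape symbols as the central glider block $[\mathbb{E}]^{\nu'}\mathbb{T}\,\mathbb{N}\cdots\mathbb{Y}\,\mathbb{L}$, embed it in a single two-sided admissible background whose left half is the $\mathcal{L}$-periodic ossifier stream and whose right half is the $\mathcal{R}$-periodic appendant list, and observe that constant-size gliders with constant-time phase matching give linear time and constant work space. You simply supply more detail than the paper does on points it leaves implicit — the verification that the two-sided periodic configuration is admissible per Definition~\ref{def:admissible_background}, the explicit multi-tape machine, and the seam/phase bookkeeping — all of which is consistent with the intended construction.
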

\begin{proof}
Let $\T$ be an admissible cyclic tag system and let $C \subseteq C_\T$ be a set of admissible configurations. We construct an encoder $\E: C \rightarrow \2^\Z$ which satisfies the constraints of local simulation Definition \ref{def:local_simulation} as follows.

Intuitively, the role of the encoder is clear: the Turing machine that implements it transforms the input $s_1 \cdots s_n \in C$ into the central part of Rule 110 configuration using the gliders in Fig.~\ref{fig:enc_central_tape}. Then, this finite sequence is embedded into an admissible background consisting of two regions: to the left we have a periodic repetition of the sequence $\mathcal{L}$ of ossifiers, to the right there is an ever repeating sequence $\R$ representing $\T$'s appendant list. This concludes the idea of the encoder.

Each glider is constant in size, and we only need a constant number of gliders to represent each symbol of the cyclic tag. For each glider, the encoder has to choose a compatible phase, which takes constant amount of time. Thus, the encoder works in time $O(n)$ and in constant space.
\end{proof}

\subsubsection{Delay Function and Decoder}

Given an admissible cyclic tag system $\T$ and an admissible initial configuration $c \in C_\T$, Cook's outline of the proof in \cite{universalities_in_cas} guarantees that once we iterate Rule 110 on the encoded configuration $\E(c)$, the collisions of the Rule 110's gliders will mimic the dynamics of $\T$, as presented in Fig.~\ref{fig:rule110_sim_scheme}. This becomes intuitively clear when observing the dynamics of Rule 110 from an encoded configuration. We illustrate the main types of glider interactions in Fig.~\ref{fig:glider_interactions}. To build a deeper intuition for the dynamics, we suggest that the reader explore the interactive library we implemented in order to visualize Rule 110 simulating arbitrary admissible tag systems available \href{https://github.com/frotaur/PyCA}{here}. 

\begin{figure*}[thpb!]
\centering
    \begin{subfigure}[t]{0.48\textwidth}
        \centering
        \includegraphics[height=2.2in]{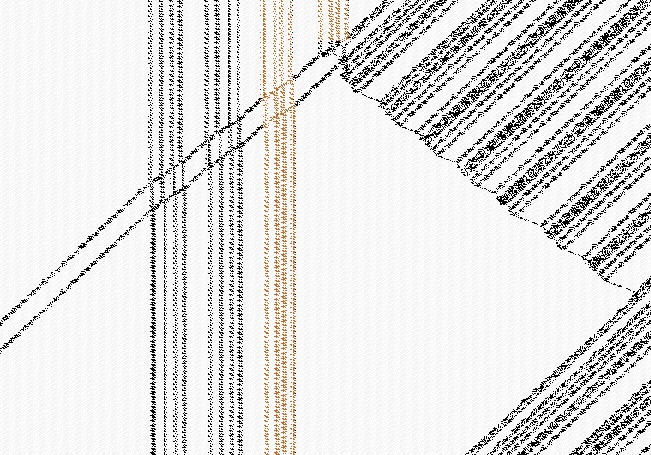}
        \caption{Tape contains NNYY; as the leader hits the symbol N, it emits a ``rejector'' that erases the following appendant and then gets absorbed by the next raw leader. The collision of the leader with a tape symbol produces a pair of ``garbage gliders'' that propagate to the left and pass through all ossifiers without interfering with them.}
        \label{fig:n_collision}
    \end{subfigure}
    \begin{subfigure}[t]{0.48\textwidth} 
        \centering
        \includegraphics[height=2.2in]{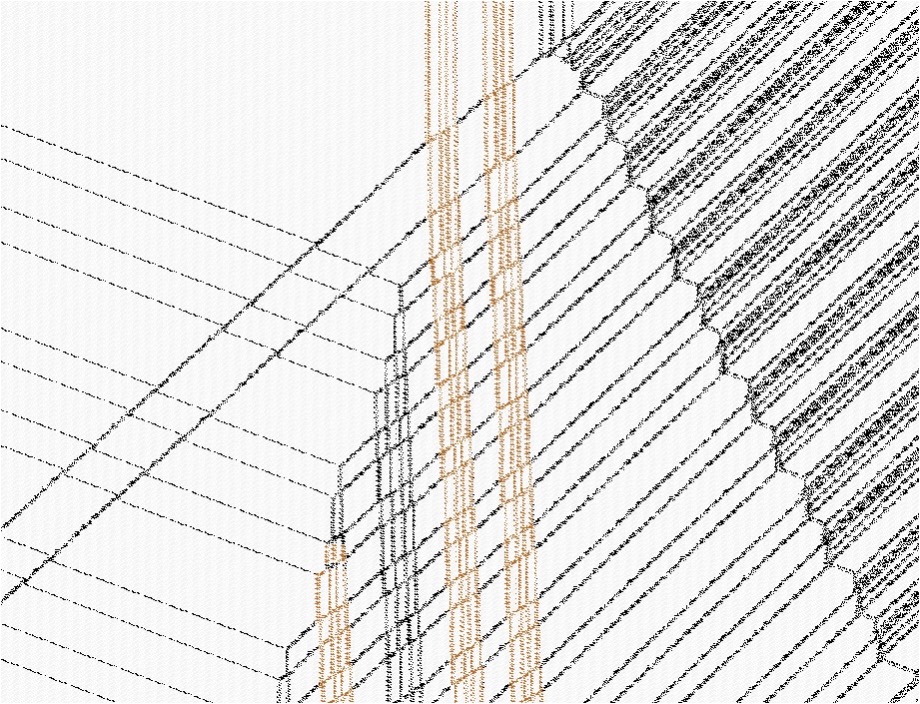}
        \caption{Tape contains YNN; as the leader hits the first Y, it emits an ``acceptor'' that transforms the following appendant into moving data. The moving data pass through the static tape data, leaving it intact. Eventually, they collide with ossifiers, each four-tuple transformed into a static tape symbol. Again, a pair of ``garbage gliders'' is produced.}
        \label{fig:y_collision}
    \end{subfigure}%
    \caption{Main types of glider interactions in Rule 110's simulation of a cyclic tag system. Both figures were generated by the PyCA Rule 110 library we implemented \href{https://github.com/frotaur/PyCA}{here}. For better visibility, the ether background is abstracted away, static tape gliders representing Y are colored black, gliders representing N are colored orange.}
    \label{fig:glider_interactions}
\end{figure*}

In Cook's construction, it is not specified what are the times at which to read out the current cyclic tag system's tape symbols nor what is the particular form of the decoder. We describe both in more detail below.

\paragraph{Delay function $\tau$}

The main idea behind the algorithm computing $\tau$ is simple: given a configuration $c$ which represents the current tape content and a current appendant about to hit the current first tape symbol, $\tau$ should ensure that once we iterate Rule 110 $\tau(c)$ times from $c$, the current appendant will have interacted with the first tape symbol, resulting in either a complete transformation of the appendant into moving data (if the symbol was Y) or a complete deletion of the appendant (if the symbol was $N$). This ensures that after $\tau(c)$ iterations, all symbols from the update tape are present in the Rule 110's configuration as either static tape data or moving data. Then, the decoder simply maps the static and moving tape data into the corresponding tape symbols. We portray a schematic illustration of $\tau$ in Fig.~\ref{fig:scheme_tau}. 

\begin{figure}[htbp!]
    \centering
    \includegraphics[width=0.8\linewidth]{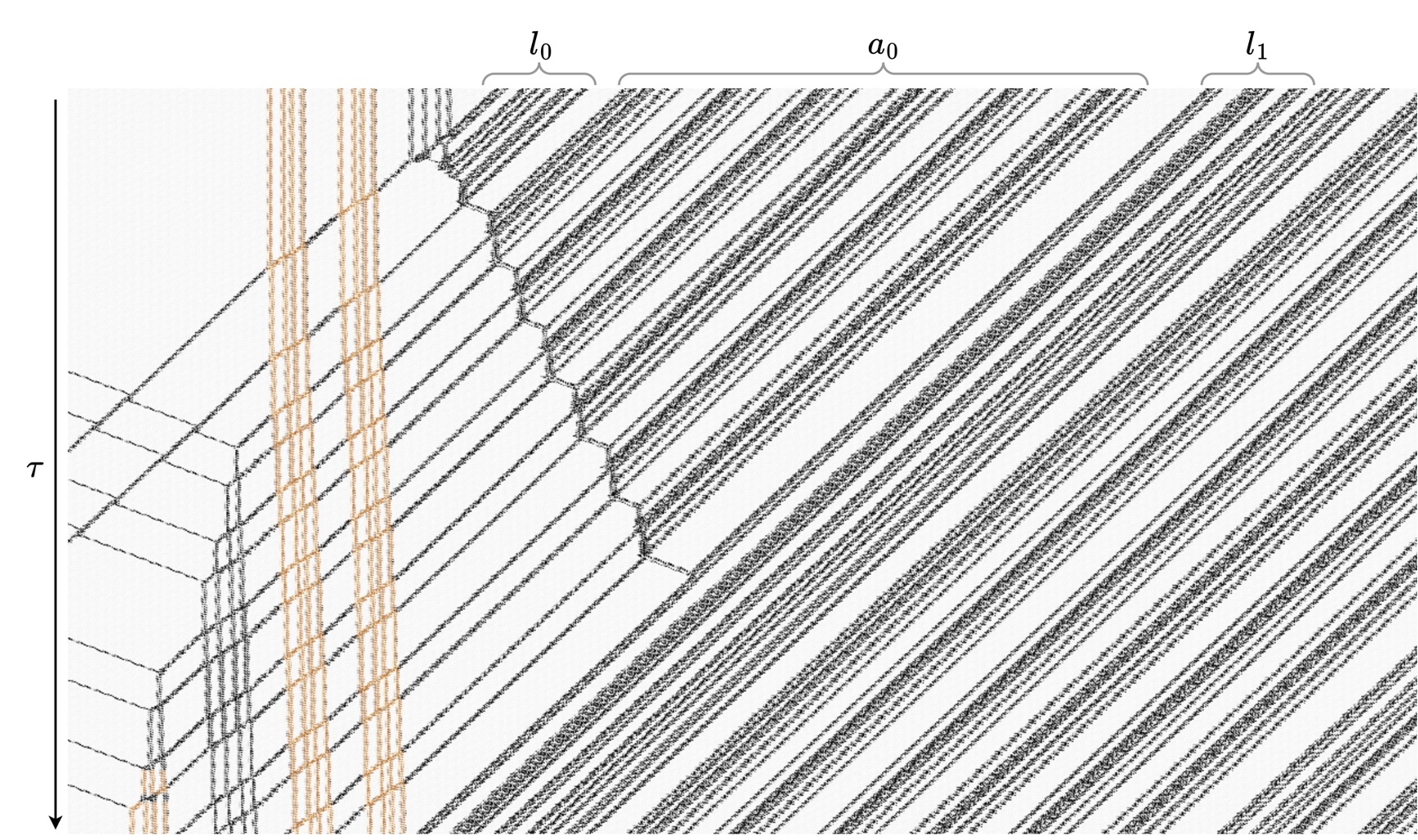}
    \caption{Space-time diagram of Rule 110 illustrating the delay function $\tau$. The first row shows a finite region of a configuration $c \in \2^\Z$ where the leftmost leader $l_0$ is about to hit the rightmost tape symbol Y. The last row shows a portion of the configuration $F^\tau(c)$, where the next leader $l_1$ is about to hit the next tape symbol N. We have guarantee that in $F^\tau(c)$ the whole appendant $a_0$ has been processed and was either deleted, or transformed into moving data that are located to the left of the current rightmost tape symbol.}
    \label{fig:scheme_tau}
\end{figure}

We present a very simplified outline of $\tau$'s implementation below. 

\begin{algorithm}[h!]
\footnotesize
    \KwData{configuration $c \in \{0,1 \}^\Z$ where the distance between the current leftmost leader $l_t$ and the rightmost tape symbol is less than 300.\\}
    \KwResult{$\tau \in \N$ the number of iterations such that $c' = F^\tau(c)$ is the configuration where the distance between the current leftmost leader $l_{t+1}$ and the rightmost tape symbol is less than 300.}
    $\tau = 0$\;
    \While{dist(next tape symbol, leader $l_{t+1}$) $> 300$}
       {$\tau = \tau + 1$\;
       $c = F(c)$\;
       }
    return $\tau$\;
    \caption{Algorithm for the delay function.}
    \label{alg:tau}
\end{algorithm}

\paragraph{Decoder}
The algorithm implementing the decoder is straightforward: $\D$ receives a configuration where all current tape symbols are present as either static data or moving data. The decoder simply has to first sweep the tape right to left and decode all static data into either Y or no symbols, and afterwards, it has to sweep the tape from left to right, decoding all moving data gliders. A particularly simple implementation is suggested below:

\begin{algorithm}[h!]
\footnotesize
    \KwData{configuration $c \in \{0,1 \}^\Z$ where the distance between the current leftmost leader $l_t$ and the rightmost tape symbol is less than 300.\\}
    \KwResult{The corresponding tape content $s_1\ldots s_n \in C_\T$ of the tag system $\T$.}
    find $i \in \N$ the position of the leader $l_t$\;
    find $j \in \N$ the position of the rightmost ossifier in $c$\;
    crop the configuration $c' = c[j,i]$\;
    \While{there are moving data present in $c'$}
       {\If{$c'$ contains no ossifiers}{$c' = \mathcal{L}c'$}
       $c' = F(c')$\;
       }
    reading $c'$ from right to left, match every pattern $\mathbb{Y}$ and $\N$\;
    return the corresponding symbols
    \caption{Algorithm for the decoder $\D$.}
    \label{alg:decoder}
\end{algorithm}

\begin{lemma}\label{lemma:bounded_window}
   Let $c \in C_\T$ be a configuration encoding $n$ symbols of thy cyclic tag's tape, to the right of which is a leader $l_1$ about to hit the first tape symbol. Let $i$ be the index delimiting the start of $l$. Then, both $\tau$ and the decoder only need access to $c_{[i-k, i+k]}$ where $k \in O(n)$.  
\end{lemma}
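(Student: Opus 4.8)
The plan is to combine three facts: in Cook's encoding every pair of relevant objects is separated by a distance that is $O(n)$; Rule~110 has radius $1$, so information propagates by at most one cell per time step; and both Algorithm~\ref{alg:tau} and Algorithm~\ref{alg:decoder} halt after $O(n)$ applications of $F$. Granting these, the finite speed of propagation shows that only cells within distance $O(n)$ of the leader position $i$ can influence either computation before it terminates, which is exactly the assertion.

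The first step is to bound $\tau$. From $c$, the leftmost leader $l_0$ lies $O(1)$ cells to the right of the current first (rightmost) tape symbol, so the collision that resolves the current cyclic-tag step begins within $O(1)$ steps. If the current appendant is empty or the symbol hit is $\mathbb{N}$, the appendant is erased (or the leader becomes a rejector that is absorbed by the next leader) within $O(1)$ steps, and the next leader $l_1$ — lying only $O(1)$ further to the right, since for the fixed cyclic tag $\T$ every appendant block has constant width — comes within $300$ cells of the new rightmost tape symbol after $O(1)$ more steps. If the symbol is $\mathbb{Y}$, the appendant becomes $O(1)$ moving-data gliders that must cross the tape region (width $O(n)$, as each encoded symbol occupies $O(1)$ cells) and then the gap to the ossifier blocks; by Cook's formula that gap is $\nu' = O(\nu + 14 n) = O(n)$, with subsequent blocks only $\nu = O(1)$ apart and only $O(1)$ of them consumed. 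Thus the slowest sub-process lasts $O(n)$ steps, so $\tau = O(n)$, and the termination test of Algorithm~\ref{alg:tau} inspects only the locations of $l_1$ and the next tape symbol, both within $O(n)$ of $i$ throughout.

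Next I would treat the decoder. Algorithm~\ref{alg:decoder} explicitly crops the window $c' = c[j,i]$, where $j$ is the rightmost ossifier; since that ossifier is separated from the tape only by the $O(n)$-wide gap and the tape is itself $O(n)$ wide, $i - j = O(n)$ (and locating $j$ requires scanning only $O(n)$ cells left of $i$). The loop then iterates $F$ until no moving data remains, which by the previous paragraph takes $O(n)$ steps, occasionally prepending a single constant-width block $\mathcal{L}$; since there are only $O(1)$ moving-data gliders, at most $O(1)$ such prependings occur, so the decoder touches a window of size $O(n) + O(1) = O(n)$. To keep the forward simulation on this finite window faithful where it matters, one takes the crop (and the simulated window for $\tau$) with a $\Theta(n)$ safety margin on each side, so that the spurious effects emanating from the artificial boundaries, which also travel at unit speed, cannot reach the tape data or the moving data before the relevant ossification completes; this only inflates the window by a constant factor.

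The main obstacle is the bound $\tau = O(n)$ together with the claim that a finite window suffices at all: a priori, resolving a single cyclic-tag step might depend on ossifiers or appendants arriving from arbitrarily far away, or some sub-process — such as the interplay between $l_1$ and moving data still crossing the tape — could be super-linear. Settling this is where one must descend into Cook's construction: the fixed ether periods, the constant widths of the ossifier and appendant blocks, and in particular the spacing constants $\nu$ and $\nu'$, to verify that every causal chain relevant to the termination conditions has length $O(n)$ and that the periodicity of the ossifier and appendant regions renders each such chain independent of cells beyond an $O(n)$ window. The remainder is then a routine invocation of finite propagation speed.
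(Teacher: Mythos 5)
Your proof is correct and rests on the same underlying facts as the paper's (the spacing constants $\nu$, $\nu'$, the constant width of the encoded appendant list, and the $O(n)$ width of the static tape region), but you reach the conclusion by a somewhat different route. The paper's proof is a purely static ``what data is needed'' argument: to the left, $\tau$ and $\D$ need everything up to the first ossifier, which sits at distance at most $K_0(n+2)$ from $i$ because consecutive static symbols are at most $\nu\cdot|\mathbb{E}|$ apart; to the right, they need only the next leader $l_{t+1}$, at constant distance $K_1$. You instead layer a light-cone argument on top: you bound the number of Rule~110 iterations the two algorithms perform by $O(n)$, invoke the radius-$1$ propagation speed, and conclude that only an $O(n)$ window can causally influence the termination conditions. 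This buys you something the paper leaves implicit, namely that simulating Rule~110 on a \emph{cropped} finite window remains faithful where it matters, which you handle with the $\Theta(n)$ safety margin against boundary artifacts; the paper's proof never addresses this, and your treatment is a genuine strengthening. (The bound $\tau=O(n)$ that you establish along the way is proved separately in the paper, in Lemma~\ref{lemma:enc_dec_complexity}.)

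One small inaccuracy: you assert that only $O(1)$ moving-data gliders are in flight, so only $O(1)$ prependings of $\mathcal{L}$ occur in the decoder loop. In fact moving data from the current step may still be traversing the $O(n)$ gap to the ossifiers while earlier moving data has not yet ossified, so the number of moving-data gliders present at decode time is only bounded by $O(n)$, and hence so is the number of prependings. Since each $\mathcal{L}$ has constant width this still contributes only $O(n)$ cells (and, in any case, the prepended ossifiers are synthesized by the decoder rather than read from $c$, so they do not enlarge the read window), so the conclusion is unaffected.
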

\begin{proof}
    To the left of $i$, both $\tau$ and the decoder need access to all the static and moving tape data. Let $\nu \in \N$ be the constant number of ether blocks between two consecutive ossifier groups. This guarantees that two consecutive static data symbols are no more than $\nu$ ether blocks apart. Thus, the distance between two static tape symbols is bounded by $K_0 = \nu\cdot |\mathbb{E}|$. And the distance from $i$ to the first ossifier is bounded by $K_0\cdot (n+2)$. To the right, $\tau$ and the decoder need access to the next leader $l_{t+1}$. The distance from $l_t$ to $l_{t+1}$ is bounded by the total length of the encoded cyclic tag pattern, which is some constant $K_1$.
     
    Therefore, both $\tau$ and the decoder need access to the finite part of the configuration delimited to the left by an ossifier and to the right by the leader $l_{t+1}$: $c_{[i-K_0\cdot (n+2), i+K_1]}$.
\end{proof}

The last and most cumbersome detail to discuss is the following: We have already argued that both the delay function and decoder only need to read a finite portion of the configuration whose size linearly depends on $n$. This is clear if we center the read-out window around the leftmost's leader position. However, in the definition of local simulation, the readout window is centered around 0. Thus, we have to ensure there exists appropriate shift vectors $v, w$ such that whenever we look at the first leader in the configuration $\sigma_w \circ (\sigma_v \circ F)^\tau$, its distance from 0 is bounded by some constant (independent of the total number of iterations).

We discuss that this is possible in detail below. The leader, as well as all gliders in the right periodic part of the encoded configuration, moves at a constant speed of $v = \frac{4}{15}$; i.e.~every 15 time-steps it moves 4 cells to the left. Let $K_a = |\mathcal{R}|$ be the length of the finite sequence encoding the cyclic tag's appendant list and let $z$ be the number of appendants in the list. We denote by $w = \frac{K_a}{z}$ the average appendant length. 

\begin{lemma}\label{lemma:centering_readout_window}
    Let $\T$ be an admissible tag system and let $c \in C_\T$ be an admissible initial configuration. Let $c_0 = \E(c)$ be the encoded configuration of Rule 110 where $\E$ is the encoder described above. Let $K_a = |\mathcal{R}|$ be the length of the finite sequence encoding the cyclic tag's appendant list and let $z$ be the number of appendants in the list. We let $w = \frac{K_a}{z}$ be the average appendant length and set $v = -\frac{4}{15}$. Then, for each iteration $t \in \N$, the configuration $c_t = \sigma_w \circ (\sigma_v \circ F)^\tau \circ \E(c)$ satisfies the following:
    \begin{enumerate}
        \item the leftmost leader $l_t$ in $c_t$ is ``about to hit'' the rightmost static tape symbol; i.e., their distance is less than 300 
        \item the index $i_t$ delimiting the start of $l_t$ satisfies $|i_t| \leq 2K_a$. 
    \end{enumerate}
\end{lemma}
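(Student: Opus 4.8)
The plan is to prove both assertions together by induction on $t$, working in the co-moving frame defined by the shifts. The key point is that the fractional shift $\sigma_v$ with $v=-\tfrac{4}{15}$ exactly compensates the uniform leftward speed $\tfrac{4}{15}$ at which every glider of the right periodic region of Cook's encoding travels (the raw, prepared and short leaders $\mathbb{R},\mathbb{L},\mathbb{S},\mathbb{S}'$, as well as the appendant-data gliders, all have this common speed); in the frame obtained by iterating $\sigma_v\circ F$ these gliders sit still. For the base case $t=0$ I would use that the encoder places the prepared leader $\mathbb{L}$, which is the first leader, so that its last cell is at position $0$ and it lies immediately to the right of the rightmost static tape symbol $\mathbb{Y}$; hence $|i_0|$ is at most the width of a leader glider, a constant which is certainly $\le K_a$ since $K_a=|\mathcal{R}|$ contains at least one (nonempty, hence length-$\ge 6$) appendant together with a leader, and $l_0$ is within $O(1)<300$ cells of the rightmost tape symbol.

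For the inductive step, condition~1 is essentially the defining property of the delay function: by Algorithm~\ref{alg:tau}, $\tau=\tau(c_t)$ is the least number of Rule~110 steps after which the new leftmost leader $l_{t+1}$ lies within $300$ cells of the new rightmost tape symbol, and since $\sigma_v$ and $\sigma_w$ act as global translations they preserve all relative distances, so this ``within $300$'' relation is inherited by $c_{t+1}$. The only substantive point is that $\tau(c_t)$ is finite, i.e.\ that Algorithm~\ref{alg:tau} halts: this holds because Cook's construction carries out the cyclic-tag simulation faithfully as long as the configuration stays \emph{admissible} (so that ossifiers never prematurely collide with tape data and every leader--tape collision resolves in bounded time), admissibility of configurations is preserved by the cyclic tag's update rule (the orbit of $T(c)$ is a suffix of the orbit of $c$), and the initial configuration is admissible by hypothesis and, for the universal instance, by Lemma~\ref{lemma:universal_admissible_cyclic_tag}. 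Thus admissibility, hence finiteness of every $\tau$, propagates along the induction.

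The heart of the argument is the one-step recursion for condition~2,
\[
 i_{t+1} \;=\; i_t + L_t - w ,
\]
where $L_t$ is the length in Rule~110 cells of the encoded ``unit'' processed at step $t$, i.e.\ the start-to-start distance along $\mathcal{R}$ from the leader $l_t$ to the next leader $l_{t+1}$. Over one composed step the leftmost-leader position changes by three contributions: (i) the discontinuous hand-off from $l_t$, consumed in the collision with the rightmost tape symbol, to $l_{t+1}$, which sits a distance $L_t$ to its right, contributing $+L_t$; (ii) the accumulated $\sigma_v$-shift $+\tfrac{4}{15}\tau$, which exactly cancels the $\tfrac{4}{15}\tau$ leftward drift of $l_{t+1}$ over those $\tau$ steps — note that however far $l_{t+1}$ must travel, including to ``chase'' the receding rightmost tape symbol, is exactly the extra $\tau$ that gets cancelled — contributing $0$ net; and (iii) the single $\sigma_w$-shift, contributing $-w$. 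Unrolling gives $i_t=i_0+\sum_{s=0}^{t-1}(L_s-w)$. The sequence $(L_s)_{s\ge 0}$ is periodic of period $z$ (it cycles through the $z$ leaders of one copy of $\mathcal{R}$, counting a short leader for each empty appendant) and $\sum_{s=0}^{z-1}L_s=|\mathcal{R}|=K_a=zw$, so every complete period contributes $0$ and, writing $r=t\bmod z$, $\bigl|\sum_{s=0}^{t-1}(L_s-w)\bigr|=\bigl|\sum_{s=0}^{r-1}L_s-rw\bigr|\le K_a$ because $\sum_{s<r}L_s$ and $rw$ both lie in $[0,K_a]$. Together with $|i_0|\le K_a$ this yields $|i_t|\le 2K_a$.

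The step I expect to be the main obstacle is justifying the ingredients of the recursion against the fine structure of Cook's construction: that every leader propagates at exactly speed $\tfrac{4}{15}$ with no positional offset, in particular after a raw leader absorbs the ``rejector'' emitted by an $N$-collision or after an ``acceptor'' turns the trailing appendant into moving data, and that $\mathcal{R}$ is laid out so that leader-to-leader spacings are precisely the $L_s$ summing to $K_a$. These are facts about Cook's glider alphabet and the ``puzzle-piece'' phase alignment, which I would import from \cite{cook_og,cook2009concrete}. A secondary wrinkle is that $\tfrac{4}{15}\tau$ and $w=K_a/z$ need not be integers individually; I would handle this by taking each $\tau$ to be a multiple of $15$ (continuing Algorithm~\ref{alg:tau} to the next phase-aligned time) and reading $\sigma_w$ as the fractional shift $\sigma_{K_a/z}$, so that the accumulated translation is an honest integer after each full period of $z$ composed steps.
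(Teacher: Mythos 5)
Your proof takes essentially the same route as the paper's: the fractional shift $\sigma_v$ cancels the leaders' uniform speed $\tfrac{4}{15}$, the shift $\sigma_w$ by the average appendant length compensates the leader-to-leader hand-off, and the periodicity of $\mathcal{R}$ (so that the spacings sum to $K_a = zw$ over each full cycle of $z$ appendants) keeps the accumulated drift bounded, which is exactly the paper's argument. Your explicit recursion $i_{t+1} = i_t + L_t - w$ together with the partial-sum estimate $\bigl|\sum_{s<r} L_s - rw\bigr| \le K_a$ in fact makes rigorous what the paper only sketches (it asserts $|i_1| \le 2K_a$ and $i_z = 0$ without the intermediate bound), and your remarks on finiteness of $\tau$ via admissibility and on the integrality of the fractional shifts address details the paper leaves implicit.
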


The lemma implies that if the decoder or delay function receive the configuration $c_t$, in order to decode or compute the time delay, they only require access to a finite window of the configuration centered around 0, whose size is bounded by the current number of tape symbols.

\begin{proof}
    Both points $1$ and $2$ clearly hold for $c_0$ from the definition of the encoder with $i_0 = 0$. From the definition of $\tau$, it is clear that $c_1$ encodes the situation when the next leader $l_1$ as about to hit the current rightmost tape symbol, so clearly $1.$ is satisfied. In the configuration $F^\tau(c_0)$ the leader $l_2$ travelled with the speed of 4 cells to the left every 15 time-steps. Thus, shifting the configuration by $(\sigma_v \circ F)^\tau(c_0)$ ensures that the position of $l_1$ in $c_0$ and $c_1$ is identical. If the distance between $l_0$ and $l_1$ was exactly $w = \frac{K_a}{z}$ then the final shift $\sigma_w \circ (\sigma_v \circ F)^\tau(c_0)$ would ensure that the position of $l_1$ in $c_1$ is again exactly 0. This might not be the case in general, resulting in the more general bound $|i_1| \leq 2 K_a$. However, after iterating through all $z$ appendants, we have a guarantee that $i_z = 0$. Thus, every $z$ iterations of the simulation, the current leader's position returns to 0. This finishes the outline of the proof.
\end{proof}

For our exact implementation of the encoder, $\tau$, and decoder in Python, see \href{https://github.com/barahudcova/rule110_local_simulation}{rule110decoder}. 

\begin{lemma}\label{lemma:enc_dec_complexity}
    The encoder, decoder, and time-delay witnessing that Rule 110 simulates a universal cyclic tag system have polynomial time complexity, and linear space complexity. 
\end{lemma}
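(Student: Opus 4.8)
The plan is to bound the three maps separately, in each case using the structural facts already in hand — chiefly Lemmas~\ref{lemma:bounded_window} and~\ref{lemma:centering_readout_window} — to reduce every computation to one carried out on a window of the Rule~110 configuration whose width is $O(n)$, where $n$ is the current number of cyclic‑tag tape symbols. For the encoder there is nothing more to do: the preceding lemma already shows $\E$ is computed by a Turing machine running in time $O(n)$ using only constant work space (it emits a fixed number of constant‑size gliders per input symbol, picking a compatible phase for each in constant time), and constant space is in particular linear.

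For the delay function $\tau$ and the decoder $\D$ I would proceed as follows. By Lemma~\ref{lemma:centering_readout_window}, on the configuration $c_t$ actually fed to $\tau$ and $\D$ the leftmost leader $l_t$ lies within distance $2K_a = O(1)$ of the origin; combined with Lemma~\ref{lemma:bounded_window} this shows both maps only ever consult the sub‑window $c_{t,[-K,K]}$ with $K = O(n)$, delimited on the left by an ossifier and on the right by the next leader. Hence one simulated Rule~110 step inside Algorithms~\ref{alg:tau} and~\ref{alg:decoder} costs $O(n)$ time and $O(n)$ space — the local rule is applied cell‑by‑cell across the window, and the ``prepend $\mathcal{L}$'' step of Algorithm~\ref{alg:decoder} only lengthens the window by the constant $|\mathcal{L}|$ each time. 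It then remains to bound the number of simulated steps. For $\tau$: by the constant glider speeds in Cook's construction — the leader moves $4$ cells left every $15$ steps, and consecutive leaders sit a bounded distance apart in the periodic right part $\mathcal{R}$ — the next leader reaches striking distance of the receding rightmost tape symbol after $\mathrm{poly}(n)$ steps; this per‑step slowdown is implicit in~\cite{cook_og} and made quantitative by the polynomial‑time analysis of~\cite{p_completeness_of_rule110}. So Algorithm~\ref{alg:tau} halts after $\mathrm{poly}(n)$ iterations, for total time $\mathrm{poly}(n)\cdot O(n) = \mathrm{poly}(n)$, space $O(n)$, and output $O(\log n)$ bits. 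For $\D$: the ``while moving data present'' loop terminates once every moving‑data glider in the window has collided with an ossifier, which — since moving data travels left at constant speed through a window of width $O(n)$, and Lemma~\ref{lemma:centering_readout_window} guarantees only $O(1)$ appendants' worth of moving data is ever un‑ossified at the times $\D$ is invoked — occurs within $\mathrm{poly}(n)$ iterations; the final right‑to‑left scan matching the $\mathbb{Y}$ and $\mathbb{N}$ patterns adds a further $O(n)$. Thus $\D$ runs in time $\mathrm{poly}(n)$ and space $O(n)$.

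Putting the three bounds together gives the claim: $\E$, $\D$, and $\tau$ are computed in polynomial time and linear space, i.e.\ the simulation of the universal cyclic tag system by Rule~110 runs in polynomial time and linear space. (Chaining this with the polynomial‑time, linear‑space simulation of the universal Turing machine by the cyclic tag system of Lemma~\ref{lemma:universal_admissible_cyclic_tag}, via Lemma~\ref{lemma:composing_local_sim}, is what then yields the polynomial‑time, linear‑space bound for Rule~110 as a locally universal CA.)

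The main obstacle is the penultimate point: verifying that a single cyclic‑tag step is simulated in only polynomially many Rule~110 steps, and, relatedly, that un‑ossified moving data never accumulates enough to blow up the decoder's clean‑up loop. Making this rigorous requires tracking the periodic spacing of leaders and of ossifiers against the growth of the tape region along admissible trajectories — and this is precisely the role of the admissibility constant $\nu$ (Cook's explicit formula, reproduced above), which guarantees ossifiers never collide spuriously with tape symbols. Everything else is bookkeeping on window sizes that Lemmas~\ref{lemma:bounded_window} and~\ref{lemma:centering_readout_window} have already localized to $O(n)$.
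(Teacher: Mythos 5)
Your proposal is correct and follows essentially the same route as the paper: cite the preceding lemma for the encoder, use Lemmas~\ref{lemma:bounded_window} and~\ref{lemma:centering_readout_window} to localize $\tau$ and $\D$ to an $O(n)$-width window, charge $O(n)$ per simulated Rule~110 step in Algorithms~\ref{alg:tau} and~\ref{alg:decoder}, and bound the number of simulated steps before multiplying. The only difference is quantitative: the paper argues the loop counts are $O(n)$ (giving $O(n^2)$ total time) where you settle for $\mathrm{poly}(n)$, which still suffices for the stated claim.
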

\begin{proof}
The encoder's complexity was discussed above. Suppose the configuration $c$ encodes $n$ tape symbols of the cyclic tag and is ready to be decoded. 

The decoder, starting at index 0, requires $O(n)$ time to find the leftmost leader and rightmost ossifier, delimiting a finite portion of the configuration $c_{[i, j]}$ it needs for its computation whose size is in $O(n)$. Next, the decoder as to transform all moving data into static data. This happens in $O(n)$ iterations of Rule 110. Each iteration takes time $O(n)$ to simulate. Finally, the decoder has to sweep the whole configuration, matching gliders to decode into tape symbols. This takes $O(n)$ time-steps. Thus, the overall time-complexity of the decoder is $O(n^2)$. The decoder only needs to store the current configuration throughout its simulation of Rule 110, resulting in the space complexity of $O(n)$.

Finally, the time-delay function, starting at index 0, also requires $O(n)$ time to find the second leftmost leader and rightmost ossifier, delimiting a finite portion of the configuration $c_{[i, j]}$ it needs for its computation whose size is in O(n). Then, the delay function iterates Rule 110 on this finite region until this leader is about to hit the next static tape symbol, this happens in $O(n)$ iterations. Therefore, the overall time-complexity of the time delay is $O(n^2)$, and the space complexity is $O(n)$.
\end{proof}

We can summarize the previous discussion in the following result.
\begin{theorem}
    Rule 110 is locally universal. The time required to simulate one step of a universal Turing machine is polynomial. The encoder, decoder, and time-delay functions witnessing this relation require polynomial time and linear space.
\end{theorem}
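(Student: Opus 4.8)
The plan is to assemble the statement from the chain of simulations set up above, tracking the time and space complexity of the encoder, decoder and delay function at every link. First I would fix a \emph{binary} universal Turing machine $\T_{\mathrm{univ}}$; this is legitimate because an arbitrary Turing machine is simulated by a binary one with only a constant-factor time overhead, so a universal machine may be taken binary. Applying the universal clockwise Turing machine lemma produces a binary clockwise Turing machine simulating $\T_{\mathrm{univ}}$ with a quadratic time slowdown and with identity encoder and decoder, all witnessing maps running in time $O(n)$. Lemma~\ref{lemma:universal_cyclic_tag} then produces a cyclic tag system simulating this clockwise machine, with $O(n)$-time constant-space encoder and decoder and an $O(n\log n)$-time linear-space delay function; the admissibility transformation of Lemma~\ref{lemma:universal_admissible_cyclic_tag} upgrades this to an \emph{admissible} universal cyclic tag system $\U$ all of whose encoded tapes are admissible, still with polynomial-time and linear-space encoder, decoder and delay. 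Composing the two computational-system simulations via Lemma~\ref{lemma:composing_tm_sim} yields that $\U$ simulates $\T_{\mathrm{univ}}$ in the sense of Definition~\ref{def:tm_simulation}, with polynomial-time and linear-space witnessing maps and a polynomial overall slowdown.

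Next I would verify that Cook's construction realizes Rule 110 as \emph{locally} simulating $\U$ in the precise sense of Definition~\ref{def:local_simulation}. The encoder is the one described above: it maps an admissible cyclic tag configuration to a finite central block of tape gliders embedded in an admissible background consisting of two periodic regions --- repeated ossifier blocks $\mathcal{L}$ to the left and the repeated appendant encoding $\mathcal{R}$ to the right --- and runs in linear time and constant space. The delay function and decoder are Algorithms~\ref{alg:tau} and~\ref{alg:decoder}: iterate Rule 110 until the current appendant has fully interacted with the current first tape symbol, then read off the static and moving data gliders as cyclic tag symbols. Lemma~\ref{lemma:bounded_window} shows both maps need only a window of size $O(n)$ around the current leader, and Lemma~\ref{lemma:centering_readout_window} shows that with the fractional shift $v = -\frac{4}{15}$ and the average-appendant-length shift $w = K_a/z$ the leader stays within distance $2K_a$ of the origin for every iteration (returning exactly to $0$ after each full pass through the appendant list), so the readout window is anchored at $0$ as Definition~\ref{def:local_simulation} requires. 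Lemma~\ref{lemma:enc_dec_complexity} records that the resulting encoder, decoder and delay function have polynomial time and linear space complexity, and Cook's analysis of the glider collisions gives a polynomial time-delay per simulated cyclic-tag step.

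Finally I would invoke Lemma~\ref{lemma:composing_local_sim} to compose the local simulation ``Rule 110 simulates $\U$'' with the computational-system simulation ``$\U$ simulates $\T_{\mathrm{univ}}$'': this gives a local simulation of $\T_{\mathrm{univ}}$ by Rule 110 whose composed encoder, decoder and delay function have time complexity $O(\max\{\beta_1(n),\beta_2(n)\})$ --- polynomial --- and linear space, with a polynomial overall time-delay per step of $\T_{\mathrm{univ}}$. Since $\T_{\mathrm{univ}}$ is universal, Rule 110 is locally universal with all the claimed bounds.

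I expect the main obstacle to be the middle paragraph: rigorously certifying that Cook's informally described glider construction satisfies every clause of Definition~\ref{def:local_simulation}. Cook never made the decoder, the delay function, or the required shifts explicit, and the output configuration is not a finite pattern on a quiescent background but a finite pattern sitting between two \emph{distinct} periodic backgrounds, with all relevant information drifting steadily leftward at the glider speed $\frac{4}{15}$. Handling this cleanly is exactly what forces the two technical devices invoked above --- the generalized admissible-background notion of Definition~\ref{def:admissible_background}, which legitimizes the ossifier and appendant periodic regions, and the fractional shifts $\sigma_v,\sigma_w$, which keep the readout window pinned near the origin as the simulation proceeds. A secondary subtlety is keeping the complexity bounds compositional, which is precisely why Lemmas~\ref{lemma:composing_tm_sim} and~\ref{lemma:composing_local_sim} are stated with the $\max$ of the individual complexity functions rather than a sum or product.
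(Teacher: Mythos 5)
Your proposal is correct and follows essentially the same route as the paper: fix a universal (binary) Turing machine, chain through the clockwise Turing machine and the admissible universal cyclic tag system of Neary--Woods, certify that Cook's glider construction satisfies Definition~\ref{def:local_simulation} via the explicit encoder, Algorithms~\ref{alg:tau} and~\ref{alg:decoder}, and Lemmas~\ref{lemma:bounded_window}, \ref{lemma:centering_readout_window}, and \ref{lemma:enc_dec_complexity}, then compose everything with Lemmas~\ref{lemma:composing_tm_sim} and~\ref{lemma:composing_local_sim}. Your identification of the genuinely delicate points --- the two distinct periodic backgrounds requiring the admissible-background notion and the fractional shifts to pin the readout window at the origin --- matches exactly where the paper invests its technical effort.
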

\begin{proof}
    From the results of Cook \cite{cook2009concrete} and Woods and Neary \cite{p_completeness_of_rule110} we have a guarantee that there exists an admissible tag system $\U$ which simulates a universal Turing machine using a subset of its configurations $C \subseteq C_\U$ which are all admissible.

    Further, Cook's results guarantee that any $c \in C$ can be encoded into Rule 110's configuration such that Rule 110 iterated on $\E(c)$ mimics the dynamics of $\U$ according to the diagram in Fig.~\ref{fig:rule110_sim_scheme}. From our detailed discussion above, it is easy to verify that $\E$ satisfies the encoder conditions in the definition of local simulation.

    We specify the delay function and decoder by pseudocodes \ref{alg:tau} and \ref{alg:decoder}, showing both functions can be implemented by a Turing machine. Lemma \ref{lemma:bounded_window} shows that both $\tau$ and $\D$ only need access to a finite part of their input configuration whose size is proportional to the current number of tape symbols. Further, Lemma \ref{lemma:centering_readout_window} shows that the readout window can be centered around 0 by choosing appropriate shift vectors for the simulation. 

    From Lemma \ref{lemma:enc_dec_complexity} and previous discussions, we know that along the series of reductions shown in Fig.~\ref{fig:reductions}, at every step, the encoder, decoder and time-delay have polynomial time-complexity and linear space complexity. Therefore, this also holds for their composition.
\end{proof}

\begin{figure}[htbp!]
    \centering
    \includegraphics[width=0.75\linewidth]{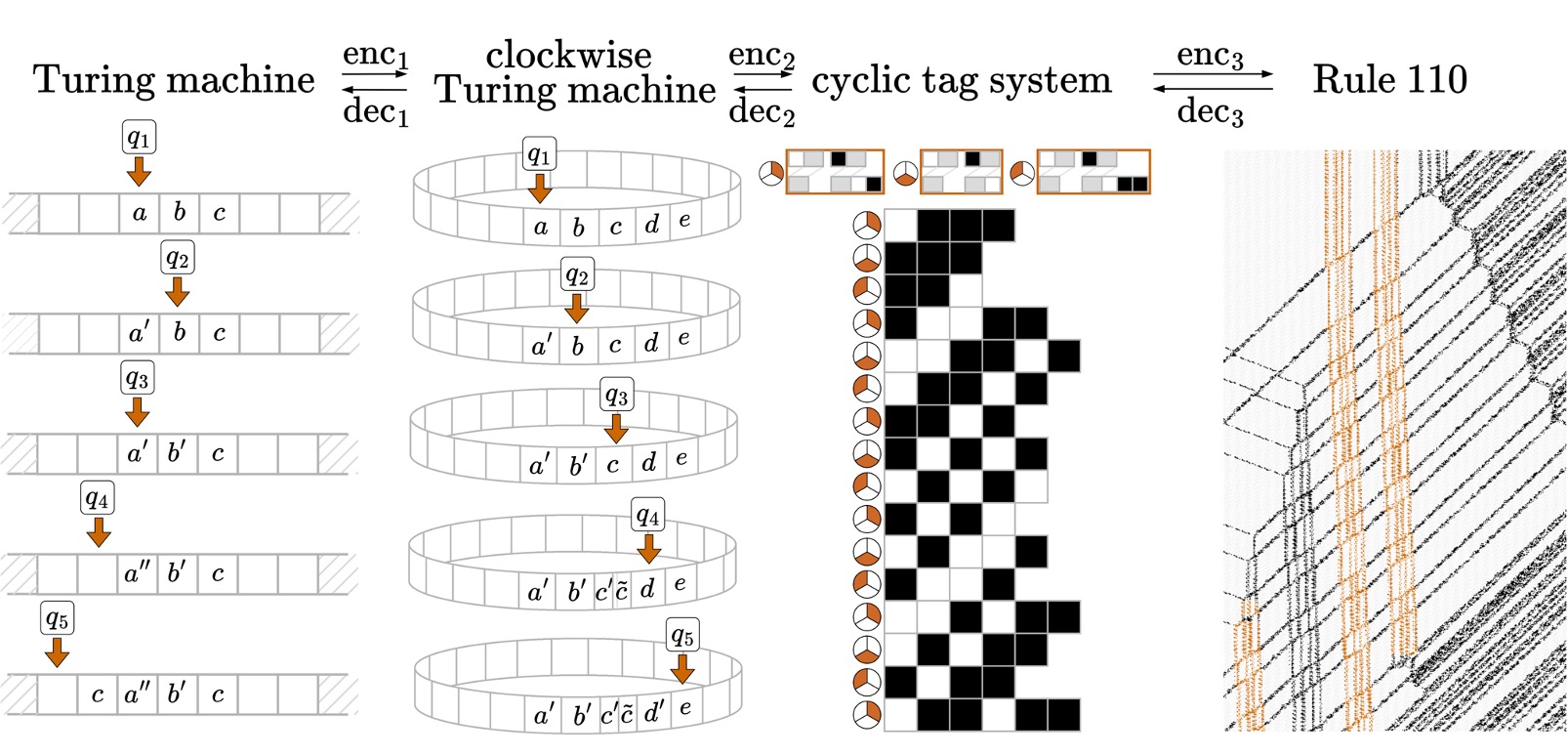}
    \caption{A Scheme illustrating the series of reductions all the way from a universal Turing machine to Rule 110.}
    \label{fig:reductions}
\end{figure}

\subsection{Local vs.~global Universality}
The goal of this section is to show that global universality is a strictly stronger condition than local universality. First, we show that each globally universal CA is also locally universal. Second, we highlight an example of a 1D reversible CA which is locally universal but not globally universal.

\begin{lemma}\label{lemma:local_global_relation}
    Let $\B = (T^{\Z^d}, G)$ and $\A = (S^{\Z^d}, F)$ be cellular automata and $\T =(C_\T, T)$ a computational system (a Turing machine, cyclic tag, etc.). Assume that $\B$ globally simulates $\A$ and $\A$ locally simulates $\T$. Then, $\B$ locally simulates $\T$.
\end{lemma}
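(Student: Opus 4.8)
The plan is to compose the two given simulations. Write $\E_\ell,\D_\ell,\tau_\ell,v_\ell,w_\ell$ for the data witnessing that $\A$ locally simulates $\T$, and $\E_g,\D_g,v_g,\tau_g$ for the data witnessing that $\B$ globally simulates $\A$, with $M_g\in\Z^{d\times d}$ the full-rank matrix from condition~3 on the decoder in Definition~\ref{def:global_simulation} (so $\sigma_u\circ\D_g=\D_g\circ\sigma_{M_g u}$). Set $\Phi_\A:=\sigma_{w_\ell}\circ(\sigma_{v_\ell}\circ F)^{\tau_\ell}$, so that $\D_\ell\circ\Phi_\A^{\,n}\circ\E_\ell=T^n$, and recall from Lemma~\ref{lemma:global_sim_iterations} that $\D_g\circ(\sigma_{v_g}\circ G^{\tau_g})^m\circ\D_g^{-1}=F^m$ for every $m$, with $\dom\D_g$ forward-invariant under $\sigma_{v_g}\circ G^{\tau_g}$ and $\D_g\circ\E_g=\id$. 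The proposed witnesses for $\B$ locally simulating $\T$ are $\E:=\E_g\circ\E_\ell$, $\D:=\D_\ell\circ\D_g$, the shift vectors $w'':=M_g w_\ell$ and the fractional shift $v'':=\tfrac1{\tau_g}(M_g v_\ell+v_g)$ (legitimate in the sense of the fractional-shift generalization noted after Definition~\ref{def:local_simulation}, since the delay below is always a multiple of $\tau_g$), and the delay function $\tau''(c''):=\tau_g\cdot\tau_\ell(\D_g(c''))$. Writing $\Psi_\B:=\sigma_{w''}\circ(\sigma_{v''}\circ G)^{\tau''}$, the content of the argument is that $\Psi_\B$ is exactly one macro-step of $\B$ implementing one macro-step $\Phi_\A$ of $\A$.

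The simulation identity $\D\circ\Psi_\B^{\,n}\circ\E=T^n$ is proved by induction on $n$, maintaining the invariant $\D_g\bigl(\Psi_\B^{\,n}(\E(c))\bigr)=\Phi_\A^{\,n}(\E_\ell(c))$. The base case is $\D_g\circ\E_g\circ\E_\ell=\E_\ell$. For the step, put $c'':=\Psi_\B^{\,n}(\E(c))$ and $m:=\tau_\ell(\D_g(c''))$; by the invariant $\D_g(c'')\in\dom\D_\ell$, and by forward-invariance $c''\in\dom\D_g$. Then $(\sigma_{v_g}\circ G^{\tau_g})^m(c'')$ decodes under $\D_g$ to $F^m(\D_g(c''))$ (Lemma~\ref{lemma:global_sim_iterations}), and since $(\sigma_{v''}\circ G)^{\tau''(c'')}=\sigma_{m(M_g v_\ell+v_g)}\circ G^{m\tau_g}=\sigma_{m M_g v_\ell}\circ(\sigma_{v_g}\circ G^{\tau_g})^m$, applying $\sigma_{w''}=\sigma_{M_g w_\ell}$ and pushing both shifts through $\D_g$ via $\sigma_u\circ\D_g=\D_g\circ\sigma_{M_g u}$ gives $\D_g(\Psi_\B^{\,n+1}(\E(c)))=\sigma_{w_\ell}\circ\sigma_{m v_\ell}\circ F^m(\D_g(c''))=\Phi_\A(\Phi_\A^{\,n}(\E_\ell(c)))$, as required. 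Composing with $\D_\ell$ and using $\D_\ell\circ\Phi_\A^{\,n}\circ\E_\ell=T^n$ closes the induction; one also checks $\dom\D=\{c'':c''\in\dom\D_g,\ \D_g(c'')\in\dom\D_\ell\}$, which coincides with $\dom\tau''$ because $\dom\tau_\ell=\dom\D_\ell$, and that $\D,\tau''$ are defined at every configuration $\Psi_\B^{\,n}(\E(c))$ occurring in the identity.

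It remains to check the computability and locality conditions of Definition~\ref{def:local_simulation}. By the packed Curtis--Hedlund--Lyndon theorem (Theorem~\ref{thm:packed_curtis_hedlund_lyndon}), $\E_g$ and $\D_g$ are, up to fixed packings at source and target, bounded-radius (partial) cellular transformations, hence computable in linear time per cell. Thus $\E$ is computed by first running the $\A$-encoder (time $O(\beta_\ell(n))$, producing a pattern of size $O(\beta_\ell(n))$ on some admissible $\A$-background $B_i$) and then applying the block code $\E_g$, which outputs a finite $\B$-pattern on the background $\E_g(B_i)$; here one verifies that $\E_g$ maps admissible backgrounds to admissible backgrounds by eroding each convex polyhedral region of $B_i$ by the radius of $\E_g$ (on the eroded region the image is periodic, i.e.\ a basic admissible background) and partitioning the remaining thin boundary slabs into convex polyhedral pieces, on each of which the image is periodic along the facet directions and extends to a periodic configuration. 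For $\D=\D_\ell\circ\D_g$ and $\tau''=\tau_g\cdot(\tau_\ell\circ\D_g)$, a two-tape machine runs the $\D_\ell$- (resp.\ $\tau_\ell$-) machine with input $\D_g(c'')$, computing each requested cell of $\D_g(c'')$ on the fly from a bounded window of $c''$ and looping forever if a forbidden pattern of $\dom\D_g$ appears; since $\D_\ell,\tau_\ell$ read only an $O(\beta_\ell(|\D(c'')|))$-window and halt within $O(\beta_\ell(|\D(c'')|))$ steps, the composites satisfy the same bound with a common computable $\beta$. I expect the main obstacle to be the shift bookkeeping — tracking how $M_g$ and the fractional shift $v''$ interact over the $m$-fold iteration and confirming $v''$ is genuinely well-defined — together with the routine-but-finicky verification that $\E_g$ preserves admissibility of backgrounds.
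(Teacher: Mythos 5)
Your proposal is correct and follows essentially the same route as the paper: compose the encoders and decoders ($\E_g\circ\E_\ell$, $\D_\ell\circ\D_g$), pull the local shifts through $\D_g$ via the matrix $M_g$, define the new delay as $\tau_\ell\circ\D_g$ (rescaled by $\tau_g$), and invoke the iteration lemma for global simulation plus the fact that $\E_g,\D_g$ are cellular transformata to verify admissibility of backgrounds and computability. Your explicit induction invariant and the erosion argument for why $\E_g$ preserves admissible backgrounds are slightly more detailed than the paper's treatment (which states these as immediate consequences of $\E_2$ being a cellular transformaton), but the substance is identical.
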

\begin{proof}
        Let $\E_1: C_\T \rightarrow S^{\Z^d}$ be the encoder and $\D_1: S^{\Z^d} \rightharpoonup C_\T$ the decoder that witness $\A$ locally simulating $\T$ with time-delay function $\tau: S^{\Z^d} \rightharpoonup \N$ and shift vectors $v, w \in \Z^d$; we have:
    \begin{equation}\label{eq:local_witness}
       \D_1 \circ \big( \sigma_w \circ (\sigma_v \circ F)^\tau \big)^n \circ \E_1(s) = T^n(s) \quad \quad \text{ for all } n \in \N \text{ and all } s \in C_\T. 
    \end{equation}
    
    Let $\E_2: S^{\Z^d} \rightarrow T^{\Z^d}$ be the encoder and $\D_2: T^{\Z^d} \rightharpoonup S^{\Z^d}$ the decoder witnessing that $\B$ globally simulates $\A$ with delay constant $t \in \N$ and shift vector $u \in \Z^d$; we have:
    \begin{equation}\label{eq:global_witness}
        \D_2 \circ \sigma_u \circ G^t \circ \D^{-1}_2(c) = F(c) \quad \quad \text{ for all } c \in S^{\Z^d}.
    \end{equation}

    Then, \eqref{eq:global_witness} implies:
    \begin{equation*}
        \D_2 \circ \sigma_{M\cdot v + u} \circ G^t \circ \D^{-1}_2(c) = \sigma_v \circ F(c) \quad \quad \text{ for all } c \in S^{\Z^d}.
    \end{equation*}
    And further, due to Lemma \ref{lemma:global_sim_iterations}:
    \begin{equation*}
        \D_2 \circ \big( \sigma_{M\cdot v + u} \circ G^t \big)^{\tau(c)} \circ \D^{-1}_2(c) = \sigma_v \circ F^{\tau(c)}(c) \quad \quad \text{ for all } c \in S^{\Z^d}.
    \end{equation*}
    We define a new delay-function $\tau': T^{\Z^d} \rightharpoonup \N$ as $\tau'(c') = \tau(\D_2(c'))$ for $c' \in T^{\Z^d}$. Then
    \begin{equation*}
        \D_2 \circ \big( \sigma_{M\cdot v + u} \circ G^t \big)^{\tau'} \circ \D^{-1}_2(c) = (\sigma_v \circ F)^{\tau}(c) \quad \quad \text{ for all } c \in S^{\Z^d}\,,
    \end{equation*}
    and 
    \begin{equation*}
        \D_2 \circ \sigma_{M\cdot w} \circ ( \sigma_{M\cdot v + u} \circ G^t )^{\tau'} \circ \D^{-1}_2(c) = \sigma_w \circ (\sigma_v \circ F)^{\tau}(c) \quad \quad \text{ for all } c \in S^{\Z^d}.
    \end{equation*}
    Again, iterating the argument from Lemma \ref{lemma:global_sim_iterations} we have that for all $n \in \N$:
    \begin{equation*}
        \D_2 \circ \big( \sigma_{M\cdot w} \circ ( \sigma_{M\cdot v + u} \circ G^t )^{\tau'} \big)^n \circ \D^{-1}_2(c) = \big( \sigma_w \circ (\sigma_v \circ F)^{\tau}\big)^n(c) \quad \quad \text{ for all } c \in S^{\Z^d}.
    \end{equation*}
    And thus, as a special case, it also holds for all $n \in \N$ that:
    \begin{equation*}
        \D_2 \circ \big( \sigma_{M\cdot w} \circ ( \sigma_{M\cdot v + u} \circ G^t )^{\tau'} \big)^n \circ \E_2(c) = \big( \sigma_w \circ (\sigma_v \circ F)^{\tau}\big)^n(c) \quad \quad \text{ for all } c \in S^{\Z^d}.
    \end{equation*}
    Substituting the last expression into \eqref{eq:local_witness} we get:
    \begin{equation*}
        \D' \circ \big( \sigma_{M\cdot w} \circ ( \sigma_{M\cdot v + u} \circ G^t )^{\tau'} \big)^n \circ \E'(s) = T^n(s) \quad \quad \text{ for all } s \in C_\T\,,
    \end{equation*}
    where $\E' = \E_2 \circ \E_1$ and $\D' = \D_1 \circ \D_2$. 

    Lastly, it remains to verify the following:
    \begin{enumerate}
        \item Let $B_1, \ldots, B_k$ be the list of admissible backgrounds for the encoder $\E_1$. Then, $\E_2(B_1), \ldots, \E_2(B_k)$ is also a list of admissible backgrounds; this follows from the fact that $\E_2$ is a cellular transformaton.
        \item $\E'$ is a mapping implementable by a Turing machine in the sense of Definition \ref{def:local_simulation} with admissible backgrounds $\E_2(B_1), \ldots, \E_2(B_k)$. This follows from the fact that $\E_2$, as a cellular transformaton, can be implemented by a Turing machine. 
        \item $\D'$ and $\tau'$ can be implemented by a Turing machine in the sense of Definition \ref{def:local_simulation}. This again follows from the fact that $\D_2$ as a cellular transformaton, can be implemented by a Turing machine. 
    \end{enumerate}
\end{proof}

An important corollary of the previous lemma is the following relationship between local and global universality.

\begin{theorem}\label{thm:local_global_relation}
    Each globally universal CA is also locally universal.
\end{theorem}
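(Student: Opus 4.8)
The plan is to deduce the theorem directly from Lemma~\ref{lemma:local_global_relation}, once we supply, in every dimension, at least one CA known to be locally universal. Fix a globally universal $d$-dimensional CA $\mathcal{B} = (T^{\Z^d}, G)$. The strategy has three steps: (i) exhibit a $d$-dimensional CA $\mathcal{A}$ that locally simulates a universal Turing machine $\T_{\mathrm{univ}}$; (ii) use global universality of $\mathcal{B}$ to get that $\mathcal{B}$ globally simulates $\mathcal{A}$; (iii) apply Lemma~\ref{lemma:local_global_relation} to the triple $(\mathcal{B}, \mathcal{A}, \T_{\mathrm{univ}})$ to conclude that $\mathcal{B}$ locally simulates $\T_{\mathrm{univ}}$, i.e.\ $\mathcal{B}$ is locally universal.

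For step (i) with $d = 1$ we take $\mathcal{A}$ to be Rule~110, whose local universality was established above. For $d > 1$ we take $\mathcal{A}$ to be the lift of Rule~110 to $\Z^d$ in the sense of Remark~\ref{rem:lower_dim_lifting}: embed $\Z$ into $\Z^d$ along the first coordinate, make configurations constant along the remaining $d-1$ coordinates, and lift the local rule to ignore those coordinates. The (routine but genuine) point is to check that this lift stays locally universal. The encoder for Rule~110 outputs an $R=[-k_c,k_c]$-pattern against an admissible background; the lifted encoder outputs the analogous $[-k_c,k_c]^d$-pattern, constant along the extra coordinates, against the lift of that background, which is again admissible in the sense of Definition~\ref{def:admissible_background} — basic backgrounds arising from $1$D packings lift to basic backgrounds arising from product packings, and the convex-polyhedral partition of $\Z$ lifts to a convex-polyhedral partition of $\Z^d$. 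The decoder, delay function, and shift vectors carry over essentially verbatim (they read only the coordinate slice through the origin), and their time-complexity bounds are unchanged up to a constant factor from the frozen dimensions. Since the dynamics of $\mathcal{A}$ restricted to any line parallel to the first coordinate axis is exactly that of Rule~110, equation~\eqref{eq:local_simulation} transfers directly.

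With $\mathcal{A}$ in hand, global universality of $\mathcal{B}$ gives $\mathcal{A} \preceq \mathcal{B}$, and Lemma~\ref{lemma:local_global_relation} — applied with this $\mathcal{A}$, this $\mathcal{B}$, and $\T = \T_{\mathrm{univ}}$ — yields that $\mathcal{B}$ locally simulates $\T_{\mathrm{univ}}$, so $\mathcal{B} \in \textsf{LocallyUniversal}$. The only place needing real (if unsurprising) work is the lift step for $d > 1$: one must verify stability of the admissible-background notion under lifting and that the Turing-machine implementations of encoder, decoder, and delay stay in the same complexity class $O(\beta(n))$; everything else is a formal chaining of the two simulation notions already handled inside Lemma~\ref{lemma:local_global_relation}. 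A shortcut that avoids even the lift is to cite known locally universal CAs in each dimension directly — e.g.\ Game of Life for $d=2$ and Toffoli's embedding of a $d$-dimensional CA into a $(d{+}1)$-dimensional reversible one to climb to higher dimensions — but using the lift of Rule~110 keeps the argument self-contained within this paper.
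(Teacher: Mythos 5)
Your proposal is correct and follows the same route as the paper: exhibit a locally universal CA in each dimension, note that a globally universal CA globally simulates it, and conclude via Lemma~\ref{lemma:local_global_relation}. The only difference is that you spell out the existence of a locally universal $\mathcal{A}_d$ (via lifting Rule~110), which the paper simply asserts as apparent; that added detail is sound and harmless.
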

\begin{proof}
    It is apparent that there exists a CA $\A_d = (S^{\Z^d}, F)$ which can simulate a universal Turing machine $\T_U$ for every dimension $d \in \N$. Let $\B = (T^{\Z^d}, G)$ be a globally universal CA in dimension $d \in \N$. Since $\B$ can simulate $\A_d$, using Lemma \ref{lemma:local_global_relation} we get that $\B$ can also simulate $\T_U$ and thus that $\B$ is also locally universal.
\end{proof}

In what follows, we show that global universality is a strictly stronger condition than local universality in every dimension. To be precise, the following statement holds:
\begin{theorem}
    For each $d \in \N$ there exists a $d$-dimensional cellular automaton which is locally universal but is not globally universal.
\end{theorem}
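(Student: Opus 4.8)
The plan is to produce, uniformly in $d$, a \emph{reversible} $d$-dimensional cellular automaton that is locally universal, and then to invoke the fact that a reversible CA can never globally simulate an irreversible one. Combined with Theorem~\ref{thm:local_global_relation}, this upgrades the inclusion $\textsf{GloballyUniversal}\subseteq\textsf{LocallyUniversal}$ to a strict one in every dimension. The three ingredients are: (i) a reversible Turing machine $\mathcal{T}_{\mathrm{rev}}$ simulating a universal Turing machine --- every Turing machine is simulated (in the sense of Definition~\ref{def:tm_simulation}) by a reversible one, with an encoder, decoder and delay of the bounded-complexity form that definition demands; (ii) the Morita--Harao theorem \cite{morita1989computation}, with Dubacq's real-time refinement \cite{dubacq1995simulate}, that a $1$-dimensional reversible CA can simulate any reversible Turing machine; (iii) Hertling's obstruction \cite{hertling1998embedding} that no reversible CA globally simulates an irreversible CA.

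First I would build the $1$-dimensional witness $\mathcal{R}=(S^{\Z},G)$, namely the reversible CA of \cite{morita1989computation} simulating $\mathcal{T}_{\mathrm{rev}}$, and check that this simulation fits Definition~\ref{def:local_simulation}: the encoder writes the finite tape-and-head word of $\mathcal{T}_{\mathrm{rev}}$ cell-by-cell into a bounded central region and fills the complement with a fixed background pattern --- the quiescent state of $\mathcal{R}$ if it has one, otherwise a periodic pattern, which is in either case an admissible background in the sense of Definition~\ref{def:admissible_background}; the decoder reads a bounded window around the origin; and the delay function is constant by Dubacq's real-time simulation. All three maps run in linear time, so the simulation has linear time complexity. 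Composing with the reduction $\mathcal{T}_{\mathrm{univ}}\rightsquigarrow\mathcal{T}_{\mathrm{rev}}$ via Lemma~\ref{lemma:composing_local_sim} shows $\mathcal{R}$ is locally universal, and it is reversible by construction.

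Next I would lift $\mathcal{R}$ to dimension $d$ by the ``slicewise'' construction, i.e.\ the $d'=1$ case of the lifting in Remark~\ref{rem:lower_dim_lifting}: keep $G$'s local rule and neighborhood but read them inside $\Z^{d}$, so that $G_d$ applies $G$ along every line parallel to the first axis, independently across lines. Then $G_d$ is a product of copies of the bijection $G$, hence bijective, so $\mathcal{R}_d=(S^{\Z^d},G_d)$ is reversible. It is still locally universal: encode a Turing-machine configuration into a bounded segment of a single line $\ell$ parallel to the first axis and take the all-background configuration everywhere else; since $G_d$'s neighborhood lies along that axis, $\ell$ evolves exactly as under $\mathcal{R}$ while every cell off $\ell$ stays in the background, so the same encoder/decoder/delay (now reading a window on $\ell$) witness local simulation of $\mathcal{T}_{\mathrm{univ}}$. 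Finally, irreversible $d$-dimensional CAs are plentiful (any CA with a spreading state is non-injective), so by (iii) $\mathcal{R}_d$ cannot globally simulate one and hence is not globally universal; thus $\mathcal{R}_d$ witnesses the theorem in dimension $d$. (For $d\ge 2$ one could instead feed a $(d-1)$-dimensional locally universal CA through Toffoli's reversible embedding \cite{toffoli1977computation} together with Lemma~\ref{lemma:local_global_relation}, but the slicewise lift is the cleaner uniform route and also covers $d=1$.)

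The step I expect to be the main obstacle is (iii): Definition~\ref{def:global_simulation} is somewhat more permissive than the frameworks in which the reversible-cannot-simulate-irreversible phenomenon is usually stated (general packings and positive-radius decoders are allowed), so one must verify that reversibility is genuinely transported. I would try to prove this by exploiting that a reversible $\mathcal{B}$ has an inverse that is again a CA (by the Curtis--Hedlund--Lyndon theorem): after absorbing the packing, shift and time-delay into a single reversible cellular transformaton $\widehat{G}$ on the packed state space, and using $\widehat{G}(\dom\mathcal{D})\subseteq\dom\mathcal{D}$ (as in the proof of Lemma~\ref{lemma:global_sim_iterations}), the compactness of the subshift of finite type $\dom\mathcal{D}$, and the fact that $\mathcal{E}$ and $\mathcal{D}$ are mutually inverse sliding-block codes onto the relevant subshift, one aims to show that $\widehat{G}$ restricts to a bijection of $X=\bigcap_{n\ge 0}\widehat{G}^{\,n}(\dom\mathcal{D})$ on which $\mathcal{D}$ conjugates $\widehat{G}$ to $F$, whence $F$ is invertible. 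Should this transport argument prove delicate, a robust fallback is to fix one concrete irreversible $d$-dimensional CA and argue directly --- for instance by comparing numbers of periodic configurations, or by a Garden-of-Eden/non-surjectivity argument --- that its space-time diagrams cannot be reproduced by a reversible CA within Definition~\ref{def:global_simulation}. Everything else --- the construction, reversibility of the lift, and the linear-time encoder/decoder/delay bounds --- is routine.
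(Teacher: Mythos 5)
Your proposal follows the same overall strategy as the paper: exhibit a reversible, locally universal CA in each dimension and then use the fact that reversibility obstructs global universality. Two points of comparison are worth noting. First, for local universality in dimensions $d\ge 2$ the paper simply cites Toffoli's result that any $d$-dimensional CA is simulated by a $(d+1)$-dimensional reversible CA, whereas your slicewise lift of the Morita--Harao 1D construction is a clean uniform alternative that also covers $d=1$ in one stroke; both routes leave the same ``check compliance with Definition~\ref{def:local_simulation}'' bookkeeping. Second, and more substantively, the step you flag as the main obstacle --- transporting reversibility through the permissive Definition~\ref{def:global_simulation} via the inverse-limit set $X=\bigcap_{n}\widehat{G}^{\,n}(\dom\mathcal{D})$ --- is not what the paper does, and is harder than necessary. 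The paper instead executes exactly your ``fallback'': it fixes the concrete irreversible CA $F$ that sends every configuration to $0^{\Z^d}$, takes a nonzero totally periodic configuration $c$, notes that $\mathcal{E}(c)$ is periodic (by the shift-compatibility of the encoder), that the set of configurations with those periods is finite and is preserved by the injective map $\sigma_v\circ G^\tau$, hence the orbit of $\mathcal{E}(c)$ is periodic; then $\mathcal{D}\circ\mathcal{E}(c)=c\neq 0^{\Z^d}$ contradicts $\mathcal{D}\circ(\sigma_v\circ G^\tau)^n\circ\mathcal{E}(c)=F^n(c)=0^{\Z^d}$. This avoids ever having to show that $F$ itself is invertible, so if you pursue your write-up you should promote the periodic-configuration fallback to the main argument; with that substitution the proof is complete.
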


The argument is relatively simple, and uses the known fact that reversible cellular automata can be locally but not globally universal. We note that a more technical and general version of this statement was presented in \cite{hertling1998embedding}. We begin by introducing reversible automata.

A cellular automaton with global rule $F: S^{\Z^d} \rightarrow S^{\Z^d}$ is reversible if $F$ is a bijection. In this case, since $F^{-1}$ is continuous and commutes with all shifts, the Curtis-Hedlund-Lyndon Theorem \ref{thm:og_curtis_hedlund_lyndon} implies that it is also the global rule of a CA. While there is an algorithm that decides the reversibility of each 1D CA, this problem is proven to be undecidable for dimensions $d \geq 2$ by Kari \cite{kari1990reversibility}; a general overview of results regarding reversible CAs is \cite{kari2005reversible}. Reversible CAs have been widely studied in the context of their computational capabilities; we list a selection of results \cite{intrinsic_universality_of_reversible_1D_ca, reversible_spacetime_simulation_of_cas, morita1989computation, imai2000computation}.

\begin{theorem}[Reversible CAs are locally universal \cite{toffoli1977computation, morita1995}]\label{thm:reversible_cas_local_universality}
    For every $d \in \N$ there exists a $d$-dimensional reversible cellular automaton which is locally universal.
\end{theorem}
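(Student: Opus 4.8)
The plan is to handle $d=1$ and $d\ge 2$ separately, in both cases extracting the reversible locally universal CA from a known reversibility-preserving simulation theorem together with the composition results already established above, in particular Lemma~\ref{lemma:local_global_relation}.

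\textbf{The case $d=1$.} It is a classical fact (Lecerf; Bennett) that every Turing machine is simulated, with only a computable --- indeed polynomial --- time and space overhead, by a \emph{reversible} Turing machine; in particular there is a universal reversible Turing machine $\mathcal{R}_{\mathrm{univ}}$. By the construction of Morita and Harao \cite{morita1989computation}, made real-time by Dubacq \cite{dubacq1995simulate} (see also the finite-configuration variant of Morita \cite{morita1995}), there is a one-dimensional \emph{reversible} CA $\mathcal{B}_1$ that locally simulates $\mathcal{R}_{\mathrm{univ}}$: a configuration of $\mathcal{R}_{\mathrm{univ}}$ --- its finite non-blank tape segment together with the head position and internal state --- is written block by block into a bounded region of $\mathcal{B}_1$ over its quiescent background, one step of $\mathcal{R}_{\mathrm{univ}}$ is tracked by a constant number of steps of $\mathcal{B}_1$, and $\mathcal{R}_{\mathrm{univ}}$'s configuration is recovered by the inverse block code. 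I would then verify that the induced encoder, decoder and delay function meet Definition~\ref{def:local_simulation}: the quiescent state is a basic admissible background, the encoder runs in time linear in the tape length, the decoder in time linear in the size of its readout window, and the delay function is constant. Hence $\mathcal{B}_1$ is locally universal.

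\textbf{The case $d\ge 2$.} Here I would invoke Toffoli's embedding theorem \cite{toffoli1977computation}: any $(d-1)$-dimensional CA is globally simulated by a $d$-dimensional \emph{reversible} CA. Applying it to a $(d-1)$-dimensional locally universal CA $\mathcal{A}_{d-1}$ --- which exists, the existence of a locally universal CA in every dimension being already noted in the proof of Theorem~\ref{thm:local_global_relation}, and for $d-1=1$ one may take the $\mathcal{B}_1$ above or Rule~110 --- produces a $d$-dimensional reversible CA $\mathcal{B}_d$ which, in the lower-dimensional convention of Remark~\ref{rem:lower_dim_lifting}, globally simulates $\mathcal{A}_{d-1}$, i.e.\ globally simulates the $d$-dimensional lift $\mathcal{A}'$ of $\mathcal{A}_{d-1}$. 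Local universality passes to this lift: one extends the encoder's output region and each admissible background constantly along the new coordinate, which turns polyhedral regions into prisms and basic backgrounds into basic backgrounds, while the encoder, decoder and delay function gain only linear overhead. So $\mathcal{A}'$ is locally universal, and Lemma~\ref{lemma:local_global_relation} yields that $\mathcal{B}_d$ locally simulates a universal Turing machine. Since $\mathcal{B}_d$ is reversible by construction, the statement follows for every $d$. (A self-contained alternative for $d\ge 2$ is to iterate Toffoli starting from $\mathcal{B}_1$, each step preserving reversibility and, via Lemma~\ref{lemma:local_global_relation}, local universality.)

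\textbf{Main obstacle.} The difficulty is calibration rather than construction: Toffoli's and Morita's theorems predate the precise notions of Definitions~\ref{def:local_simulation} and~\ref{def:global_simulation}, so one must check that the encoders and decoders they supply are genuinely cellular transformata --- hence implementable by Turing machines within the complexity bound $\beta(n)$ of Definition~\ref{def:local_simulation} --- that each decoder's domain is a subshift of finite type, and that the time dilations and spatial shifts fit the $\sigma_w\circ(\sigma_v\circ F)^\tau$ form needed for Lemma~\ref{lemma:local_global_relation} to apply verbatim; one must also confirm that reversibilizing a \emph{universal} Turing machine leaves it universal, which is immediate from the computable overhead of the reversibilization. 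None of these points is deep --- reversibility and universality come essentially for free from the cited theorems --- but together they are where the work lies. We note that the strict separation from global universality, namely that such reversible locally universal CAs are not globally universal, is the content of the subsequent results built on Hertling \cite{hertling1998embedding} and is not needed here.
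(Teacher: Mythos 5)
Your proposal is correct and follows essentially the same route as the paper: Toffoli's reversible embedding for $d\ge 2$ combined with Lemma~\ref{lemma:local_global_relation}, Morita's reversible simulation results for $d=1$, and the acknowledgment that the remaining work is verifying these classical constructions against Definition~\ref{def:local_simulation}. The only cosmetic difference is that for $d=1$ you go through reversible Turing machines (Morita--Harao plus Bennett's reversibilization) rather than directly through Morita's 1995 simulation of irreversible 1D CAs on finite configurations, but both are standard and the paper itself cites both.
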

\begin{proof}
    Toffoli famously showed that for any $d$-dimensional CA can be simulated by a $d+1$-dimensional reversible CA \cite{toffoli1977computation}, which proves the result for $d \geq 2$. Later, in 1995, Morita showed the result for $d=1$, \cite{morita1995}. To be precise, one has to verify the constructions from the two works are compliant with our Definition \ref{def:local_simulation} of local simulation, which is a straightforward yet technical exercise.
\end{proof}

We proceed by showing that any reversible CA is limited in terms of what it can simulate \emph{globally}.

\begin{theorem}[Reversible CAs cannot be globally universal]
    Let $\mathcal{R}$ be a $d$-dimensional reversible CA with global rule $G: S^{\Z^d} \rightarrow S^{\Z^d}$ and let $\A$ be a CA with states $\2 \coloneqq \{0,1 \}$ whose global rule $F$ is a constant mapping sending everything to $0^{\Z^d}$. Then, $\mathcal{R}$ cannot globally simulate $\A$.
\end{theorem}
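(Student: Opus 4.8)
The plan is to contradict the existence of a global simulation by playing the bijectivity of $\mathcal{R}$'s global rule against the fact that $\A$'s global rule collapses everything to a single configuration. Suppose, toward a contradiction, that $\mathcal{R} = (S^{\Z^d}, G)$ globally simulates $\A = (\2^{\Z^d}, F)$; let $v \in \Z^d$, $\tau \in \N$, the encoder $\E : \2^{\Z^d} \to S^{\Z^d}$, and the decoder $\D : S^{\Z^d} \rightharpoonup \2^{\Z^d}$ be the data from Definition~\ref{def:global_simulation}, write $X := \dom(\D)$, and set $H := \sigma_v \circ G^{\tau}$. First I would record three elementary facts. (i) $H$ is a bijection of $S^{\Z^d}$: since $\mathcal{R}$ is reversible, $G$ and hence $G^{\tau}$ are bijections, and $\sigma_v$ is a bijection. (ii) $H$ commutes with every shift operator (as do $G^{\tau}$ and $\sigma_v$), so $H$ maps any configuration that is periodic with a period lattice $L$ to another $L$-periodic configuration. (iii) By the observation already used in the proof of Lemma~\ref{lemma:global_sim_iterations}, the simulation identity \eqref{eq:global_simulation} forces $H(X) \subseteq X$, and consequently $\D(H(c)) = F(\D(c)) = 0^{\Z^d}$ for every $c \in X$. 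Combining (iii) with the definition $Y := \D^{-1}(0^{\Z^d}) \subseteq X$ gives $H(X) \subseteq Y$.

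Next I would exploit injectivity of $H$ on periodic points. Fix a full-rank sublattice $L \subseteq \Z^d$ and let $P_L$ denote the set of $L$-periodic configurations lying in $X$. This set is finite, since an $L$-periodic configuration over the finite alphabet $S$ is determined by its values on a finite set of coset representatives of $L$ in $\Z^d$. By facts (ii) and (iii), $H$ restricts to a map $P_L \to P_L$, and by (i) this restriction is injective; an injective self-map of a finite set is a bijection, so $P_L = H(P_L) \subseteq H(X) \subseteq Y$. In words: \emph{every} periodic configuration of $X$ decodes to $0^{\Z^d}$.

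Finally I would exhibit a periodic configuration of $X$ that does \emph{not} decode to $0^{\Z^d}$, completing the contradiction. Take the all-ones configuration $1^{\Z^d} \in \2^{\Z^d}$, which is fixed by every shift. Property~3 of the encoder in Definition~\ref{def:global_simulation} supplies full-rank matrices $M_1, M_2$ with $\sigma_{M_2 \cdot w} \circ \E = \E \circ \sigma_{M_1 \cdot w}$ for all $w \in \Z^d$; hence $\sigma_{M_2 \cdot w}(\E(1^{\Z^d})) = \E(\sigma_{M_1 \cdot w}(1^{\Z^d})) = \E(1^{\Z^d})$, so $\E(1^{\Z^d})$ is periodic with lattice $M_2 \Z^d$. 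Since $\D \circ \E = \mathrm{id}$, we have $\E(1^{\Z^d}) \in X$ and $\D(\E(1^{\Z^d})) = 1^{\Z^d} \neq 0^{\Z^d}$, so $\E(1^{\Z^d}) \notin Y$. Taking $L = M_2 \Z^d$ in the previous step yields $\E(1^{\Z^d}) \in P_L \subseteq Y$ — a contradiction. Hence $\mathcal{R}$ cannot globally simulate $\A$.

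I do not anticipate a real obstacle here: the argument uses only that $H$ is a shift-commuting bijection and that $\E$ and $\D$ are equivariant with respect to full-rank sublattices, so the single point that needs care is the bookkeeping around Definition~\ref{def:global_simulation} — verifying $H(\dom \D) \subseteq \dom \D$ and checking that the rescaling matrices and the time-delay play no role beyond what is stated above. As a byproduct the same proof shows that a reversible CA cannot globally simulate \emph{any} CA whose global rule fails to be surjective, which recovers (the relevant special case of) Hertling's result.
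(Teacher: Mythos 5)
Your proposal is correct and follows essentially the same route as the paper: both arguments exploit that $\sigma_v\circ G^{\tau}$ is an injective, shift-commuting map preserving the finite set of suitably periodic configurations in $\dom(\D)$, hence a bijection there, while the simulation equation forces everything in the image to decode to $0^{\Z^d}$, contradicting $\D(\E(c))=c$ for a nonzero periodic $c$. The only (cosmetic) difference is that the paper extracts an $n$ with $(\sigma_v\circ G^{\tau})^n\E(c)=\E(c)$ and invokes the iteration lemma, whereas you argue directly that $P_L=H(P_L)\subseteq\D^{-1}(0^{\Z^d})$.
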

\begin{proof}
    For contradiction, we will assume that there exists a shift vector $v \in \Z^d$, a time delay $\tau \in \N$, an encoder $\E: \2^{\Z^d} \rightarrow S^{\Z^d}$, and a decoder $\D: S^{\Z^d} \rightharpoonup \2^{\Z^d}$ witnessing that $\mathcal{R}$ globally simulates $\A$. In particular, since $\E$ satisfies the conditions from Definition \ref{def:global_simulation}, there exist sequences $V = (v_1, \ldots, v_d)$ and $W = (w_1, \ldots, w_d)$ of linearly independent vectors from $\Z^d$ such that $\sigma_{w_i} \circ \E =\E \circ \sigma_{v_i}$ for all $i \in \{1, \ldots, d\}$. We list a series of observations:
    \begin{enumerate}
        \item Since $\mathcal{R}$ is reversible, $G$ is injective, and $\sigma_v \circ G^\tau$ is also injective.
        \item Because of the condition $\D \circ \E = \mathrm{id}_{\2^{\Z^d}}$, $\D$ cannot be a constant mapping (this is important to see that $\D$ cannot be the one computing the constant function of $\A$).
        \item Let $c \in \2^{\Z^d}$ be a $V$-periodic configuration. Then, $\E(c)$ is a $W$-periodic configuration. We denote by $S_W$ the set of all $W$-periodic configurations in $S^{\Z^d}$. Since $W$ spans a $d$-dimensional sublattice of $\Z^d$, the set $S_W$ is finite.
        \item Clearly, the mapping $\sigma_v \circ G^\tau$ preserves the $W$-periodicity of a configuration, since it is a CA.
    \end{enumerate}
Now, we can finish the proof. Let $c \in \2^{\Z^d}$ be an arbitrary $V$-periodic configuration such that $c \neq 0^{\Z^d}$. Since $\sigma_v \circ \restr{G^\tau}{S_W}: S_W \rightarrow S_W$ is an injective mapping on a finite set, there exists some $0 < n < |S_W|$, such that $(\sigma_v \circ G^\tau)^n \circ \E(c) = \E(c)$. This gives the contradiction since on one hand $\D \circ \E(c) = c \neq 0^{\Z^d}$ and on the other hand $\D \circ \E(c) = \D \circ (\sigma_v \circ G^\tau)^n \circ \E(c) = F^n(c) = 0^{\Z^d}$.
\end{proof}

\section{One-dimensional universal self-replicator}
\label{appsec_1d_sr}

In von Neumann's construction of a universal self-replicator \cite{neumann_vnsr}, the new copy of the machine is built with a vertical offset. Our goal is to modify this construction so that the copies are built at the exact same height as the original machine. This will allow us to compress the dynamics into a single dimension, resulting in a 1D CA capable of universal self-replication. The general idea is illustrated in Fig.~\ref{fig:vnsr_general_idea}.

\begin{figure*}[htbp!]
    \centering
    \begin{subfigure}[t]{0.5\linewidth}
        \centering
        \includegraphics[height=1.4in]{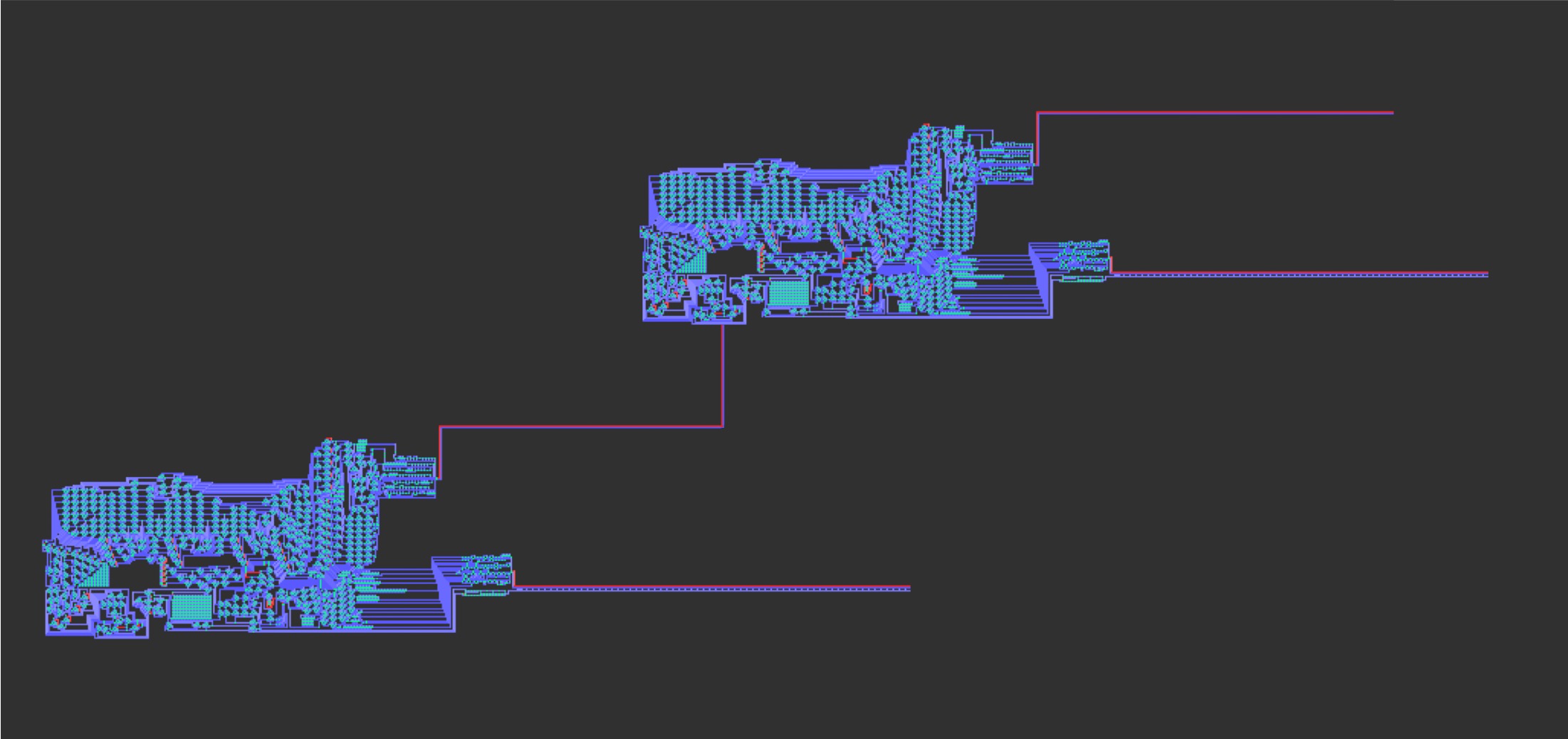}
        \caption{}
        \label{fig:vnsr_general_idea_a}
    \end{subfigure}%
    \begin{subfigure}[t]{0.5\linewidth}
        \centering
        \includegraphics[height=1.4in]{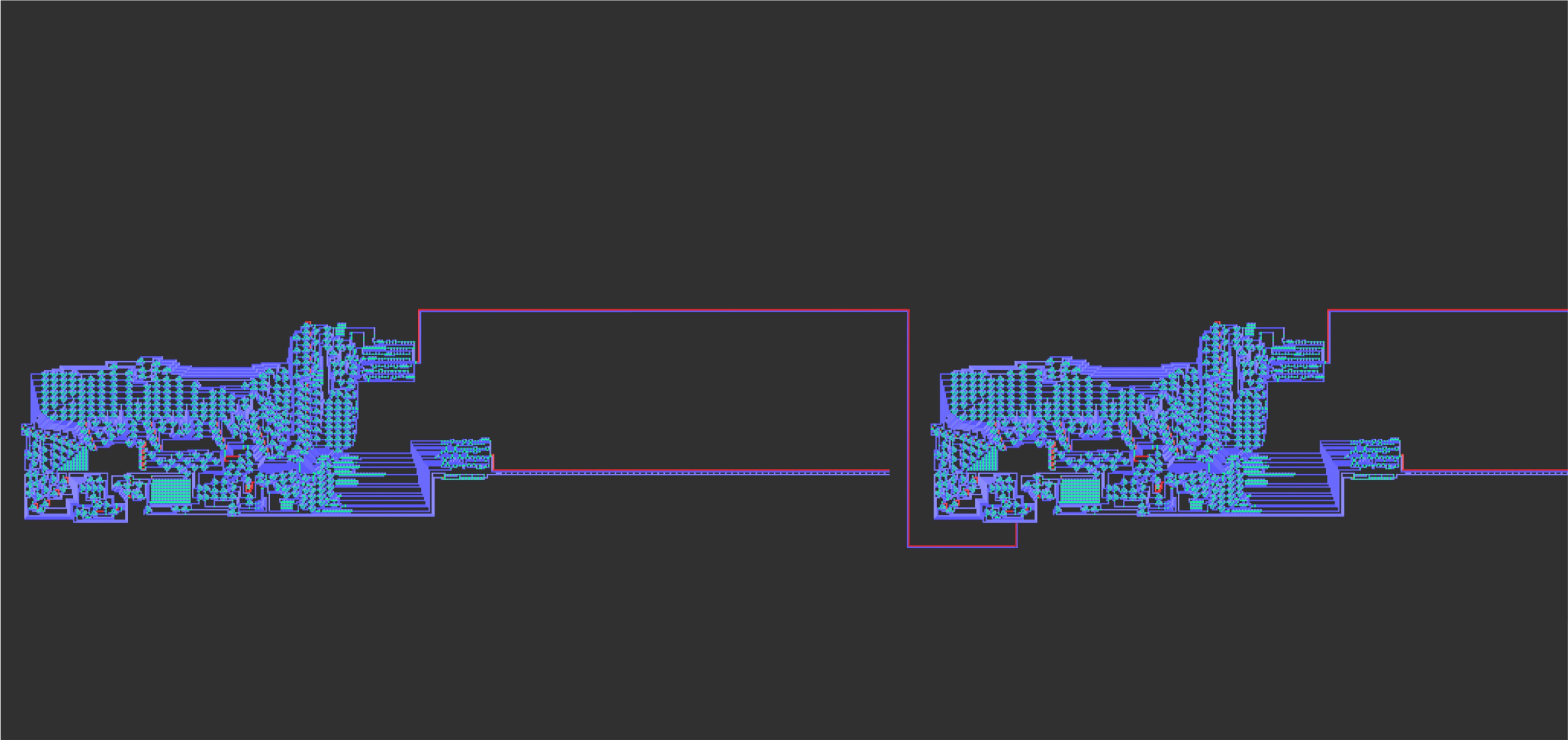}
        \caption{}
    \end{subfigure}
    \caption{General idea behind the self-replicator modification. (a) Scheme of the original universal self-replicator, as implemented by Buckley  \cite{buckley2008signal} in the software Golly \cite{golly}. (b) Illustration of the modified self-replicator we propose that has no vertical offset.}
    \label{fig:vnsr_general_idea}
\end{figure*}

Concretely, we will modify the construction of J. W. Thatcher \cite{thatcher_construction} who provides a simplification of von Neumann's universal self-replicator.

\subsection{Overview of the original construction}

The Von Neumann Self-Replicator (VNSR) consists of three functional blocks: the supervisory unit, the tape unit, and the constructing unit; as illustrated in Thatcher's Fig.~\ref{fig:vnsr_scheme}. The tape unit is responsible for reading binary data from the tape (which is indexed by positive integers, and extends to the right) and transmitting the corresponding signals to the information channel. The constructing unit operates a constructing arm which can move around the plane and, on the basis of incoming inputs, construct an arbitrary array of (inactive) cell states. The supervisory unit represents the ``universal computer'' that communicates via the information channel both with the tape unit -- giving the reading head instructions to read a symbol or move; and with the constructing unit -- sending instructions to the construction arm to move to a certain location and to create a new state there. The supervisory unit consists of components which represent internal states of the machine and also encode transitions between the states based on current state and potentially, on the tape symbol read.

\begin{figure}[htbp!]
        \centering
        \includegraphics[width=0.9\linewidth]{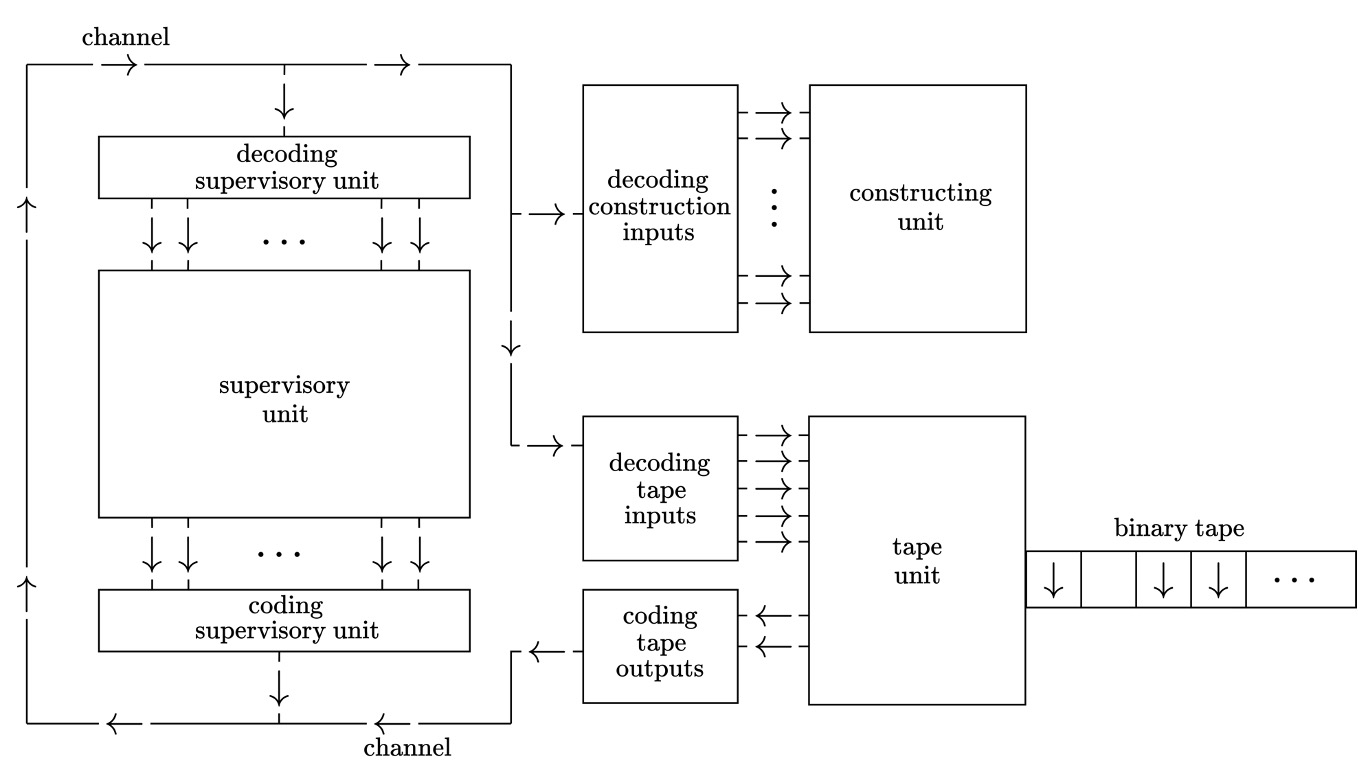}
        \caption{General organization of the VNSR adapted from~\cite{thatcher_construction}.}
        \label{fig:vnsr_scheme}
    \end{figure}

The original VNSR can be made to have the following modes of operation:
\begin{enumerate}
    \item \textbf{Computation:} VNSR performs computation based on its tape instructions. If it halts, we assume the reading head returns to the first symbol of the tape and VNSR moves to phase 2.
    \item \textbf{Self-Replication:} the VNSR scans the tape; upon recognizing the self-replication instructions it starts the self-replication process.
\end{enumerate}

The construction arm can move upwards and to the right from its original position, and can also retract to the left and downwards. As such, it can operate in a quadrant of the plane where the VNSR's copy is built, as illustrated in Fig.~\ref{fig:construction_arm_restriction}. This creates a vertical and horizontal offset, as illustrated in Fig.~\ref{fig:vnsr_general_idea_a}. Our goal is to modify Thatcher's construction such that we get rid of the vertical offset, which will result in creating the ``offsprings'' at the exact same height as the original machine.

\begin{figure}[htbp!]
        \centering
        \includegraphics[width=0.7\linewidth]{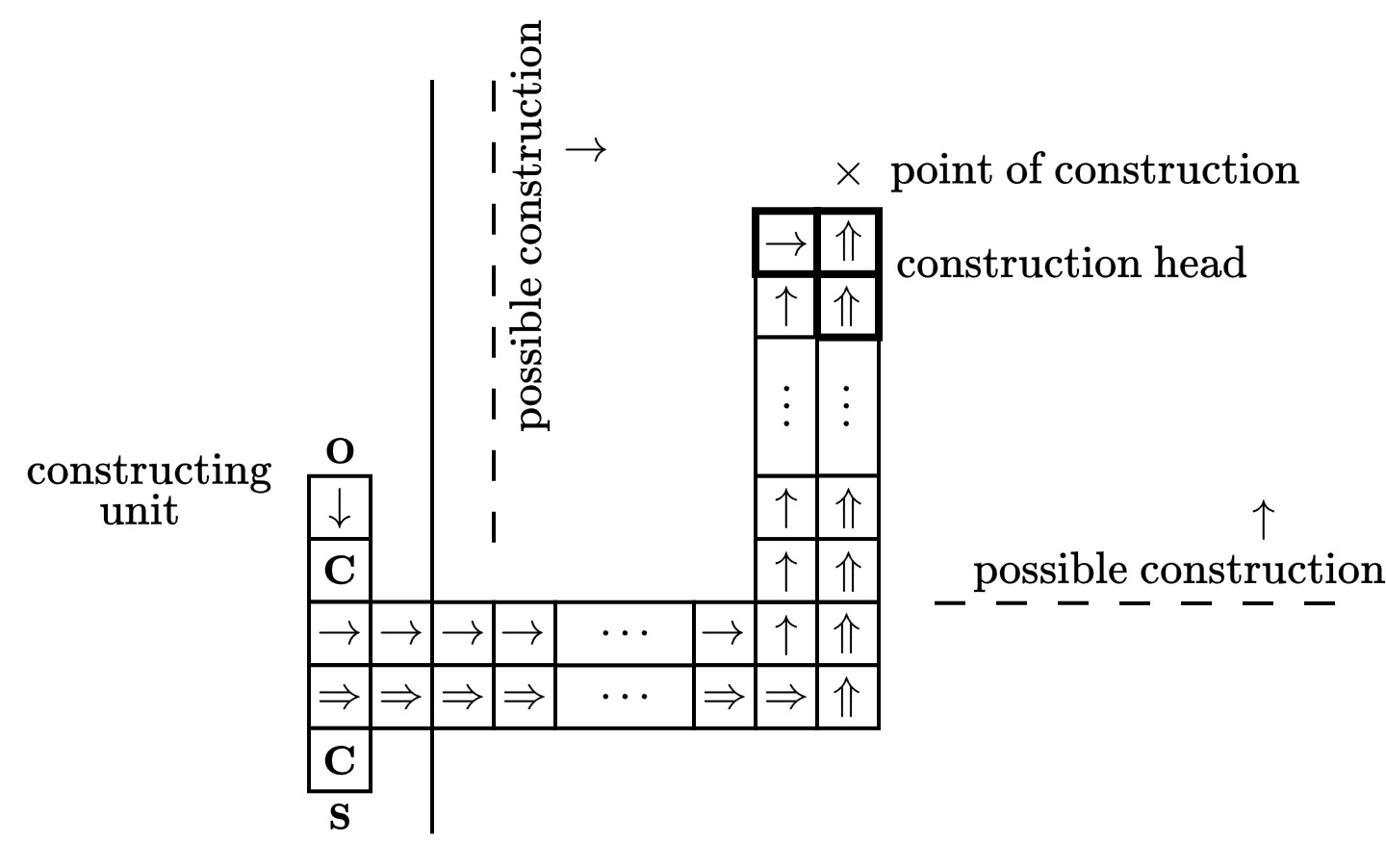}
        \caption{Spatial restriction of the construction arm's movement, adapted from~\cite{thatcher_construction}.}
        \label{fig:construction_arm_restriction}
    \end{figure}
    
\subsection{Modification for the one-dimensional setting}

We modify the original design so that at the beginning of the machine's self-replication phase, the constructing arm has been prolonged so that its initial position is shifted ``right'' and ``downward'' to allow the newly constructed machine to be at the exact same height as its ``parent machine''. Our design of the universal self-replicator will have the following modes of operation:
\begin{enumerate}
    \item \textbf{Computation:} Analogous to the original.
    \item \textbf{Arm extension:} The tape reader reads each tape symbol one by one, progressively moving to the right, upon reaching and recognizing a stop sequence $\omega$ marking the end of the non-empty tape. For each symbol read, the modified supervisory unit instructs the constructing arm to be prolonged by one cell to the right. Once the process is over, we have a guarantee that the constructing arm's head is far enough to the right not to interfere with the tape's content. Afterwards, a special sequence is emitted to the construction arm that prolongs the arm downwards by a constant number of cells, ensuring the constructing arm's new initial position is at the exact height needed.
    \item \textbf{Self-replication:} Analogous to the original.
\end{enumerate}

Concretely, the modification we suggest is quite straightforward and follows the strategy below:
\begin{itemize}
    \item Lemma \ref{lemma:start_of_extension_phase}: The supervisory unit can be modified to accommodate for the extension of the construction arm in Phase 2.
    \item Lemma \ref{lemma:construction_arm_position}: The construction arm can be modified so that at Phase 2., it is possible for it to move to the right and downwards. 
    \item Lemma \ref{lemma:bounded_height} The modifications summarized by the lemmas above result in a modified VNSR whose height is bounded by 1000.
\end{itemize}

Below, we present the strategy in more detail.

\begin{lemma}\label{lemma:start_of_extension_phase}
    The supervisory unit of the VNSR can be modified in such a way that after the computation phase, the VNSR enters a new phase called the extension phase. During this phase, the supervisory unit uses the tape reader to scan the whole tape, and for each symbol, the supervisory unit sends a specific sequence $\sigma$ to the construction unit. Subsequently, the modified supervisory unit returns the reading head to the first tape symbol.
\end{lemma}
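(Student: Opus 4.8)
The plan is to realize the extension phase as a small finite-state program added to the supervisory unit, built entirely out of instruction primitives that Thatcher's construction already provides, so that the modification only enlarges the finite configuration of the supervisory block and leaves the underlying CA rule (and hence the Turing-universality of the computation phase) untouched. First I would recall the interface in Fig.~\ref{fig:vnsr_scheme}: the supervisory unit is a finite control -- a collection of ``internal-state'' components together with ``transition'' components -- that drives the tape unit and the constructing unit through the information channel, and the computation phase already uses exactly the primitives we need here, namely \emph{read the current tape symbol}, \emph{move the reading head left/right}, \emph{return the head to the first tape cell}, and \emph{emit a prescribed finite signal word along the information channel toward the constructing unit}. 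By the convention of Mode~1 (Computation), that phase terminates in a designated halting configuration with the head parked at tape position~$1$, which is the entry point of the extension phase.

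Then I would specify the extension-phase program as three groups of new internal states. State $e_{\mathrm{scan}}$ issues a \emph{read} command; the returned bit(s) are buffered in a constant-size window of extra states and compared against the stop word $\omega$ (fixed once and for all as a short reserved bit pattern). If the buffered word is a genuine tape symbol, control passes to $e_{\mathrm{emit}}$, which sends the fixed signal sequence $\sigma$ toward the constructing unit, then issues \emph{move head right}, then returns to $e_{\mathrm{scan}}$. If $\omega$ is recognized, control passes to $e_{\mathrm{rewind}}$, which issues \emph{return head to first cell} (equivalently, repeated \emph{move left} until the left end marker is detected) and then hands control to the original self-replication machinery (Mode~3), which still expects the head at position~$1$ and is therefore unchanged. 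Since the tape is finite, the $e_{\mathrm{scan}}$--$e_{\mathrm{emit}}$ loop runs only finitely many times, so the extension phase terminates. Each command above is realized by components Thatcher already wires in, so the modification consists solely of adjoining finitely many new internal-state and transition components; in particular $\sigma$ is the \emph{same} word for every symbol, which is exactly what the subsequent arm-extension lemma (e.g.~\ref{lemma:construction_arm_position}) will require.

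The main obstacle I expect is bookkeeping on the information channel rather than anything conceptual. Von Neumann's channels are one-directional transmission lines, so one must sequence the ``emit $\sigma$'' pulse and the following ``move head right'' pulse with fixed built-in delays (extra ``wire'' states) so that the signals traveling toward the tape unit and toward the constructing unit neither collide nor arrive out of order, and $\sigma$ must be chosen long enough to be parsed unambiguously by the constructing unit. The second fiddly point is recognizing $\omega$ from a reader that returns one bit at a time: this forces a constant-size lookahead buffer in the control, which I would fix at the outset so the whole addition remains finite and purely local. Everything else -- that after rewinding the head is genuinely back at cell~$1$, and that the computation phase is bit-for-bit unchanged -- is immediate from the construction.
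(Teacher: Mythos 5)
Your proposal is correct and follows essentially the same route as the paper: your states $e_{\mathrm{scan}}$, $e_{\mathrm{emit}}$, $e_{\mathrm{rewind}}$ correspond directly to the paper's added components $C_{n+1}$, $C_{n+2}$, $C_{n+3}$, built from the same Thatcher primitives (recognizers, pulsers, delays), with the same end-of-tape-marker test driving the scan--emit loop and the same hand-off back to the original machinery. The only substantive item the paper includes that you omit is an explicit height bound on the enlarged supervisory unit, which is not part of this lemma's statement but is used later in Lemma~\ref{lemma:bounded_height}.
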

\begin{proof}
    We will assume the beginning of the tape is delimited by the sequence $\omega_1$ and the end of the non-empty tape is delimited by $\omega_2$; similar to the original construction by von Neumann \cite{neumann_vnsr} p.~115. Further, we assume the reading head is positioned on the first tape symbol.

    Thatcher's architecture of the supervisory unit is shown in Fig.~\ref{fig:supervisory_unit_scheme}. Each component $C_i$ contains a decoder which, upon recognizing the encoding of its corresponding index $i$ in the channel, activates the component $C_i$. Upon being activated, the component $C_i$ is able to either:

    \begin{figure}[htbp!]
        \centering
        \includegraphics[width=0.6\linewidth]{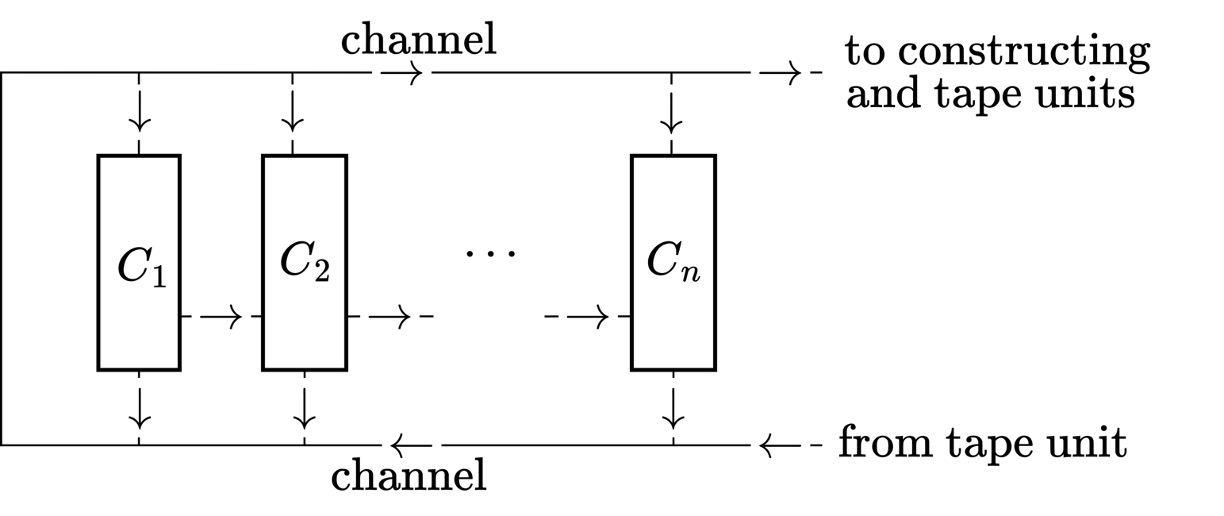}
        \caption{Scheme of the internal organization of the supervisory unit.}
        \label{fig:supervisory_unit_scheme}
    \end{figure}

\begin{itemize}
    \item Send an instruction to the tape unit to read the next symbol, and conditional on the result, broadcast the index $j$ of the next component that should be activated.
    \item Send an instruction to the construction arm, and activate the consecutive unit with index $i+1 \bmod n$.
\end{itemize}
We propose adding three more components, $C_{n+1}$, $C_{n+2}$, $C_{n+3}$ where:
\begin{itemize}
    \item $C_{n+1}$ is activated after the computation phase is finished, and prompts the reading head at position $i$ to read $k$ consecutive symbols from the tape $a_1 \ldots a_k$ where $k = |\omega_2|$, and to return to the tape index $i+1$. $C_{n+1}$ contains a recognizer deciding whether $a_1 \ldots a_k = \omega_2$ marks the end of the tape. If not, $C_{n+1}$ activates the component $C_{n+2}$. If yes, it activates $C_{n+3}$.
    \item $C_{n+2}$ sends the sequence $\sigma$ to the construction unit, which encodes the instruction to extend the construction arm one step to the right. Then, it activates $C_{n+1}$.
    \item $C_{n+3}$ emits a signal to the construction unit with the instruction to extend the construction arm $k_1$ steps to the right, then $k_2$ steps downwards and finally, three steps to the right (where $k_1$ is a constant ensuring the arm will not interfere with the original machine's tape and $k_2$ is a constant corresponding to the height of the original machine). Afterwards, $C_{n+3}$ activates the component $C_1$ which proceeds with returning the reading head to the first tape symbol and with the self-replication phase.
\end{itemize}
Each component $C_i$ consists of a limited number of the following units:
\begin{itemize}
    \item the recognizer $R(\underline{u})$ recognizing the encoding of sequence $u$ form the information channel: either to recognize the sequence $\underline{i}$ marking the activation of the component $C_i$ or to recognize the tape symbols currently read) (each of height 7; \cite{thatcher_construction} p.~33)
    \item the pulser $P(\underline{u})$ which upon activation outputs a series of instructions $u$ into the channel: either for the construction arm or for the reading head (each of height 3; \cite{thatcher_construction}, p.~28) 
    \item A periodic pulser $PP(1)$ which ensures that an activated component $C_i$ stays activated while waiting for the tape symbols being read (height 6, \cite{thatcher_construction}, p.~35)
    \item A delay that ensures proper operation of the tape and constructing units (height at most 300; \cite{thatcher_construction}, p.~95)
\end{itemize}
This allows us to bound the height of the supervisory unit by 500.
\end{proof}

\begin{lemma}\label{lemma:construction_arm_position} At the beginning of the extension phase, the construction arm's initial position can be modified so that the arm can grow down and to the right and afterwards, can create the arm's head to the right of the original VNSR's end of the tape.
\end{lemma}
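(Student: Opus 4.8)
The plan is to observe that von Neumann's construction process is direction-agnostic, so Thatcher's construction unit can be reprogrammed to drive the arm along an L-shaped trajectory, and then to check that the program of Lemma~\ref{lemma:start_of_extension_phase} lands the head where claimed. Recall from \cite{thatcher_construction, neumann_vnsr} that the constructing arm is a finite transmission line ending in a small head assembly; the unit controls it by injecting a finite pulse word, which a bent line carries transparently to the head, and the head responds by building an adjacent cell and advancing. The elementary build-and-advance step is specified by a pulse word together with the choice of which of the head's four neighbors to build into, and von Neumann's sensitization sequence (a quiescent cell passing through the sensitized states to a designated final state whose orientation is fixed relative to the sensitizing pulse) is isotropic, so every cardinal direction is available. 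Thatcher instantiates only \emph{extend up} and \emph{extend right} because his arm stays in the upper-right quadrant, but \emph{extend down} (and \emph{extend left}, and the retractions) are realized by the same protocol; one way to see this is to apply the quarter-turn symmetries of von Neumann's rule, under which the directional transmission and sensitized states permute, to Thatcher's head assemblies and pulse words. Since the pulse-word semantics is head-relative and therefore composes, the construction unit can be programmed to emit, on the cue produced by component $C_{n+3}$ of Lemma~\ref{lemma:start_of_extension_phase}, the word ``extend right $k_1$ times; turn; extend down $k_2$ times; turn; extend right three times''.

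It remains to fix the constants and conclude. By Lemma~\ref{lemma:start_of_extension_phase} each tape symbol read during the scan appends one cell to the (rightward) arm, so when the end marker $\omega_2$ is recognized the rightward segment already reaches past the non-blank portion of the tape; taking $k_1$ larger than the width of $\omega_2$ plus a fixed clearance then puts the head strictly to the right of the tape's rightmost non-blank cell. The turn followed by $k_2$ \emph{extend down} steps, with $k_2$ equal to the (finite, constant) height of the modified machine---at most $1000$ by Lemma~\ref{lemma:bounded_height}---drops the head to the base row, and three further \emph{extend right} steps seat it at a fixed offset, which is exactly the starting position for Thatcher's unmodified, head-relative self-replication routine. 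Consequently, at the start of the self-replication phase the arm's head has been created to the right of the original VNSR's end of tape and at the same height, which is the assertion of the lemma. (The inverse moves \emph{retract up} and \emph{retract right}, needed only if the parent is to be restored after replicating, come from the same isotropic protocol and are not used here.)

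The step I expect to be the main obstacle is the pulse timing at the two new turns. Von Neumann's construction is phase-sensitive at confluent cells and at the head, so one must verify that the \emph{extend down} word and the second \emph{extend right} word mesh with the pulses arriving down the straight part of the arm without dephasing---a finite check entirely analogous to the timing analysis Thatcher already carries out where his snaking arm changes direction---and, separately, that the bent arm remains a transparent delay line throughout, so that the construction data injected later for the self-replication phase is unaffected by the new corners; the latter follows from Thatcher's discipline of injecting construction data only after the head is in position, together with the transparency of bent transmission lines to pulse words.
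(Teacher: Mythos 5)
Your proposal is correct and follows essentially the same route as the paper: extend the arm one cell rightward per tape symbol during the scan, then append a constant-length L-shaped extension (right $k_1$, down $k_2$, right three) triggered by $C_{n+3}$, and verify that the dual-channel pulse timing (the simultaneous arrival of the ordinary and special sequences at the arm's inputs and head) survives the new bends. The paper carries out the same steps, giving the explicit pulse words (10000 on \textbf{O}, 1011 on \textbf{S}) and checking the confluence property against its figures, so your isotropy discussion and timing check are just a more explicit rendering of the same argument.
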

\begin{proof}
    We modify the initial configuration of the constructing arm from the original Fig.~\ref{fig:construction_arm_init} a) to the new Fig.~\ref{fig:construction_arm_init} b). 

    \begin{figure}[htbp!]
        \centering
        \includegraphics[width=0.35\linewidth]{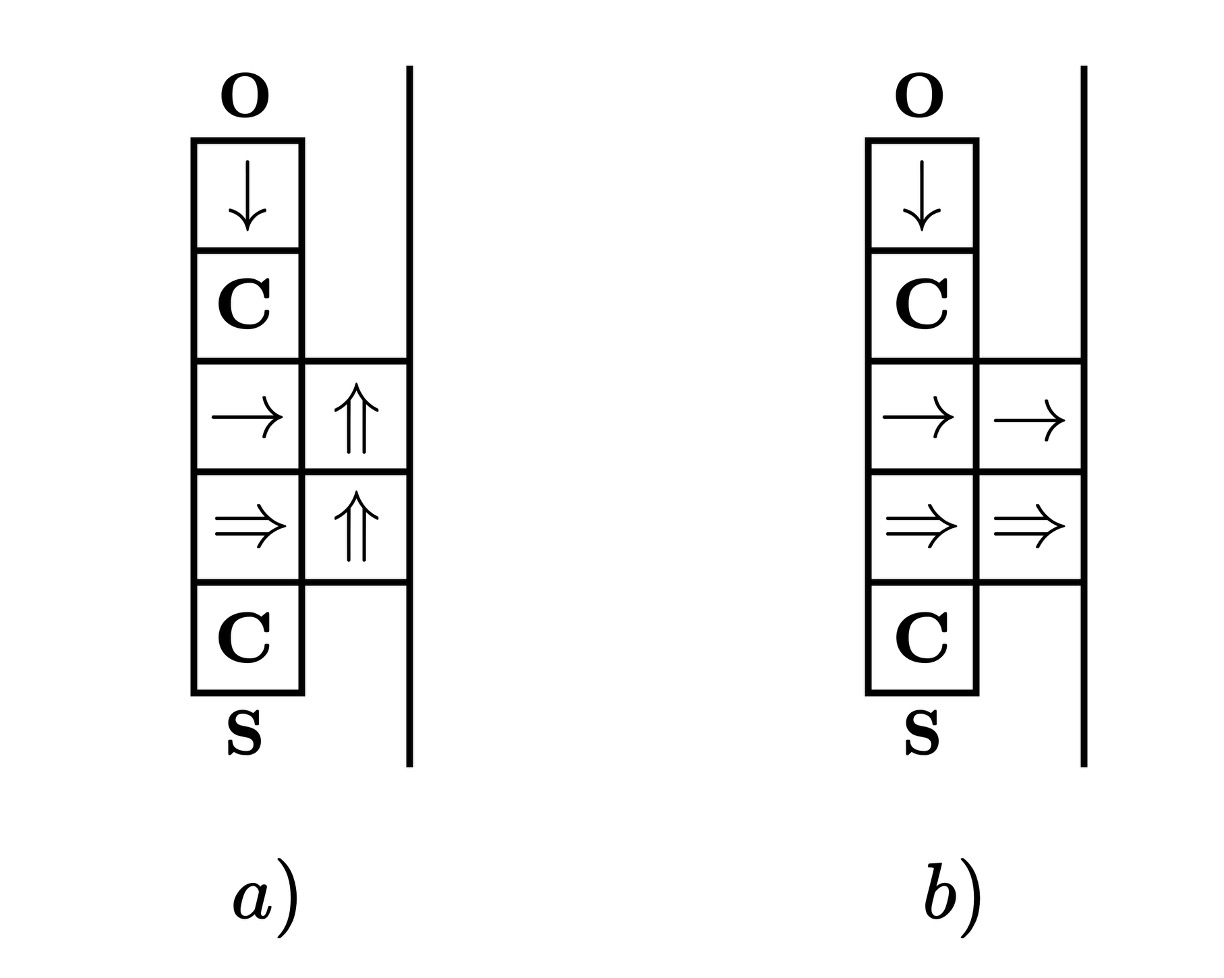}
        \caption{a) Original initial configuration of the constructing arm. b) Modified initial configuration that will result in a shifted constructing arm.}
        \label{fig:construction_arm_init}
    \end{figure} 

    Once the extension process starts, we assume that each non-blank symbol on the tape being read gets converted by the control unit to a signal sequence $\sigma$ arriving at the constructing unit into a pulser generating the sequence 10000 for the ordinary transmission channel input \textbf{O} and the sequence 1011 for the special transmission channel input \textbf{S}, resulting in the arm being prolonged by one cell to the right. After all the non-blank tape symbols are read, the state of the constructing arm is shown in Fig.~\ref{fig:construction_arm_final} a). Once the whole tape is read, a special signal is transmitted to the construction unit, encoding the final horizontal shift of the constructing arm by a constant number of cells, and subsequently prolonging it by a constant number of arrows downwards and to the right, as depicted in Fig.~\ref{fig:construction_arm_final} b). This ensures that the new machine will be built at the exact same height as its parent.
           
    \begin{figure}[htbp!]
        \centering
        \includegraphics[width=0.66\linewidth]{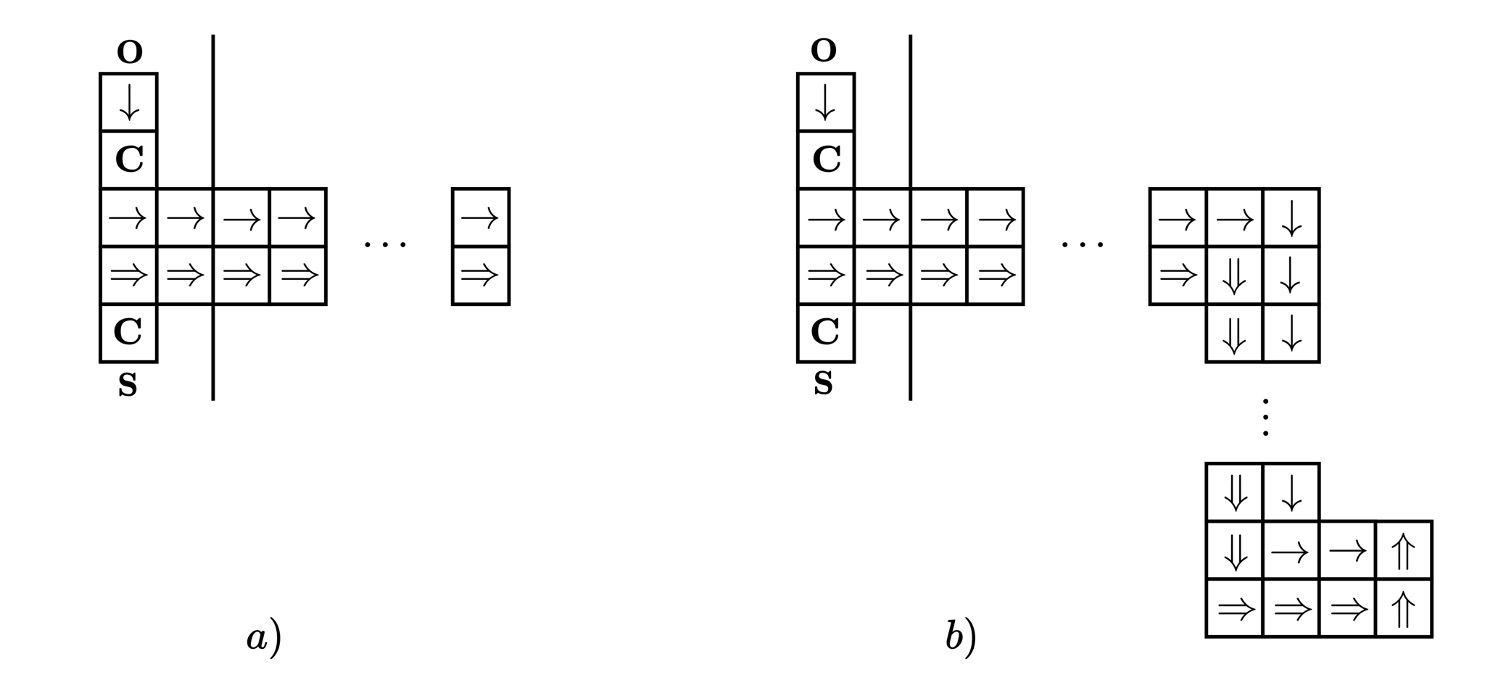}
        \caption{a) State of the constructing arm after all non-blank symbols are read; the length of the arm is proportional to the length of the tape. b) Final state of the shifted constructing arm.}
        \label{fig:construction_arm_final}
    \end{figure} 

    One important aspect of the original construction is the following: if the ordinary sequence of instructions is $i_1 i_2 \ldots i_m$ and the special sequence is $j_1 j_2 \ldots j_m$, then $i_1$ and $j_1$ will arrive at \textbf{O} and \textbf{S} at the same time. In addition, $i_1$ and $j_1$ will also arrive at the corresponding inputs to the constructing arm's head at the same time. We note that this confluence is preserved for the shifted constructing arm, as can be easily verified from Fig.~\ref{fig:construction_arm_final} b). This allows for the rest of the self-replication process to be logically organized in exactly the same way as in the original.
\end{proof}

\begin{remark} In most constructions, the `coding' part of the tape is read-only or otherwise not extended during self-replication.  In this case, the length that we need to extend the arm is the length of the coding part of the tape plus a fixed constant. Otherwise in constructions for which the coding part of the tape is manipulated during self-replications and is extended to a constant multiple of its original length, this is readily accommodated by extending the arm to a constant multiple of the coding tape length plus a universal constant.  (If the coding region is extended in a non-linear fashion this can also be readily accommodated in the extension of the arm.)
\end{remark}

\begin{lemma}\label{lemma:bounded_height} The VNSR resulting from the modification of previous lemmas stays within a strip of height at most 2000.
\end{lemma}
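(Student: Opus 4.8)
The plan is to account, phase by phase, for every row of the grid that the modified VNSR and its offspring can ever occupy, and to show that all of them lie in a common horizontal strip whose height we can make explicit. The only genuinely new ingredient beyond Lemmas~\ref{lemma:start_of_extension_phase} and~\ref{lemma:construction_arm_position} is a careful geometric bookkeeping of the relocated constructing arm.

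First I would bound the height $H_0$ of a single (parent) machine. By Lemma~\ref{lemma:start_of_extension_phase} the modified supervisory unit — Thatcher's unit together with the three added components $C_{n+1}, C_{n+2}, C_{n+3}$ — has height at most $500$. The tape unit, the constructing unit (excluding the arm), and the information channel joining the three units are taken verbatim from Thatcher's construction~\cite{thatcher_construction} up to the finitely many new components, so each occupies a fixed number of rows; summing the page-referenced bounds for the recognizers, pulsers, periodic pulsers, delays, tape head and channel gives $H_0 \le 1000$. Note also that the tape, although of unbounded length, and the constructing arm, although of unbounded horizontal extent after the extension phase, each occupy only a bounded number of rows, so neither contributes to the height; width is irrelevant to a strip bound.

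Next I would track the constructing arm across all modes of operation. During the computation phase it is idle and sits inside the constructing unit, hence within the parent's rows. During the arm-extension phase (Lemma~\ref{lemma:construction_arm_position}) it is prolonged one cell to the right per non-blank tape symbol — contributing no new rows — and then makes a single scripted extension $k_2$ cells downward and three cells to the right, where $k_2$ is chosen essentially equal to $H_0$ precisely so that the offspring will later be built at the parent's height. Thus after the extension phase the arm's head rests near row $-k_2$ relative to the parent's bottom row. During the self-replication phase the arm moves exactly as Thatcher's arm does, only rigidly translated: it can advance up and to the right to write offspring cells and retract down and to the left toward its (lowered) home, so its trajectory stays inside the quadrant anchored at the modified home position, and its vertical span is at most $H_0 + k_2 \le 2H_0$.

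Finally I would assemble the strip: the offspring is a verbatim copy of the parent, of height $H_0$, and by the choice of $k_2$ it occupies the same rows as the parent, so the parent machine, the offspring, all information channels, and every cell ever touched by the arm lie between row $-k_2$ (the arm's downward dip) and the parent's top row plus a fixed constant for channel stubs and the arm's turning region — a strip of height $H_0 + k_2 + O(1) = 2H_0 + O(1) \le 2000$, which with Lemmas~\ref{lemma:start_of_extension_phase}--\ref{lemma:bounded_height} yields Theorem~\ref{thm:1dvN}. The main obstacle is the arm analysis: one must verify that the \emph{relocated and lengthened} arm really stays inside its quadrant throughout a potentially very long self-replication, i.e.\ that translating Thatcher's arm and extending its stem by $k_2$ does not disturb the synchronization of its ordinary (\textbf{O}) and special (\textbf{S}) instruction streams. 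This is exactly where the confluence property recorded at the end of Lemma~\ref{lemma:construction_arm_position} does the work — the arm's head trajectory equals Thatcher's up to a fixed shift — after which the remaining height estimate is routine arithmetic on the cited constants.
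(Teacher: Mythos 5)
Your proposal is correct and follows essentially the same route as the paper: a component-by-component height bookkeeping using Thatcher's cited dimensions, the $\le 500$ bound on the modified supervisory unit from Lemma~\ref{lemma:start_of_extension_phase}, and the observation that the tape and the horizontally extended arm contribute no new rows, yielding a generous bound of $2000$. The only difference is where the slack from $1000$ to $2000$ is charged — you attribute it to the arm's downward excursion by $k_2 \approx H_0$, whereas the paper doubles the supervisory-unit estimate — and your explicit tracking of the arm's vertical dip is if anything slightly more careful than the paper's.
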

\begin{proof}
    First, we discuss an upper bound for the original VNSR's height. In~\cite{thatcher_construction}, Thatcher specifies the following sizes: 
\begin{itemize}
    \item tape control unit: height of 83 units, length of 107 units
    \item constructing unit: height of 33 units, length of 139 units
\end{itemize}
Though the supervisory unit's size is not specified, we discussed its upper bound to be 500 in Lemma \ref{lemma:start_of_extension_phase}. Thus, the original VNSR's height can be upper bounded by 1000.

Lemma \ref{lemma:start_of_extension_phase} requires the modification of the supervisory unit by adding extra components. As those are chained horizontally, this does not change the supervisory unit's height. Since the newly added components contain a few extra pulsers and recognizers to accommodate for their extended functionality, we propose an upper bound for the modified supervisory unit's height to be 1000.
Lemma \ref{lemma:construction_arm_position} does not require any change of the internal organization of the constructing unit, apart from the modification of the initial constructing arm state, which does not alter its height.

To summarize, a generous upper bound for the height of the modified VNSR that we propose is 2000.
\end{proof}

\begin{theorem} \label{thm:1d_universal_self_replicator}
There exists a 1D universal self-replicator.
\end{theorem}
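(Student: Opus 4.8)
The plan is to build the one-dimensional automaton by \emph{folding} the bounded-height strip that houses the modified von Neumann self-replicator of Lemmas~\ref{lemma:start_of_extension_phase}--\ref{lemma:bounded_height}, and then to check that the two ingredients of a universal self-replicator --- Turing-universal computation and unbounded proliferation of the same dynamical pattern --- survive the fold. Concretely, let $\mathcal{V} = (S^{\mathbb{Z}^2}, F)$ be von Neumann's (Thatcher's) cellular automaton, say of radius $r$, and let $c_0 \in S^{\mathbb{Z}^2}$ be the modified VNSR configuration from Lemma~\ref{lemma:bounded_height}, so that \emph{every} configuration in the trajectory of $c_0$ --- including the growing constructing arm and every offspring under construction --- agrees with the quiescent state outside the strip $\mathbb{Z} \times \{1, \dots, H\}$, $H \le 2000$, and so that on its replication program $c_0$ produces, for every $k$, at least $k$ disjoint horizontally translated copies of itself inside the strip.

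For the construction I would set $\mathcal{A} = (T^{\mathbb{Z}}, G)$ with $T = S^{H}$, a one-dimensional cell encoding a full column of the strip. Since $H$ is fixed and $F$ has radius $r$, the updated state of a cell $(v,i)$ with $1 \le i \le H$ depends only on cells lying in the columns $v-r,\dots,v+r$; I define $G$ by a radius-$r$ local rule $g$ that receives $2r+1$ consecutive columns, pads them above and below by the quiescent state to reconstitute a slab of the strip, applies $F$, and reads off the new central column. Writing $\Phi$ for the folding bijection from strip-supported configurations of $\mathcal{V}$ onto $T^{\mathbb{Z}}$, one verifies directly that $\Phi \circ F = G \circ \Phi$ on strip-supported configurations; hence the trajectory of $\mathcal{P} \coloneqq \Phi(c_0)$ in $\mathcal{A}$ is exactly the fold of the trajectory of $c_0$ in $\mathcal{V}$, and the lifted quiescent background $\Phi(0^{\mathbb{Z}^2})$ is an admissible background of $\mathcal{A}$ in the sense of Definition~\ref{def:admissible_background}. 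Note $\Phi$ is itself a bounded-radius cellular transformaton along $\mathbb{Z}$, hence Turing-implementable with bounded complexity.

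It then remains to certify that $\mathcal{A}$ is a one-dimensional universal self-replicator. For local universality: the computation phase of $c_0$ runs von Neumann's universal computer, so composing the encoder, decoder, and delay function of the $2$D construction with $\Phi$ (and with the fold of the lift of a universal Turing machine) yields valid local-simulation data for $\mathcal{A}$ in the sense of Definition~\ref{def:local_simulation} --- the complexity bounds are preserved because $\Phi$ and $\Phi^{-1}$ are bounded-complexity, exactly as in the proof of Lemma~\ref{lemma:local_global_relation}; thus $\mathcal{A} \in \textsf{LocallyUniversal}$. For the local self-replicating condition of Definition~\ref{def:locallyperiodicself1}: because $\Phi$ is equivariant under horizontal translations, $\mathcal{P}$ evolves for time $T \ge 2$, is non-static, and proliferates arbitrarily many disjoint translated copies each exhibiting the same bounded dynamics, inherited cell-for-cell from $c_0$. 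Together with the sufficiency direction of von Neumann's construction, transported through $\Phi$, this exhibits $\mathcal{A}$ as an element of $\textsf{UniversalSelfReplicating}$, proving the theorem.

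The main obstacle is not the fold but the legitimacy of the fold: one must be certain that the strip bound of Lemmas~\ref{lemma:construction_arm_position}--\ref{lemma:bounded_height} is genuinely \emph{uniform} in the tape program and in the running time --- every active cell, and in particular the arm whose length grows with the tape, must stay within a height that does not depend on how far the computation or the arm extension proceeds --- for otherwise $T = S^{H}$ fails to be finite. A secondary point deserving explicit care is the background bookkeeping: the domains of the folded decoder and delay function must remain subshifts of finite type (equivalently, finite-region-dependent) after the quiescent padding, and the finitely many admissible backgrounds of the $2$D construction must fold to admissible backgrounds of $\mathcal{A}$ --- both follow from $\Phi$ being a cellular transformaton, but should be stated. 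Finally, I would remark that nothing in this argument uses the particular $29$-state substrate: it applies verbatim to Game-of-Life implementations of the VNSR, changing only $S$ and $H$, so the one-dimensional result is substrate-independent.
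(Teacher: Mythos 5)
Your proposal is correct and is essentially the paper's own argument: the paper likewise compresses the bounded-height strip (height $\le 2000$, guaranteed uniform by the arm-extension modification of Lemmas~\ref{lemma:start_of_extension_phase}--\ref{lemma:bounded_height}) into a 1D CA with $29^{2000}$ states whose local rule applies the von Neumann rule to a quiescent-padded $2000\times 3$ slab and reads off the middle column. Your additional verification that local universality and the local self-replicating condition transport through the folding map, and your flagging of the uniformity of the strip bound as the real content of the argument, are sound elaborations of what the paper leaves implicit.
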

\begin{proof}
 From Lemma \ref{lemma:bounded_height} we have the existence of a 2D CA with 29-states (the von Neumann CA) and an initial pattern of height 2000 which represents a universal self-replicator. From the construction, it is apparent that during the whole process of self-replication, any cell outside of the strip of height 2000 stays within the quiescent state. Thus, we can define a 1D CA with $29^{2000}$ states and nearest neighbors where:
 \begin{itemize}
     \item each state represent a vertical column of height 2000 in the von Neumann CA;
     \item the local rule $f(a, b, c)$ mimics one step of the von Neumann CA applied to a $2000\times3$ rectangle with columns represented by $a, b$, and $c$ surrounded by the quiescent state, and outputs the outcome in the state corresponding to the middle column of the updated rectangle.
 \end{itemize}
\end{proof}

\section{Local self-replicating condition and its generalizations}
In this section, our goal is to formalize necessary conditions for universal self-replication. The conditions we provide are by no means sufficient; indeed, giving a formal characterization of non-trivial self-replicating CAs is a challenging open problem. However, these conditions will be very useful for the next section where we show the existence of a locally universal CA that cannot be universally self-replicating.

Informally, we first identify which CA patterns could be perceived as ``organisms'': very loosely, an organism is any sequence of finite patterns that define the organism's canonical activity. For instance, the most classic glider in Game of Life periodically traverses four different shapes as it moves through the space; see Fig.~\ref{fig:gol_glider}. 

\begin{figure}[htbp!]
    \centering
    \includegraphics[width=0.5\linewidth]{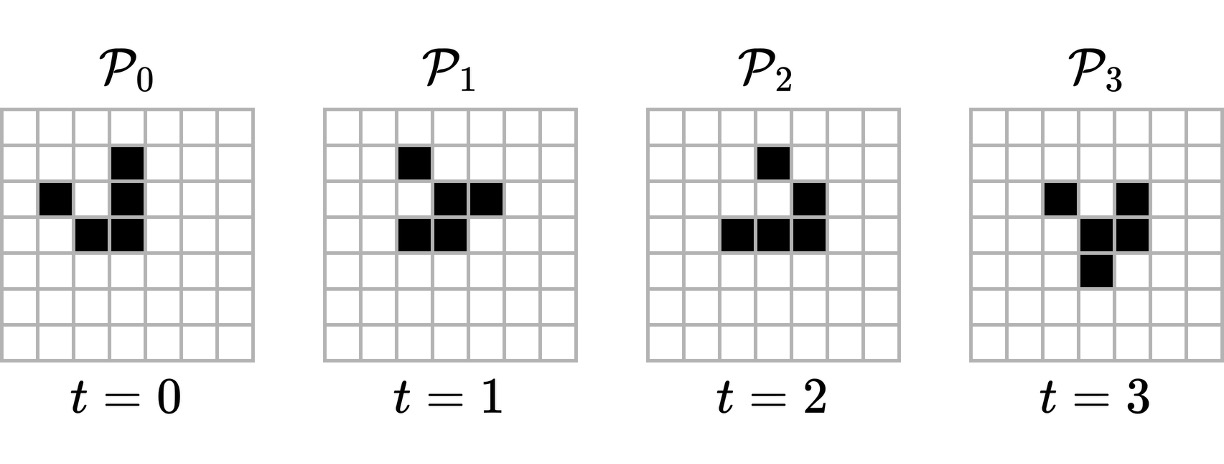}
    \caption{A classical glider in Game of Life traverses four basic shapes $(\mP_0, \mP_1, \mP_2, \mP_3)$ as it moves.}
    \label{fig:gol_glider}
\end{figure}

We say that such an ``organism'' is self-replicating, if throughout the CA's iterations, it generates arbitrarily many copies of itself. To exclude cases of trivial patterns, such as a single black cell on a white background growing progressively in all directions, we further require the ``organism'' to contain a component that has a temporal period of length at least two. To further build the intuition behind non-trivial self-replication, we introduce two famous examples from the literature.

\subsection{Examples of non-trivial self-replication}

\begin{example}[Langton's loop]
Langton's loop  \cite{langton1984self} is a seminal construction of a non-universal self-replicating CA structure which is nevertheless non-trivial. The loop consists of an external sheath colored red in Fig. \ref{fig:langton_loop}. Inside the sheath there is a circulating sequence of states encoding the loop itself. Whenever there is enough space next to the loop, the whole pattern self-replicates. Once the loop runs out of space, it ``dies out'' and becomes a static pattern. 

\begin{figure}[htbp!]
    \centering
    \includegraphics[width=0.75\linewidth]{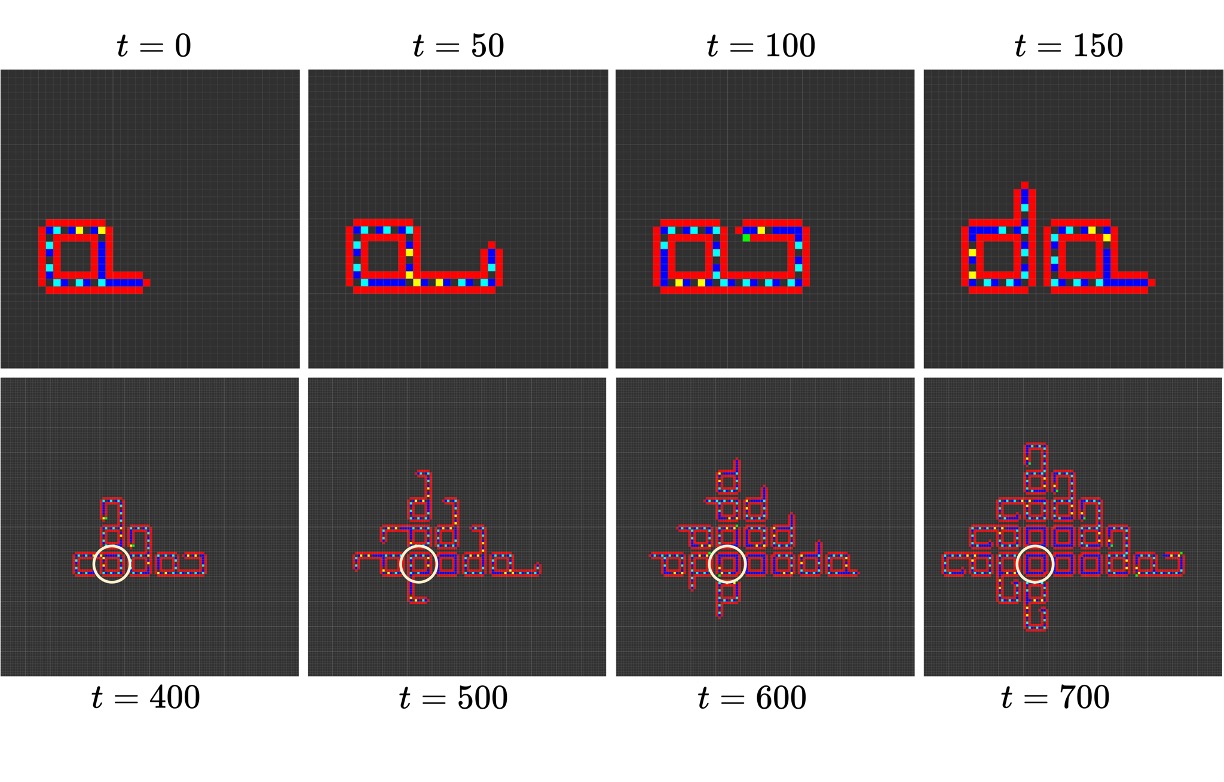}
    \caption{Langton's loops \cite{langton1984self}. (First row) Initial process of self-replication. (Second row) Original loop is highlighted by a circle, and eventually becomes static.}
    \label{fig:langton_loop}
\end{figure}
\end{example}

\begin{example}[Byl's loop]
Byl's loop  \cite{byl1989self} is a further simplification of Langton's construction. By removing the inner sheath, Byl reduced the loop's size from 86 to 12, and the replication period from 151 to 25. Once the loop runs out of surrounding space, it enters a temporal period of size 4.
    \begin{figure}[htbp!]
    \centering
    \includegraphics[width=0.75\linewidth]{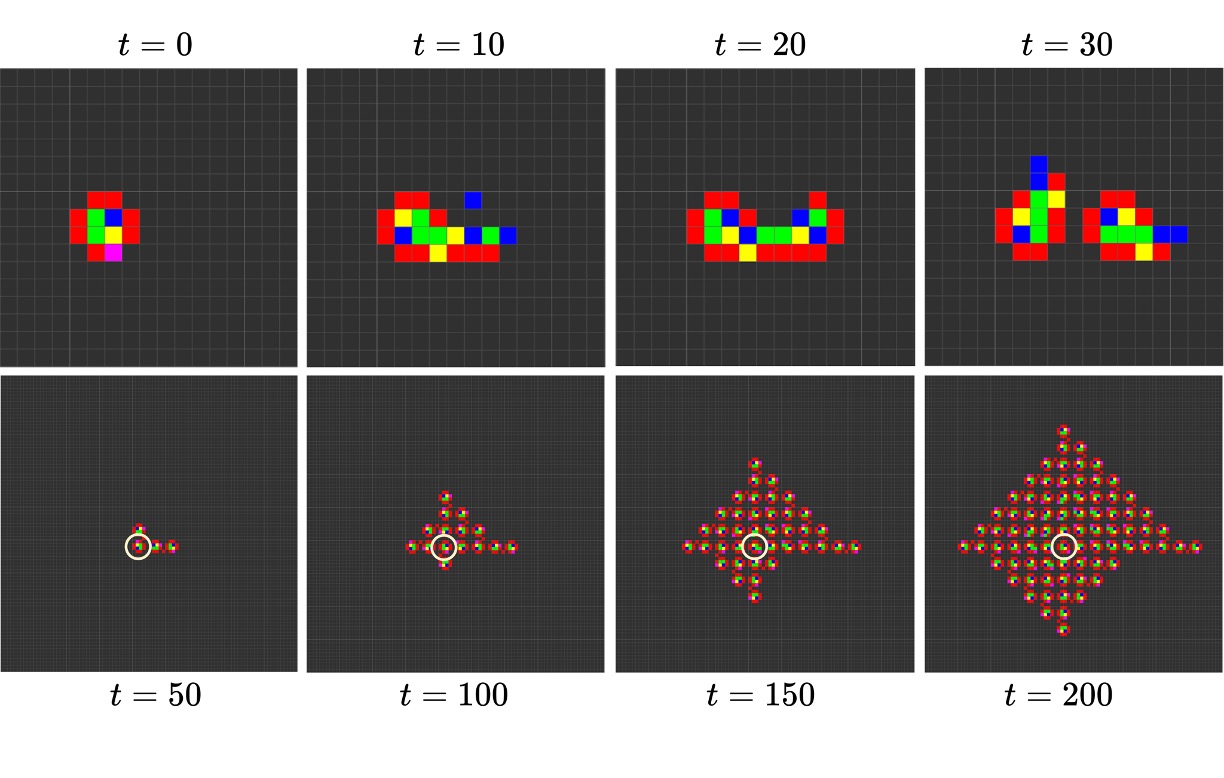}
    \caption{Byl's loops \cite{byl1989self}. (First row) Initial process of self-replication. (Second row) Original pattern is highlighted by a circle, it eventually reaches a temporal cycle of size 4.}
    \label{fig:byl_loop}
\end{figure}
\end{example}

\noindent We also comment on:

\begin{example}[von Neumann's universal self-replicator]
As previously discussed, von Neumann's universal self-replicator~\cite{neumann_vnsr} can be programmed to replicate itself (and then subsequently perform additional computation, if these instructions are appended to the self-replication instructions).  Then the subsequent copies of the machine will likewise self-replicate in an identical fashion.
\end{example}

\subsection{Formulating necessary conditions for self-replication}

First, we say that each ``CA organism'' is in part characterized by a sequence of patterns $\bm{\mP} = \left (\mP_0, \mP_1, \ldots, \mP_{T-1} \right)$ which describe its ``canonical form of activity''. In order for the organism to self-replicate, this activity (e.g.~the regular motion of a subsystem of the organism, or even the dynamics of self-replication itself) has to be present in a growing number of disjoint CA regions throughout the CA's iterations. 

For this, we have to address the issue of how to count the number of an organism's copies in a CA configuration. To account for various examples of non-trivial self-replication, it becomes clear that one should allow basic geometrical transformations, such as rotations, when identifying two organisms as identical. This becomes apparent, for instance, from the example of Langton's loops. We formally define this below.

\begin{definition}[Equivalence of local patterns]
Let $\A = (S^{\Z^d}, F)$ be a cellular automaton, $R \subset \Z^d$ a finite region, and $T \in \N$. Consider $\bm{\mP} = \left (\mP_0, \mP_1, \ldots, \mP_{T-1} \right)$ a sequence of pairwise distinct $R$-patterns; i.e.~for each $0 \leq i < T$ we have $\mP_i: R \rightarrow S$. We will refer to $\bm{\mP}$ as a local pattern sequence with temporal length $T$, or just a local pattern.

Let $R' \subset \Z^d$ and let $\bm{\mP}' = \left (\mP'_0, \mP'_1, \ldots, \mP'_{T-1} \right)$ be another sequence of disjoint $R'$-patterns with the same temporal length $T$. We say that $\bm{\mP}$ and $\bm{\mP}'$ are equivalent, if there exists an isometry $\varphi: \Z^d \rightarrow \Z^d$ and an integer $0 \leq \sigma < T$ such that:
    $$ \mP_i = \mP'_{\sigma + i \bmod T} \circ \varphi  \quad \quad \quad \text{for all } 0 \leq i < T.$$

Let $c \in S^{\Z^d}$ be a configuration of $\A$ and let $k \geq 1$ be an integer. We say that $c$ contains $k$ copies of $\bm{\mP}$ if there exist $k$ disjoint regions $R_1, \ldots, R_k \subset \Z^d$ such that $(\restr{c}{R_i}, \restr{F(c)}{R_i}, \ldots, \restr{F^{T-1}(c)}{R_i})$ defines a local pattern equivalent to $\bm{\mP}$ for all $1 \leq i \leq k$. We say that $c$ contains exactly $k$ copies of $\bm{\mP}$, if it contains $k$ copies of $\bm{\mP}$ and it does not contain $k+1$ copies of $\bm{\mP}$.
\end{definition}

Now we are ready to present the definition of a self-replicating local pattern. 

\begin{definition}[Self-replicating local pattern]\label{def:selfreplocal1}
Let $\A = (S^{\Z^d}, F)$ be a cellular automaton, $R \subset \Z^d$ a finite region, and $T \in \N$. Consider $\bm{\mP} = \left (\mP_0, \mP_1, \ldots, \mP_{T-1} \right)$ a local pattern. Let $c \in S^{\Z^d}$ be a configuration of $\A$. We say that $\bm{\mP}$ self-replicates with configuration $c$ if there exist two sequences of positive integers:
\begin{alignat*}{3}
    n_1 &< n_2 &&< n_3 &&< \cdots\\
    k_1 &< k_2 &&< k_3 &&< \cdots
\end{alignat*}
such that $F^{n_i}(c)$ contains precisely $k_i$ copies of $\bm{\mP}$. Moreover, we require $c$ to be a ``computationally feasible'' configuration; more precisely, we require that $c$ agrees with an admissible background $B$ (according to Definition \ref{def:admissible_background}) outside of a finite region.
\end{definition}

We note that infinite regions of an admissible background may evolve into spatially periodic, time-periodic regimes, in which a given periodic local pattern appears infinitely many times in parallel. Such patterns should not be regarded as genuine self-replicators, since they violate the requirement that the number of copies increase in a controlled manner. The following lemmas makes this precise.

\begin{lemma}[Eventual spatiotemporal periodicity of periodic configurations]\label{lemma:periodic}
Let $\A = (S^{\Z^d}, F)$ be a $d$-dimensional cellular automaton. Suppose the initial configuration $c \in S^{\Z^d}$ is totally spatially periodic, i.e.~there exist linearly independent vectors $v_1, \ldots, v_d \in \Z^d$ such that $\sigma_{v_i}(c) = c$ for all $i$. Then the trajectory of $c$ under $F$ eventually becomes periodic in both space and time. In particular, there exist integers $T \geq 1$ and $\widetilde{T} \geq 0$ such that
\begin{align}
F^{t+T}(c) = F^t(c) \quad \text{for all }\, t \geq \widetilde{T},
\end{align}
and so every $F^t(c)$ remains totally spatially periodic with respect to $v_1,\dots,v_d$ and is eventually periodic in time.
\end{lemma}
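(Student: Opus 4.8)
The plan is a straightforward pigeonhole argument carried out on the finite set of spatially periodic configurations that share the period lattice of $c$. Let $L = \Z v_1 + \cdots + \Z v_d \subseteq \Z^d$ and let $M_V = (v_1 \mid \cdots \mid v_d) \in \Z^{d\times d}$; since $v_1,\dots,v_d$ are linearly independent, $M_V$ is invertible over $\Q$, so $L$ has finite index $n = |\det M_V|$ in $\Z^d$. I would then consider
$$X_L = \{\, x \in S^{\Z^d} : \sigma_{v_i}(x) = x \text{ for all } i \in \{1,\dots,d\} \,\},$$
the set of configurations invariant under every $\sigma_{v_i}$, equivalently those constant on each coset of $L$. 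Such a configuration is determined by its values on any fixed system of coset representatives of $L$ in $\Z^d$, of which there are exactly $n$, so $|X_L| \le |S|^{n} < \infty$. By hypothesis $c \in X_L$.

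Next I would note that $F$ maps $X_L$ into itself. Indeed, $F$ is the global rule of a CA, so it commutes with every shift operator (this is part of the Curtis--Hedlund--Lyndon theorem, Theorem~\ref{thm:og_curtis_hedlund_lyndon}, and is in any case immediate from the definition of a CA); hence for $x \in X_L$ and each $i$ we have $\sigma_{v_i}(F(x)) = F(\sigma_{v_i}(x)) = F(x)$, so $F(x) \in X_L$. Consequently the forward orbit $c, F(c), F^2(c), \dots$ lies entirely in the finite set $X_L$, which already establishes the final assertion of the lemma: every $F^t(c)$ is again totally spatially periodic with respect to $v_1,\dots,v_d$.

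Finally I would invoke the pigeonhole principle: among the infinitely many iterates $F^0(c), F^1(c), F^2(c), \dots$, all lying in the finite set $X_L$, there exist integers $0 \le a < b$ with $F^a(c) = F^b(c)$. Setting $\widetilde{T} = a \ge 0$ and $T = b - a \ge 1$, and applying $F^{t-a}$ to both sides of $F^a(c) = F^{a+T}(c)$, gives $F^{t}(c) = F^{t+T}(c)$ for all $t \ge \widetilde{T}$, which is the claimed eventual spatiotemporal periodicity (and shows the orbit is eventually periodic in time with period dividing $T$). I do not expect any genuine obstacle here; the only points demanding (routine) care are the finiteness count $|X_L| \le |S|^{|\det M_V|}$ — i.e.\ that a sublattice spanned by $d$ linearly independent vectors has finite index equal to the absolute value of the determinant — and the shift-invariance of $F$, both of which are standard.
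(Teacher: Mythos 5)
Your proof is correct and follows essentially the same route as the paper's: both observe that $F$ commutes with shifts and hence preserves the finite set of configurations with the fixed spatial periods (finite because such a configuration is determined by its values on a fundamental domain), and then conclude eventual time-periodicity by pigeonhole. The only difference is that you make the index count $n = |\det M_V|$ explicit, which the paper leaves implicit.
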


\begin{proof}
Since $F$ commutes with translations, total spatial periodicity is preserved along the orbit: if $\sigma_{v_i}(c)=c$ for all $i$, then $\sigma_{v_i}(F(c))=F(c)$ for all $i$. Hence the orbit of $c$ lies inside the set $K$ of configurations with the fixed spatial periods determined by the $v_i$’s. Choosing a fundamental domain for this periodic lattice, each configuration in $K$ is determined by finitely many cells, so $K$ is finite. The restriction of $F$ to $K$ is therefore a self-map of a finite set, and the orbit of $c$ in $K$ is eventually periodic. That is, there exist integers $\widetilde T \geq 0$ and $T \geq 1$ with $F^{t+T}(c)=F^t(c)$ for all $t \geq \widetilde T$. This proves the claim.
\end{proof}

\begin{lemma}\label{lemma:nobackgroundPs}
Let $B$ be an admissible background in the sense of Definition~\ref{def:admissible_background}, realized by a finite partition $\mathbb Z^d=\bigsqcup_{i=1}^k R_i$ and basic backgrounds $B_1,\dots,B_k$ with $\restr{B}{R_i}=\restr{B_i}{R_i}$. Assume some $R_i$ is unbounded and the CA dynamics maps $B_i$ after a transient time $\widetilde T$ into a configuration that is totally spatially periodic and time-periodic with finite period $T$ (as guaranteed by Lemma~\ref{lemma:periodic}). Let $c$ be a configuration that agrees with $B$ outside a finite region $Q \subset \Z^d$. Let $\bm{\mathcal P}$ be any local pattern that occurs within this time-periodic regime and reappears every $T$ steps. Then $\bm{\mathcal P}$ cannot be a self-replicating local pattern for $c$. 
\end{lemma}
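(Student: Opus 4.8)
The plan is to show that the periodic background itself forces $F^t(c)$ to contain \emph{infinitely many} pairwise disjoint copies of $\bm{\mP}$ for every sufficiently large $t$, which is incompatible with Definition~\ref{def:selfreplocal1}. That definition requires a strictly increasing sequence $n_1<n_2<\cdots$ (hence $n_i\to\infty$) at which $F^{n_i}(c)$ contains \emph{precisely} a finite number $k_i$ of copies of $\bm{\mP}$. So if $F^t(c)$ contains infinitely many copies for all $t$ beyond some threshold, then for large $i$ the snapshot $F^{n_i}(c)$ contains $k_i+1$ copies for every finite $k_i$, so it cannot contain ``precisely $k_i$'' copies, and no admissible pair of sequences exists.

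First I would pin down the copies produced by the background. By Definition~\ref{def:admissible_background}, $B_i$ is the unpacking of $G(0^{\Z^d})$ for a cellular transformaton $G$; since $G$ commutes with all shifts, $G(0^{\Z^d})$ is a constant configuration, so $B_i$ is totally spatially periodic, and Lemma~\ref{lemma:periodic} (as already invoked in the hypothesis) gives integers $\widetilde T\ge 0$ and $T\ge 1$ with $F^{t+T}(B_i)=F^t(B_i)$ for all $t\ge\widetilde T$, each such $F^t(B_i)$ being totally spatially periodic with respect to a fixed full-rank lattice $L\subseteq\Z^d$. The hypothesis that $\bm{\mP}$ occurs in this regime and reappears every $T$ steps means there is a finite region $R_0$ and a time $t_0\ge\widetilde T$ with $\bigl(\restr{F^{t_0}(B_i)}{R_0},\dots,\restr{F^{t_0+T-1}(B_i)}{R_0}\bigr)$ equivalent to $\bm{\mP}$. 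Because the equivalence of local patterns allows cyclic time shifts and $F^{t+T}(B_i)=F^t(B_i)$ for $t\ge\widetilde T$, the length-$T$ window read at $R_0$ is equivalent to $\bm{\mP}$ at \emph{every} time $t\ge t_0$; spatial $L$-periodicity then places a matching window at $R_0+\ell$ for every $\ell\in L$, and passing to a sufficiently coarse sublattice $L'\subseteq L$ makes the regions $\{R_0+\ell:\ell\in L'\}$ pairwise disjoint. Hence $F^t(B_i)$ contains infinitely many disjoint copies of $\bm{\mP}$ for every $t\ge t_0$.

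Next I would transfer these copies to $c$ using finite speed of propagation. Taking (without loss of generality) $\A$ of radius $r$, and using that $c$ agrees with $B$, hence with $B_i$, on $R_i\setminus Q$ with $Q$ finite, a straightforward induction gives $F^s(c)(v)=F^s(B_i)(v)$ whenever the $rs$-ball about $v$ lies in $R_i\setminus Q$; call this set $X_s$, and note $X_s$ decreases in $s$. For the length-$T$ window starting at $t$, any translate $R_0+\ell\subseteq X_{t+T-1}\subseteq X_{t+j}$ ($0\le j\le T-1$) reads the same in $F^{t+j}(c)$ and $F^{t+j}(B_i)$, so it is a copy of $\bm{\mP}$ for $c$ as well. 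Since $R_i$ is unbounded and $Q$ finite, $X_{t+T-1}$ still contains infinitely many of the disjoint regions $R_0+\ell$ ($\ell\in L'$); here I would use that an unbounded polyhedral region $R_i$ of full dimension contains arbitrarily large balls arbitrarily far out, i.e.\ has a full-dimensional recession cone (a degenerate $R_i$ confined to a bounded-width slab cannot support a persistent time-periodic bulk regime once boundary influence has swept across its whole width, so the conclusion of the lemma holds a fortiori there). Thus $F^t(c)$ contains infinitely many disjoint copies of $\bm{\mP}$ for all $t\ge t_0$, contradicting Definition~\ref{def:selfreplocal1} and showing $\bm{\mP}$ is not a self-replicating local pattern for $c$.

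The main obstacle is the geometric bookkeeping in the last step: confirming that the finitely many perturbed cells $Q$ together with the boundary $\partial R_i$ destroy only negligibly many of the background's infinitely many copies, so that infinitely many survive in every late snapshot $F^t(c)$. Once the finite-speed estimate is set up, the only real content is the convexity/unboundedness argument that $X_{t+T-1}$ keeps meeting the lattice $\{R_0+\ell:\ell\in L'\}$ of background copies; everything else is routine unwinding of the definitions of pattern equivalence, admissible backgrounds, and self-replication.
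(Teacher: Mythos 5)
Your proof is correct and follows essentially the same route as the paper's: both arguments show that, deep inside the unbounded region $R_i$ and outside the light cones of $Q$ and $\partial R_i$, the configuration $F^t(c)$ agrees with the spatially periodic, time-periodic steady state of $B_i$ and hence contains infinitely many disjoint copies of $\bm{\mathcal P}$ at all sufficiently large times, which is incompatible with Definition~\ref{def:selfreplocal1}'s requirement that $F^{n_i}(c)$ contain \emph{exactly} $k_i<\infty$ copies along an infinite increasing sequence of times. Your write-up is if anything more explicit than the paper's about the light-cone bookkeeping and about passing to a coarse sublattice to make the background copies pairwise disjoint. The one shaky spot is your parenthetical on degenerate, bounded-width $R_i$: the assertion that the lemma's conclusion holds ``a fortiori'' there does not follow (the disappearance of background-borne copies inside $R_i$ by itself says nothing about whether $\bm{\mathcal P}$ self-replicates for $c$); however, the paper's own proof silently makes the same non-degeneracy assumption when it asserts the existence of unbounded subsets of $R_i$ at $\ell_\infty$-distance $rm$ from $\partial R_i$, so on this point you are no worse off and, by flagging the issue, somewhat more honest.
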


\begin{proof}
Let $r$ be the interaction radius of the CA. Since $R_i$ is convex and unbounded, for any $m \in \mathbb N$ there exists an unbounded convex subset $R_i^{(m)} \subset R_i$ which is at $\ell_\infty$-distance at least $rm$ from both $\partial R_i$ and from the finite region $Q$. Take $m=\widetilde T+\ell$ for $\ell \in \{0,1,\dots,T-1\}$. Then for every $t \geq \widetilde T+\ell$ with $t \equiv \widetilde T+\ell \pmod{T}$, the dynamics inside $R_i^{(m)}$ coincides with the evolution one would see if $R_i$ were the whole lattice and there were no perturbation in $Q$, since no influence can reach from $\partial R_i$ or from $Q$ within $t$ steps. 

By hypothesis, after time $\widetilde T$ the restriction of $B_i$ is spatially periodic and time-periodic with period $T$, so at each such time $t$ the unbounded region $R_i^{(m)}$ contains infinitely many disjoint occurrences of $\bm{\mathcal P}$. Therefore, for each residue class modulo $T$ and all sufficiently large times, the configuration contains infinitely many copies of $\bm{\mathcal P}$. 

However, Definition~\ref{def:selfreplocal1} requires an infinite increasing sequence $n_1 < n_2 < n_3 < \cdots$ with the configuration containing exactly $k_j < \infty$ copies of $\bm{\mathcal P}$ at time $n_j$. Since after time $\widetilde T$ every (large) time in each residue class contains infinitely many copies, only finitely many times remain with finitely many copies. Thus no such infinite sequence $n_1 < n_2 < n_3 < \cdots$ can exist, and $\bm{\mathcal P}$ is not self-replicating.
\end{proof}

We highlight several crucial details of Definition~\ref{def:selfreplocal1}. Our definition aims to partially capture a certain causality: the patterns present at time $n_i$ should be the ones that give rise to the patterns at time $n_{i+1}$, or at least have a common progenitor or constructor. This is in contrast with the undesirable situation where infinitely many organisms are already encoded in the background $c$ and appear at predetermined times. Our definition addresses this concern in two ways: by imposing that at time $n_i$ there are precisely $k_i$ organisms present (and no more than that), and by requiring the background $c$ to be an admissible configuration. The latter requirement significantly mitigates the possibility of pre-encoded organisms, as $c$ is simple enough not to encode a complicated infinite set of instructions to progressively generate more and more patterns throughout the CA's iterations.  In particular, Lemma~\ref{lemma:nobackgroundPs} shows that periodic motifs which develop in certain infinite regions of the admissible background cannot qualify as self-replicating local patterns. At last, we present the necessary condition of self-replication below. Therein, we further exclude the case of static organisms with temporal length 1.

\begin{definition}[Local self-replicating condition]\label{def:local_selfrep_cond}
    We say that a cellular automaton $\A = (S^{\Z^d}, F)$ satisfies the locally self-replicating condition if there exists a local pattern $\bm{\mP} = \left (\mP_0, \mP_1, \ldots, \mP_{T-1} \right)$ with $T \geq 2$ which self-replicates.
\end{definition}

\begin{remark}
It is now easy to check that both Langton's and Byl's loops satisfy the local self-replicating condition, as does von Neumann's universal self-replicator (e.g.~when it is programmed to replicate).
\end{remark}

We conclude by providing additional justification for our definitions and highlighting their key features. The local self-replicating condition represents a biologically plausible model for non-trivial asexual reproduction, as identical organisms will have dynamical motifs in common.  A crucial feature of our definition is that it does not track individual organisms or require their perpetual existence; organisms may die, and we only require that the population proliferates across arbitrary time horizons under suitable conditions in infinite volume. This infinite-volume assumption is a mathematical convenience that allows for unbounded organism proliferation, reflecting the principle that any organism should theoretically produce arbitrarily many offspring given sufficient resources. Finally, we emphasize again that our local self-replicating condition is necessarily satisfied by universal self-replicators; the reason is that their offspring will themselves replicate in an identical fashion, and so the process of replication itself forms a local pattern sequence.

These considerations suggest that more stringent necessary conditions for non-trivial self-replication could be formulated. While such additional conditions are not required for our construction of a universal CA that cannot sustain universal self-replication, we briefly discuss several possibilities below.

\subsection{Comments on more general local self-replicating conditions}

Our self-replicating condition imposes only mild constraints on the dynamical behavior of a would-be organism, requiring merely that it exhibit a dynamical motif comprising a local pattern of length at least two. Inspired by Langton's loops and their progressively complex generalizations, culminating in the Turing-universal Perrier-Sipper-Zahnd loops, we can envision strengthened conditions that require the local pattern to exhibit more sophisticated forms of computation. In other words, while our existing definition merely requires that an organism `do something,' we could further demand that it `do something interesting,' such as implementing specific forms of computation. Following our discussions of local universality in CAs, one natural extension would be to require that organisms implement computation calibrated by an encoder-decoder pair, analogous to computational dynamical systems~\cite{computational_dynamical_systems}. While we do not pursue these more stringent versions of the local self-replicating condition here, we view this as a promising direction for future work, which we discuss further in Appendix~\ref{sec:openproblems}.

\subsection{Universal self-replicators}

As we discussed at the beginning of this section, giving characterizing conditions for non-trivial self-replication remains an open problem. Thus, we can only informally describe $\mathsf{UniversalSelfReplicating}$ as the set of all cellular automata (in any dimension) that are locally universal and capable of non-trivial self-replication. We assert that any future feasible definition of non-trivial self-replication should satisfy the following properties.

\begin{claim}
Any cellular automaton belonging to the $\mathsf{UniversalSelfReplicating}$ class has to satisfy the local self-replicating condition in Definition \ref{def:local_selfrep_cond}. Moreover, von Neumann's universal self-replicator has to belong to $\mathsf{UniversalSelfReplicating}$.
\end{claim}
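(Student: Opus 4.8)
The plan is to establish the two assertions separately. For the first, I would argue that the local self-replicating condition of Definition~\ref{def:local_selfrep_cond} is \emph{forced} by any reasonable formalization of non-trivial self-replication: the defining feature of a universal self-replicator is that it produces offspring which themselves self-replicate in an identical fashion, and this recursion, viewed through a bounded window, is exactly a self-replicating local pattern sequence of temporal length at least two. For the second, I would check directly that von Neumann's construction—in the form given by Thatcher and modified in the proof of Theorem~\ref{thm:1d_universal_self_replicator}—is both locally universal (in the sense of Definition~\ref{def:local_simulation}) and satisfies the local self-replicating condition, so that any definition of $\mathsf{UniversalSelfReplicating}$ respecting its informal description must include it.

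\textbf{Part 1.} Let $\mathcal{A}=(S^{\mathbb{Z}^d},F)$ be any automaton in $\mathsf{UniversalSelfReplicating}$, and let $c\in S^{\mathbb{Z}^d}$ be an admissible configuration—a finite machine pattern embedded in an admissible background $B$ in the sense of Definition~\ref{def:admissible_background}—in which the machine has been programmed to self-replicate. Since the machine is not a period-$1$ fixed point during its replication cycle, there is an integer $T\ge 2$ and a finite region $R$ such that, over a suitable time window, the $T$ consecutive snapshots $(\restr{F^{t_0}(c)}{R},\dots,\restr{F^{t_0+T-1}(c)}{R})$ are pairwise distinct; take $\bm{\mathcal{P}}=(\mathcal{P}_0,\dots,\mathcal{P}_{T-1})$ to be this local pattern (equivalently, a full period of a nontrivial oscillating subcomponent that every copy carries). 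Each offspring is by hypothesis a congruent copy running the same program, so it executes the same replication cycle and, at appropriate times, exhibits an occurrence of $\bm{\mathcal{P}}$ in a disjoint isometric image of $R$; occurrences at different phases remain equivalent via the cyclic time-shift permitted in Definition~\ref{def:selfreplocal1}. At any finite time $n$ only finitely many cells of $F^n(c)$ differ from $B$, so $F^n(c)$ contains finitely many copies $k(n)$ of $\bm{\mathcal{P}}$, while replication never ceases producing new machines, so $k(n)\to\infty$; one then extracts strictly increasing $n_1<n_2<\cdots$ and $k_1<k_2<\cdots$ with $F^{n_i}(c)$ containing precisely $k_i$ copies. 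Together with $c$ agreeing with an admissible background off a finite region, this is precisely Definition~\ref{def:selfreplocal1}, and since $T\ge 2$ it gives the local self-replicating condition. Lemma~\ref{lemma:nobackgroundPs} ensures $\bm{\mathcal{P}}$ is not merely a periodic motif generated by the background, since the genuine machines occupy only finitely many disjoint regions at each finite time.

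\textbf{Part 2.} For von Neumann's machine, local universality follows by exhibiting an encoder, decoder, and delay function compliant with Definition~\ref{def:local_simulation}: the encoder, given a Turing-machine configuration, writes the corresponding program and data onto the coding portion of the tape unit and places the resulting finite pattern—of size linear in the configuration length—inside the quiescent (basic admissible) background; the decoder, reading the canonical inter-step configuration of the machine, extracts the simulated Turing machine's tape contents and head state from the tape unit; the delay function counts the number of von Neumann CA steps comprising one macro-step of the simulated machine. Each of these is a bounded-complexity computation on a finite window whose size is controlled by the current tape length, exactly as in the Rule~110 analysis, so this is a routine if technical verification. That von Neumann's machine (when programmed to replicate) satisfies the local self-replicating condition is the content of the Remark following Definition~\ref{def:local_selfrep_cond}, and also follows from Part~1 applied to this particular automaton. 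Hence von Neumann's universal self-replicator is locally universal and satisfies the local self-replicating condition, so it must lie in any class consistent with the informal description of $\mathsf{UniversalSelfReplicating}$.

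\textbf{Main obstacle.} Because $\mathsf{UniversalSelfReplicating}$ is only specified informally, the real content of Part~1 is the meta-level assertion that every reasonable formalization of non-trivial self-replication implies Definition~\ref{def:local_selfrep_cond}; phrasing it through the ``offspring replicate identically'' recursion makes this essentially automatic, and the only care needed is choosing $\bm{\mathcal{P}}$ so that it survives the phase ambiguity between offspring created at different times (handled by the cyclic shift in the equivalence) and is not a background motif (handled by the finite-count-at-each-time argument together with Lemma~\ref{lemma:nobackgroundPs}). The genuinely technical step is the Definition~\ref{def:local_simulation} compliance check for the universal computer embedded in the 2D von Neumann/Thatcher construction—pinning down the canonical configuration at which the decoder reads off the simulated machine's state, and verifying that the encoded pattern and the decoder/delay read-windows grow only linearly in the tape length—which is analogous to, though less delicate than, the Rule~110 verification already carried out.
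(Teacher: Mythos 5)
Your proposal is correct and matches the paper's own treatment: the paper states this as a postulated desideratum on the informally-defined class $\mathsf{UniversalSelfReplicating}$, justified exactly by your Part~1 argument that offspring replicate identically and hence the replication process itself forms a proliferating local pattern sequence with $T \geq 2$, and by the remark that von Neumann's machine (when programmed to replicate) satisfies the local self-replicating condition. You correctly identify in your ``Main obstacle'' paragraph that the real content is a meta-level constraint on any future formalization rather than a theorem admitting a formal proof, which is precisely how the paper frames it.
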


It is further natural to require that any CA that can globally simulate an automaton capable of non-trivial self-replication should itself be capable of non-trivial self-replication.  Indeed, such CAs can simulate a universal self-replicating CA via a local encoding, as ensured by the existence of the CA specified by our Theorem~\ref{thm:1d_universal_self_replicator}.

\begin{claim}
    $\mathsf{GloballyUniversal} \subseteq \mathsf{UniversalSelfReplicating}.$
\end{claim}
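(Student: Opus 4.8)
The plan is to verify directly the two defining features of $\mathsf{UniversalSelfReplicating}$ for an arbitrary globally universal CA $\mathcal B=(T^{\mathbb Z^d},G)$, namely local universality and the local self-replicating condition of Definition~\ref{def:local_selfrep_cond} (in view of the preceding Claim, this is what any reasonable definition of $\mathsf{UniversalSelfReplicating}$ must demand). Local universality is immediate: by Theorem~\ref{thm:1d_universal_self_replicator} there is a one-dimensional CA $\mathcal V$ that is a universal self-replicator, hence in particular locally universal and — by the discussion of von Neumann's replicator following Definition~\ref{def:local_selfrep_cond}, transported through the strip encoding — a CA satisfying the local self-replicating condition; since $\mathcal B$ globally simulates every CA of dimension $\le d$, Theorem~\ref{thm:local_global_relation} (through Lemma~\ref{lemma:local_global_relation}) gives that $\mathcal B$ is locally universal. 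So the entire content of the claim is to transport the self-replicating behaviour of $\mathcal V$ across the simulation into $\mathcal B$.

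Fix the global simulation $\mathcal V\preceq\mathcal B$: an encoder $\mathcal E$, a partial decoder $\mathcal D$ with associated full-rank matrix $M$, a shift $v\in\mathbb Z^d$ and a time-delay $\tau$, so that $\mathcal D\circ(\sigma_v\circ G^\tau)^n\circ\mathcal E=F_{\mathcal V}^n$ for all $n$ (Lemma~\ref{lemma:global_sim_iterations}), and recall that by Theorem~\ref{thm:packed_curtis_hedlund_lyndon} both $\mathcal E$ and $\mathcal D$ are finite-range (sliding block codes after suitable packings). Let $\bm{\mathcal P}=(\mathcal P_0,\dots,\mathcal P_{T-1})$, $T\ge 2$, be a self-replicating local pattern for $\mathcal V$ with configuration $c$, where $c$ agrees with an admissible background $B$ outside a finite region and there are times $n_i\nearrow\infty$, counts $k_i\nearrow\infty$ with $F_{\mathcal V}^{n_i}(c)$ containing precisely $k_i$ copies of $\bm{\mathcal P}$. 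Set $c':=\mathcal E(c)$. Since $\mathcal E$ is finite-range and sends admissible backgrounds to admissible backgrounds (exactly as in the proof of Lemma~\ref{lemma:local_global_relation}), $c'$ agrees with the admissible background $\mathcal E(B)$ outside a finite region, hence is of the type permitted in Definition~\ref{def:selfreplocal1}.

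I then construct a self-replicating local pattern $\bm{\mathcal P}'$ for $\mathcal B$ as follows. To a copy of $\bm{\mathcal P}$ occurring at $\mathcal V$-time $n$ over a region $R'$ (an isometric image of the domain $R$ of $\bm{\mathcal P}$) attach its \emph{$\mathcal B$-witness}: the space–time block of the $G$-trajectory of $c'$ over $\mathcal B$-times $n\tau,n\tau+1,\dots,(n{+}T{-}1)\tau$, restricted to a bounded spatial window placed near $M\cdot(\text{offset of }R')+nv$ whose diameter depends only on $R$, the radius of $\mathcal D$, and $T,\tau,v$; by the simulation identity this block determines the decoded $\mathcal V$-data $F_{\mathcal V}^{n+j}(c)|_{R'}$ for $j=0,\dots,T-1$, hence that a copy of $\bm{\mathcal P}$ is present there. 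This block has fixed temporal length $T':=(T{-}1)\tau+1\ge 2$, fixed spatial extent, and values in a finite set, so there is a fixed number $M_0$ of possible witness types; at each time $n_i$ the $k_i$ copies of $\bm{\mathcal P}$ distribute among these $M_0$ types, so one type is shared by at least $k_i/M_0$ of them, and — since there are only $M_0$ types — a single type $\bm{\mathcal P}'$ is this majority type for infinitely many $i$. Because $M$ is full-rank and disjoint copies of $\bm{\mathcal P}$ have offsets differing by at least $\mathrm{diam}(R)$, the associated copies of $\bm{\mathcal P}'$ sit at pairwise far-apart locations, so discarding a subcollection of bounded density still leaves, at those $\mathcal B$-times, arbitrarily many \emph{pairwise disjoint} copies of $\bm{\mathcal P}'$. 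Finally, the number of copies of $\bm{\mathcal P}'$ in $G^{n_i\tau}(c')$ is finite: were it infinite, $\bm{\mathcal P}'$ would occur in the eventually spatiotemporally periodic regime of the image background $\mathcal E(B)$ (Lemma~\ref{lemma:periodic}), but decoding that regime returns the eventually periodic regime of $B$ under $F_{\mathcal V}$, in which the self-replicating, non-static ($T\ge 2$) motif $\bm{\mathcal P}$ cannot occur by Lemma~\ref{lemma:nobackgroundPs}. Passing to the subsequence on which these finite counts strictly increase exhibits $\bm{\mathcal P}'$ as a self-replicating local pattern of temporal length $\ge 2$ for $\mathcal B$, so $\mathcal B$ satisfies the local self-replicating condition and $\mathcal B\in\mathsf{UniversalSelfReplicating}$.

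The main obstacle is exactly this middle step. Since the block decoder constrains only finitely many cells, two copies of $\bm{\mathcal P}$ embedded in different surrounding contexts may have genuinely different $\mathcal B$-witnesses, so one cannot transport ``precisely $k_i$ copies of $\bm{\mathcal P}$'' to ``precisely $k_i$ copies of a fixed $\mathcal B$-pattern'' verbatim — the pigeonhole argument is what repairs this, at the cost of a harmless constant-factor loss in the counts — and one must separately check, via Lemma~\ref{lemma:nobackgroundPs}, that the $\bm{\mathcal P}'$ so produced is not already furnished infinitely often by the periodic part of the encoded background, which would violate the controlled-proliferation clause of Definition~\ref{def:selfreplocal1}. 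The remaining work — the time rescaling by $\tau$, the overall drift by $v$, the distortion of regions by the packing matrix $M$, and the reflections among isometries allowed in pattern equivalence — is routine bookkeeping once these two points are secured.
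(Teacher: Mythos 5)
Your overall strategy --- have the globally universal CA $\mathcal{B}$ globally simulate the 1D universal self-replicator of Theorem~\ref{thm:1d_universal_self_replicator} and transfer its properties across the simulation --- is exactly the route the paper gestures at, but the paper stops far short of your level of detail. In the paper this statement is a \emph{Claim} rather than a theorem: since $\mathsf{UniversalSelfReplicating}$ is only informally characterized (the paper supplies necessary conditions, not a full definition), the inclusion is posited as a requirement that any feasible formalization must satisfy, and is justified in a single sentence by the observation that a globally universal CA simulates the 1D universal self-replicator via a local, translation-invariant encoding. You instead try to verify the two stated necessary conditions directly. For local universality your argument coincides with Theorem~\ref{thm:local_global_relation}. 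For the local self-replicating condition, your witness-block/pigeonhole construction is genuinely new relative to the paper: nothing in the appendix proves that Definition~\ref{def:local_selfrep_cond} transfers through a global simulation, and your argument (bounded decoding windows, finitely many witness types, a majority type along a subsequence, a bounded-density selection for disjointness, and Lemma~\ref{lemma:nobackgroundPs} to exclude background-borne copies) is a sensible way to establish it. What this buys is an actual derivation from the paper's stated necessary conditions; what it cannot buy --- as you acknowledge --- is a formal proof of membership in a class that is only informally defined, so your proof, like the paper's, ultimately rests on accepting the inclusion as a constraint on any future definition.

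One technical point in your finiteness step deserves flagging. A copy of $\bm{\mathcal{P}}'$ occurring in the eventually periodic part of the encoded background at a position incommensurate with the packing lattice $M\cdot\mathbb{Z}^d$, or related to a genuine witness only by a nontrivial isometry, does not decode to a copy of $\bm{\mathcal{P}}$ in $\mathcal{V}$, so Lemma~\ref{lemma:nobackgroundPs} applied to the simulated system does not immediately exclude it; such ``accidental'' occurrences could in principle make the count at time $n_i\tau$ infinite. This is repairable (e.g.~enlarge the witness window so that its occurrences in the periodic background are themselves periodic and then apply Lemma~\ref{lemma:nobackgroundPs} directly to $\bm{\mathcal{P}}'$ in $\mathcal{B}$, or argue over the finitely many lattice cosets and isometries), but it is the one place where your argument as written has a gap --- one the paper never confronts because it never descends to this level of detail.
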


Lastly, we argue that the above inclusion is strict. Since no reversible cellular automaton can be globally universal, it suffices to prove the following.

\begin{theorem}
\label{thm:universalnotglobal}
There exists a reversible cellular automaton belonging to $\mathsf{UniversalSelfReplicating}$, and thus
\begin{align}
\mathsf{GloballyUniversal} \subsetneq \mathsf{UniversalSelfReplicating}\,.
\end{align}
\end{theorem}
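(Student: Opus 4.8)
The plan is to reuse the 1D universal self-replicator from Theorem~\ref{thm:1d_universal_self_replicator} and embed it into a reversible cellular automaton via a standard Toffoli/Morita-style reversibilization, then check that this reversible host still satisfies the local self-replicating condition (Definition~\ref{def:local_selfrep_cond}) \emph{and} is locally universal, so that it lies in $\mathsf{UniversalSelfReplicating}$. The final inclusion then follows because, by the theorem ``Reversible CAs cannot be globally universal'' proved earlier in the excerpt, this reversible CA is not in $\mathsf{GloballyUniversal}$, while the already-established inclusion $\mathsf{GloballyUniversal} \subseteq \mathsf{UniversalSelfReplicating}$ gives the non-strict containment; the exhibited CA witnesses strictness.

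First I would invoke Morita's result (cited as Theorem~\ref{thm:reversible_cas_local_universality} in the excerpt, for $d=1$) that every 1D CA with finite configurations can be simulated by a 1D reversible CA; apply this to the $29^{2000}$-state 1D CA $\mathcal{V}$ constructed in Theorem~\ref{thm:1d_universal_self_replicator}. Call the resulting reversible CA $\mathcal{R}$. Since $\mathcal{V}$ is locally universal (it hosts von Neumann's universal constructor, hence a universal Turing machine) and $\mathcal{R}$ locally simulates $\mathcal{V}$, Lemma~\ref{lemma:composing_local_sim} gives that $\mathcal{R}$ is locally universal. Next I would argue that $\mathcal{R}$ satisfies the local self-replicating condition: when $\mathcal{V}$ is programmed to replicate, its self-replication dynamics is (by the discussion preceding Definition~\ref{def:local_selfrep_cond}) a self-replicating local pattern sequence $\bm{\mP}$ with temporal length $T \geq 2$, proliferating arbitrarily many disjoint copies from an admissible (indeed finite-support-over-quiescent) background. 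Under the simulation, the encoder $\mathcal{E}$ is a cellular transformation mapping $\mathcal{V}$-configurations into $\mathcal{R}$-configurations; I would push $\bm{\mP}$ and the witnessing background forward through $\mathcal{E}$, using that $\mathcal{E}$ maps admissible backgrounds to admissible backgrounds (as noted in the proof of Lemma~\ref{lemma:local_global_relation}) and maps finite regions to finite regions, to obtain a self-replicating local pattern for $\mathcal{R}$ with temporal length $\geq 2$ (up to the time-delay of the simulation, which multiplies $T$ by a fixed constant and does not affect the existence of increasing sequences $n_i, k_i$).

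The main obstacle I anticipate is the last verification: that the pushed-forward pattern genuinely still \emph{proliferates in a controlled way} in $\mathcal{R}$, i.e.~that the counts $k_i$ stay finite at the chosen times $n_i$. Reversibilization constructions typically thread auxiliary ``history'' or ``garbage'' tracks through space, and one must ensure these do not spontaneously generate spurious extra copies of $\bm{\mP}$ in the unbounded periodic background (which would violate the ``precisely $k_i$ copies'' clause, as Lemma~\ref{lemma:nobackgroundPs} warns). Here I would lean on the fact that Morita's simulation of finite configurations keeps the background quiescent (or at worst eventually spatially-periodic away from the active region), and choose $\bm{\mP}$ to be the image of a pattern that includes a distinctive marker from $\mathcal{V}$'s active machine so that it simply cannot occur in the background; alternatively, restrict to the subsequence of times $n_i$ at which the decoder-visible $\mathcal{V}$-snapshot has exactly the right count, which suffices for Definition~\ref{def:selfreplocal1}. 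A secondary technical point is confirming that $\mathcal{R}$'s encoded self-replicator still runs against an \emph{admissible} background in the sense of Definition~\ref{def:admissible_background}; this is immediate since $\mathcal{E}$ of the quiescent-plus-finite-support configuration is a basic admissible background plus a finite perturbation. With these points dispatched, $\mathcal{R} \in \mathsf{UniversalSelfReplicating} \setminus \mathsf{GloballyUniversal}$, establishing the strict inclusion.
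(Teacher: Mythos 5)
There is a genuine gap, and it sits exactly where you wave at ``the time-delay of the simulation, which multiplies $T$ by a fixed constant.'' For Morita's 1D reversibilization of an irreversible CA (Theorem~\ref{thm:reversible_cas_local_universality}, $d=1$ case), the slowdown is \emph{not} a fixed constant: the simulator takes $O(n)$ steps per simulated step on a configuration with $n$ non-quiescent sites. As the self-replicator proliferates, the active region grows without bound, so the simulated copies become progressively slower, and there is no single local pattern sequence $\bm{\mP}$ of fixed temporal length $T$ that all copies trace out at all times. Worse, the copies' motifs need not even occur simultaneously once many replicas are present. So the pushed-forward $\bm{\mP}$ fails Definition~\ref{def:selfreplocal1}, and your ``restrict to a subsequence of times'' patch does not rescue it, because the definition counts copies of one \emph{fixed} finite-temporal-length pattern, not of a family of ever-more-dilated ones. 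The paper explicitly flags this route as ``difficult, or perhaps impossible'' and leaves the existence of a 1D reversible universal self-replicator as an open conjecture; no-go results such as~\cite{hertling1998embedding} suggest the non-constant slowdown may be unavoidable in one dimension.

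The paper's actual proof sidesteps all of this by going up a dimension. It invokes Toffoli's result that any $d$-dimensional CA is \emph{globally} simulated by a $(d+1)$-dimensional reversible CA; global simulation has a constant time-delay and a block encoding, so it preserves the local self-replicating condition (this is precisely the content of the Claim that $\mathsf{GloballyUniversal} \subseteq \mathsf{UniversalSelfReplicating}$ and the surrounding discussion). Applying this to the 1D universal self-replicator of Theorem~\ref{thm:1d_universal_self_replicator} yields a 2D reversible CA in $\mathsf{UniversalSelfReplicating}$, which cannot be globally universal by the reversibility obstruction you correctly cite. Since $\mathsf{UniversalSelfReplicating}$ is a union over all dimensions, this suffices for the theorem as stated; you did not need to stay in 1D, and by trying to, you attempted to prove a strictly stronger statement than the theorem requires --- one the paper itself does not claim to know how to prove.
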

\begin{proof}
    We simply use Toffoli's result from \cite{toffoli1977computation}, showing that for every $d$-dimensional CA, there exists a $d+1$-dimensional reversible CA that can globally simulate it. Thus, there exists a 3-dimensional reversible CA globally simulating von Neumann's universally self-replicating automaton. Combining previous claims, we get that this CA belongs to $\mathsf{UniversalSelfReplicating}$.
    Furthermore, in the previous section, we proved the existence of a 1D CA belonging to $\mathsf{UniversalSelfReplicating}$. This immediately implies the existence of a 2D reversible CA belonging to $\mathsf{UniversalSelfReplicating}$.
\end{proof}

\section{Non-talking heads cellular automata}

In this Section, we construct an example of a 1D CA that is locally Turing-universal and yet cannot sustain universal self-replication. This counters the intuition that local Turing universality of a CA is sufficient to imply the existence of universal self-replicating configurations. Our strategy is to first construct a nice family of locally Turing-universal CAs, and then to modify the construction so we can prove that no configuration can satisfy the periodic self-replicating condition, thereby ruling out universal self-replicating configurations. Before proceeding, we draw a conceptual connection to biological systems to clarify the rationale for our construction.

\subsection{A biological analogy}

In life on Earth, DNA participates in at least two distinct processes. First, DNA comprises a description of an organism and participates in replication, where that description is copied. Second, DNA enables protein synthesis via the central dogma of molecular biology: DNA makes RNA (via transcription), and RNA makes protein (via translation). These proteins, combined with ambient chemical processes, build complex biomolecular structures that serve as machines and can be viewed as facilitating or instantiating computation and information processing. The key point is that replication and protein synthesis are two distinct processes; we can imagine one without the other. Specifically, we can envision a system capable of building protein structures but with instabilities that preclude self-replication. Indeed, many organisms, whether by mutation or other causes, lack the ability to reproduce. It is therefore not difficult to imagine a system where a judicious choice of physical laws makes self-replication not merely impractical, but impossible altogether. We pursue this line of reasoning below in a concrete example.

\subsection{Absence of self-replication in non-talking heads CAs}

It is first useful to recall some of our notation for Turing machines, as discussed above in Definition~\ref{def:Turing0}.  Let $Q$ be a finite set of states and $\Sigma$ be a finite set of symbols.  Then we define the transition function by
\begin{align}
\delta : Q \times \Sigma \to Q \times \Sigma \times \{\text{L, R, S}\}
\end{align}
where $\text{L}$ stands for `left', $\text{R}$ stands for `right', and $\text{S}$ stands for `stay'.  We can write out the function as
\begin{align}
\delta(q,s) = (\delta_Q(q,s),\,\delta_\Sigma(q,s),\,\delta_{\text{move}}(q,s))\,.
\end{align}
Suppose that $\delta$ corresponds to a universal Turing machine.  Now let us define a 1D $4 (|Q|+1) |\Sigma|$-state CA as follows.

We label each state on each site of the CA by a state $(h,s,a)$ where $h \in Q \cup \{\sqcup\}$, $s \in \Sigma$, and $a \in \{\text{L}, \text{R}, \text{S}, \sqcup'\}$.  (Here $h$ stands for `head', $a$ stands for `arrow', and $\sqcup$ and $\sqcup'$ are blank or empty.)  We take one of the elements of $Q$ to be the halting state, denoted by $\textsf{halt}$.  The transition rule of the CA only depends on nearest-neighbors, and is defined as follows:
\begin{align}
\label{E:fCAdef1}
&f((h_1,s_1,a_1),\,(h_2,s_2,a_2),\,(h_3, s_3,a_3)) \\
& = \begin{cases}
 (\delta_Q(h_1, s_1),s_2, a_2) &\text{if }h_1 \not = \sqcup,\,h_2 = h_3 = \sqcup,\, \text{and }\delta_\text{move}(h_1,s_1) = \text{R}
 \\
 (\delta_Q(h_3, s_3),s_2, a_2) &\text{if }h_3 \not = \sqcup,\,h_1 = h_2 = \sqcup,\, \text{and }\delta_\text{move}(h_3,s_3) = \text{L} \\
 (\delta_Q(h_2, s_2), \delta_{\Sigma}(h_2,s_2), \delta_{\text{move}}(h_2, s_2)) &\text{if }h_2 \not = \sqcup,\,h_1 = h_3 = \sqcup,\, \text{and }\delta_\text{move}(h_2,s_2) = \text{S} \\ 
  (\sqcup, \delta_{\Sigma}(h_2,s_2), \delta_{\text{move}}(h_2, s_2)) &\text{if }h_2 \not = \sqcup,\,h_1 = h_3 = \sqcup,\,\delta_\text{move}(h_2,s_2) = \text{L},\,\text{and }a_1 = \text{R or }\sqcup'  \\ 
    (\sqcup, \delta_{\Sigma}(h_2,s_2), \delta_{\text{move}}(h_2, s_2)) &\text{if }h_2 \not = \sqcup,\,h_1 = h_3 = \sqcup,\, \delta_\text{move}(h_2,s_2) = \text{R},\,\text{and }a_3 = \text{L or }\sqcup' \\ 
    (h_2, s_2, a_2) &\text{if }h_1 = h_2 = h_3 = \sqcup \\
   (\textsf{halt}, s_2, a_2) &\text{otherwise} \nonumber
\end{cases}
\end{align}
The above describes a stationary tape with moving heads, where each head records on each tape cell the direction that head is moving next.  There are two additional key features: (i) if there are two or more adjacent heads, then those heads will enter the halting state; (ii) if a head is slated to move and the cell to which it is moving has an $a$ which has an inconsistent `parity', then that head will enter the halting state.  This mechanism (ii) prevents any two heads from `communicating' with one another.  As such, we will refer to the CA as a ``non-talking heads'' CA.  We depict the basic mechanism described here in Fig.~\ref{fig:non-talking}.

\begin{figure}[htbp!]
    \centering
\includegraphics[width=0.75\linewidth]{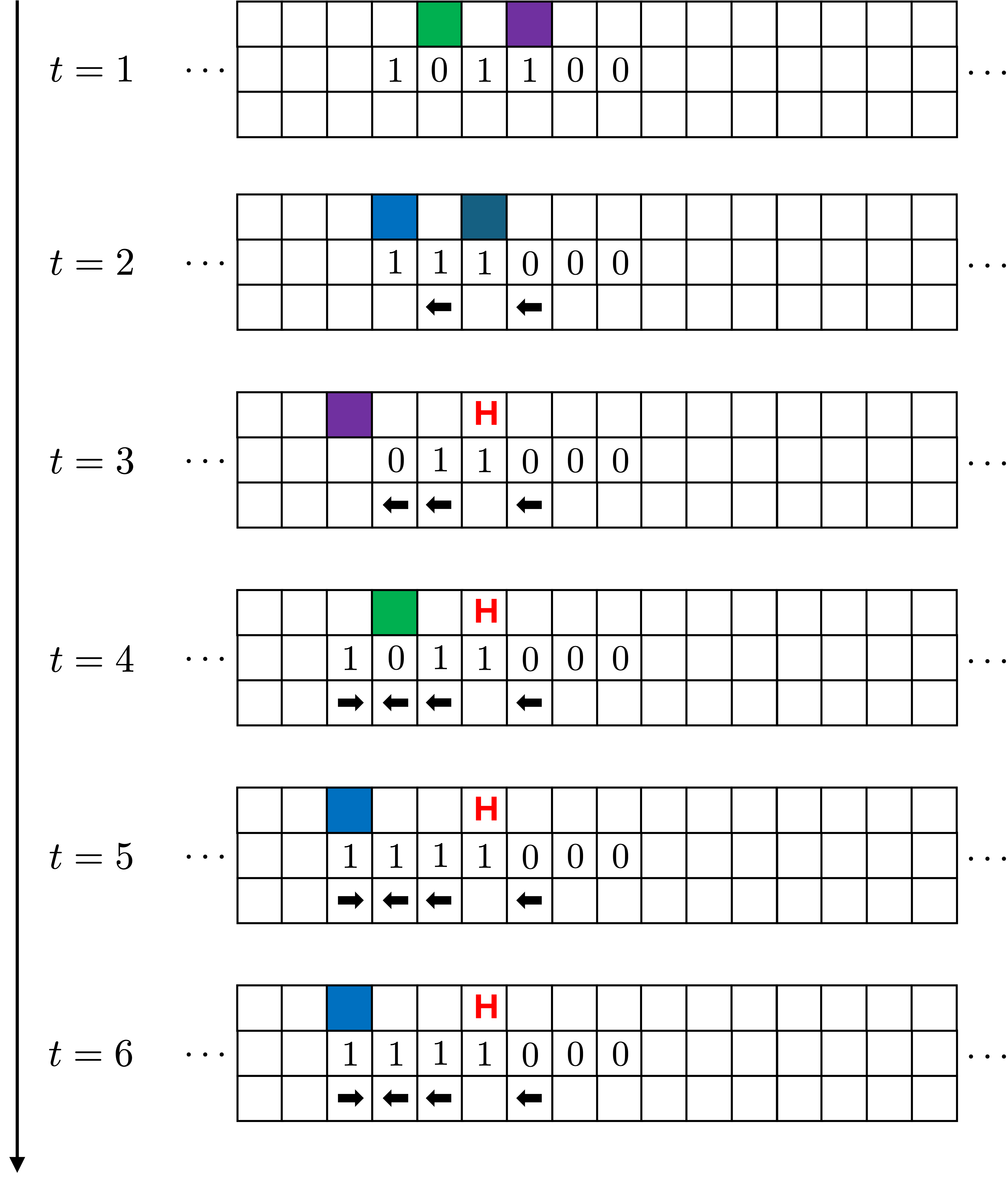}
    \caption{A depiction of the dynamics of the non-talking heads CA, with time running from top to bottom.  At each moment in time, we `unfold' a single cell into three vertical cells, corresponding to $h, s, a$ from top to bottom.  Blank cells correspond to the empty state or symbol; colored cells in the top rows correspond to heads in some particular state (with $\textcolor{red}{\textbf{H}}$ denoting the halting state), $0$'s and $1$'s in the middle rows correspond to symbols on the tape, and arrows $\leftarrow$ and $\rightarrow$ in the bottom rows corresponding to $a = \text{L}$ and $a = \text{R}$, respectively.
    At time $t = 1$, we see two (activated) heads and six non-empty symbols.  At time $t = 2$, the heads have moved to the left, affecting symbols and leaving arrow markers to reflect their direction of movement. At time $t = 3$ the right-most head has halted since it ran into an arrow to its left with a parity indicating that the arrow was laid down by another head.  The left-most head continues to transition, affect symbols, and lay down arrows, as we see at times $t = 4$, $t = 5$, and $t = 6$.}
    \label{fig:non-talking}
\end{figure}

The non-talking heads CA defined above is locally Turing-universal by construction, as recorded in the following lemma.
\begin{lemma}[Local universality of the non-talking heads CA]
The 1D non-talking heads CA defined above is locally universal.
\end{lemma}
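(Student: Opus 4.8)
The plan is to verify directly that the explicit local rule~\eqref{E:fCAdef1}, when run on configurations carrying a single head, reproduces the step map of the universal Turing machine $\T_{\mathrm{univ}} = (Q,\Sigma,\delta)$ built into the construction, and to package this verification into an encoder, decoder, delay function and shift vectors satisfying Definition~\ref{def:local_simulation}. Local universality then follows immediately from the definition of local universality, since $\T_{\mathrm{univ}}$ is universal.

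First I would fix the admissible background $B$ to be the constant configuration with every cell in state $(\sqcup,\sqcup,\sqcup')$. This is a basic admissible background in the sense of Definition~\ref{def:admissible_background} (take the trivial packing and the constant cellular transformation), and it is quiescent for $f$ by the sixth clause of~\eqref{E:fCAdef1}. The encoder $\E$ then maps a Turing configuration $c = s_1\cdots s_{m-1}\,q\,s_m\cdots s_n \in C_{\T_{\mathrm{univ}}}$ to the configuration that places $(\sqcup,s_i,\sqcup')$ on tape cell $i$ and $(q,s_m,\sqcup')$ on the head cell (say with tape cell $1$ at the origin), extended by $B$ outside the finite window; this is computed by a multi-tape Turing machine in time $O(n)$. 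I would take the delay function to be the constant $\tau\equiv 1$, and the decoder $\D$ to be the map that, given a configuration agreeing with $B$ off a finite window, locates the unique head, reads off its state, its position and the symbols on the non-background cells (discarding all arrow components), and returns the corresponding element of $C_{\T_{\mathrm{univ}}}$; this runs in time linear in the current tape length $n'$. Since one CA step displaces the head by at most one cell, and a Turing configuration always extends at least as far as its head, the head together with the entire non-background region of $F^t(\E(c))$ stays within $O(n')$ of the origin; hence no shifts are needed ($v=w=0$) and every read-out window is automatically centred at $0$ with size $O(n')$, as required.

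The core of the argument is an induction on $t$ showing that $F^t(\E(c))$ contains exactly one head, sitting at the cell the Turing head would occupy after $t$ steps, in the correct state and with the correct tape contents, and moreover satisfies the \emph{arrow invariant}: the cell immediately to the left of the head carries an arrow in $\{\mathrm R,\sqcup'\}$ and the cell immediately to its right an arrow in $\{\mathrm L,\sqcup'\}$. The base case is immediate, since $\E(c)$ has all arrows equal to $\sqcup'$. For the inductive step, a direct inspection of~\eqref{E:fCAdef1} on a window containing the single head shows that the clauses realising ``stay'', ``left'' and ``right'' implement precisely one transition of $\delta$ (recording the departure direction as an arrow), while the remaining clauses deposit the displaced head on the adjacent cell; the arrow invariant is exactly what makes the parity tests in the ``left'' and ``right'' clauses succeed, so the lone head never falls through to the halting clause. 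One also checks that a single head neither splits nor vanishes, so single-headedness persists. Concatenating these steps yields $\D\circ\big(\sigma_w\circ(\sigma_v\circ F)^{\tau}\big)^{t}\circ\E = T_{\mathrm{univ}}^{t}$ for all $t$, with $\E,\D,\tau$ all running in time bounded by a fixed linear function $\beta$ of the tape length, which is exactly what Definition~\ref{def:local_simulation} demands.

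I expect the main obstacle to be the arrow-parity bookkeeping in the inductive step: one must argue that a lone head, which may re-enter cells it has previously visited and re-marked, always finds a \emph{compatible} arrow in the cell it is about to step into — equivalently, that feature~(ii) of the construction, designed to sever communication between distinct heads, is completely inert on a solitary head. Establishing the stated arrow invariant, and checking that it is preserved under stays, left moves and right moves for every possible prior value of the arrows around the head, is the one place where genuine care is needed; the remaining ingredients — admissibility and quiescence of $B$, Turing-implementability and linear complexity of $\E,\D,\tau$, and the centring of the read-out window — are routine.
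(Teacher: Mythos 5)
Your proposal is correct and takes essentially the same approach as the paper, whose entire proof is the single observation that the CA directly simulates the universal Turing machine when initialized with one head and all arrow components equal to $\sqcup'$. Your version supplies the details the paper omits — the explicit encoder/decoder/delay data for Definition~\ref{def:local_simulation} and, in particular, the arrow invariant showing that a solitary head always re-enters previously visited cells with compatible parity — and the invariant you state is indeed provable by the observation that a unit-step walk on $\mathbb{Z}$ must re-cross (and hence re-mark) any cell it left in the opposite direction before returning past it.
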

\begin{proof}
The CA directly simulates a universal Turing machine, for example if there is only one head (i.e.~only one cell is not in a state $(\sqcup,s,a)$) and the rest of the tape elements are initialized with $a = \sqcup'$.
\end{proof}

Now we will demonstrate that the non-talking heads CA does not have any configurations that satisfy the local self-replicating condition.  To do so, we will utilize the following lemma.
\begin{lemma}[Trapping lemma for non-talking heads]
\label{lemma:trapping}
Label the sites of the CA by $\mathbb{Z}$. Consider a configuration $c$ of the CA and list the heads present at time $t=0$ in strictly increasing order of position
\begin{align}
\cdots < h_{-2}(0) < h_{-1}(0) < h_0(0) < h_1(0) < h_2(0) < \cdots ,
\end{align}
where $h_0(0)$ labels the first head at position greater than or equal to zero in $\mathbb{Z}$. Let $h_i(t)$ denote the position at time $t$ of the (unique) head with index $i$. Then for all $t\ge 0$,
\begin{align}
h_{i-1}(0) < h_i(t) < h_{i+1}(0)\,.
\end{align}
\end{lemma}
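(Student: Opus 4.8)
The plan is a simultaneous induction on the time $t$, over all indices at once, carrying two invariants: the asserted one, $(\mathrm{I}1)$ $h_{i-1}(0) < h_i(t) < h_{i+1}(0)$ for all $i$, together with a ``no-crossing'' invariant $(\mathrm{I}2)$ $h_i(t) \le h_{i+1}(t)$ for all $i$, where equality can only be attained when two heads have merged into a single (frozen) $\textsf{halt}$ head. The base case $t=0$ is immediate from the enumeration of the heads. Before starting I would record three elementary properties of the local rule \eqref{E:fCAdef1}, each obtained by walking through its seven branches: \emph{(a)} a head occupying a cell $p$ at time $t$ occupies a cell in $\{p-1,p,p+1\}$ at time $t+1$ and never disappears, so $|h_i(t+1)-h_i(t)| \le 1$; \emph{(b)} no \emph{new} non-halting head is ever created (branches $1$--$5$ merely relocate an existing head, while branch $6$ and the ``otherwise'' branch can only output $\sqcup$ or $\textsf{halt}$), hence every non-halting head present at any time is one of the $H_j$'s, and moreover $\textsf{halt}$ is absorbing and immobile, using the standard normalization $\delta(\textsf{halt},\cdot)=(\textsf{halt},\cdot,\mathrm{S})$; \emph{(c)} the arrow component of a cell, once it has left the value $\sqcup'$, never returns to $\sqcup'$; it is modified only by a head that departs from or stays on that cell; and a head slated to move right (resp.\ left) out of a cell succeeds only if the target cell and the cell beyond it are head-free and the target cell's arrow lies in $\{\mathrm{L},\sqcup'\}$ (resp.\ $\{\mathrm{R},\sqcup'\}$) --- this is exactly branch $5$ together with branch $1$ (resp.\ branch $4$ together with branch $2$).

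Next I would dispose of $(\mathrm{I}2)$. If two consecutive heads are at distance $\ge 2$ they each move by at most one, so they stay ordered; if they are at distance $1$, a direct inspection of \eqref{E:fCAdef1} at the two occupied cells shows that each falls into the ``otherwise'' branch and both become frozen $\textsf{halt}$ heads that remain adjacent forever; and the only way two tracked heads can come to share a cell is the configuration in which they sit two cells apart and both step inward onto the intervening cell --- and then, by \emph{(c)}, that cell's arrow must be $\sqcup'$ for both to vacate, after which the intervening cell falls into the ``otherwise'' branch and becomes a frozen $\textsf{halt}$ head. So once $h_i=h_{i+1}$ this persists with both heads immobile; in particular $(\mathrm{I}2)$ is preserved, and --- a fact used crucially below --- whenever $h_i(t)=h_{i+1}(t)$ the shared head cannot move.

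The heart of the argument is the inductive step for the upper bound $h_i(t+1) < h_{i+1}(0)$ (the lower bound is symmetric). From $(\mathrm{I}1)$ at time $t$ we have $h_i(t) \le h_{i+1}(0)-1$; if $h_i(t) \le h_{i+1}(0)-2$ then $h_i(t+1) \le h_i(t)+1 < h_{i+1}(0)$ by \emph{(a)}, so the only case to treat is $h_i(t)=h_{i+1}(0)-1$ with $H_i$ stepping right onto $h_{i+1}(0)$; I claim this is impossible. By \emph{(c)} such a step requires the cell $h_{i+1}(0)$ to be head-free at time $t$ and its arrow to lie in $\{\mathrm{L},\sqcup'\}$. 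The arrow cannot be $\sqcup'$: $H_{i+1}$ started on $h_{i+1}(0)$, so either it still sits there --- contradicting head-freeness --- or it has at some point departed, which by \emph{(c)} moved the arrow off $\sqcup'$ permanently. Hence the arrow is $\mathrm{L}$, so the most recent arrow-setting event on $h_{i+1}(0)$ was a leftward departure; since $\textsf{halt}$ heads never perform such a departure, and since, by $(\mathrm{I}1)$ at the earlier times, no head other than $H_{i+1}$ is ever a live head on $h_{i+1}(0)$ (those of smaller index stay strictly left of it, those of larger index strictly right), that departure was performed by $H_{i+1}$. Because the arrow is still $\mathrm{L}$ while $h_{i+1}(0)$ is head-free at time $t$, $H_{i+1}$ has since neither stayed on, right-departed from, nor halted on $h_{i+1}(0)$, so $h_{i+1}(t) \le h_{i+1}(0)-1 = h_i(t)$. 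With $(\mathrm{I}2)$ this forces $h_i(t)=h_{i+1}(t)$, i.e.\ the heads have merged, whence $H_i$ is frozen and cannot step --- the desired contradiction. This closes the induction.

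I expect the main obstacle to be bookkeeping rather than insight: one must (i) make the notion of ``the unique head with index $i$'' fully precise across the $\textsf{halt}$-merging just described (and across the spurious $\textsf{halt}$ cells that \eqref{E:fCAdef1} sometimes deposits next to a moving head, none of which are among the $H_j$'s), and (ii) verify properties \emph{(a)}--\emph{(c)} by the exhaustive but routine enumeration of the seven branches of \eqref{E:fCAdef1} --- in particular that $\textsf{halt}$ is absorbing and immobile, that no live head is ever spawned, and that the branch~4/5 ``parity'' conditions are exactly what prevents a head from re-entering a cell it vacated in the opposite direction. Once these are in hand the argument is short, as sketched above.
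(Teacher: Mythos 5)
Your proof is correct and follows essentially the same route as the paper's: both rest on the observation that for $H_i$ to step onto $h_{i\pm1}(0)$ the neighboring head must either still occupy that cell (forcing an adjacency halt) or have last departed it in the direction away from $H_i$, leaving an arrow of the wrong parity for $H_i$'s attempted move (forcing a parity halt). Your packaging as a simultaneous induction with the auxiliary no-crossing invariant, the explicit arrow-persistence property, and the use of the inductive hypothesis to certify that only $H_{i+1}$ can ever set the arrow at $h_{i+1}(0)$ is a more careful formalization of the paper's informal three-case trajectory analysis, but the underlying mechanism is identical.
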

\begin{proof}
Consider the trajectory of a particular head $h_i$, which we take to be the half-infinite sequence $(h_i(0), h_i(1), h_i(2),...)$ where $h_i : \mathbb{Z}_{\geq 0} \to \mathbb{Z}$ is a map from times $t \geq 0$ to locations of cells of the CA.  If $h_i(t)$ encounters either (i) another head, or (ii) an $a$ symbol with a parity in the same direction as the transition of the head, then $h_i$ will halt and remain at its position for all futures times $t$.  When a head halts on account of an $a$ which it encounters, we will say that this is an $a$ with the `wrong' parity relative to that head.

There are several cases to consider.  The first case is that $h_i(t)$ encounters an $a$ with the `wrong' parity and halts before it reaches either sites $h_{i-1}(0)$ or $h_{i+1}(0)$.  This can happen in two ways; either because such an $a$ with the `wrong' parity was encoded in the initial conditions (and not laid down by a head), or because it was laid down by a head $h_{i-1}$ or $h_{i+1}$.  Either situation is clearly fine, since $h_i$ will also before reaching $h_{i-1}(0)$ or $h_{i+1}(0)$.

The second case is that $h_i(t)$ encounters either head $h_{i-1}$ or $h_{i+1}$ and thus halts before it reaches either sites $h_{i-1}(0)$ or $h_{i+1}(0)$.  This is fine as well.

The third and final case is that $h_i(t)$ eventually encounters either site $h_{i-1}(0)$ or site $h_{i+1}(0)$.  Let us consider the former, and note that the argument for the latter goes through mutatis mutandis.  In the former setting, $h_i$ must encounter $h_{i-1}(0)$ from the left.  If head $h_{i-1}$ is still present at site $h_{i-1}(0)$ then head $h_i$ will halt at site $h_{i-1}(0) + 1$ and thus $h_{i-1}(0) < h_i(t)$ for all $t$.  If head $h_{i-1}$ is not present at site $h_{i-1}(0)$ then it must be to the left of that site, for if it were to the right then it would have already encountered head $h_i$ causing both of them to halt.  But then $h_{i-1}$ would have moved left from site $h_{i-1}(0)$ and thus recorded an $a$ there which has the wrong parity relative to $h_i$; as such, if $h_i$ attempts to transition leftward into site $h_{i-1}(0)$, then $h_i$ will halt at site $h_{i-1}(0) + 1$, again giving $h_{i-1}(0) < h_i(t)$ for all $t$.  Nearly identical arguments likewise give us $h_i(t) < h_{i+1}(0)$ for all $t$, leading to the desired claim.
\end{proof}

With the above lemma at hand, we can now prove the main theorem of this section.
\begin{theorem}[Non-talking heads CA does not have self-replication]\label{thm:non-talking-no-selfrep}
The 1D non-talking heads CA does not have any configurations satisfying the local self-replicating condition. Thus the CA is in $\mathsf{LocallyUniversal}$ but not in $\mathsf{UniversalSelfReplicating}$, implying
\begin{align}
\mathsf{UniversalSelfReplicating} \subsetneq \mathsf{LocallyUniversal}\,.
\end{align}
\end{theorem}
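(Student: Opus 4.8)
The statement to prove is that the non-talking heads CA satisfies the local self-replicating condition for no configuration, hence lies in $\mathsf{LocallyUniversal} \setminus \mathsf{UniversalSelfReplicating}$, giving the strict inclusion. Local universality is already established by the preceding lemma, and membership in $\mathsf{UniversalSelfReplicating}$ requires the local self-replicating condition by Claim (the first Claim in the self-replication section). So the whole burden is: \emph{no} local pattern sequence $\bm{\mP} = (\mP_0, \ldots, \mP_{T-1})$ with $T \geq 2$ can self-replicate from an admissible configuration $c$. The plan is to show that the number of copies of any fixed $\bm{\mP}$ in $F^n(c)$ is bounded uniformly in $n$, which directly contradicts the requirement that the copy-counts $k_1 < k_2 < \cdots$ form a strictly increasing unbounded sequence.

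First I would set up the structure of an admissible configuration $c$: outside a finite region $Q$, $c$ agrees with an admissible background $B$, which is a finite polyhedral patchwork of basic backgrounds $B_1, \ldots, B_k$; each $B_j$ is (a packing of) a spatially periodic configuration. I would then argue that in this 1D setting an admissible background contains only finitely many heads \emph{that can ever move}: the basic backgrounds are spatially periodic, so if any $B_j$ contained an activated head it would contain infinitely many heads at bounded spacing, and under the CA rule adjacent-or-near heads halt immediately (mechanism (i) and mechanism (ii) of the $a$-parity blocking); within a bounded number of steps every such head halts, after which the region is frozen except for the periodic tape data which does nothing. So there is a finite bound $M$ — depending only on $c$ — on the number of heads in $c$ that are ever non-halted at any time. (One must handle the finitely many polyhedral seams between basic backgrounds, which can spawn at most a constant number of additional moving heads; and the finite perturbation region $Q$ contributes at most $|Q|$ more.) Denote by $\mathcal{H}$ this finite set of ``ever-active'' heads, $|\mathcal H| \le M$.

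Next, the key dynamical input is the Trapping Lemma (Lemma~\ref{lemma:trapping}): each head $h_i$ is confined for all time to the open interval $(h_{i-1}(0), h_{i+1}(0))$ between the initial positions of its neighbors. I would combine this with the observation that the tape content $s$ at a cell only ever changes when a head visits that cell. Hence a cell whose position is outside $\bigcup_{i} (h_{i-1}(0), h_{i+1}(0))$ over all active heads — equivalently, outside a bounded neighborhood of the finitely many active heads, together with the static periodic background — is \emph{eventually constant in time} (indeed never changes after the last head that could reach it halts, which happens within a bounded time by the Trapping Lemma plus finiteness of $\mathcal H$). Concretely: there is a finite ``active window'' $W \subset \mathbb Z$ and a time $T_0$ such that for all $t \ge T_0$, $F^t(c)$ restricted to $\mathbb Z \setminus W$ equals $F^{T_0}(c)$ restricted to $\mathbb Z \setminus W$, and moreover this frozen exterior, being built from the $B_j$'s, is eventually spatially-and-temporally periodic in each of its unbounded polyhedral pieces (Lemma~\ref{lemma:periodic}).

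Finally I would count copies of $\bm{\mP}$ and derive the contradiction, splitting into two cases according to whether $\bm{\mP}$ is ``dynamic on the exterior'' or not. Since $T \geq 2$, the patterns $\mP_0, \ldots, \mP_{T-1}$ are pairwise distinct, so a region $R_i$ realizing a copy of $\bm{\mP}$ in $F^n(c)$ must contain a cell whose state changes between consecutive time steps $F^n(c), \ldots, F^{n+T-1}(c)$. By the previous paragraph, for $n \ge T_0$ any such ``changing cell'' must lie in the active window $W$ or in one of the periodic exterior pieces. In the exterior pieces the changes are purely the periodic background dynamics, and Lemma~\ref{lemma:nobackgroundPs} already rules out $\bm{\mP}$ being a copy realized there (it would force infinitely many simultaneous copies, violating the ``exactly $k_i$'' clause). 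So every copy of $\bm{\mP}$ in $F^n(c)$, $n \ge T_0$, must be within distance $(\operatorname{diam} R + r T)$ of the active window $W$ — a bounded region $W'$ independent of $n$. The number of disjoint copies of a pattern supported on a region of fixed diameter that fit inside a bounded region $W'$ is bounded by $|W'|$, so the copy count is $\le |W'|$ for all $n \ge T_0$. This contradicts $k_j \to \infty$. Hence no such $\bm{\mP}$ self-replicates; the CA fails the local self-replicating condition, is therefore not in $\mathsf{UniversalSelfReplicating}$ despite being locally universal, and $\mathsf{UniversalSelfReplicating} \subsetneq \mathsf{LocallyUniversal}$ as claimed.

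\medskip

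\noindent\textbf{Main obstacle.} The delicate step is proving the finiteness bound $M$ on ever-active heads and the existence of the bounded invariant window $W$ \emph{given only admissibility} of $c$ — in particular, correctly handling the polyhedral seams of the admissible background (where two distinct periodic patterns meet) and ensuring no seam can act as a perpetual ``head factory''. One must check that the non-talking mechanism (parity-marked arrows plus immediate halting on head-collision) genuinely prevents any head from escaping to infinity or triggering a cascade of new heads; here the Trapping Lemma does the heavy lifting, but one still has to argue that the \emph{total} number of heads — counting any that might be created at seams during the initial transient — is finite, and that after a bounded time the exterior is truly frozen. The rest of the argument (counting copies in a bounded region, invoking Lemmas~\ref{lemma:periodic} and~\ref{lemma:nobackgroundPs} for the periodic exterior) is then routine.
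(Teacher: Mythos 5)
Your proposal is correct and follows essentially the same route as the paper's proof: decompose $c$ into a finite perturbation flanked by two basic backgrounds, invoke Lemma~\ref{lemma:periodic} for eventual spatiotemporal periodicity of those backgrounds, use the Trapping Lemma~\ref{lemma:trapping} to confine the perturbation's influence to a bounded window (the paper phrases this as a decomposition of each background ray into non-interacting head territories, of which only the one adjacent to the perturbation can be affected), exclude background-borne copies of $\bm{\mP}$ via Lemma~\ref{lemma:nobackgroundPs}, and bound the copy count in the remaining finite window. One intermediate claim should be corrected: heads present in a spatially periodic background need not halt within bounded time --- they are merely trapped in bounded territories by Lemma~\ref{lemma:trapping} and can cycle there forever --- so the exterior is eventually spatiotemporally \emph{periodic} rather than \emph{frozen}, and the count of ``ever-active'' heads need not be finite; this does not damage your conclusion, because the steps you actually rely on (eventual periodicity plus Lemma~\ref{lemma:nobackgroundPs} for the deep exterior, and a bounded window for everything else) are exactly what the paper uses.
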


\begin{proof}
By contradiction, suppose that $c$ is a configuration of the CA satisfying the local self‑replicating condition.  Since by hypothesis $c$ is a self‑replicating pattern, it agrees with an admissible background $B$ outside of a finite region.  Without loss of generality, we decompose $\mathbb{Z} = R_-\sqcup R_0 \sqcup R_+$ where $R_- = (-\infty,a-1] \cap \mathbb{Z}$, $R_0=[a,b] \cap \mathbb{Z}$, and $R_+ = [b+1,\infty) \cap \mathbb{Z}$ for some integers $a<b$.  We suppose that $c$ equals $B_1\big|_{R_-}$ on $R_-$, a finite pattern $Q$ on $R_0$, and $B_2\big|_{R_+}$ on $R_+$, where $B_1,B_2$ are basic backgrounds.

Let $\widetilde{T}_1$ be the time that it takes for $B_1$ to evolve to its steady state (which is guaranteed to exist by Lemma~\ref{lemma:periodic}), which is necessarily spatially periodic and time-periodic.  We similarly let $\widetilde{T}_2$ be the time that it takes for $B_2$ to evolve to its steady state (again guaranteed to exist by Lemma~\ref{lemma:periodic}), and take $\widetilde{T} := \max\{\widetilde{T}_1, \widetilde{T}_2\}$.  Further let $\min Q$ denote the location of the left boundary of $Q$, and let $\max Q$ denote the location of the right boundary of $Q$.  We denote
\begin{align}
S_t^- &:= (-\infty, \min Q - t - 1] \cap \mathbb{Z} \\
S_t^+ &:= [\max Q + t + 1, \infty) \cap \mathbb{Z}
\end{align}
which captures regions untouched by the forward lightcone of $Q$ at time $t$.  Then for all $t \geq \widetilde{T}$,
\begin{align}
F^t(c)\big|_{S_t^-} = F^t(B_1)\big|_{S_t^-}\,,\quad F^t(c)\big|_{S_t^+} = F^t(B_2)\big|_{S_t^+}\,,
\end{align}
reflect the steady state dynamics of $B_1$ and $B_2$, respectively.  Let us consider the right-most head in the half-infinite configuration $F^{\widetilde{T}}(c)\big|_{S_t^-}$, and let $a'-1$ be the right-most coordinate that it reaches in the steady-state dynamics of $B_1$.  Similarly, let us consider the left-most head in the half-infinite configuration $F^{\widetilde{T}}(c)\big|_{S_t^+}$, and let $b'+1$ be the left-most coordinate that it reaches in the steady-state dynamics of $B_2$.  It will be useful to consider the decomposition $\mathbb{Z} = R_-'\sqcup R_0' \sqcup R_+'$ where $R_-' = (-\infty,a'-1] \cap \mathbb{Z}$, $R_0'=[a',b'] \cap \mathbb{Z}$, and $R_+' = [b'+1,\infty) \cap \mathbb{Z}$.

Next we consider the dynamics of $F^t(B_1)\big|_{R_+'}$ for $t \geq \widetilde{T}$.  Since the dynamics in the region $R_+'$ and at times $t \geq \widetilde{T}$ is periodic in both space and time, we can decompose $R_+' = R_+'^{\,(1)} \sqcup R_+'^{\,(2)} \sqcup R_+'^{\,(3)} \sqcup \cdots$ where each $R_+'^{\,(i)}$ has the same length and evolves identically.  Moreover, we can choose the $R_+'^{\,(i)}$ so that each has a fixed number of heads which never leave that region.  As such, the $R_+'^{\,(i)}$ are \emph{non-interacting}: affecting the dynamics of any one $R_+'^{\,(i)}$ does not affect the dynamics of any others.  We can perform a similar decomposition $R_-' = R_-'^{\,(1)} \sqcup R_-'^{\,(2)} \sqcup R_-'^{\,(3)} \sqcup \cdots$ by an analogous argument.

For times $t \geq \widetilde{T}$, the dynamics in the region $R_0'$ can at most affect $R_-'^{\,(1)}$ and $R_+'^{\,(1)}$, causing some of their heads to halt.  But the regions $R_-'^{\,(i)}$ and $R_-'^{\,(i)}$ for all $i \geq 2$ will be unaffected, since they do not interact with $R_-'^{\,(1)}$ and $R_+'^{\,(1)}$, or with $R_0'$.  Therefore
\begin{align}
F^t(c)\big|_{R_-' \setminus R_{-}'^{\,(1)}} = F^t(B_1)\big|_{R_-' \setminus R_{-}'^{\,(1)}}\,,\quad F^t(c)\big|_{R_+' \setminus R_{+}'^{\,(1)}} = F^t(B_2)\big|_{R_+' \setminus R_{+}'^{\,(1)}}\,,\quad\text{for all }t \geq \widetilde{T}\,.
\end{align}
If $\bm{\mathcal P}=(\mP_0,\dots,\mP_{T-1})$ is the self-replicating pattern, letting $L$ be the length of the region on which the pattern is supported.  Then either a copy of $\bm{\mathcal P}$ is fully within $R_-' \setminus R_{-}'^{\,(1)}$ or $R_+' \setminus R_{+}'^{\,(1)}$, or otherwise it is fully within $[a'-L, b'+L]$.  In the former setting, $\bm{\mathcal P}$ is background-borne and excluded by Lemma~\ref{lemma:nobackgroundPs}, since infinitely many copies of it will arise at times $t \geq \widetilde{T}$.

Finally, we note that at any one time, there can only be finitely many disjoint copies of $\bm{\mathcal P}$ within the finite region $[a'-L, b'+L]$.  But this means that the total number of disjoint $\bm{\mathcal P}$'s is bounded for all time, and therefore cannot replicate beyond a certain number.  Thus we reach a contradiction.
\end{proof}

\subsection{Self-replication hierarchy for CAs}

We now have all the results in place to establish a self-replication hierarchy for CAs. By combining Theorem~\ref{thm:universalnotglobal} with Theorem~\ref{thm:non-talking-no-selfrep}, we have:
\begin{theorem}[Self-replication hierarchy for CAs]
We have the strict inclusions
\begin{align}
\mathsf{GloballyUniversal} \subsetneq \mathsf{UniversalSelfReplicating} \subsetneq \mathsf{LocallyUniversal}\,.
\end{align}
\end{theorem}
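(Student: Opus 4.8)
The plan is to assemble the hierarchy from four ingredients, two of which are the separating constructions already established in the preceding sections. First I would record the containment $\mathsf{GloballyUniversal} \subseteq \mathsf{UniversalSelfReplicating}$: if $\mathcal B$ is globally universal then, by Theorem~\ref{thm:1d_universal_self_replicator}, it globally simulates a $1$D universal self-replicator $\mathcal V$. Since a global simulation is witnessed by an encoder and decoder that are cellular transformata on packed configuration spaces (Theorem~\ref{thm:packed_curtis_hedlund_lyndon})—in particular local, shift-commuting block codes that carry admissible backgrounds to admissible backgrounds—a self-replicating local pattern for $\mathcal V$ pushes forward to a self-replicating local pattern for $\mathcal B$; combined with Theorem~\ref{thm:local_global_relation} (global universality implies local universality), this gives $\mathcal B \in \mathsf{UniversalSelfReplicating}$.

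Next I would obtain strictness of the first inclusion from Theorem~\ref{thm:universalnotglobal}: Toffoli's embedding yields a $2$D reversible CA $\mathcal R$ globally simulating $\mathcal V$, so by the transport argument above $\mathcal R \in \mathsf{UniversalSelfReplicating}$; but no reversible CA can be globally universal (it cannot even globally simulate the constant-to-$0$ CA, since an injective global rule restricted to a finite set of periodic configurations is a permutation and hence cannot reach a constant image from a nonconstant periodic configuration), so $\mathcal R \notin \mathsf{GloballyUniversal}$. Hence $\mathsf{GloballyUniversal} \subsetneq \mathsf{UniversalSelfReplicating}$.

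For the second inclusion, $\mathsf{UniversalSelfReplicating} \subseteq \mathsf{LocallyUniversal}$ is exactly the local-universality half of Condition~\ref{def:necessaryCA1}. Strictness follows from Theorem~\ref{thm:non-talking-no-selfrep}: the non-talking heads CA directly simulates a universal Turing machine (a single head on a tape initialized everywhere with $a = \sqcup'$), so it lies in $\mathsf{LocallyUniversal}$; yet by the trapping lemma (Lemma~\ref{lemma:trapping}) every head remains strictly between the initial positions of its two neighbors, so on any configuration agreeing with an admissible background outside a finite region only boundedly many disjoint occurrences of a fixed dynamical local pattern can ever appear outside the background's own eventual periodic regime—and those inside that regime are excluded by Lemma~\ref{lemma:nobackgroundPs}—so the local self-replicating condition fails and the CA is not in $\mathsf{UniversalSelfReplicating}$.

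Since both strict separations are already carried out in the preceding sections, the chaining itself is bookkeeping. The one point I expect to demand genuine (if routine) verification is the transport of self-replication under global simulation invoked in the first two paragraphs: one must check that a block-code encoder/decoder witnessing a global simulation maps admissible backgrounds to admissible backgrounds and sends a self-replicating local pattern (in the sense of Definition~\ref{def:local_selfrep_cond}) to a self-replicating local pattern—so that the hosted self-replicator of $\mathcal V$ genuinely survives inside $\mathcal B$. I would make that step explicit before stitching the four pieces together.
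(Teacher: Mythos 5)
Your proposal is correct and follows essentially the same route as the paper: the theorem is obtained by chaining Theorem~\ref{thm:universalnotglobal} (reversible CAs via Toffoli's embedding separate $\mathsf{GloballyUniversal}$ from $\mathsf{UniversalSelfReplicating}$) with Theorem~\ref{thm:non-talking-no-selfrep} (the non-talking heads CA separates $\mathsf{UniversalSelfReplicating}$ from $\mathsf{LocallyUniversal}$), with the non-strict inclusions supplied by Condition~\ref{def:necessaryCA1} and the containment $\mathsf{GloballyUniversal} \subseteq \mathsf{UniversalSelfReplicating}$. The one point where you go slightly beyond the paper is the "transport" step you flag at the end: because $\mathsf{UniversalSelfReplicating}$ is only informally characterized, the paper postulates that containment as a Claim any reasonable definition must satisfy (justified by exactly the block-code pushforward you describe), rather than proving it outright, so your plan to verify it explicitly is sound but not strictly required by the paper's framework.
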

\noindent This is one of the main results of our paper. We emphasize that this stratification clarifies both the relationship between global and local universality and their connection to (universal) self-replication. The natural intuition from von Neumann's work on universal self-replicators~\cite{neumann_vnsr}, that any system capable of furnishing local Turing-universal computation is also capable of (perhaps universal) self-replication, is evidently false.  However, in establishing our counterexample, we have been forced to develop sharper mathematical criteria for when the dynamics of physical systems constitute non-trivial self-replication, and to identify dynamical mechanisms that are essential for such self-replication to arise.

\section{Open problems and conjectures}
\label{sec:openproblems}

We conclude by articulating a number of open problems and conjectures motivated by our work.  In so doing, we sketch the contours of a research program for establishing a general theoretical foundation for self-replication beyond particular examples.  We will first give some more concrete questions before delving into grander goals.

\paragraph*{Effect of dimensionality of CAs.} In our discussion of \textsf{UniversalSelfReplicating} we took the set of CAs to be a union over all dimensions. That is, letting $\textsf{UniversalSelfReplicating}_d$ be the set of $d$-dimensional CAs exhibiting universal self-replication, we have $\textsf{UniversalSelfReplicating} := \bigcup_{d = 1}^\infty \textsf{UniversalSelfReplicating}_d$. It is natural to ask questions about each $\textsf{UniversalSelfReplicating}_d$ individually.

Our example of a $d = 1$ version of von Neumann's universal self-replicator demonstrates that $\textsf{UniversalSelfReplicating}_1$ is non-empty. The CA in question is not reversible (i.e.~it is not in \textsf{Reversible}). However, we can establish the following result for higher dimensions:
\begin{corollary}[to Theorem~\ref{thm:1dvN}]
There are reversible CAs in any dimension $d \geq 2$ which are universally self-replicating, or equivalently $\textnormal{\textsf{UniversalSelfReplicating}}_d \cap \textnormal{\textsf{Reversible}}$ is non-empty for $d \geq 2$.
\end{corollary}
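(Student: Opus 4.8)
The plan is to induct on the dimension $d$, bootstrapping from the one-dimensional universal self-replicator of Theorem~\ref{thm:1dvN} and invoking at each step Toffoli's embedding of an arbitrary $d$-dimensional CA into a $(d+1)$-dimensional reversible CA that globally simulates it \cite{toffoli1977computation}, together with the fact that global simulation transports both local universality and non-trivial self-replication upward. For the base case $d = 2$ I would simply recall the construction already used in the proof of Theorem~\ref{thm:universalnotglobal}: the reversibilization of the $1$D universal self-replicator is a $2$D reversible CA that globally simulates it, and the claims preceding Theorem~\ref{thm:universalnotglobal} then place it in $\mathsf{UniversalSelfReplicating}_2 \cap \mathsf{Reversible}$.

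For the inductive step I would take $\mathcal R \in \mathsf{UniversalSelfReplicating}_d \cap \mathsf{Reversible}$ with $d \geq 2$ and let $\mathcal R'$ be the $(d+1)$-dimensional reversible CA furnished by Toffoli's construction, which globally simulates $\mathcal R$ (the cross-dimensional statement being read through the lifting of Remark~\ref{rem:lower_dim_lifting}). Then I would check three things. First, $\mathcal R'$ is reversible by construction. Second, $\mathcal R'$ is locally universal: since $\mathcal R$ locally simulates a universal Turing machine and $\mathcal R'$ globally simulates $\mathcal R$, Lemma~\ref{lemma:local_global_relation} yields that $\mathcal R'$ locally simulates a universal Turing machine. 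Third, $\mathcal R'$ satisfies the local self-replicating condition: if $\bm{\mathcal P}$ self-replicates for a configuration $c$ of $\mathcal R$ (with $c$ agreeing with an admissible background outside a finite region), then the encoder $\mathcal E$ witnessing $\mathcal R \preceq \mathcal R'$ is a cellular transformation, so $c' := \mathcal E(c)$ again agrees with an admissible background outside a finite region (cellular transformations send admissible backgrounds to admissible backgrounds, exactly as used in the proof of Lemma~\ref{lemma:local_global_relation}); and since the decoder $\mathcal D$ is a sliding block code intertwining $F^n$ with $(\sigma_v \circ G^\tau)^n$, the disjoint copies of $\bm{\mathcal P}$ appearing in $F^{n_i}(c)$ pull back to disjoint copies of a fixed local pattern $\bm{\mathcal P}'$ of $\mathcal R'$ of temporal length at least $2$, in the same counts $k_i$. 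These three points give $\mathcal R' \in \mathsf{UniversalSelfReplicating}_{d+1} \cap \mathsf{Reversible}$, closing the induction.

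The step I expect to be the main obstacle is the third one: pushing a self-replicating local pattern through the block code underlying the global simulation. There I would need to verify that the finite-window map induced by $\mathcal D$ carries disjoint regions to disjoint regions and preserves (or at worst rescales in a controlled, simulation-independent way) the copy count $k_i$; that the ``temporal length $\geq 2$'' requirement survives once the time-delay $\tau$ is absorbed — which should follow from $\mathcal D$ reading only a bounded window, so that a genuinely dynamical motif of $\mathcal R$ forces a genuinely dynamical motif of $\mathcal R'$ within any $\tau$-step block — and that $c'$ retains the ``computationally feasible'' form demanded by Definition~\ref{def:selfreplocal1}. These are precisely the verifications already invoked informally to justify the claim $\mathsf{GloballyUniversal} \subseteq \mathsf{UniversalSelfReplicating}$, so conditional on that claim the corollary reduces to iterating Toffoli's construction across dimensions, using reflexivity and transitivity of the preorder $\preceq$.
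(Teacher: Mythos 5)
Your proposal is correct and follows essentially the same route as the paper: apply Toffoli's embedding of a lower-dimensional CA into a higher-dimensional reversible CA to the $1$D universal self-replicator of Theorem~\ref{thm:1dvN}, and transport local universality and the self-replicating condition through the global simulation (the paper phrases this as a single embedding into any dimension $d \geq 2$ rather than an induction, but that is cosmetic). The step you flag as the main obstacle — that global simulation preserves non-trivial self-replication — is likewise left at the level of an asserted Claim in the paper, so your treatment is no less complete than the original.
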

\noindent This follows from the well-known fact that a CA in $d = 1$ can be embedded in a reversible CA in any higher dimension. Thus, we can use our $d = 1$ version of von Neumann's construction to exhibit a reversible CA in $d \geq 2$ dimensions which is universally self-replicating.  For the 1D case, we conjecture:
\begin{conjecture}
There is a reversible CA in one dimension which is universally self-replicating, or equivalently $\textnormal{\textsf{UniversalSelfReplicating}}_1 \cap \textnormal{\textsf{Reversible}}$ is non-empty.
\end{conjecture}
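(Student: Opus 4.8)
The plan is to transport the one‑dimensional universal self‑replicator of Theorem~\ref{thm:1d_universal_self_replicator} across the reversibility barrier rather than to build one from scratch. Let $\mathcal{A}$ be that 1D CA, and recall its universal self‑replication is witnessed by (i) local universality and (ii) a self‑replicating local pattern $\bm{\mathcal P}$ of temporal length $T\ge 2$ together with an admissible --- indeed quiescent‑outside‑a‑finite‑region --- witnessing configuration $c$. First I would invoke Morita's theorem \cite{morita1995}: any 1D CA on finite configurations over a quiescent background is simulated by a 1D reversible CA $\mathcal{R}$, with a block encoder $\mathcal{E}$, block decoder $\mathcal{D}$, and time‑slowdown $\tau$ satisfying $\mathcal{D}\circ G^{\tau n}\circ\mathcal{E}=F^{n}$ on finite configurations and carrying $\mathcal{A}$'s quiescent background to $\mathcal{R}$'s. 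Local universality of $\mathcal{R}$ is then immediate: compose this block simulation with $\mathcal{A}$'s local simulation of a universal Turing machine, exactly as in Lemma~\ref{lemma:local_global_relation}. So $\mathcal{R}$ lies in $\mathsf{LocallyUniversal}$, and everything reduces to showing $\mathcal{R}$ satisfies the local self‑replicating condition of Definition~\ref{def:local_selfrep_cond}.

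The obstacle --- and presumably the reason the statement is only conjectured --- lies exactly here. A reversible simulator of the irreversible $\mathcal{A}$ must record whatever information $\mathcal{A}$'s many‑to‑one local rule discards, and this ``garbage'' is emitted as signals that spread through space. Hence the $k_i$ copies of $\bm{\mathcal P}$ in $F^{n_i}(c)$ pull back to $k_i$ disjoint windows of $G^{\tau n_i}(\mathcal{E}(c))$ that each decode to $\bm{\mathcal P}$ but may carry mutually distinct accumulated garbage, so they need not be literal copies of a single local pattern $\bm{\mathcal Q}$ of $\mathcal{R}$. In fact reversibility \emph{forbids} globally identical proliferating motifs (else the backward global map would not be single‑valued), so the garbage cannot be purged globally; the only hope is that it can be kept out of the $\bm{\mathcal P}$‑window locally.

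This dictates the remaining, delicate step. The conservative route is to choose $\bm{\mathcal P}$ inside $\mathcal{A}$ so that $\mathcal{A}$'s dynamics restricted to the bounded window supporting $\bm{\mathcal P}$ is \emph{injective} --- the natural candidate is the clean glider‑like motion of the constructing‑arm tip sweeping through blank ether during each replication event, which writes but destroys no information. If the windowed dynamics is information‑preserving, Morita's simulation emits no garbage there, so the induced pattern $\bm{\mathcal Q}$ of $\mathcal{R}$ (still of temporal length $\ge 2$ after rescaling by $\tau$) is the same in every copy and self‑replicates, finishing the proof. Verifying that the arm‑tip motif is genuinely locally reversible, and that Morita's particular encoding is compatible with it, is the technical crux I would expect to consume most of the effort.

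If that fails, the aggressive fallback is to construct a reversible 1D self‑replicator directly: replace von Neumann's supervisory unit with a universal \emph{reversible} Turing machine (\cite{morita1989computation}), realize the tape and constructing units as reversible partitioned‑CA components --- ``write into ether'' is reversible when gated by a signal, ``retract the arm'' is reversible when arm segments are labeled so that erasure is injective --- and then apply the height‑bounded strip compression of Theorem~\ref{thm:1d_universal_self_replicator}. Here the difficulty is global rather than conceptual: reversibility is injectivity of the \emph{joint} global rule, so one must shepherd a universal computer, a mobile tape head, and a mobile constructing arm through a single partitioned‑CA presentation in which no structure ever overwrites another's cells and every ``creation from ether'' step is locally invertible. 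I expect this bookkeeping, not a genuine obstruction, to be what stands between the conjecture and a theorem.
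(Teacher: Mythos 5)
This statement is an open \emph{conjecture} in the paper; the authors do not prove it, and in fact they explicitly discuss (in Appendix~\ref{sec:openproblems}) why your primary route is expected to fail. The gap in your main argument is the assumption that Morita's simulation comes with a constant time-delay $\tau$ satisfying $\mathcal{D}\circ G^{\tau n}\circ\mathcal{E}=F^{n}$. It does not: the reversible simulator of~\cite{morita1995} needs $O(n)$ steps to simulate one step of the irreversible CA on a configuration with $n$ non-quiescent sites, so the delay is configuration-dependent and grows without bound as the self-replicator proliferates. This is fatal to your plan independently of the garbage issue you focus on. Even if you choose $\bm{\mathcal P}$ so that $\mathcal{A}$'s windowed dynamics is injective and no garbage accumulates inside the window, the slowdown is a \emph{global} feature of the simulation --- the reversible machine sweeps signals across the entire non-quiescent support to realize each macro-step --- so the number of $G$-steps needed to advance the motif inside any fixed window by one $\mathcal{A}$-step grows with the total number of copies. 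Consequently the induced pattern in $\mathcal{R}$ has no fixed temporal length $T$: later generations are ``slower,'' copies at a given time need not be in phase with one another, and Definition~\ref{def:selfreplocal1} (which requires all $k_i$ windows at time $n_i$ to realize the \emph{same} local pattern sequence of fixed length $T$) cannot be satisfied. The paper makes exactly this point and notes that related no-go results such as~\cite{hertling1998embedding} suggest non-constant slowdowns may be unavoidable for reversible simulation of irreversible 1D CAs, which is why the statement remains a conjecture.

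Your fallback --- building a reversible 1D universal self-replicator directly from reversible components (a reversible universal Turing machine for the supervisory unit, reversibly gated construction and retraction, then strip compression as in Theorem~\ref{thm:1d_universal_self_replicator}) --- is precisely the route the paper identifies as ``the most natural way to establish this conjecture,'' but neither you nor the paper carries it out, and the paper cautions that performing universal computation and self-replication within a bounded-height strip may be substantially harder in the reversible setting. So the proposal does not close the conjecture: the primary route is blocked by a concrete, identified obstruction, and the fallback is a program rather than a proof.
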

The most natural way to establish this conjecture would be to build a reversible 1D universal self-replicator `from scratch'. However, one might be tempted to adapt higher-dimensional constructions. One route would be to construct a reversible 2D universal self-replicator, and then to show that it can both perform universal computation and self-replicate in a strip of bounded height (although this might be harder to achieve in the reversible setting than the irreversible setting). Then one could adapt the strategy of Theorem~\ref{thm:1d_universal_self_replicator} to `compress' the reversible 2D CA into a reversible 1D CA.

Another tempting approach would be to take our construction of a 1D universal self-replicator in Theorem~\ref{thm:1d_universal_self_replicator}, and try to `simulate' it by a reversible 1D CA.  However, this route appears to be difficult, or perhaps impossible, for interesting reasons.  To elaborate, we note that there are results in the literature that any irreversible 1D CA can be simulated by a reversible 1D CA~\cite{morita1995, durand2000reversible}.  Take for example the construction of~\cite{morita1995}, which shows how a reversible 1D CA can simulate any other irreversible 1D CA starting from a configuration in which all but finitely many sites are in the quiescent state.  However, the `simulator' reversible 1D CA takes $O(n)$ time steps to simulate any configuration in the irreversible CA with $n$ non-quiescent sites, amounting to a worst-case quadratic computational slowdown.  While such an encoding is viable for lifting a locally universal but irreversible 1D CA into a reversible one, it does not interface well with the local self-replicating condition. The problem is that when simulating self-replicating configurations, the simulation time for dynamical motifs increases with the number of replicator copies being simulated. From the perspective of the simulated system, the replicated organisms become 'slower' as their numbers grow. This means there is no truly preserved dynamical motif across time that is shared by all copies. (Moreover the particular simulation method in~\cite{morita1995} can cause dynamical motifs to occur non-simultaneously when many replicator copies are present, providing another violation of the local self-replicating condition.)  Other methods for simulating irreversible 1D CAs with reversible ones (see e.g.~\cite{durand2000reversible}) exhibit related computational slowdowns. In fact, no-go theorems such as~\cite{hertling1998embedding} suggest that non-constant slowdowns may be unavoidable when simulating irreversible 1D CAs with reversible ones.

Turning to a new question, we can also ask whether specific CAs are universally self-replicating. To this end, we recapitulate our conjecture from earlier in the paper:
\begin{conjecture}
Rule 110 is in $\textnormal{\textsf{UniversalSelfReplicating}}_1$.
\end{conjecture}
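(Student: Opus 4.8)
The plan is to establish the conjecture by exhibiting, directly inside Rule~110, a von Neumann--style universal self-replicator: a finite glider assembly comprising a Turing-universal \emph{control engine}, a \emph{tape} holding the machine's description as a string of static gliders, and a \emph{constructing mechanism} that emits coded glider streams which are then deposited as static structure at a spatial offset. Since Rule~110 is already locally universal (established above), the only missing ingredient is non-trivial self-replication in the sense of Definition~\ref{def:local_selfrep_cond}; exhibiting the machine just described simultaneously yields a self-replicating local pattern and retains Turing-universal computation, which under any reasonable reading places Rule~110 in $\mathsf{UniversalSelfReplicating}_1$.

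The construction would reuse and extend Cook's glider toolkit. Cook's proof of local universality already supplies (i)~a Turing-complete control engine (a glider implementation of an admissible cyclic tag system, which by Neary--Woods simulates a universal Turing machine), (ii)~a region of static tape data encoded by stationary gliders, and (iii)~a genuine \emph{construction primitive} --- the ossifier mechanism, in which a stream of moving-data gliders colliding with incoming ``ossifier'' gliders solidifies into prescribed static tape symbols. Step~(iii) is precisely a one-dimensional analogue of von Neumann's constructing arm. The work is then to upgrade this special-purpose machinery into a general-purpose constructor: first, to show that Rule~110's glider collisions are expressive enough that, under program control, the control engine can emit an arbitrary finite coded stream and have the ossifier-type mechanism lay it down as an arbitrary finite pattern of static gliders at a controllable horizontal offset; second, to program the control engine to (a)~copy its own tape verbatim into the offspring's tape region (transcription) and (b)~drive the constructor to build a copy of the control engine and of the constructor itself (translation). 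A standard quine / recursion-theorem argument then fixes the tape content to be a description of the entire machine, so that the offspring is an exact copy, which in turn replicates; iterating produces arbitrarily many disjoint copies, all exhibiting the same bounded dynamical motif.

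The remaining work is resource and background bookkeeping. Cook's layout feeds the machine from semi-infinite periodic streams (ossifiers on one side, the repeated appendant list on the other); for indefinite replication these must be regenerated by each offspring or supplied externally. The cleanest route is to take the admissible background of Definition~\ref{def:admissible_background} to be a suitable spatially periodic Rule~110 configuration carrying the construction ``fuel'' stream, so that every copy can tap the ambient supply, while the self-replicating local pattern $\bm{\mathcal P}$ is chosen to be a fixed glider motif internal to the control engine (one full cyclic-tag step, say), which has temporal length $\ge 2$ and by construction recurs in each offspring. One then checks: the background is genuinely admissible; the copies are counted precisely as required by Definition~\ref{def:selfreplocal1}, with Lemma~\ref{lemma:nobackgroundPs} ruling out the worry that periodic motifs in the fuel stream inflate the count (since $\bm{\mathcal P}$ lives only in the machine); and successive offspring occupy fresh, pairwise-disjoint regions because construction proceeds monotonically in one spatial direction with a growing offset.

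The main obstacle is the glider-engineering core of the argument: turning Cook's cyclic-tag-specific simulator into a verified general-purpose constructor --- proving that Rule~110 supports, under program control, both the emission of an arbitrary coded construction beam and its reliable deposition as arbitrary finite static structure --- and then composing this with a quine tape and the fuel-stream logistics so that replication is exact and perpetual. This is a large-scale case analysis of glider collisions of the same character as Cook's original proof, but substantially more intricate because of the transcription/translation split and the resource management, and because every piece must be made compatible with the precise definitions (admissible background, exactly $k_i$ copies at time $n_i$, a single fixed-temporal-length local pattern). A secondary subtlety is to avoid the non-constant-slowdown pitfall noted in Appendix~\ref{sec:openproblems}: one cannot instead simply ``run'' the $1$D universal self-replicator of Theorem~\ref{thm:1d_universal_self_replicator} via Rule~110's local simulation, because the polynomial time slowdown grows with the number of replicator copies present and so destroys the fixed dynamical motif required by the local self-replicating condition --- hence the insistence on a native, glider-level construction.
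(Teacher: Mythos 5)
This statement is stated in the paper as an open \emph{conjecture}, not a theorem: the paper offers no proof, only the remark that global universality of Rule~110 would suffice (but is unknown) and that the authors ``suspect it may be possible to use (a modest generalization of) the existing local universality construction for Rule~110 to build a 1D universal self-replicator.'' Your proposal is essentially an elaboration of that same suspicion, and it is a reasonable research plan --- but it is not a proof, and you should not present it as one. The entire mathematical content of the conjecture is concentrated in the step you defer: showing that Rule~110's glider-collision algebra actually supports a \emph{programmable} constructor. Cook's construction gives you a fixed-function machine: the ossifier mechanism converts moving data into exactly one kind of static structure (tape symbols $\mathbb{Y}$/$\mathbb{N}$), and the appendant list and ossifier supply are hard-wired into the infinite periodic background, not produced by the machine. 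To self-replicate, the offspring needs its own leaders, its own appendant-list stream, and its own ossifier stream --- none of which the existing collision repertoire is known to be able to synthesize under program control. Asserting that ``Rule~110's glider collisions are expressive enough'' is precisely the open question; no amount of quine/recursion-theorem scaffolding on top of it closes the gap.

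There are also concrete obstructions your ``ambient fuel'' workaround does not address. The ossifier and appendant streams are directional and pass through space: if two copies of the machine coexist on the line, the inner copy consumes or perturbs the gliders destined for the outer copy, and the ``garbage gliders'' emitted by every leader--symbol collision propagate leftward through the ossifier region --- Cook only verifies that they pass harmlessly through \emph{ossifiers}, not through an entire second machine. So pairwise non-interference of offspring, which you need for the ``exactly $k_i$ disjoint copies at time $n_i$'' clause of Definition~\ref{def:selfreplocal1}, is itself an unverified glider-engineering claim. Your closing observation --- that one cannot shortcut the problem by locally simulating the 1D universal self-replicator of Theorem~\ref{thm:1d_universal_self_replicator}, because the polynomial slowdown destroys the fixed dynamical motif --- is correct and matches the paper's own reasoning in the open-problems section. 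But identifying the right strategy and the right pitfalls is not the same as carrying out the construction; as written, the conjecture remains open.
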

\noindent An affirmative answer to this conjecture would be compelling since it would constitute by far the simplest set of CA rules giving rise to (non-trivial) universal self-replication, although the encoding would be necessarily complicated. It would suffice for Rule 110 to be globally universal (as this would imply universal self-replication on account of our $d = 1$ version of von Neumann's construction), but this is not known. The existing universality construction~\cite{cook2009concrete} is tailored to local universality as opposed to global universality, so it would seem that a rather different encoding would be required if global universality were to be possible. Instead, we suspect it may be possible to use (a modest generalization of) the existing local universality construction for Rule 110 to build a 1D universal self-replicator.

\paragraph*{Genericity of universality and self-replication.} Our work serves to crystallize the relationship between computational universality and non-trivial self-replication. Ultimately, we are interested in being able to diagnose if a particular CA is able to exhibit non-trivial self-replication, or perhaps more strongly universal self-replication. A sufficient condition is for the CA to be globally universal; a necessary condition for the CA to exhibit universal self-replication is for it to be locally universal. At the moment, we do not know of any other more discriminating necessary conditions, and we will discuss this later.

Do `generic' CAs exhibit non-trivial self-replicators?  Before examining this question, we take a detour to discuss a more well-studied variant.  There is a belief within the CA community that `generic' CAs exhibit computational universality of some kind. A clear articulation of this view originates in the work of Wolfram, specifically in his work on 1D CAs~\cite{wolfram1984universality,wolfram2002}. He offers a qualitative classification of CA behavior roughly as follows: in Class 1 CAs, almost any initial condition quickly converges to a uniform fixed point; in Class 2 CAs, dynamics settle into simple, localized periodic or stable patterns; in Class 3 CAs, activity remains aperiodic and chaotic with no persistent structure; and in Class 4 CAs, long-lived localized patterns move and interact, mixing order with apparent randomness.  Wolfram's ``Principle of Computational Equivalence'' conjecture stipulates that Class 4 CAs are typically computationally universal~\cite{wolfram1984universality,wolfram2002}.  While conceptually stimulating, this conjecture is necessarily qualitative, as neither the Class 4 designation, computational universality, nor the notion of `typically' are sharply defined. In practice, the conjecture is applied by identifying CAs that qualitatively belong to Class 4 and then conjecturing their computational universality, as was done with Rule 110.

We suggest that Wolfram's conjecture most naturally pertains to local universality rather than global universality. Indeed, Rule 110 is a Class 4 CA whose local universality is taken as evidence for the conjecture. More broadly, one might expect any locally universal CA to exhibit the dynamical features characteristic of Class 4.  However, if Wolfram’s conjecture is read as requiring that random initial conditions typically evolve into ordered, localized structures, then reversible CAs pose a principled obstacle. Because every reversible CA is surjective and (on full shifts) surjective CAs preserve the uniform Bernoulli product measure, an i.i.d.~initial configuration remains i.i.d.~at every fixed time and so no spatial correlations or entropy reduction are possible~\cite{kari2015statistical, capobianco2013surjective}. A complementary result shows that in one dimension, surjectivity is equivalent to preserving Martin–L\"{o}f randomness, so algorithmically random configurations remain random under reversible evolution~\cite{calude2001randomness}. In this precise sense, reversible CAs do not generate order from randomness and therefore, under this reading, should not be counted as Class 4. (They can still display Class‑4‑like behavior from finite or low‑entropy seeds, e.g.~\cite{margolus2018crystalline}.)  These considerations lead us to conjecture:
\begin{conjecture}
There exists a Class 4 CA which is irreversible and locally universal, but not globally universal.
\end{conjecture}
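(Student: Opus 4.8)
Since ``Class 4'' is not a formal notion, I would first fix an operational reading compatible with the discussion above: call a CA \emph{Class 4} if (i) from positive-density i.i.d.\ initial conditions it generically develops sparse, long-lived, spatially localized structures that propagate and collide nontrivially (so it is neither eventually frozen, as in Classes 1--2, nor fully chaotic, as in Class 3), and in particular it is non-surjective, so that it genuinely reduces entropy and ``generates order from randomness''. The proof then has two parts: a rigorous part establishing that our CA is irreversible, locally universal, and not globally universal; and a verification part establishing that it exhibits the Class 4 hallmarks, in the same (necessarily semi-rigorous, glider-analysis) spirit used for Rule 110 and Game of Life.

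The starting point is the non-talking heads CA $\mathcal{N}$ of Theorem~\ref{thm:non-talking-no-selfrep}. It is irreversible (the $\textsf{halt}$ transition forgets the prior head states) and locally universal (as shown), and it is \emph{not} globally universal: a globally universal CA could globally simulate the $1$D universal self-replicator of Theorem~\ref{thm:1d_universal_self_replicator}, and global simulation of a universal self-replicator transports the local self-replicating condition to the simulating CA (essentially the claim $\mathsf{GloballyUniversal}\subseteq\mathsf{UniversalSelfReplicating}$: the globally universal CA simulates the self-replicator via a local encoder/decoder, through which a self-replicating local pattern pushes forward, edge effects and time delay preserving temporal length $\ge 2$); but $\mathcal{N}$ fails the local self-replicating condition by Theorem~\ref{thm:non-talking-no-selfrep}. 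So $\mathcal{N}$ already has the three rigorous properties --- its only defect is that it is Class 2: from a random configuration almost every head halts at $t=1$, the survivors are trapped (Lemma~\ref{lemma:trapping}), and the configuration freezes.

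The main step is therefore to enrich $\mathcal{N}$ to a CA $\mathcal{N}'$ that is Class 4 while retaining the three properties. I would add, on an extra layer of state, a second species of non-computational mobile particles (``walkers'') carrying a bounded integer ``charge''; walkers propagate freely, collide and annihilate (so $\mathcal{N}'$ is non-surjective), and may occasionally split into two walkers of strictly smaller total charge, so that any finite initial configuration produces only boundedly many walkers \emph{ever}. I would arrange the interaction so that (a) walkers never create heads and heads never create walkers; (b) a head that meets a walker or a marker laid down by a walker simply halts --- exactly the ``wrong-parity'' mechanism of $\mathcal{N}$ --- and walkers never move or revive a head or alter a tape symbol read by a head; and (c) $\mathcal{N}$ embeds in $\mathcal{N}'$ by initializing the walker layer empty. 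Properties (a)--(b) let the Trapping Lemma and the proof of Theorem~\ref{thm:non-talking-no-selfrep} go through essentially verbatim for $\mathcal{N}'$: every computational head is still confined between its initial neighbors, every would-be self-replicating pattern supported near the heads is excluded as before, every pattern supported in the (periodic) background is excluded by Lemma~\ref{lemma:nobackgroundPs}, and every walker-supported pattern occurs in at most finitely many copies by the bounded-total-charge property. Hence $\mathcal{N}'$ fails the local self-replicating condition, so it is not universally self-replicating and therefore not globally universal; it is irreversible, and by (c) locally universal. It then remains to argue, via explicit analysis of the walker collision table, that a random configuration of $\mathcal{N}'$ generically evolves into a persistent population of interacting walkers with the qualitative signature of Class 4.

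The hard part --- and the reason this is only a conjecture --- is precisely reconciling ``genuinely Class 4'' with ``still fails the local self-replicating condition''. The easiest way to manufacture rich Class 4 dynamics is to build glider-gun-like structures, but any structure emitting unboundedly many gliders from a finite seed supplies unboundedly many copies of a temporal-length-$\ge 2$ local pattern, hence satisfies the local self-replicating condition, after which the route above to non-global-universality breaks down. The charge mechanism is designed to forbid exactly such ``guns'' while still permitting branching collisions that look complex, but whether the resulting dynamics is convincingly Class 4 --- rather than a transient burst of complexity that coarsens into a Class 2 regime --- is the delicate point, and would need the kind of careful glider bookkeeping Cook performed for Rule 110. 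A complementary line of attack would be to find an obstruction to global universality that is \emph{compatible with irreversibility} and does not pass through the local self-replicating condition (an irreversible analogue of the reversibility obstruction of~\cite{hertling1998embedding}, or an algebraic obstruction in the spirit of~\cite{hudcova2024simulation}) and graft it onto a known Class 4 rule; but the known algebraic obstructions force affine or Abelian (hence Class 3) behavior, so a genuinely new invariant would be required. I expect the construction of a gun-free yet Class-4 particle layer to be the technical heart of any resolution.
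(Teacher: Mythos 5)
This statement is posed in the paper as an open conjecture; the paper offers no proof, and explicitly flags the difficulty that ruling out global universality requires excluding \emph{all} admissible encoder--decoder pairs. So there is no paper proof to compare against, and your proposal must be judged on its own terms: it is a research sketch, not a proof, and you say so yourself. The rigorous skeleton you assemble is sound and consistent with the paper's results: the non-talking heads CA is irreversible and locally universal, and it fails the local self-replicating condition (Theorem~\ref{thm:non-talking-no-selfrep}), hence by the inclusion $\mathsf{GloballyUniversal} \subseteq \mathsf{UniversalSelfReplicating}$ (which rests on pushing a self-replicating local pattern through a global simulation of the $1$D universal self-replicator of Theorem~\ref{thm:1d_universal_self_replicator}) it cannot be globally universal. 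Your observation that this route to non-global-universality is \emph{structurally in tension} with Class 4 behavior --- any finite seed emitting unboundedly many temporally nontrivial localized structures would itself witness the local self-replicating condition, destroying the very obstruction you rely on --- is a genuine insight, and arguably a sharper diagnosis of why the conjecture is hard than the one the paper gives.

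The gap, however, is real and twofold. First, ``Class 4'' has no formal definition in the paper or anywhere else; your operational reading (persistent localized interacting structures from generic i.i.d.\ initial data, non-surjectivity) is reasonable but is itself a definitional choice that any referee could contest, so even a completed construction would prove a theorem about your surrogate definition rather than the conjecture as stated. Second, the walker layer is entirely unconstructed: you specify desiderata (bounded total charge, no head--walker creation, halting on contact) but give no transition table, no argument that the walker dynamics from random initial conditions avoids coarsening into a frozen or periodic regime, and no verification that the bounded-charge mechanism is compatible with persistent, non-transient complexity. The claim that the Trapping Lemma and the proof of Theorem~\ref{thm:non-talking-no-selfrep} ``go through essentially verbatim'' for $\mathcal{N}'$ also needs care: that proof decomposes the line into non-interacting periodic background blocks plus a finite central region, and freely moving, colliding, splitting walkers would disrupt exactly that decomposition unless the walker layer in the admissible background is itself eventually spatiotemporally periodic --- which is an additional property you would have to design in and verify. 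Your proposal is a credible plan of attack with an honestly identified central obstacle, but it does not establish the conjecture.
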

\noindent If true, this would support our suggestion that local universality is the natural interpretation of universality in Wolfram's conjecture.  A difficulty in establishing the conjecture is that while demonstrating local universality requires finding a particular sensible encoding and decoding, ruling out global universality requires showing that no suitable encoding and decoding exist.

We now examine another aspect of Wolfram's conjecture: whether `typical' Class 4 CAs are computationally universal. We reformulate this question to make it precise. Consider CAs on a square lattice in a fixed number of dimensions, say $d = 1$, $2$, or $3$. Consider CAs with interaction radius at most $R$, where each cell has $|S|$ possible states. We denote this set $\textsf{CA}_{d,R,|S|}$, which has cardinality approximately $|S|^{\text{const.} \times R^d \times |S|}$. This leads to the following question:
\begin{question}
\label{Q:CAasymptotics}
Consider the set $\textnormal{\textsf{CA}}_{d,R,|S|}$. For $d$ fixed, $R$ fixed, and $|S| \to \infty$, what fraction of the set is locally universal? What about for $d$ fixed, $|S|$ fixed, and $R \to \infty$? What about for $d$ fixed, and $|S|, R \to \infty$ with some ratio $f(R)/|S|$ fixed?
\end{question}
\noindent Plausibly, only a small or vanishing fraction of CAs are locally universal in the first two cases. Since determining local universality is undecidable~\cite{intrinsic_universality_problem_of_1d_cas}, a proof would need to show that generic CAs in these regimes possess obstructions to universality. This strategy has succeeded in the study of random groups, where all but a vanishing fraction have solvable word problems due to their hyperbolic geometric structure~\cite{gromov1993geometric}. Semon Rezchikov has suggested generalizing this approach to semigroups or Turing machines; extending it to CAs would be natural.

If only a vanishing fraction of CAs (in the first two formulations of Question~\ref{Q:CAasymptotics}) were locally universal, this would contradict the intuition behind Wolfram's conjecture if Class 4 CAs were asymptotically generic. Regardless, resolving Question~\ref{Q:CAasymptotics} would provide valuable insight into the relationship between local universality and the space of CAs.

For non-trivial self-replication, the situation may be more favorable. Exhibiting non-trivial self-replication is likely weaker than local universality, and thus potentially more pervasive among CAs. As we discuss below, there may be a hierarchy of self-replication conditions that we do not yet understand; contingent on such conditions being well-defined, a variant of Question~\ref{Q:CAasymptotics} for non-trivial self-replication might optimistically yield that a large fraction of CAs support it. We pose this as a preliminary question:
\begin{question}
\label{Q:CAasymptotics2}
Do a large fraction of CAs (e.g.~in some asymptotic sense of Question~\ref{Q:CAasymptotics}) exhibit non-trivial self-replication?
\end{question}
\noindent An affirmative answer would be particularly interesting, as it would suggest that non-trivial self-replication is in some sense generic.

\paragraph*{Hierarchy of non-trivial self-replicating conditions.}

One of the contributions of our work is to articulate a necessary condition for a CA configuration to exhibit non-trivial self-replication.  An essential intuition for our condition is that a self-replicating object and its replicas should be able to exhibit non-trivial dynamics (such as a periodic orbit) after replication has occurred.  That is, the replicated objects are in some manner functional.  It is an interesting problem to formally characterize a hierarchy of ever more stringent non-trivial self-replication conditions.  We formulate this as a question:
\begin{question}
Can one define a nested sequence of dynamical conditions on CAs leading to a hierarchy
\begin{align}
\textnormal{\textsf{LocallyPeriodicSelfReplicating}} = \textnormal{\textsf{C}}_1 \subset \textnormal{\textsf{C}}_2 \subset \cdots \subset \textnormal{\textsf{C}}_n = \textnormal{\textsf{UniversalSelfReplicating}},
\end{align}
analogous to increasing levels of biological complexity (e.g.~from virus-like copying to autonomous cell division)?
\end{question}
\noindent If such a hierarchy can be constructed, it would be fruitful to classify known CAs within the hierarchy.

While it is natural to organize the space of CAs according to whether they can furnish configurations whose evolutions exhibit certain dynamical properties, we could equally well classify the CA configurations themselves.  For instance, given a CA rule $G$ and a configuration $c$, consider the pair $(G,c)$.  So far we have been saying that $G$ is e.g.~in \textnormal{\textsf{LocallyPeriodicSelfReplicating}} if there is at least one $c$ whose dynamics under $G$ satisfies the local self-replicating condition.  (The viable initial $c$'s may be restricted, such as by demanding that they vary in only a finite number of cells from a fixed background, etc.)  Instead, we could consider $\textsf{LocallyPeriodicSelfReplicating}_{\textsf{config}}$ to be the set of pairs $(G,c)$ such that the dynamics of $c$ under $G$ satisfies the local self-replicating condition.  This formulation opens up new questions.  For example:
\begin{question}
Let $G_{\text{\rm Byl}}$ be the CA rule giving rise to Byl's loops~\cite{byl1989self}, and let $b$ be its configuration of all quiescent states.  As previously discussed, $G_{\text{\rm Byl}} \in \textnormal{\textsf{LocallyPeriodicSelfReplicating}}$.  What is the best lower bound on
\begin{align}
N_{\text{\rm Byl}} = \min\!\left\{\textnormal{\textsf{Hamming}}(b,c)\, : \, (G_{\text{\rm Byl}},c) \in \textnormal{\textsf{LocallyPeriodicSelfReplicating}}_{\textnormal{\textsf{config}}}\right\},
\end{align}
namely the number of non-quiescent states for the simplest initial configuration of $G_{\text{\rm Byl}}$ whose dynamics satisfies the local self-replicating condition?  Byl's construction gives $N_{\text{\rm Byl}} \leq 12$.
\end{question}
\noindent This question pertains to a notion of \textit{complexity} for self-replicating configurations, particularly the minimum complexity necessary to achieve a specific type of self-replicating behavior.

As remarked earlier, Langton's loops~\cite{langton1984self} and their successive generalizations to Byl's loops~\cite{byl1989self}, etc., culminating in the Perrier-Sipper-Zahnd loops~\cite{perrier1996toward}, qualitatively exhibit a hierarchy of non-trivial self-replicating behaviors. Specifically, Langton's loops belong to $\textsf{LocallyPeriodicSelfReplicating}_{\textsf{config}}$, but the replicas do not exhibit more sophisticated properties beyond periodic orbits, and thus ostensibly do not belong to $\textsf{C}_{2,\,\textsf{config}}$. At the other end, the Perrier-Sipper-Zahnd loops belong to $\textsf{UniversalSelfReplicating}_{\textsf{config}}$. It would therefore be natural to distill a hierarchy of dynamical conditions from these different kinds of self-replicating loops to furnish descriptions of the sets $\textsf{C}_i$.

We can further set our sights beyond self-replication to ask for necessary conditions on CAs that furnish configurations exhibiting sexual reproduction. This is more challenging since offspring can differ substantially from their parents, making it difficult to quantify family resemblance without presupposing specific organismal structure. We thus pose the following question:
\begin{question}
Are there natural necessary conditions on CA configurations such that they exhibit sexual reproduction?
\end{question}
\noindent Addressing this question, and more broadly identifying dynamical mechanisms for sexual reproduction, would be both mathematically interesting and theoretically important, with consequences for mathematical biology and the philosophy of biology.

\paragraph*{Robustness to noise, mutations, and evolution.}

Our explorations of a theoretical foundation for self-replicating systems have focused on the cleanest setting: dynamics in the absence of environmental noise. A comprehensive theory of self-replication must explain how the mechanisms involved are robust to noise. Moreover, introducing noise raises the possibility of mutations during self-replication, which complicates the identification of `replicas' since they may no longer be exact copies. This is a more modest version of the problem of abstractly defining sexual reproduction that we mentioned above.

Several noise models for CAs merit consideration. In probabilistic CAs, where the rules themselves are subject to error, robust self-replication would require some form of error correction. However, a crucial distinction must be made: the error correction should be performed by the organism itself, not by the global CA configuration. While it is possible to globally simulate a noiseless CA within a noisy one using error correction schemes, this approach would be inappropriate for studying robust self-replication. For instance, simply embedding von Neumann's universal self-replicator in a globally error-corrected noisy CA would protect the entire configuration rather than demonstrating that the organism itself can handle errors. The organism, which occupies only part of the configuration, should be responsible for its own error correction. Formalizing this distinction between organism-level and global-level error correction would be valuable for understanding robust self-replication.

A different error model fits within our existing paradigm of deterministic CAs. Consider a CA configuration whose dynamics exhibit non-trivial self-replication. We can ask whether this self-replicating behavior remains robust under perturbations of the initial state, both within the organism itself and in its surrounding environment. After all, organisms in nature do not exist in a void; they must contend with external objects and environmental interactions that may interfere with their replication processes.

These considerations of robustness and mutation naturally lead to von Neumann's grander vision for his universal self-replicator construction~\cite{neumann_vnsr, vonneumann1963general}. He viewed self-replication not as an end in itself, but as an intermediate step toward achieving an artificially evolving system where organisms would compete to write their own descriptions on the universal constructor's tape, thereby securing their replication. This competition inherently requires the robustness we have been discussing: organisms must not only replicate successfully but do so despite environmental interference and in competition with other replicating entities. The talking-heads model offers a potentially interesting framework for exploring such evolutionary dynamics. If the heads that enable self-replication are both robust to perturbations and limited in number, we might observe emergent competition for these scarce replication resources. Observing the dynamics of such a system could reveal how competition for limited replication machinery drives evolutionary processes in simulated models of life.  This motivates the question:
\begin{question}
What are minimal conditions on a CA for evolutionary dynamics to emerge? Specifically, can we characterize CAs where limited resources (such as talking heads) combined with mutation-prone replication lead to competitive exclusion, adaptation, or other hallmarks of evolution?
\end{question}
\noindent This question may initially be most readily addressed empirically, via simulations that may lead to open-ended evolution.

\paragraph*{Kinetic models.}

In his original approach to the theoretical study of self-replication, von Neumann considered `kinetic models'~\cite{neumann_vnsr, vonneumann1963general, burks1970essays}, namely particles in continuous space with interactions, as the dynamics to furnish a self-replicating system. This proved particularly challenging, and his friend Stanislaw Ulam suggested cellular automata as an abstraction of kinetic models, which von Neumann ultimately pursued. We remark that the precise sense in which a CA is an `abstraction' of a kinetic model is subtle; one might say that a CA faithfully abstracting a kinetic model captures the latter's salient degrees of freedom, but this might make the CA somewhat non-local, with dynamics on a graph rather than $\mathbb{Z}^d$. At the very least, CAs provide natural, readily simulable, visually suggestive models for investigating self-replicating systems.

Today, simulating kinetic models (described by systems of ODEs or PDEs) is much easier than in von Neumann's time. Consequently, exploring self-replicating systems within such models has become more feasible, as one can more readily test ideas. Moreover, techniques like neural ODEs enable backpropagation through ODE dynamics, allowing optimization problems that find configurations with specified properties.

A concrete setting where these ideas have been explored is Lenia~\cite{chan2018lenia, chan2020lenia, chan2023towards} and its variants (see e.g.~\cite{plantec2023flow, mordvintsev2022particlelenia, papadopoulos2025mace}). Lenia is a continuous-space version of a CA that evolves according to a continuum kernel. For various kernel parameters, Lenia exhibits visually compelling dynamics with objects that move, appear to copy themselves, and interact in complex ways. While it may be tempting to regard these interactive objects as artificial models of single-cellular life, we argue that this analogy may not be appropriate for several reasons. First, these objects are approximately the size of the continuum kernel's width, placing them at the smallest length scale of available interactions. Thus, they have the character of fundamental particles in Lenia's dynamics rather than composite structures. Second, in parameter regimes where these objects can `copy' by splitting, this does not constitute non-trivial self-replication; the dynamical mechanism only produces copies of a single object type and cannot produce copies of different objects. This is analogous to our earlier distinction between a copy machine that can copy any document versus a printer that only reproduces a single loaded file.

This reasoning demonstrates that the underlying logic of our definitions for non-trivial self-replication in CAs generalizes to kinetic models. Indeed, one could provide precise generalizations of our definitions (and many of our results) for various classes of kinetic models. Thus our work on non-trivial self-replication in CAs yields conceptual insights that organize our understanding of non-trivial self-replication across various model types beyond CAs. To this end, it is natural to ask the following question:
\begin{question}
Can one construct a kernel in Lenia (or one of its natural variants) such that there are configurations with dynamics satisfying a natural analog of the local self-replicating condition?
\end{question}
\noindent We also venture the following conjecture:
\begin{conjecture}
It is possible to build a universal self-replicator within Lenia (or one of its natural variants).
\end{conjecture}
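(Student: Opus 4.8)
The plan is to reduce the conjecture to a construction we have already carried out --- a one-dimensional universal self-replicator in a cellular automaton (Theorem~\ref{thm:1d_universal_self_replicator}) --- and to realize that CA faithfully inside Lenia, using an encoder--decoder pair of bounded complexity in the spirit of Definition~\ref{def:local_simulation}. First I would fix as target the bounded-height, single-dimensional CA of Theorem~\ref{thm:1d_universal_self_replicator} (a large but finite state space with a nearest-neighbour rule), so that the spatial bookkeeping inside Lenia is as light as possible. Next I would engineer a Lenia kernel admitting a finite family of mutually compatible, dynamically stable localized patterns --- one ``digital soliton'' per CA state --- together with a clock or phase mechanism, so that the continuous-time Lenia flow, restricted to fields built from these solitons arranged on a fixed lattice, tracks the synchronous CA update. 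I would then implement the CA's local transition rule by soliton collisions: over one clock period, neighbouring solitons interact through the convolution kernel (whose support one takes comparable to the soliton spacing) so that each lattice site relaxes to the soliton encoding the correct next state. Finally I would exhibit the encoder (mapping an initial CA configuration to a finite Lenia field against the zero field, which is a fixed point of Lenia and hence plays the role of an admissible background in the sense of Definition~\ref{def:admissible_background}) and the decoder (reading off the nearest soliton at each lattice site), check that with these maps Lenia locally simulates the CA, and compose with the CA's built-in self-replicator. This would produce a Lenia field that, loaded with its own description, proliferates arbitrarily many functional copies, each exhibiting the same bounded dynamical motif --- the replication cycle --- so that the natural continuous analogue of the local self-replicating condition (Definition~\ref{def:local_selfrep_cond}) holds.

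A ``from scratch'' alternative mirrors the Game-of-Life universal-constructor constructions: build Lenia glider guns, collision-based logic gates, wires, delay lines, a readable and writable tape, and a ``construction arm'' that deposits stable solitons at arbitrary distance, then assemble von Neumann's three-unit architecture (supervisory, tape, constructing unit) from these components. I would keep this route as a fallback: it reuses existing empirical Lenia engineering --- gliders and their collisions are well documented --- but multiplies the number of delicate components whose interactions must be certified.

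The main obstacle, on either route, is \textbf{rigorous control of stability over an unbounded time horizon}. Lenia solitons are attractors only approximately, and collision outcomes are sensitive to relative phase and alignment; one must exhibit a specific kernel for which each state-soliton genuinely exists and is dynamically stable, every collision the CA rule calls for produces exactly the intended outgoing soliton(s) in the intended phase, and --- crucially --- the construction arm can place a new soliton next to existing structure without perturbing it. Because self-replication requires arbitrarily many copies, any small per-step drift that accumulated with time or with the number of copies would destroy the argument, so the error must be self-correcting: the soliton basins of attraction should re-absorb small perturbations each clock cycle. Establishing these points is plausibly within reach of rigorous computer-assisted analysis (interval-arithmetic verification of the fixed points of the single-cycle update map, of the spectra of their linearizations, and of finitely many collision scenarios), but designing one kernel for which all of them hold simultaneously --- and for which the coarse-graining of the Lenia dynamics onto the soliton lattice is \emph{exactly} the chosen CA rather than merely close to it --- is where the real difficulty lies. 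A softer version of the conjecture, asking only for the local self-replicating condition for \emph{some} kernel, should be considerably easier, since it requires faithfully reproducing only the replication motif and a simple periodic post-replication behaviour rather than full Turing-universal construction.
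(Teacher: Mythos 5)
This statement is a \emph{conjecture} that the paper explicitly leaves open --- the authors offer no proof, and indeed say that ``establishing (or refuting) the conjecture would be valuable.'' So there is no paper proof to compare against, and your proposal cannot be judged as matching or diverging from one. What you have written is a research program, not a proof, and you say as much yourself: the steps that would actually discharge the conjecture --- exhibiting a specific kernel for which a finite alphabet of state-solitons exists and is dynamically stable, for which every collision prescribed by the CA rule produces exactly the intended outgoing soliton in the intended phase, for which the construction arm can deposit new solitons without disturbing existing structure, and for which the coarse-grained dynamics is \emph{exactly} the target CA over an unbounded time horizon and an unbounded number of replicas --- are precisely the open content of the conjecture, and none of them is established. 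A proposal whose hard kernel coincides with the statement to be proved is not a proof of that statement.

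That said, the strategy you outline is sensible and is the natural one given the rest of the paper: reduce to the 1D universal self-replicator of Theorem~\ref{thm:1d_universal_self_replicator} and realize that CA inside Lenia via a complexity-bounded encoder--decoder pair in the spirit of Definition~\ref{def:local_simulation}. Two observations in your favor. First, because your simulation would have a fixed clock period independent of configuration size, you would avoid the obstruction the paper raises for reversible simulations of irreversible CAs, where a growing slowdown destroys the shared dynamical motif required by Definition~\ref{def:local_selfrep_cond}. Second, building the organism as a large assembly of kernel-scale solitons sidesteps the paper's own caution that single Lenia blobs are ``particle-like'' and their splitting is not non-trivial self-replication. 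But until someone produces the kernel and certifies the stability and collision properties (by interval arithmetic or otherwise), the conjecture remains open, and your text should be presented as a plan of attack rather than a proof.
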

\noindent Addressing the question and establishing (or refuting) the conjecture would be valuable both for understanding Lenia specifically and kinetic models more generally.

\paragraph*{Efficient identification of non-trivial self-replication.}

In studying the dynamics of CAs or kinetic models, we face the practical question of how to observe a simulation and determine whether it exhibits non-trivial self-replication.  Ultimately, we would like to identify which CA rules and configurations give rise to non-trivial self-replication empirically, making a practical identification procedure highly valuable.

One immediate difficulty with identifying configurations satisfying the local self-replicating condition is that the condition concerns infinite time-horizon dynamics. It requires an unbounded number of organism copies to be produced and exist simultaneously, necessitating both arbitrarily large times and volumes. The condition examines the outputs of self-replication and verifies that they are dynamically non-trivial and growing in number, but it does not directly address the mechanism of self-replication itself. We do not yet know how to abstract such mechanisms in a way that remains agnostic to the full scope of possible self-reproduction strategies. If such mechanisms could be classified, they might be recognizable in local spatial patches over finite time intervals.\footnote{Related work in the philosophy of biology by Rosen~\cite{rosen1991life} argues that organisms cannot be fully captured by formal computational models. His argument centers on organisms being closed to efficient causation (i.e.~every component that produces effects within the organism is itself produced by other components within the organism) and containing internal predictive models of their environment and their own interactions with it. This leads Rosen to argue for the presence of impredicative causal loops that, in a manner analogous to G\"{o}del, cannot be formalized without losing essential properties of life. A consequence would be that identifying life from finite observations of a physical system is impossible. While the soundness of these definitions and arguments has been debated, we note that any simulation of an organism should be intelligible at finite resolution, in finite volume, and over finite time intervals (only idealized organisms exist forever). In such finitary settings, direct appeals to G\"{o}del's incompleteness theorems do not apply. Nevertheless, Rosen's concepts of causal closure and anticipatory systems~\cite{rosen2011anticipatory} offer an interesting structural perspective on life that could interface with the ideas in this paper.} This leads to the following question:
\begin{question}
Is it possible to classify or partially characterize dynamical mechanisms for the act of non-trivial self-replication?
\end{question}
\noindent This question relates to our earlier discussion about whether self-replicating organisms require abstracted versions of `head' and `tape' degrees of freedom analogous to those in a Turing machine. While it remains unclear whether such degrees of freedom are strictly necessary (and articulating substantively distinct alternatives is challenging), it would be valuable to have an algorithm that identifies head and tape degrees of freedom from observed dynamics.

Recent work~\cite{kumar2024automating} has explored using large language models (LLMs)~\cite{vaswani2017attention, kaplan2020scaling} to identify interesting `life-like' organisms in open-ended simulations. These algorithms attempt to automatically identify which simulations, among many with different initial conditions or evolution rules, give rise to life-like forms. This approach leverages our ability to semantically articulate properties of living organisms that we would recognize when we see them, automating this semantic recognition through LLMs that can both process semantically rich instructions and interpret images and videos. It would therefore be useful to develop ordinary language versions of our criteria for non-trivial self-replication that could serve as instructions for LLMs to identify examples in simulations.

\paragraph*{Probabilistic bounds for emergence of self-replication.}

A theory of self-replication should provide probabilistic bounds on whether randomized initial conditions within a fixed CA, or across a probabilistic ensemble of CAs, give rise to living organisms (within a time interval $[0,T]$ for large $T$). To make this precise, we venture the following conjecture:
\begin{conjecture}
Suppose we have a non-trivial necessary condition for non-trivial self-replication in finite area $A$. Consider the Game of Life CA with periodic boundary conditions on a square of area $A$. Then there exists a sufficiently large area $A_0$ such that for all $A \geq A_0$, there is a $\delta > 0$ such that at least a $\delta$-fraction of all initial conditions exhibit non-trivial self-replication at some point in their dynamics.
\end{conjecture}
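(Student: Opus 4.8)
The plan is to combine a fixed finite self-replicating seed in Game of Life with the finite speed of information propagation and an elementary counting argument. Since Game of Life is globally universal~\cite{the_game_of_life_universality_revisited}, it lies in $\mathsf{UniversalSelfReplicating}$ (cf.\ Theorem~\ref{thm:selfhierarchy}), hence satisfies the local self-replicating condition of Definition~\ref{def:local_selfrep_cond}; moreover explicit self-replicators are known on the all-dead background~\cite{johnston2022conway}. So fix once and for all a finite configuration $c_0$ --- a finite live pattern on the all-dead quiescent background --- together with a local pattern $\bm{\mathcal P}=(\mathcal P_0,\dots,\mathcal P_{T-1})$ with $T\ge 2$ that $c_0$ self-replicates; one may take a universal constructor programmed to replicate, in which case the replication event itself supplies such a $\bm{\mathcal P}$, with offspring replicating identically. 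Then for every $k\in\N$ there are a time $\tau_k$ and a finite window $W_k\subset\Z^2$ with $F^{\tau_k}(c_0)$ containing at least $k$ disjoint copies of $\bm{\mathcal P}$, all inside $W_k$; since Game of Life has a radius-$1$ (Moore) neighborhood and $c_0$ sits on the quiescent background, $W_k$ may be taken as the Chebyshev-$\tau_k$ dilation of $\mathrm{supp}(c_0)$.

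Next I would localize via the light cone. Given the hypothesized non-trivial necessary condition $\mathcal C_A$ for self-replication in area $A$, fix $k_0$, and then $A$, large enough that watching the dynamics of $(c_0,\bm{\mathcal P})$ up to time $\tau_{k_0}$ inside $W_{k_0}$ already triggers $\mathcal C_A$ --- this should hold because $\mathcal C_A$ is, by hypothesis, a genuine (non-vacuous) necessary condition, hence met by the finite-area shadow of a bona fide self-replicator once that shadow is big enough. Put $m:=(\text{side length of }W_{k_0})+2\tau_{k_0}$ and let $B$ be the $m\times m$ pattern with $c_0$ at its center and every other cell dead. If a configuration $c$ on the $n\times n$ torus (with $n^2=A$ and $n\gg m$) agrees with a translate of $B$ on some $m\times m$ sub-square, then for all $t\le\tau_{k_0}$ the orbit of $c$ restricted to the $\tau_{k_0}$-interior of that sub-square coincides cell-for-cell with the orbit of $c_0$ on $\Z^2$, because no influence from outside the block --- and no wraparound, as $n\gg m$ --- can reach that interior within $\tau_{k_0}$ steps. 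By the choice of $m$, $W_{k_0}$ lies in this interior, so the orbit of $c$ displays $k_0$ disjoint copies of $\bm{\mathcal P}$ by time $\tau_{k_0}$, whence $c$ satisfies $\mathcal C_A$.

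Finally I would count. The probability that a fixed $m\times m$ sub-square of a uniformly random configuration equals $B$ is $\rho:=2^{-m^2}>0$, a constant independent of $A$. Tile the torus by $N(A):=\lfloor n/m\rfloor^2$ pairwise disjoint $m\times m$ sub-squares; since the corresponding ``this sub-square equals $B$'' events are independent, the fraction of configurations that contain $B$ somewhere equals $1-(1-\rho)^{N(A)}$. As $N(A)\to\infty$ this tends to $1$, so there is $A_0$ with the property that for all $A\ge A_0$ the fraction of initial conditions whose forward dynamics satisfies $\mathcal C_A$ is at least (say) $1/2=:\delta$ --- indeed it approaches $1$ --- which proves the statement.

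The main obstacle is the middle step, and it is interpretive rather than computational: the conjecture only posits that \emph{some} non-trivial necessary condition $\mathcal C_A$ for finite-area self-replication exists, without specifying it, so one must argue that \emph{any} reasonable such $\mathcal C_A$ is actually triggered by the finite-area truncation of a genuine self-replicator. Making this rigorous requires committing to a concrete notion of $\mathcal C_A$ --- for instance one decidable from $\mathrm{poly}(A)$-time observation of the area-$A$ dynamics and met whenever a temporal-length-$\ge 2$ local pattern is seen to proliferate into at least $g(A)$ disjoint copies for some $g(A)\to\infty$ --- and then checking that, for $A$ large, the buffered seed attains $g(A)$ copies within the torus before wraparound, which is possible because the buffer radius may be taken to be $\Theta(\sqrt A)$. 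With such a $\mathcal C_A$ fixed, the light-cone localization and the independence/counting step go through routinely. A minor bookkeeping point is to confirm that a Game of Life universal constructor (or a known explicit replicator) really realizes a pair $(c_0,\bm{\mathcal P})$ with $T\ge 2$ on the quiescent background; this holds because offspring replicate identically, so the replication event is a nonstatic local pattern sequence.
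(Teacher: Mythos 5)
This statement is posed as an open conjecture in the paper (Section~\ref{sec:openproblems}); the authors give no proof, so there is nothing to compare your argument against. Judged on its own merits, your overall strategy --- plant a padded self-replicating seed on the all-dead background, use the finite speed of propagation to guarantee that the seed's interior evolves exactly as on $\Z^2$ for a bounded time, and then count placements of the padded seed in a uniformly random torus configuration --- is the natural attack, and the light-cone and independence/counting steps are individually correct \emph{when the padded block has size independent of $A$}.

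However, there are two genuine gaps. First, the step you flag as "interpretive" is in fact the entire mathematical content of the conjecture: the hypothesis only asserts that \emph{some} non-trivial necessary condition $\mathcal C_A$ exists, and your argument establishes nothing until you prove that \emph{every} admissible $\mathcal C_A$ (or at least one the authors would accept) is triggered by the bounded-time, bounded-area shadow of the planted seed. This is delicate precisely because of the paper's own Definition~\ref{def:selfreplocal1} and Lemma~\ref{lemma:nobackgroundPs}: random Game of Life soup generically produces colliding gliders and debris that spawn many copies of short-period local patterns, so a condition loose enough to be triggered by a constant-size observation risks being satisfied by almost all initial conditions for reasons unrelated to self-replication, i.e.~risks failing to be "non-trivial." Second, your proposed repair --- requiring $g(A)\to\infty$ copies and taking the buffer radius to be $\Theta(\sqrt A)$ --- breaks the counting step rather than completing it: an all-dead buffer of area $\Theta(A)$ occurs in a uniformly random configuration with probability $2^{-\Theta(A)}$, while there are only $O(A)$ placements on the torus, so the union bound gives a vanishing (not constant) fraction. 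This is not a bookkeeping issue but a structural one: known Game of Life replicators are destroyed on contact with ambient debris, so once the inward light cone from the surrounding soup reaches the seed (after time proportional to the buffer width), replication generically stops; hence a constant-size buffer yields only a constant number of copies, and any $\mathcal C_A$ demanding a copy count growing with $A$ cannot be established by this route. A correct proof would either have to commit to a constant-size-triggerable $\mathcal C_A$ and argue it is nonetheless non-trivial, or exhibit a replicator in Game of Life that is robust to (or repairs) interaction with random soup --- neither of which your proposal supplies.
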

\noindent It would be interesting to understand which CA rules maximize the probability of non-trivial self-replication emerging.

A more ambitious goal would be to consider chemical systems present during early Earth's development and establish lower bounds on the probability that they give rise to non-trivial self-replication and ultimately complex life. Addressing this would help us understand how rare or common life is, both on Earth and potentially throughout the universe.
\\ \\
\indent The questions and conjectures above collectively work toward a general theory of self-replication that captures its essential features across different dynamical systems. From understanding the minimal computational requirements to predicting spontaneous emergence, from formalizing evolutionary dynamics to developing practical identification methods, these problems span the theoretical and empirical domains necessary for a comprehensive understanding of self-replication. Progress on this program would fulfill von Neumann's original vision of understanding self-replication as a fundamental phenomenon, independent of its specific physical or biological implementation.

\bibliographystyle{unsrt}
\bibliography{ref}

\end{document}